\documentclass[11pt]{article}

\usepackage{amsmath,amssymb,amsfonts}
\usepackage{amsmath,amssymb,mathtools}
\usepackage{xparse} % for optional-arg commands
\usepackage{tcolorbox} 

% Packages
\usepackage[english]{babel}
\usepackage[letterpaper,hmargin=0.8in,vmargin=1in]{geometry}
\usepackage{palatino}
\usepackage{xspace,amsmath,mathtools,amssymb,amsfonts,amsthm}
\usepackage[svgnames,x11names,table]{xcolor}
\usepackage{algorithm}
\usepackage{algpseudocode}
\emergencystretch=2em
\usepackage{float} % for [H] float spec
\usepackage{multirow}
\usepackage{wrapfig}
\usepackage{caption}
\usepackage{subfig}
\usepackage{dsfont}
\usepackage[mathscr]{eucal}
\usepackage{hyperref}
\hypersetup{colorlinks=true,citecolor=blue,urlcolor=blue,bookmarks=false,hypertexnames=true}
\urlstyle{same}
\usepackage{array}
\usepackage{booktabs}
\usepackage{tabularx}

\usepackage{pgfplots}
\usepackage{tikz}
\usetikzlibrary{arrows.meta,calc,positioning,decorations.pathreplacing}
\usepackage{tablefootnote}
%\lstset{escapeinside={(}{)}}

\usepackage{mdframed}

% Algorithmic keywords
\algrenewcommand\algorithmicrequire{\textbf{Input:}}
\algrenewcommand\algorithmicensure{\textbf{Output:}}

% Theorem environments
\newtheorem{theorem}{Theorem}[section]
\newtheorem{definition}[theorem]{Definition}

\newtheorem{remark}[theorem]{Remark}
\newtheorem{lemma}[theorem]{Lemma}
\newtheorem{corollary}[theorem]{Corollary}

% Fonts and mathcal
\DeclareMathAlphabet{\mathcal}{OMS}{cmsy}{m}{n}

%%%%%%%%%%%%%%%

%!TEX root = mainMPOSv1.tex

%\newcommand{\fnote}[1]{{\color{blue}{#1}}}

%\newcommand{\bnote}[1]{{\color{blue}{#1}}}
%\newcommand{\rnote}[1]{{\color{red}{#1}}}
%\newcommand{\tnote}[1]{{\color{teal}{#1}}}
%\newcommand{\pnote}[1]{{\color{black}{#1}}}

\newcommand{\SamplePayload}{\ensuremath{\textsc{SamplePayload}}\xspace}

\newcommand{\MaximumAntichain}{\ensuremath{\textsc{MaximumAntichain}}\xspace}
\newcommand{\ExactMaxAntichain}{\textsc{ExactMaxAntichain}\xspace}

\newcommand{\BranchW}{\ensuremath{{\mathrm{SubdagW}}}\xspace}
\newcommand{\SubdagW}{\ensuremath{{\mathrm{SubdagW}}}\xspace}

% Object names
\newcommand{\Subdag}{{SubDAG}\xspace}
\newcommand{\subdag}{{subDAG}\xspace}
\newcommand{\subdags}{{subDAGs}\xspace}

% Preferred / honest / adversarial
\newcommand{\prefsubdag}{{preferred \subdag}\xspace}
\newcommand{\prefsubdags}{{preferred \subdags}\xspace}

\newcommand{\adversarialsubdag}{{adversarial \subdag}\xspace}

\newcommand{\Branch}{{\Subdag}\xspace}

\newcommand{\branch}{{\subdag}\xspace}
\newcommand{\branches}{{\subdags}\xspace}

% Weights / scores
\newcommand{\Subdagweight}{{\Subdag{} weight}\xspace}
\newcommand{\subdagweight}{{\subdag{} weight}\xspace}
\newcommand{\subdagweights}{{\subdag{} weights}\xspace}

\newcommand{\subdagscores}{{\subdag{} scores}\xspace}

% Project names
\newcommand{\Proj}{\ensuremath{\mathsf{Areon}}\xspace}
\newcommand{\ProjIdeal}{\ensuremath{\mathsf{\Proj\text{-}Ideal}}\xspace}
\newcommand{\ProjBase}{\ensuremath{\mathsf{\Proj\text{-}Base}}\xspace}

\providecommand{\decay}{\varrho}

% Core symbols
\newcommand{\secp}{\ensuremath{\kappa}\xspace}
\newcommand{\negl}{\ensuremath{\mathsf{negl}}\xspace}
\newcommand{\true}{\ensuremath{\mathsf{true}}\xspace}
\newcommand{\false}{\ensuremath{\mathsf{false}}\xspace}

\newcommand{\Undom}{\ensuremath{{F}}\xspace}
\newcommand{\Past}{\ensuremath{\mathrm{Past}}\xspace}
\newcommand{\Trim}{\ensuremath{\mathsf{Trim}}\xspace}

\newcommand{\DCP}{{DCP}\xspace}

\newcommand{\CTR}{\ensuremath{\mathsf{CTR}}\xspace}
\newcommand{\winner}{\ensuremath{\mathtt{winner}}\xspace}
\newcommand{\loser}{\ensuremath{\mathtt{loser}}\xspace}

\newcommand{\E}{\ensuremath{\mathbb{E}}\xspace}

\newcommand{\cP}{\ensuremath{\mathcal{P}}\xspace}

\newcommand{\cE}{\ensuremath{\mathcal{E}}\xspace}

% Keys, stake, and fractions
\newcommand{\pk}{\ensuremath{\mathrm{pk}}\xspace}
\newcommand{\sk}{\ensuremath{\mathrm{sk}}\xspace}
\newcommand{\stake}{\ensuremath{\mathrm{stake}}\xspace}
\newcommand{\StakeTot}{\ensuremath{\mathrm{Stake}_{\mathrm{tot}}}\xspace}
\newcommand{\stakefrac}[1]{\ensuremath{\frac{\stake(#1)}{\StakeTot}}\xspace}

% Block accessors
\newcommand{\id}{\ensuremath{\mathrm{id}}\xspace}
\newcommand{\Val}{\ensuremath{\mathcal{V}}\xspace}
\newcommand{\val}{\ensuremath{\mathrm{validator}}\xspace}
\newcommand{\slot}{\ensuremath{\mathrm{slot}}\xspace}
\newcommand{\block}{\ensuremath{{b}}\xspace}
\newcommand{\payload}{\ensuremath{\mathrm{payload}}\xspace}
\newcommand{\refs}{\ensuremath{\mathrm{refs}}\xspace}
\newcommand{\longref}{\ensuremath{\mathrm{longref}}\xspace}

% DAG reachability and closures
\newcommand{\Blocks}{V}            % vertex set of the DAG (blocks)
\newcommand{\Parties}{\mathcal{P}} % set of validators/parties
    % honest validator subset
    % adversarial validator subset
    % local DAG view symbol
\newcommand{\Anc}{\mathrm{Anc}}
\newcommand{\Desc}{\mathrm{Desc}}
\newcommand{\AncStar}{\mathrm{Anc}^{\!*}}
\newcommand{\DescStar}{\mathrm{Desc}^{\!*}}

% Ledger and conflicts
\newcommand{\Conflicts}{\ensuremath{\mathrm{Conflicts}}\xspace}

% Tips and orders
\newcommand{\Tips}{\ensuremath{\mathrm{Tips}}\xspace}
\newcommand{\Lin}{\ensuremath{\mathsf{Lin}}\xspace}

% Procedures / primitives
\newcommand{\Eligibility}{\ensuremath{\textsc{Eligibility}}\xspace}
\newcommand{\VRFGen}{\ensuremath{\mathrm{VRF\text{-}Gen}}\xspace}
\newcommand{\VRFSign}{\ensuremath{\mathrm{VRF\text{-}Sign}}\xspace}
\newcommand{\VRFVerify}{\ensuremath{\mathrm{VRF\text{-}Verify}}\xspace}

\newcommand{\SIGGen}{\ensuremath{\mathrm{SIG\text{-}Gen}}\xspace}
\newcommand{\SIGSign}{\ensuremath{\mathrm{SIG\text{-}Sign}}\xspace}
\newcommand{\SIGVerify}{\ensuremath{\mathrm{SIG\text{-}Verify}}\xspace}
\newcommand{\Hash}{\ensuremath{\mathrm{Hash}}\xspace}

\newcommand{\Encode}{\ensuremath{\mathrm{enc}}\xspace}

\newcommand{\AntichainSelection}{\textsc{AntichainSelection}\xspace}
\newcommand{\Broadcast}{\textsc{Broadcast}\xspace}
\newcommand{\ForkChoiceUpdate}{\textsc{ForkChoiceUpdate}\xspace}

% Fork-choice metrics
\newcommand{\wref}{\ensuremath{\mathrm{wref}}\xspace}
\newcommand{\CCA}{\ensuremath{\mathrm{CCA}}\xspace}

\newcommand{\TB}{\ensuremath{\mathsf{TB}}\xspace}
\newcommand{\ShortRefs}{\ensuremath{\mathrm{ShortRefs}}\xspace}
\newcommand{\LongRef}{\textsc{LongRef}\xspace}

%%%%%%%%%%%%%%%%%%%%%%%%%%%%%%%
% ---- subsubsubsection (sectioning level 4) -------------------------------
\setcounter{secnumdepth}{4}   % number down to subsubsubsection
\setcounter{tocdepth}{4}      % include in ToC

\makeatletter
% Counter and numbering
\newcounter{subsubsubsection}[subsubsection]

% Define the heading (uses LaTeX's \@startsection so star form works)
\newcommand\subsubsubsection{%
  \@startsection{subsubsubsection}{4}{\z@}%
    {3.25ex \@plus 1ex \@minus .2ex}%   % before-skip
    {1.5ex  \@plus .2ex}%               % after-skip
    {\normalfont\normalsize\bfseries}   % style
}

% TOC formatting for this level
\newcommand{\l@subsubsubsection}{\@dottedtocline{4}{7em}{4em}}

% (Optional) hyperref compatibility: tell it the TOC level index
\providecommand{\toclevel@subsubsubsection}{4}
\makeatother
%%%%%%%%%%%%%%%%%%%%%%%%%%%%%%%

%%%%%%%%%%%%%%%%%%%%%%%%%%%%%%%

\makeatletter
\renewcommand*{\@fnsymbol}[1]{%
  \ensuremath{%
    \ifcase#1\or
      \ast             % first footnote: use math \ast instead of text *
    \or \dagger        % second: dagger
    \or \ddagger       % third: double dagger
    \or \mathsection   % etc...
    \or \mathparagraph
    \or \|
    \or \ast\ast
    \or \dagger\dagger
    \or \ddagger\ddagger
    \else\@ctrerr
    \fi}}
\makeatother
%%%%%%%%%%%%%%%%%%%%%%%%%%%%%%%

\begin{document}

\title{ 
\Proj: Latency-Friendly and Resilient Multi-Proposer Consensus
}

\iffalse
\author{
Álvaro Castro-Castilla\footnote{Institute of Free Technology. \ \ Email: \texttt{alvaro@nomos.tech}}
\and
Marcin Pawlowski\footnote{Institute of Free Technology and Jagiellonian University. \ \ Email: \texttt{marcin@nomos.tech}}
\and
Hong-Sheng Zhou\footnote{Institute of Free Technology and Virginia Commonwealth University. \ \ Email:  \texttt{hszhou@vcu.edu}}
}
\fi

\author{
Álvaro Castro-Castilla\thanks{Nomos, Institute of Free Technology. \ \ %Email: 
\texttt{\{alvaro|marcin\}@nomos.tech}}
\and
Marcin Pawlowski$^\ast$\thanks{Jagiellonian University. }%\ \ Email: \texttt{marcin@nomos.tech}}
\and
Hong-Sheng Zhou$^\ast$\thanks{Virginia Commonwealth University. \ \ %Email:  
\texttt{hszhou@vcu.edu}}
}

\maketitle

\setcounter{tocdepth}{2}
\setcounter{secnumdepth}{3}
\pagestyle{plain}

%!TEX root = main.tex

\begin{abstract}
We present \Proj, a family of latency-friendly, stake-weighted, multi-proposer proof-of-stake consensus protocols. By allowing multiple proposers per slot and organizing blocks into a directed acyclic graph (DAG), \Proj achieves robustness under partial synchrony. Blocks reference each other within a sliding window, forming maximal antichains that represent parallel ``votes'' on history. Conflicting subDAGs are resolved by a closest common ancestor (CCA)-local, window-filtered fork choice that compares the weight of each subDAG---the number of recent short references---and prefers the heavier one. Combined with a structural invariant we call Tip-Boundedness (TB), this yields a bounded-width frontier and allows honest work to aggregate quickly.

We formalize an idealized protocol (\ProjIdeal) that abstracts away network delay and reference bounds, and a practical protocol (\ProjBase) that adds VRF-based eligibility, bounded short and long references, and application-level validity and conflict checks at the block level. On top of DAG analogues of the classical common-prefix, chain-growth, and chain-quality properties, we prove a backbone-style $(k,\varepsilon)$-finality theorem that calibrates confirmation depth as a function of the window length and target tail probability. We focus on consensus at the level of blocks; extending the framework to richer transaction selection, sampling, and redundancy policies is left to future work.

Finally, we build a discrete-event simulator and compare \ProjBase against a chain-based baseline (Ouroboros Praos) under matched block-arrival rates. Across a wide range of adversarial stakes and network delays, \ProjBase achieves bounded-latency finality with consistently lower reorganization frequency and depth. 
\end{abstract}

\newpage
\small
\tableofcontents
\normalsize

\newpage
\pagenumbering{arabic}

%%%%%%%%%%%%%%%%%%

%!TEX root = main.tex

%%%%%%%%%%%%%%%%%%%%%%%%%%%%%%%%%%%%%%%%%%%%%%%%%%%%%%%%%%%%%%%%%%%%%%%%%%%%%%%
% Section 1: Introduction
%%%%%%%%%%%%%%%%%%%%%%%%%%%%%%%%%%%%%%%%%%%%%%%%%%%%%%%%%%%%%%%%%%%%%%%%%%%%%%%
\section{Introduction}
\label{sec:intro}

A recurring tension in blockchain design is achieving high throughput and low latency without compromising security or decentralization. Traditional single-proposer (leader-based) protocols---from Bitcoin's longest chain to many proof-of-stake (PoS) variants---extend a single linear chain one block at a time. This serialization simplifies reasoning but forces a conservative block rate to preserve the common prefix and chain quality properties under network delay. If blocks are proposed faster than they propagate, honest views diverge, forks proliferate, and safety and liveness degrade. As a result, single-proposer designs deliberately throttle throughput, which increases confirmation latency.

Single-leader systems are also brittle under adverse conditions: even moderate asynchrony or partitions can trigger deep reorganizations, and a slow or malicious leader can temporarily suppress blocks, harming liveness and fairness. The upshot is a fundamental scalability--security trade-off: increasing throughput or tolerating higher network latency tends to weaken safety guarantees.

\subsection{From single-leader chains to multi-proposer DAGs}

Single-leader chains serialize block creation and concentrate forks on a single tip, making them sensitive to delay spikes and strategic withholding. Multi-proposer DAGs distribute proposal opportunities across many validators and many tips, amortizing propagation delay and reducing the leverage of a temporary leader. The remaining challenge is to retain chain quality and finality without enforcing a total order at proposal time. Our design addresses this by using a sliding-window 
fork choice on a directed acyclic graph (DAG) that is local to the closest common ancestor (CCA) of competing tips: given two conflicting tips, we restrict attention to the two subDAGs above their CCA and prefer the one with more recent short references. 
We show that this 
CCA-based, windowed 
fork choice must be coupled with a new invariant, Tip-Boundedness (TB), which bounds the number of honest tips at any time and plays for DAGs the role that common-prefix and chain-growth play for longest-chain protocols.

\subsection{Our contributions}
\label{sec:results}

We present \Proj, a family of stake-weighted, multi-proposer PoS protocols designed for high-latency-tolerant (bounded-latency) finality and robustness under partial synchrony. We require a short-reference window $w \ge \Delta$.

A central message of our analysis is that classical chain invariants do not directly carry over to multi-proposer DAGs. We formalize DAG analogues of backbone properties---DAG Growth (DG), DAG Quality (DQ), and DAG Common Past (\DCP)---and show that they must be paired with a fourth invariant, Tip Boundedness (TB), which upper-bounds the number of short-reference tips in the window of length~$w$ (equivalently, the number of blocks in that window that can still be short-referenced by new blocks). Without TB, honest work can fragment indefinitely across many incomparable tips, violating either safety or liveness even when DG/DQ/\DCP hold.

\begin{itemize}
\item \textbf{\ProjIdeal (idealized protocol).} We formalize a  multi-proposer DAG protocol under synchronous conditions with unbounded referencing, using a CCA-local, window-filtered fork choice. We prove that it satisfies DG, DQ, \DCP, and TB, and derive persistence and liveness with confirmation depth $k = \Theta(w)$.
\item \textbf{\ProjBase (practical protocol).} We realize the design under partial synchrony with VRF sortition, 
bounded short references (window~$w$) that drive fork choice and confirmation, and weightless long references that preserve DAG connectivity.
We prove that under standard PoS assumptions (honest stake $H > 1/2$) and $w \ge \Delta$, \ProjBase satisfies DG/DQ/\DCP/TB and thus achieves persistence and liveness. Withheld work outside the window does not help the adversary.

\item \textbf{Backbone-style DAG invariants.}
We adapt classical backbone arguments~\cite{EC:GarKiaLeo15,EC:PasSeeShe17} to a multi-proposer DAG by introducing DAG-Growth (DG), DAG-Quality (DQ), and a DAG Common-Past property \DCP that lift the chain-growth, chain-quality, and common-prefix invariants to the DAG setting. We show that DG, DQ, and \DCP alone do not imply ledger safety or liveness in a DAG.

\item \textbf{Tip-Boundedness and $(k,\varepsilon)$-finality.}
We identify Tip-Boundedness (TB) as the missing invariant that links concurrent proposal (throughput) to bounded confirmation latency in DAG-based protocols. We prove that TB is necessary for block-confirmation liveness, and that DG, DQ, \DCP, and TB together yield a backbone-style persistence and liveness theorem with explicit $(k,\varepsilon)$-finality guarantees: once a block is $k$-deep under our fork choice, the probability that it is ever reverted is at most~$\varepsilon$.

\item \textbf{Latency-friendly convergence.} The CCA-local, sliding-window fork choice aggregates honest references quickly and resists long-range withholding, enabling fast convergence once a windowed margin appears. This explains the observed bounded-latency (high-latency-tolerant) finality in our simulations.
\item \textbf{Evaluation.}
We build a discrete-event simulator for \ProjBase and an optimized adversary that withholds and later releases 
private \subdags, 
and we instantiate a chain-based baseline by adapting Ouroboros Praos~\cite{EC:DGKR18} to the same setting.
Across a range of adversarial stakes and network delays, \Proj achieves consistently lower reorg frequencies at the same depths and faster stabilization than the Ouroboros Praos baseline at matched block-arrival rates (e.g., 14--15-block reorgs are about $600\times$ less frequent and 10-block reorgs about $8.7\times 10^{3}$ times less frequent).
\end{itemize}
Throughout this work we adopt a transaction-agnostic model: each block carries a single opaque application payload (e.g., a transaction), and we analyze consensus and finality purely at the block level; multi-transaction blocks and mempool sampling are left to future work (see Remark~\ref{rem:block-vs-tx}).

%!TEX root = main.tex

%%%%%%%%%%%%%%%%%%%%%%%%%%%%%%%%%%%%%%%%%%%%%%%%%%%%%%%%%%%%%%%%%%%%%%%%%%%%%%%
% Section 1: Related Work
%%%%%%%%%%%%%%%%%%%%%%%%%%%%%%%%%%%%%%%%%%%%%%%%%%%%%%%%%%%%%%%%%%%%%%%%%%%%%%%

\subsection{Related Work}
\label{sec:related}

Decentralized consensus has evolved from single-leader PoW/PoS chains to multi-proposer and DAG-based designs. We outline the most relevant lines and position our approach.

\medskip\noindent
\textbf{Proof-of-Stake chain protocols.}
Early PoS protocols such as Snow~White~\cite{FC:DaiPasShi19}, Ouroboros Praos~\cite{EC:DGKR18}, and Ouroboros Genesis~\cite{CCS:BGKRZ18} follow a longest-chain discipline with (effectively) one leader per slot via VRFs. Committee/BFT-style PoS systems--- Algorand~\cite{Algorand}, HotStuff~\cite{HotStuff}, and Dfinity~\cite{Dfinity} --- achieve rapid finality with quorum certificates at higher communication cost. Unpredictability mechanisms (e.g.,~\cite{FC:DebCabTse21,FKLTZ24,FKLTZ25}) and privacy-preserving stake~\cite{SP:KKKZ19} are orthogonal to our fork-choice rule but interact with eligibility and incentives. Backbone-style analyses~\cite{EC:GarKiaLeo15,EC:PasSeeShe17} and limits/attacks for longest-chain PoS (e.g.,~\cite{rationalattacksPOS}) motivate our windowed-DAG design.

\medskip\noindent
\textbf{Multi-proposer and DAG-style protocols.}
PHANTOM/GHOSTDAG~\cite{AFT:SWZ21} and Conflux~\cite{Conflux} combine Nakamoto-style work with DAG ordering; PRISM~\cite{Prism,EPRINT:ZhaChaLeo18} decomposes roles into parallel chains; other DAG voting/weighting approaches include~\cite{EPRINT:MorKulYok18} and Snow/Avalanche-style sampling~\cite{SnowFamily}. These works show that parallel proposal scales throughput, but their ordering/security arguments differ from our window-filtered, CCA-local approach.

\medskip\noindent
\textbf{Decoupled availability and ordering.}
Narwhal-Tusk~\cite{NarwhalTusk} decouples data availability from ordering to improve scalability. 
Our design is related but different: 
we remain within a \emph{single DAG} and use a sliding-window, CCA-local scoring that yields Tip-Boundedness and bounded-latency finality in partial synchrony.

\medskip\noindent
\textbf{Positioning.}
\Proj pairs a window-filtered fork choice with a Tip-Boundedness (TB) invariant.
DG/DQ/DCP alone do not secure a DAG ledger; TB is the additional structural condition that prevents 
frontier explosion 
and enables the $(k,\varepsilon)$-finality calibration we prove.
Unlike global optimization layers, our rule is local: it only inspects the subDAG above the current closest common ancestor (CCA) of competing tips and a sliding window of recent blocks, which admits a simple implementation. 
Our analysis deliberately abstracts away transaction batching and mempool policies by working in a single-payload-per-block model; this mirrors the backbone line~\cite{EC:GarKiaLeo15,EC:PasSeeShe17} and Ouroboros Praos~\cite{EC:DGKR18}, where the focus is on consensus and finality rather than throughput-oriented optimizations.

%!TEX root = main.tex

\subsection{Organization}
The remainder of the paper is organized as follows. 

Section~\ref{sec:prelim} sets up the execution model, notation, and DAG preliminaries, and formalizes the backbone-style invariants for DAGs---DAG Growth (DG), DAG Quality (DQ), DAG Common Past (\DCP), and Tip-Boundedness (TB).
Section~\ref{sec:ideal} presents \ProjIdeal: the multi-proposer DAG with a window-filtered
\emph{local} fork-choice rule and CCA-based conflict resolution, together with
its algorithms and proofs that it satisfies DG/DQ/\DCP/TB as well as ledger
safety and liveness.
Section~\ref{sec:base} instantiates {\ProjBase} under partial synchrony with VRF sortition, bounded short references (and a weightless long reference), and proves the practical counterparts of the invariants under \(w \ge \Delta\).
Section~\ref{sec:finality} develops a master \((k,\varepsilon)\)-finality theorem, derives parameter calibration and a deployment recipe, and maps depth to wall-clock settlement.
Section~\ref{sec:eval} describes our simulator and evaluation, including sensitivity to parallelism \(f\), adversarial stake \(\alpha\), and network delay \(\Delta\), 
and compares \Proj against Ouroboros Praos, used here as a representative single-leader PoS baseline.

Finally, Appendix~\ref{sec:concentration-app} provides the standard concentration bounds which are used in our analysis, Appendix~\ref{sec:base-app} gives supplemental material for Section~\ref{sec:base} (e.g., exact antichain computation), and Appendix~\ref{sec:eval-app} collects the full set of plots moved from Section~\ref{sec:eval} for readability.

%%%%%%%%%%%%%%%%%%

%!TEX root = main.tex

%%%%%%%%%%%%%%%%%%%%%%%%%%%%%%%%%%%%%%%%%%%%%%%%%%%%%%%%%%%%%%%%%%%%%%%%%%%%%%%
% Section 2: Preliminaries
%%%%%%%%%%%%%%%%%%%%%%%%%%%%%%%%%%%%%%%%%%%%%%%%%%%%%%%%%%%%%%%%%%%%%%%%%%%%%%%

\section{Preliminaries}
\label{sec:prelim}

This section describes the execution model, timing assumptions, and the DAG structures used throughout the paper. We first establish basic notation for ancestry relationships and sliding window references, then formalize the core ledger properties (DAG Growth, DAG Quality, and DAG Common Past). Finally, we introduce \emph{Tip-Boundedness (TB)} and discuss its role in ensuring safety and liveness.

%!TEX root = main.tex

\subsection{Standing Assumptions \& Symbols}
\label{sec:assumption-symbol}
\label{sec:assumptions}
\label{sec:symbols}

\begin{table}[htp!]
  \centering
  \renewcommand{\arraystretch}{1.15}
  \begin{tabularx}{\linewidth}{@{} l X @{}}
    \toprule
    \textbf{Symbol} & \textbf{Meaning} \\
    \midrule
    $\Delta$ & Network delay bound (in slots); messages are delivered within $\Delta$ slots. \\
    $w$ & Short-reference window length (in slots); we require $w \ge \Delta$. \\
    $\mathcal{P}$ & Validator set (participants). \\
    $\lambda_h,\lambda_a$ & Expected numbers of honest/adversarial eligible blocks per slot. \\
    $\lambda$ & Total expected proposals per slot, $\lambda \approx \lambda_h+\lambda_a$ (ideal: public coin; base: VRF). \\
    $q$ & Honest coverage probability: when eligible, an honest block includes $\ge 1$ short-ref to a visible tip. \\
    $\beta$ & Tip-boundedness cap: with high probability $|\Tips_t|\le \beta$; $\beta=O(\lambda)$ (ideal), $\beta=O((\lambda_h+\lambda_a)\Delta)$ (base). \\
    $\Tips_t$ & Short-reference tips visible to an honest party at the end of slot $t$. \\
    $\Anc(b),\Desc(b)$&Ancestor/descendant sets of $b$ in the block DAG (view subscript omitted when clear from context). \\
    $\Anc^\ast(b),\Desc^\ast(b)$ &Ancestor/descendant closures of $b$ in the block DAG (including $b$ itself). \\
    $\Anc_V(b),\Desc_V(b)$ &Ancestors/descendants of $b$ in local view $V$ (includes genesis). \\

    $\CCA(i,j)$ & Closest common ancestor of $i,j$ in the current view (Def.~\ref{def:cca}). \\

    $\BranchW(x;c,w)$ & CCA-local subDAG weight of tip $x$ relative to anchor $c$ over window length $w$ (used in the fork-choice rule; see Section~\ref{sec:fork-choice}). \\
    $\CTR(i,j)$ & Conflicted-tips resolution rule comparing two short-reference tips $i,j$ (Def.~\ref{def:ctr}). \\

    $k_D$ & \DCP trimming parameter (Def.~\ref{def:DCP}). \\
    $k$ & Ledger finality depth. \\
    $\ell_{\mathrm{live}}$ & Liveness horizon (slots). \\
    \bottomrule
  \end{tabularx}
  \caption{Symbols used throughout the paper.}
  \label{tab:notation}
\end{table}

\noindent
\textbf{Security parameter.}
Let $\kappa \in \mathbb{N}$ be the security parameter; $\negl(\cdot)$ denotes negligible functions.
All probabilities are taken over the random coins of the protocol, the adversary's choices, and the
network schedule.

\medskip\noindent
\textbf{Network synchrony.}
There exists an unknown network delay bound $\Delta$ that holds throughout the execution:
any message broadcast by an honest party in slot $t$ is delivered to every honest party by the
end of slot $t+\Delta$.
We do not assume a global stabilization time; the adversary schedules deliveries at all times
subject only to the $\Delta$-delay constraint.

\medskip\noindent
\textbf{Participants and stake.}
The adversary controls at most an $\alpha$ fraction of the total stake (static or adaptive, as specified).
We assume an honest \emph{coverage probability} $q>0$, meaning that whenever an honest
validator is eligible and produces a block, it includes at least one short reference to a visible tip
with positive probability~$q$.

\medskip\noindent
\textbf{Eligibility rates.}
Honest/adversarial proposal intensities per slot are $(\lambda_h,\lambda_a)$; 
we write
$$\lambda := \lambda_h + \lambda_a$$
for the total expected number of block proposals per slot.

\medskip\noindent
\textbf{Window discipline.}
Fix a window length $w$ with
\[
  w \;\ge\; \Delta + \omega(\log \kappa)\,.
\]
(In practice, $w$ is chosen large enough to cover the maximum network delay with overwhelming
probability.)

\medskip\noindent
\textbf{Weight policy (short-refs only).}
Only short references to slots in $[t-w, t-1]$ contribute to fork-choice scores.
Long references always carry zero fork-choice weight; they are used solely to maintain DAG connectivity and to 
provide confirmation for blocks.

In the next subsection (\S\ref{sec:tools}) we describe the DAG structure, reference types, and the local fork-choice building blocks used throughout the rest of the paper.

%!TEX root = main.tex

%%%%%%%%%%%%%%%%%%%%%%%%%%%%%%%%%%%%%%%%%%%%%%%%%%%%%%%%%%%%%%%%%%%%%%%%%%%%%%%
% Section 2.2  
%%%%%%%%%%%%%%%%%%%%%%%%%%%%%%%%%%%%%%%%%%%%%%%%%%%%%%%%%%%%%%%%%%%%%%%%%%%%%%%

\subsection{DAG basics and fork-choice building blocks}
\label{sec:dag-building-blocks}

\label{sec:basics-tools}
\label{sec:basics}
\label{sec:tools}

\subsubsection{DAG structure \& references}
\label{sec:DAG-structure}

\paragraph{Block and DAG Structure (canonical).}
Let $G=(V,E)$ be the reference DAG. For $u,v\in V$, write $u\rightsquigarrow v$ if there is a directed path $u\to\cdots\to v$ in $G$.

\medskip\noindent
\textbf{Block accessors.} Each block $b\in \Blocks$ carries the following fields:
\[
  \id(b)\in\{0,1\}^\kappa,\quad
  \slot(b)\in\mathbb{N},\quad
  \val(b)\in\Parties,\quad
  \payload(b),\quad
\]  \[
  \refs(b)\subseteq V,\quad
  \longref(b)\in V\cup\{\bot\},\quad
  y(b)\in\{0,1\}^\kappa,\quad
  \pi(b),\quad \sigma(b).
\]
Here $y(b)$ and $\pi(b)$ are respectively the output and the proof of a
verifiable random function (VRF) used for stake-proportional
eligibility and public-coin tie-breaking, and $\sigma(b)$ is a digital
signature on the block identifier under the validator's signing key.
The concrete VRF- and signature-based eligibility mechanism is
specified in Section~\ref{sec:notation-base}.

Here $\refs(b)$ are the short references: parents whose slots lie in the current short-reference window $W(t;w)=\{t-w,\ldots,t-1\}$. The value $\longref(b)$, when present, is a single long reference to a parent in a slot $<t-w$. 
Short references are the only edges that contribute to fork-choice weight: 
long references always carry weight $0$ in both the subDAG-weight function $\SubdagW(\cdot; \cdot, w)$ (Section~\ref{sec:fork-choice}) and the conflicted-tips resolution rule $\CTR(\cdot, \cdot)$ (Definition~\ref{def:ctr}; see also Table~\ref{tab:notation}). 
Nevertheless, long references are important for block confirmation: they extend the ancestor closure so that older blocks can still be confirmed via paths combining short and long references.
If no admissible short reference is available, the protocol requires $b$ to include exactly one long reference (which still has weight $0$).

Each block $b$ carries an application payload $\payload(b)$ and a reference set $\refs(b) \subseteq V$.
In this paper we treat $\payload(b)$ as a single opaque application-level object (e.g., a transaction), and we do not model its internal structure or the mempool-sampling policy; see Remark~\ref{rem:block-vs-tx} for a discussion of transaction redundancy and sampling.

In slot $t$, the short-reference window is $W(t;w)=\{t-w,\dots,t-1\}$. Thus, if $b$ is created in slot $t$, every short reference $r\in\refs(b)$ satisfies $1\le t-\slot(r)\le w$, 
so every reference edge points strictly backward in slot and the reference DAG is acyclic.
Long references primarily serve to maintain connectivity and, via ancestor closure, to preserve block confirmation:  when a block cannot attach to a sufficiently recent parent inside $W(t;w)$.

This can happen, for example, during temporary spikes in network latency relative to $w$
(e.g., short-lived routing anomalies or partial network partitions that momentarily delay
message delivery), 
 so that some tips fall outside the
short-reference window and must be reattached via long references. Our analysis, in line
with the Bitcoin backbone and Ouroboros Praos frameworks, assumes a partially
synchronous network with a delay bound~$\Delta$ and parameters chosen so that
$w \ge \Delta$; long references help reconnect and propagate voting weight through the DAG
once communication behaves according to this model again. 

Note that even though long references carry no fork-choice weight (for security and storage efficiency reasons), 
they still ensure that the DAG remains connected. As a result, an older block, while unweighted, can still become indirectly confirmed once it lies in the ancestor closure of some newer block that is short-ref-confirmed; long-refs simply provide the paths that connect such older blocks to the frontier.
All fork-choice and conflict-resolution rules are strictly local: all comparisons are performed relative to the closest common ancestor (CCA) of the competing tips, never relative to a global anchor or the full DAG.

\begin{definition}[Window and Short-Reference Window]\label{def:window}
For slot $t\in\mathbb{N}$ and window $w\in\mathbb{N}_{>0}$, define
\[
W(t;w)\;:=\;\{\tau:\ t-w \le \tau \le t-1\}\quad\text{(the last $w$ previous slots)}.
\]
Short references must target $W(t;w)$, i.e.,
\[
\ShortRefs(d;w)\;=\;\bigl\{\,r\in\refs(d):\ 1 \le \slot(d)-\slot(r) \le w\,\bigr\},\qquad
\wref(d;w)=|\ShortRefs(d;w)|.
\]
\end{definition}

\paragraph{Partial-order growth (protocol view).}
Let $G=(V,E)$ be the reference DAG. A block may be proposed at any time as long as its
references are acyclic and non-conflicting. Define reachability $u \preceq v$ iff $u=v$ or $u \rightsquigarrow v$.
All protocol rules (windowed referencing, scoring, and CCA-based conflict resolution) depend only on $\preceq$.
The protocol imposes no global total order and no linear-progress requirement.

\subsubsection{Ancestors, descendants, views, tips}

\medskip\noindent
\textbf{Ancestor and descendant sets.}
\[
  \Anc(b)\ :=\ \{\,a\in V:\ a\leadsto b\,\},\qquad
  \Desc(b)\ :=\ \{\,d\in V:\ b\leadsto d\,\}.
\]
Write $\AncStar(b)=\Anc(b)\cup\{b\}$ and $\DescStar(b)=\Desc(b)\cup\{b\}$. For a set $X\subseteq V$, define closures $\AncStar(X)=\bigcup_{x\in X}\AncStar(h)$ and $\DescStar(X)=\bigcup_{x\in X}\DescStar(x)$.

\medskip\noindent
\textbf{Views and view-relative ancestry.}
At the end of slot $t$, an honest party $P$ has a local view $V_t^P\subseteq V$. For $b\in V_t^P$,
\[
  \Anc_{V_t^P}(b)\ :=\ \{\,a\in V_t^P:\ a\leadsto b\text{ in the induced subgraph on }V_t^P\,\}.
\]
We use the shorthand $\Anc_V(\cdot)$ when the view $V$ is clear from context.

\paragraph{Windowed subgraph and tips.}
Let $V_w(t) = \{ b \in V^P_t : \slot(b) \ge t - w \}$ and let $G_t^{(w)}$ be the subgraph
induced by $V_w(t)$ with only short-reference edges.
The \emph{short-reference tips} are the blocks in the window that have no short-reference
descendants within the window:
\[
  \Tips_w(G_t)
    := \{\, b \in V_w(t) : \nexists d \in V_w(t) \text{ with a short-ref edge } b \to d \,\}.
\]
By default we write $\Tips_t$ for $\Tips_w(G_t)$; long references neither change tips nor
contribute weight.  
Equivalently, $\Tips_w(G_t)$ is the set of $\preceq$-maximal blocks
in $V_w(t)$ (Remark~\ref{rem:tips-default}).

\begin{remark}[Default meaning of tips]
\label{rem:tips-default}
Unless otherwise stated, ``tips'' always means short-reference tips $\Tips_w(G_t)$. Long
references never affect tip status or fork-choice weight. By construction, $\Tips_w(G_t)$ is
an antichain for the reachability order~$\preceq$: for any distinct $x,y \in \Tips_w(G_t)$ we
have neither $x \preceq y$ nor $y \preceq x$.
\end{remark}

\subsubsection{CCA, \subdagweights, preferred frontier (fork-choice relation)}
\label{sec:fork-choice}

\begin{definition}[Closest Common Ancestor (CCA)]
\label{def:cca}
For conflicting $i,j$, let $\mathrm{CA}(i,j)=\Anc(i)\cap\Anc(j)$.
Define $\CCA(i,j)=\arg\max_{c\in \mathrm{CA}(i,j)}\big(\mathrm{slot}(c),\, y(c)\big)$, 
i.e., the \emph{most recent} common ancestor (ties are broken by $y$). 
\end{definition}
Note that CCA is purely structural: it depends only on ancestry and $(\slot, y)$.
\subdagweights are applied later
inside the conflicted-tips resolution rule $\CTR(\cdot, \cdot)$ (the CCA-local fork-choice comparison between two competing tips; see Definition~\ref{def:ctr}).

\paragraph{Windowed references and subDAG weight.}
Recall from Definition~\ref{def:window} that for a block $d$ and window $w$ we write
\[
  \ShortRefs(d; w) = \{ r \in \refs(d) : 1 \le \slot(d) - \slot(r) \le w \}, \qquad
  \wref(d; w) = |\ShortRefs(d; w)|
\]
for the short references of $d$ that lie within the window.
For conflicting tips $i, j$ with $c = \CCA(i,j)$ and window $w$, define the CCA-local
subDAG weight
\[
  \SubdagW(i; c, w) := \sum_{d \in \Desc^*(i) \cap \Desc^*(c)} \wref(d; w),
\]
and analogously for $j$.

We call two \branches \emph{incomparable} if neither extends from the other, i.e.\ they have no ancestor--descendant relationship.

\medskip
\noindent\textbf{Conflict and compatibility among tips.}  
Two tips $x,y \in \Tips_w(G_t)$ are said to be \emph{conflicting} if their payloads
conflict according to the application-level predicate $\Conflicts(\cdot,\cdot)$; otherwise
we say they are \emph{compatible}. By Remark~\ref{rem:tips-default}, all tips in $\Tips_w(G_t)$ are
incomparable with respect to the structural reachability order $\preceq$; compatibility
and conflict here refer only to payloads, not ancestry.

\begin{definition}[Conflict Predicate \Conflicts]\label{def:conflicts}
We assume a black-box predicate $\Conflicts(b, H)$ that returns true iff the payload of block $b$ conflicts with the payloads of some blocks in $H$, according to the underlying application.  
We do not need to open this predicate in the consensus analysis.
\end{definition}

\begin{definition}[Conflicted-tips resolution]\label{def:ctr}
Let $b,b'$ be two incomparable short-reference tips in $\Tips_w(G_t)$
whose payloads conflict. Let $c := \CCA(b,b')$ and let $w$ be the
short-reference window length. Define their window-filtered \subdagweights as
\[
  W_b := \BranchW(b; c, w),\qquad
  W_{b'} := \BranchW(b'; c, w).
\]
The \emph{conflicted-tips resolution} rule
$\CTR(b,b') \in \{b,b'\}$ is:
\[
  \CTR(b,b') :=
  \begin{cases}
    b  & \text{if } W_b > W_{b'},\\[2pt]
    b' & \text{if } W_{b'} > W_b,\\[2pt]
    \text{the tip with larger } y(\cdot)
       & \text{if } W_b = W_{b'},
  \end{cases}
\]
where $y(\cdot)$ is the public-coin label used for deterministic
tie-breaking.
\end{definition}

\begin{definition}[Fork-choice preference and preferred frontier]
\label{def:preferred-frontier}
\label{def:undom}
Let $\Tips_w(G_t)$ be the short-reference tips at the end of slot $t$ and fix a window
length $w$.

\smallskip
\noindent
\textbf{Fork-choice preference on conflicting tips.}
For any pair of conflicting tips $x, y \in \Tips_w(G_t)$, we write $x \succ y$ if
$\CTR(x,y) = x$ and $y \succ x$ if $\CTR(x,y) = y$. The relation $\succ$ is left
undefined on non-conflicting pairs, which therefore remain incomparable.

\smallskip
\noindent
\textbf{Preferred frontier.}
The preferred frontier at time $t$ is the set of tips that are $\succ$-maximal within
their conflict class:
\[
  F_t := \{\, x \in \Tips_w(G_t) : \nexists\, y \in \Tips_w(G_t) \text{ such that
  $y$ conflicts with $x$ and } y \succ x \,\}.
\]
Equivalently, $F_t$ consists of the $\succ$-maximal tips in each conflict class of
$\Tips_w(G_t)$. Since $\succ$ is conflict-local and never compares compatible tips,
many non-conflicting tips may simultaneously lie in $F_t$.
\end{definition}

Operationally, the preferred frontier $F_t$ is realized by Algorithms~\ref{alg:ideal-local-fc} and~\ref{alg:base-local-fc} in the
idealized and practical protocols, respectively: they start from the short-reference tip
set $U_t = \Tips_w(G_t)$ and apply the conflicted-tips resolution rule $\CTR(\cdot,\cdot)$
(Definition~\ref{def:ctr}) to conflicting pairs until no conflicts remain, at which point the
remaining tips form $F_t$.  
Throughout the paper, a \emph{preferred subDAG} means any
subDAG that contains a tip from $F_t$; several such non-conflicting preferred subDAGs
may coexist in parallel.

\begin{remark}[Notation: structure vs.\ preference]
We use $\preceq$ (and its strict form $\prec$) exclusively for the structural reachability relation on the DAG: $u \preceq v$ means that $u$ is an ancestor of $v$ (or $u = v$). In contrast, the symbol $\succ$ is reserved for the fork-choice preference relation on tips, which is defined in Section~2.2.3 via CCA-local subDAG weights and the conflicted-tips resolution rule $\CTR(\cdot,\cdot)$. These relations are unrelated: $\preceq$ expresses ancestry, whereas $\succ$ expresses fork-choice preference among conflicting tips.
\end{remark}

\subsubsection{Short-ref and block confirmations; deterministic analysis \& anchors}
\label{sec:tx-confirmation}

We conceptually separate fork-choice weight from (logical) block confirmation. In a general deployment, a block would carry a batch of transactions, and a transaction is considered confirmed once the block that contains it lies in the ancestor closure of some visible block. In our single-payload-per-block model (Remark~\ref{rem:block-vs-tx}), each block stands for one opaque application payload (e.g., one transaction), so block-level confirmation and transaction-level confirmation coincide.

\paragraph{Confirming descendants and fork-choice weight.}
Short references drive fork-choice weight; confirmation is purely about reachability. A short-ref-confirming descendant of a block is exactly an edge that contributes to fork-choice weight under our rule. Long references never contribute to fork-choice weight. They are used solely to maintain connectivity and to ensure that older blocks---and thus their payloads---eventually lie in the ancestor closure of some visible block, along a path that may mix short and long references.

\paragraph{Deterministic linearization (analysis view).}
For the analysis in Section~\ref{sec:finality} we fix a deterministic linearization of an anchored subDAG. 
Given an anchor $a$ (e.g., a finalized checkpoint) and a window $w$, let $F_t$ be the preferred frontier under the local CCA-based fork-choice rule. We write $\Lin_a(G_t)$ for any topological order consistent with $\preceq$ on $\Anc^\ast(F_t) \cap \Desc^\ast(a)$, breaking ties by the public coin $y(\cdot)$. This linearization is used only as an analytical device to define ledger-style properties; it is not part of the protocol state or of any client operation. Anchors and the associated linearization appear only in the analysis of Section~\ref{sec:finality}; the protocol exports only a DAG (partial order).
For the remainder of the paper, we regard the analysis anchor $a$ as fixed but
otherwise arbitrary whenever we refer to $\Lin_a(G_t)$.

\begin{remark}[Protocol vs.\ analysis layer]\label{rem:protocol-vs-analysis}
The consensus protocol layer exports only a DAG and the preferred frontier $F_t$: 
block validity, short-reference windowing, and CCA-local fork choice are fully specified without any on-wire anchor. Anchors appear only in the security analysis of Section~\ref{sec:finality}: given a DAG view $G_t$, we fix a deterministic analysis-level linearization map (anchored linearization together with the $k$-depth finality predicate) that defines a ledger abstraction used only in the proofs. This analysis-level linearization is a deterministic projection of the DAG and cannot introduce or remove any consensus decisions. It does not modify protocol state nor block validity and is best thought of as a mathematical device, analogous to the ``$k$ confirmations'' convention in chain protocols. Actual client implementations may expose any view that is consistent with the DAG's partial order; we do not fix a specific client interface in this paper.
\end{remark}

\paragraph{Design summary (fork choice \& references).}
\begin{itemize}
\item
Fork choice is local and window-filtered: short references within the last w slots carry unit weight; long references always carry weight 0.
\item 
Fork choice is set-valued: 
instead of tracking the tip of a single linear chain as in longest-chain protocols, it maintains a preferred frontier \(F_t\) of \(\succ\)-maximal tips in each conflict class.

\item All comparisons between competing subDAGs are CCA-local: 
when comparing two tips $i,j$ with $c = \CCA(i,j)$, the subDAG weight of each tip is computed only from blocks in the region between $c$ and that tip, i.e., blocks that are descendants of $c$ and ancestors of the tip.

\end{itemize}

%!TEX root = main.tex

%%%%%%%%%%%%%%%%%%%%%%%%%%%%%%%%%%%%%%%%%%%%%%%%%%%%%%%%%%%%%%%%%%%%%%%%%%%%%%%
% Section 2: Preliminaries: Security Model
%%%%%%%%%%%%%%%%%%%%%%%%%%%%%%%%%%%%%%%%%%%%%%%%%%%%%%%%%%%%%%%%%%%%%%%%%%%%%%%

\subsection{Security Model}
\label{sec:model}

We adopt the bounded-delay (partial synchrony) model: there exists an unknown constant $\Delta$ such that any message broadcast by an honest party in slot $t$ is delivered to every honest party by the end of slot $t+\Delta$. Parties need not know $\Delta$ at runtime; security parameters (e.g., window length $w$ and finality depth $k$) are calibrated as functions of $\Delta$ in the analysis.

The adversary $\mathcal{A}$ controls at most an $\alpha<\tfrac12$ fraction of stake, may adaptively corrupt parties, schedule messages subject to the $\Delta$-delay constraint, and inject blocks for corrupted parties. All honest parties start from a common genesis, with known public keys and initial stake allocation.

\paragraph{Execution.}
In each slot $t$, every honest party $P$ collects all messages delivered by time $t$, verifies them, checks its VRF-based eligibility, and if eligible, creates a block and broadcasts it. We write $V_t^P$ to denote $P$'s local DAG view at the end of slot $t$, and $L_t^P$ for the ledger induced by $P$'s fork choice at time $t$ (window-filtered as defined in Section~\ref{sec:basics-tools}). Formally, $L_t^P$ is an analysis-level abstraction of the DAG view $V_t^P$; the protocol state consists only of the DAG and the local fork-choice rule.

\subsubsection{Consensus Properties for DAGs}
\label{sec:DG-DQ-DCP}

\noindent
\textbf{Scope.} In this paper, DAG properties (DG/DQ/\DCP/TB) are evaluated with respect to the  \emph{preferred frontier} $\Undom_t$ 
(Def.~\ref{def:undom}) under the local CCA-based rule. Anchors and linearization are used only in Section~\ref{sec:finality} as an analysis device and are not part of protocol state. Thus, when reasoning about DG, DQ, \DCP, and TB, it suffices to work with the local fork choice and $\Undom_t$; only when we state ledger-style properties do we additionally consider the anchored linearization induced by $\Undom_t$. That linearization is fixed for the proofs and does not represent an operation that honest nodes or clients must perform.

\begin{figure}[htbp!]
\centering
\begin{tikzpicture}[x=1cm,y=0.8cm,>=Latex,thick]

  % Black = blocks
  \fill[black] (0,0) circle (2pt);
  \node[right=0.4cm of {(0,0)}] {\scriptsize Generic block};

  % Green = honest/finalized prefix
  \fill[green!40!black] (0,-1) rectangle (0.4,-0.6);
  \node[right=0.6cm of {(0.4,-0.8)}] {\scriptsize Honest/finalized prefix};

  % Blue = frontier
  \fill[blue!30] (0,-2) rectangle (0.4,-1.6);
  \node[right=0.6cm of {(0.4,-1.8)}] {\scriptsize Frontier / last $k_D$ layers};

  % Gray = common past
  \fill[black!20] (0,-3) rectangle (0.4,-2.6);
  \node[right=0.6cm of {(0.4,-2.8)}] {\scriptsize Common past region};

  % Red = transaction broadcast
  \fill[magenta!90] (0,-4) circle (2pt);
  \node[right=0.4cm of {(0,-4)}] {\scriptsize Block broadcast / inclusion};

\end{tikzpicture}
\caption{\textbf{Legend of colors used in Figures~\ref{fig:dg}--\ref{fig:liveness}.} 
Black = blocks; green = honest or finalized prefix; blue = frontier region; gray = common past; red = broadcast block event or inclusion.}
\label{fig:legend}
\end{figure}

\subsubsection*{DAG Growth (DG)}
\label{sec:DG}

In any window of $\ell$ slots, honest work ensures steady growth of at least one honest \prefsubdag (the DAG analogue of the chain-growth property in a linear blockchain, despite the DAG's parallelism). That is, at least $\tau_D \cdot \ell$ honest blocks created during that window end up on the past of an honest party's fork-choice tip (i.e., on an honest \prefsubdag).

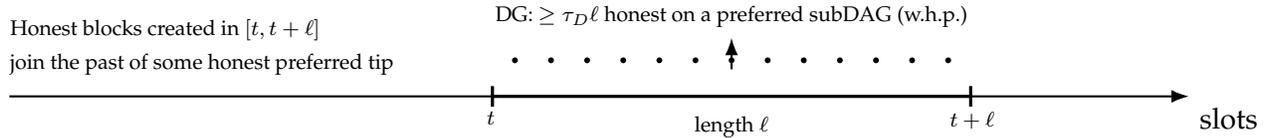
\begin{figure}[htp!]
\centering
\begin{tikzpicture}[x=0.16cm,y=0.6cm,>=Latex,thick]
  % Axis
  \draw[-{Latex}] (0,0) -- (98,0) node[below right] {slots};

  % Interval [s, s+ell]
  \draw[very thick,|-|] (40,0) -- (80,0);
  \node[below] at (40,-0.1) {\scriptsize $t$};
  \node[below] at (80,-0.1) {\scriptsize $t+\ell$};
  \node[below] at (60,-0.2) {\scriptsize length $\ell$};

  % Honest blocks accumulating
  \foreach \x in {42,45,48,51,54,57,60,63,66,69,72,75,78} {
    \fill[black] (\x,0.8) circle (1.1pt);
  }
  \node[align=left] at (16,1.1) {\scriptsize Honest blocks created in $[t,t+\ell]$ \\[-1pt] \scriptsize join the past of some honest preferred tip};

  % Arrow up
  \draw[->] (60,0.6) -- (60,1.3);
  \node[above] at (60,1.3) {\scriptsize DG: $\ge \tau_D\ell$ honest on a \prefsubdag (w.h.p.)};
\end{tikzpicture}
\caption{\textbf{DAG Growth (DG).}
\small 
Over any $\ell$-slot interval, with probability $1-\negl(\secp)$ at least $\tau_D\ell$ honest blocks become ancestors of \emph{an honest party's fork-choice tip} (i.e., lie on some honest \prefsubdag) by the end of the interval. Black dots represent honest blocks produced during the interval.}
\label{fig:dg}
\end{figure}

\begin{definition}[DAG Growth]\label{def:DG}
Fix a security parameter $\secp$ and let $\negl(\cdot)$ denote a negligible function. Let $\Delta$ be the network delay bound. For any interval $I=[t,t+\ell-1]$ of $\ell$ consecutive slots , let $H_I$ be the set of honest blocks created in $I$ that, by the end of slot $t+\ell-1+\Delta$, are contained in the ancestor set of at least one honest party's fork-choice tip (equivalently, they lie on some honest party's \prefsubdag). We say $\mathrm{DG}[\tau_D,\ell]$  holds if for some constant $\tau_D\in(0,1]$ and for all such intervals
\[
\Pr\!\big[\,|H_I| \ge \tau_D\cdot \ell\,\big] \ge 1-\negl(\secp)\,.
\]
\end{definition}

\paragraph{Remark (how we measure progress).}
In a multi-proposer DAG, many honest tips can coexist and grow in parallel: the protocol state is \emph{partially ordered}. We therefore \emph{measure} steady ledger progress in a derived linearized view used only in the analysis: when we linearize an anchored \subdag, at least a $\tau_D$ fraction of honest blocks created over any $\ell$-slot interval appear in that linearization with high probability (DG-Lin). This focuses progress on some honest \prefsubdag \emph{in the analytical view}, without imposing global linearity on the protocol or constraining client behavior.

\paragraph{DG in the analysis view (DG-Lin).}
Fix any $\ell$-slot interval $I = [t, t + \ell - 1]$ and let $H_I$ be the
honest blocks created in $I$. At time $t' = t + \ell - 1 + \Delta$, consider
the anchored analysis-level linearization $\Lin_a(G_{t'})$ from
Section~\ref{sec:tx-confirmation}, where $a$ is the fixed analysis anchor introduced there
(e.g., a finalized checkpoint).
We say $\mathrm{DG\mbox{-}Lin}[\tau_D, \ell]$ holds if, except with negligible
probability,
\[
  \bigl|\{\, b \in H_I : b \text{ appears in } \Lin_a(G_{t'}) \,\}\bigr|
  \;\ge\; \tau_D \cdot \ell.
\]

\paragraph{Equivalence to the original DG (informal).}
Under our window-limited, CCA-local fork choice (and assuming TB), each
honest party’s preferred frontier $F_t$ at time $t$ induces a canonical
ledger-style view of its preferred subDAG: applying the anchored map
$\Lin_a(\cdot)$ from Section~2.2.4 to its local DAG view $G_t$ (with the
same fixed analysis anchor $a$) yields a linearization $\Lin_a(G_t)$.
In the original DG formulation, an honest block is counted as “grown”
if it lies on the past of some honest preferred tip, which is equivalent
to saying that it appears in $\Lin_a(G_t)$ for every honest party’s view
at the corresponding time.
Conversely, any block that is absent from all honest preferred subDAGs
cannot appear in the anchored linearization used for the analysis-level
ledger view.
Thus DG and DG-Lin differ only in presentation: DG is phrased in terms
of per-party preferred-tip views $\Lin_a(G_t)$, whereas DG-Lin is phrased
in terms of the global anchored linearization $\Lin_a(G_{t'})$ once the
interval $I$ has been fully delivered.

\subsubsection*{DAG Quality (DQ)}
\label{sec:DQ}

In any window of $\ell$ slots, the fraction of honest blocks among all blocks entering the “recent frontier” (within the last $w$ slots on the \prefsubdag) is at least $\mu_D$.

\begin{definition}[DAG Quality]\label{def:DQ}
Fix a security parameter $\secp$. For any interval $I=[t,t+\ell-1]$ , consider the set $B_I$ of blocks that enter the \prefsubdag's frontier (i.e., become descendants of the fork-choice tip) during $I$. Let $B_I^{\mathsf{H}}$ be the subset of $B_I$ created by honest parties. We say $\mathrm{DQ}[\mu_D,\ell]$ holds if
\[
\Pr\big[\,|B_I^{\mathsf{H}}| \ge \mu_D \cdot |B_I|\,\big] \ge 1-\negl(\secp)\,. 
\]
\end{definition}

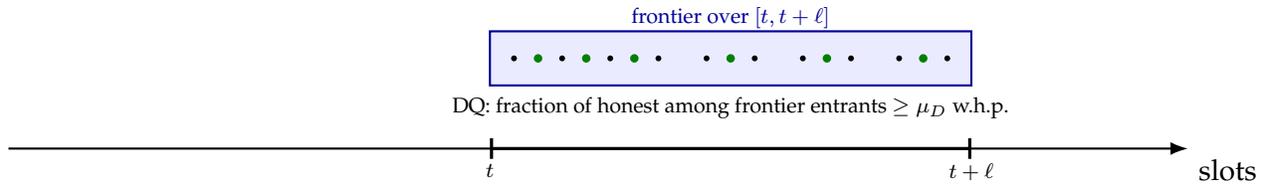
\begin{figure}[htp!]
\centering
\begin{tikzpicture}[x=0.16cm,y=0.6cm,>=Latex,thick]
  % Axis
  \draw[-{Latex}] (0,0) -- (98,0) node[below right] {slots};

  % Frontier band over an interval
  \fill[blue!8] (40,1.4) rectangle (80,2.6);
  \draw[blue!60!black] (40,1.4) rectangle (80,2.6);
  \node[blue!60!black] at (60,2.9) {\scriptsize frontier over $[t,t+\ell]$};

  % Honest vs total markers in frontier
  \foreach \x in {42,46,50,54,58,62,66,70,74,78} {
    \fill[black] (\x,2.0) circle (1.1pt); % total
  }
  \foreach \x in {44,48,52,60,68,76} {
    \fill[green!50!black] (\x,2.0) circle (1.6pt); % honest subset
  }

  % Labels s and s+ell
  \draw[very thick,|-|] (40,0) -- (80,0);
  \node[below] at (40,-0.1) {\scriptsize $t$};
  \node[below] at (80,-0.1) {\scriptsize $t+\ell$};

  % Callout
  \node[align=center] at (60,0.9) {\scriptsize DQ: fraction of honest among frontier entrants $\ge \mu_D$ w.h.p.};
\end{tikzpicture}
\caption{\textbf{DAG Quality (DQ).} 
\small 
In any $\ell$-slot window, the fraction of \emph{honest blocks} among those entering the advancing frontier is at least $\mu_D$ with probability $1-\negl(\secp)$. In the illustration, the blue band marks the advancing frontier; black dots represent all blocks that enter this frontier, and green dots highlight the subset created by honest parties.}
\label{fig:dq}
\end{figure}

\subsubsection*{DAG Common Past (\DCP)}
\label{sec:DCP}

For any two honest views of the DAG at times $t_1\le t_2$, if we remove the last $k_D$ layers from the current frontier, their remaining pasts are identical. Equivalently, all honest parties share the same set of ancestor blocks up to the most recent CCA of their tips.

\medskip

\noindent\textbf{Layers.} A \emph{layer} is the set of blocks with the same slot index. Trimming the last $k_D$ layers removes all blocks whose slot lies in the most recent $k_D$ slots.
Equivalently, all honest parties share the same set of ancestor blocks up to the most recent CCA of their tips: their trimmed pasts, denoted $\Past_P(t_1)$ and $\Past_Q(t_2)$, coincide once we delete the last $k_D$ layers.

\begin{definition}[DAG Common Past (\DCP)]
\label{def:DCPprime}
\label{def:DCP}

Let $\Undom_t$ denote the preferred frontier of honest nodes at times $t$  (Definition~\ref{def:preferred-frontier}),
and define 
$\Past(t):=\AncStar(\Undom_t)$.
There exists an integer $k_D \ge 0$ such that for all honest parties $P,Q$ and 
times $t_1\le t_2$,
after removing the most recent $k_D$ layers from $\Past_P(t_1)$ and 
$\Past_Q(t_2)$,
the former is a subset of the latter, except with negligible probability.
\end{definition}

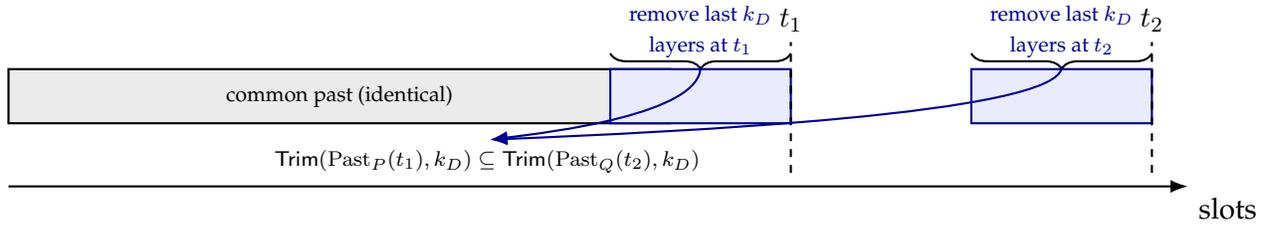
\begin{figure}[htbp!]
\centering
\begin{tikzpicture}[x=0.16cm,y=0.6cm,>=Latex,thick]
  % Axis
  \draw[-{Latex}] (0,0) -- (98,0) node[below right] {slots};

  % Common past region (identical after trimming)
  \fill[black!8] (0,1.4) rectangle (55,2.6);
  \draw (0,1.4) rectangle (55,2.6);
  \node at (27.5,2.0) {\scriptsize common past (identical)};

  % Frontier windows near t1 and t2 (last k_D layers)
  \fill[blue!8] (50,1.4) rectangle (65,2.6);
  \draw[blue!60!black] (50,1.4) rectangle (65,2.6);
  \node[blue!60!black] at (57.5,3.1) {\scriptsize 
  layers at $t_1$
  };

  \fill[blue!8] (80,1.4) rectangle (95,2.6);
  \draw[blue!60!black] (80,1.4) rectangle (95,2.6);
  \node[blue!60!black] at (87.5,3.1) {\scriptsize 
  layers at $t_2$
  };

  % Vertical markers t1, t2
  \draw[dashed] (65,0.3) -- (65,3.2) node[above] {$t_1$};
  \draw[dashed] (95,0.3) -- (95,3.2) node[above] {$t_2$};

  % Braces highlighting "trim" operation
  \draw[decorate,decoration={brace,amplitude=6pt}] (65,2.95) -- (50,2.95)
    node[midway,above=7pt,blue!60!black] {\scriptsize remove last $k_D$ };
  \draw[decorate,decoration={brace,amplitude=6pt}] (95,2.95) -- (80,2.95)
    node[midway,above=7pt,blue!60!black] {\scriptsize remove last $k_D$};

  % Annotation showing subset relation of trimmed DAGs
  \node[align=center] (trimEq) at (40,0.55)
    {\scriptsize 
    $\Trim(\Past_P(t_1), k_D) \subseteq \Trim(\Past_Q(t_2), k_D)$
    };

  % Arrows from braces to annotation
  \draw[->,blue!60!black] (57.5,2.65) .. controls (57.5,1.6) and (45,1.2) .. (trimEq.north);
  \draw[->,blue!60!black] (87.5,2.65) .. controls (87.5,1.6) and (55,1.2) .. (trimEq.north);
\end{tikzpicture}
\caption{\textbf{DAG Common Past (\DCP).} 
The \textcolor{black!60!white}{gray region} shows the common past that all honest parties agree on. 
For two honest parties $P$ and $Q$ at times $t_1\le t_2$,  let $\Past_P(t_1)$ and $\Past_Q(t_2)$ be the ancestor sets of their  preferred frontiers (Def.~\ref{def:DCPprime}). 
The \textcolor{blue!60!black}{blue bands} depict the most recent $k_D$ layers adjacent to these frontiers. After removing these $k_D$ frontier layers, the earlier trimmed past is contained in the later one, i.e., $\Trim(\Past_P(t_1), k_D) \subseteq \Trim(\Past_Q(t_2), k_D)$, with probability $1-\negl(\secp)$. 
}

\label{fig:dcp}
\end{figure}

\begin{remark}[Mergeable views] 
DCP does not require identical DAGs; honest views may be mergeable (no ledger conflicts) and coalesce as messages arrive.
\end{remark}

\begin{remark}[Consistency of DAG properties]
Definitions~\ref{def:DG} (DAG Growth),~\ref{def:DQ} (DAG Quality), 
and~\ref{def:DCPprime} (DAG Common Past (\DCP)) 
are all stated relative to an honest party's preferred frontier $\Undom_t$ and its past $\Past(t) = \AncStar(\Undom_t)$. 
This ensures consistency across the properties: \emph{DG} guarantees continued growth along at least one honest \prefsubdag
in the DAG,  \emph{DQ} guarantees that honest blocks dominate the frontier of that \branch,   
and 
\DCP
guarantees that honest parties' \prefsubdags have a long common past. Together, these properties provide a coherent DAG analogue of the chain-growth, chain-quality, and common-prefix properties from backbone analyses.
\end{remark}

\subsubsection{Ledger Properties}
\label{sec:ledger-properties}

\subsubsection*{Persistence (Safety)}
\label{sec:safety}
A protocol has \emph{persistence} with parameter $k$ if for any two honest times $t_1\le t_2$, the ledgers produced by the fork-choice rule at those times agree on all blocks that were at least $k$-deep at time $t_1$, except with probability $\negl(\secp)$.

\begin{figure}[htp!]
\centering
\begin{tikzpicture}[x=0.16cm,y=0.6cm,>=Latex,thick]
  % Axis
  \draw[-{Latex}] (0,0) -- (98,0) node[below right] {slots};

  % t1 and t2 markers
  \draw[dashed] (70,-2.0) -- (70,3.0) node[above] {$t_1$};
  \draw[dashed] (90,-2.0) -- (90,3.0) node[above] {$t_2$};

  % k-deep prefix at t1
  \fill[green!12] (0,1.4) rectangle (45,2.6);
  \draw[green!40!black] (0,1.4) rectangle (45,2.6);
  \node[green!40!black,anchor=west] at (0.5,2.0) {\scriptsize $k$-deep prefix at $t_1$};

  % Same prefix at t2 (visual echo)
  \draw[green!40!black,dashed] (0,1.4) rectangle (45,2.6);
  \node[green!40!black] at (22.5,1.0) {\scriptsize remains identical at $t_2$};

  % Legend arrow
  \draw[->] (46,2.0) -- (68,2.0) node[midway,above] {\scriptsize immutable region};
\end{tikzpicture}
\caption{\textbf{Persistence (Safety).} 
\small 
All but the last $k$ blocks in the ledger at time $t_1$ remain immutable and appear in every honest ledger at any later time $t_2\ge t_1$, except with probability $1-\negl(\secp)$. The green region shows the $k$-deep finalized prefix at $t_1$, which remains unchanged in later ledgers.}
\label{fig:persistence}
\end{figure}
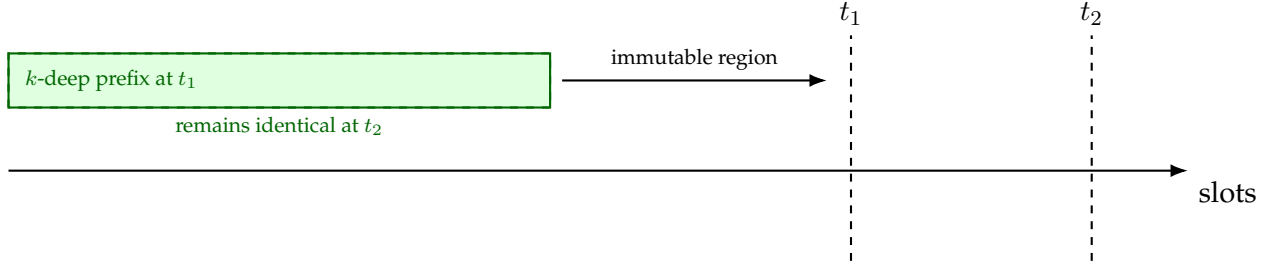

\begin{definition}[Persistence (Safety)]
Fix $k \in \mathbb{N}$. For an honest party $P$ and slot $t$, let $L_t^P$ denote the ledger 
 induced by applying the fork-choice rule to $V_t^P$ 
 (i.e., the sequence of blocks confirmed by time $t$ in $P$'s view).
 We say $\mathrm{Persistence}[k]$ holds if for all honest parties $P$, $Q$ and all slots $t_1 \le t_2$, 
 \[
\Pr\!\Big[\, L_{t_1}^P[1..n-k] = L_{t_2}^Q[1..n-k] \,\Big] \ge 1-\negl(\secp)\,,
\]
where $n = |L_{t_1}^P|$. 
\end{definition}

\begin{remark}[On DAG Growth and Persistence]
The proof of persistence  relies on DG to guarantee that honest blocks accumulate at a linear rate 
\emph{along some honest \prefsubdags}. Once a block is $k$-deep under a fork-choice tip of one such honest \prefsubdag, the honest margin behind it grows steadily, preventing any \adversarialsubdag from overturning it.
\end{remark}

\subsubsection*{Liveness}
\label{sec:liveness}

A protocol has \emph{liveness} with parameter $\ell_{\mathrm{live}}$ if any valid block broadcast continuously by an honest party  appears in the ledger of every honest party within $\ell_{\mathrm{live}}$ slots, except with probability $\negl(\secp)$.

\begin{figure}[htp!]
\centering
\begin{tikzpicture}[x=0.16cm,y=0.6cm,>=Latex,thick]
  % Axis
  \draw[-{Latex}] (0,0) -- (98,0) node[below right] {slots};

  % Broadcast start s and window [s, s+ell_live]
  \draw[very thick,|-|] (40,-0.8) -- (75,-0.8);
  \node[below] at (40,-0.84) {\scriptsize $t$};
  \node[below] at (75,-0.84) {\scriptsize $t+\ell_{\mathrm{live}}$};
  \node[below] at (57.5,-0.9) {\scriptsize inclusion window};

  % Broadcast event
  \fill[magenta!90] (40,-0.8) circle (1.3pt);
  \node[below=4pt,magenta!90] at (40,-0.98) {\scriptsize $\block$ broadcast repeatedly};

  % Inclusion arrow into ledger
  \draw[->,magenta!90] (68,-0.3) -- (68,1.0);
  \node[magenta!90,align=center] at (68,1.6) {\scriptsize $\block$ appears in $L_t^Q$\\[-1pt]\scriptsize for some $t'\in[t,t+\ell_{\mathrm{live}}]$};
\end{tikzpicture}
\caption{\textbf{Liveness.} 
\small
Any valid block $\block$ broadcast continuously by an honest party  is included in every honest ledger within $\ell_{\mathrm{live}}$ slots, except with probability $\negl(\secp)$. {\color{magenta}{Magenta}} marks the block broadcast event at slot $t$ and the inclusion window $[t,t+\ell_{\mathrm{live}}]$, with an arrow showing when the block enters an honest ledger.}
\label{fig:liveness}
\end{figure}
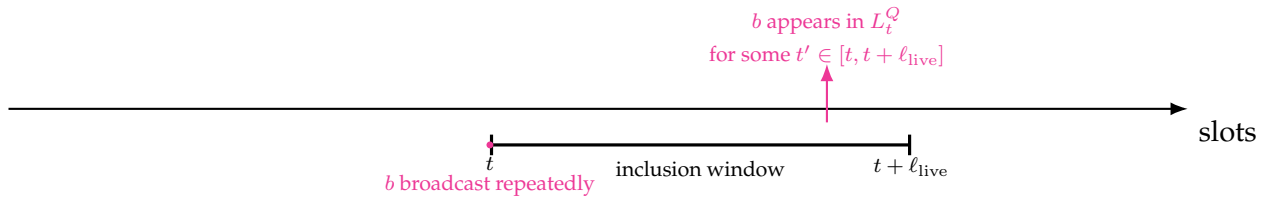

\begin{definition}[Liveness]
Fix $\ell_{\mathrm{live}} \in \mathbb{N}$. For any valid block $\block$ that is broadcast by an honest party $P$ in every slot $t$, let $\cE_{t'}^Q(\block)$ denote the event that honest party $Q$'s ledger $L_{t'}^Q$ at the end of slot $t'$ contains $\block$. We say $\mathrm{Liveness}[\ell_{\mathrm{live}}]$ holds if for all honest parties $P,Q$ and for all slots $t$,
\[
\Pr\!\Big[\, \exists\, t' \in [t,\,t+\ell_{\mathrm{live}}] : \cE_{t'}^Q(\block) \,\Big] \ge 1-\negl(\secp)\,. 
\]
That is, any block repeatedly injected by an honest party  is guaranteed to appear in every honest party's ledger within $\ell_{\mathrm{live}}$ slots, except with negligible probability.
\end{definition}

\begin{remark}[On DAG Growth and Liveness]\label{rem:DG-liveness}
Liveness requires that valid blocks eventually appear
in all honest ledgers. The revised DG property, formulated with respect to preferred subDAGs, ensures
continued growth along at least one of the honest preferred subDAGs. Together with TB (which prevents
the preferred frontier $F_t$ of honest nodes from fragmenting indefinitely) and DQ (which guarantees that
honest blocks dominate this frontier), this implies that any valid block that is repeatedly broadcast will be
absorbed into the preferred frontier of every honest party and become confirmed within bounded time.
\end{remark}

\subsubsection{Tip Boundedness and Its Necessity}
\label{sec:TB}

The preceding properties (DAG Growth, DAG Quality, DAG Common Past) capture progress, fairness, and eventual agreement on history.
These properties ensure that the underlying DAG admits a secure ledger abstraction. 
 However, these three alone are insufficient to guarantee a secure ledger. We introduce the additional invariant of \emph{Tip Boundedness} (TB).

\begin{definition}[Tip-Boundedness (TB)]
\label{def:TB}
Fix a security parameter $\secp$. Let $\Tips_t$ denote the set of tips visible to an honest party at the end of slot $t$. We say $\TB[\beta]$ holds if there exists a bound $\beta = \beta(H,\Delta,\lambda,w)$ such that for all $t $,
\[
\Pr[\,|\Tips_t| \le \beta\,] \ge 1-\negl(\secp)\,.
\]
Here $H$ is the honest stake (or eligibility) fraction, $\Delta$ is the network delay bound, $\lambda$ denotes the expected number of eligible block proposals per slot (parallelism), i.e., $\lambda=\lambda_h+\lambda_a$, 
 and $w$ is the short-reference window length.
\end{definition}

\begin{remark}[TB's role in finality (\S\ref{sec:finality})]\label{rmk:tb-to-finality}
We invoke Tip‐Boundedness (TB) in \S\ref{sec:finality} to bound the per‐window increments of the gap process (Def.~\ref{def:gap}), which enables the concentration step in the master $(k,\varepsilon)$-finality theorem (Thm.~\ref{thm:master-finality}) via standard Azuma--Hoeffding and Chernoff bounds. Concretely, TB together with $w \!\ge\! \Delta$ ensures that short-reference contributions per window are bounded, yielding a submartingale with bounded differences.
\end{remark}

\paragraph{Roadmap for TB.}
We establish TB in both models and invoke it in Persistence/Liveness:
\begin{itemize}
\item \textbf{Ideal model.} Lemma~\ref{lem:TB-ideal} proves $\TB[\beta]$ with $\beta = O(\lambda)$.
\item \textbf{Practical model.} Lemma~\ref{lem:TB-base} proves $\TB[\beta]$ with $\beta = O((\lambda_h+\lambda_a)\cdot\Delta)$ (equivalently $O(\lambda\Delta)$), assuming $w \ge \Delta$. 
\end{itemize}

\begin{remark}[Use of TB in Liveness and Persistence]
\label{rem:TB-usage}
TB bounds the frontier width and thus lower-bounds the per-tip confirmation rate. In the ideal-model and base-model security sections, we cite Lemma~\ref{lem:TB-ideal} or Lemma~\ref{lem:TB-base} when deriving concrete values for $\ell_{\mathrm{live}}$ and $k = k(w,\varepsilon)$. We write $k=k(w,\varepsilon)$ to emphasize its dependence on the window and tail target.
\end{remark}

\begin{remark}[Why this formulation]
Unlike DG, DQ, and \DCP which can be interpreted in a linearized setting, TB explicitly acknowledges the DAG's concurrency: multiple tips may exist at once. Bounding their number ensures that honest blocks regularly \emph{merge or coalesce} tips, preventing indefinite fragmentation of honest work. This makes DG, DQ, and \DCP meaningful in the DAG setting, because they now apply to a bounded set of \prefsubdags rather than an unbounded frontier.
\end{remark}

\paragraph{Why TB Is Necessary.}
If the number of honest tips is unbounded, then:

\begin{itemize}
\item \textbf{Safety problem:} 
Honest work fragments across many incomparable \branches (i.e., components that are incomparable under~$\preceq$). In fork choice, an adversary can concentrate its weight on one \branch to override the others. Even if DG, DQ, and \DCP hold, safety is violated because no \branch accumulates a \emph{stable majority} of weight. TB is necessary to limit the number of competing tips. By itself, TB does not guarantee that one \branch immediately holds a majority (honest weight could still be split across a few tips), but it creates conditions under which honest cross-referencing and fork-choice dynamics ensure that one \branch eventually dominates.

\begin{remark}[Stable majority nuance]
TB is thus a \emph{necessary but not sufficient} condition: it bounds fragmentation, and in combination with DG and DQ it yields eventual honest majority on one \branch, assuming honest nodes cross-reference tips once visible.
\end{remark}

\item \textbf{Liveness under wide frontiers (latency/throughput view):} When the number of incomparable honest tips grows, honest referencing effort spreads across more \branches. Even if every honest block is valid, short-reference weight is diluted: the per-tip confirmation rate drops, increasing the expected time-to-first-confirmation for some blocks. Under partial synchrony, this can push latency beyond any fixed $\ell_{\mathrm{live}}$ unless the frontier width is bounded. \emph{Tip Boundedness (TB)} enforces such a bound; with an honest block rate $\lambda_h$ and per-block short-reference probability $q>0$, the probability that a block has at least one confirming descendant by time $t$ is $\ge 1 - e^{-q\lambda_h t}$. Choosing $t = \lceil (1/(q\lambda_h)) \ln(1/\varepsilon)\rceil$ achieves the target liveness with failure probability at most $\varepsilon$.

\begin{remark}[TB across models]
Under synchrony, the tip bound $\beta$ scales with the number of blocks per slot, i.e., $\beta = \Theta(p\cdot|\mathcal{P}|)$ when each validator is eligible with probability $p$. Here $|\mathcal{P}|$ denotes the number of validators (see Table~\ref{tab:notation}).

Under partial synchrony with maximum delay $\Delta$, tips can accumulate over $\Delta$ slots before propagation equalizes views, giving $\beta = \Theta(\Delta\cdot p\cdot|\mathcal{P}|)$. This reconciles the ideal and practical statements and makes explicit the dependence on delay and issuance rate.
\end{remark}

\begin{remark}\label{rem:TX}\label{rem:block-vs-tx}
Transaction redundancy and sampling---deferred extension. In this paper we focus on consensus at the level of blocks. For simplicity, we assume that each block contains a single application payload (e.g., a transaction) and we treat block validity and conflicts at the block level.

A natural extension supports multiple transactions per block together with a mempool-sampling policy and tunable redundancy. Such an extension must ensure that transaction-level conflicts are handled in a way that is consistent with the block-level fork choice and finality analysis developed here, and that adversarially crafted conflicts cannot cause entire blocks to be discarded. Detailing such a mechanism is beyond the scope of this paper and is left for future work.
\end{remark}

\end{itemize}

\begin{lemma}[Necessity of TB for block-confirmation liveness]
\label{lem:TB-necessity}
If Tip-Boundedness (TB) does not hold, then there exists an admissible adversarial schedule and an honest block $h$ such that, with positive probability, $h$ never acquires any confirming (short-ref) descendants. Equivalently, the time until $h$ obtains its first short-reference-confirming descendant may be infinite, even though $h$ remains valid and is eventually gossiped to all nodes. Thus TB is necessary for the short-reference-based confirmation guarantees used in our liveness analysis.
\end{lemma}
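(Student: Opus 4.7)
My strategy is a direct construction: from the assumption that TB fails I build an admissible schedule together with a specific honest block $h$ whose window of opportunity to collect short-reference descendants elapses without a single success. The critical fact is that short-references must target blocks in the sliding window (Definition~\ref{def:window}), so the only slots in which $h$ can ever receive a confirming short-reference are $[\slot(h)+1,\,\slot(h)+w]$; once this window closes, no new short-reference can land on $h$, and long-references do not count as confirming (Section~\ref{sec:tx-confirmation}). Hence it suffices to exhibit a schedule in which, with positive probability, no short-reference in this finite window targets $h$.

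First, I would pin down the effective reference budget inside $h$'s window: the number of blocks produced over these $w$ slots is concentrated around $(\lambda_h+\lambda_a)w$ by a standard Chernoff bound (Appendix~\ref{sec:concentration-app}), and each such block carries at most a bounded number $r$ of short-references (by the \ProjBase specification; the idealized case is handled by fixing any finite cap chosen by the issuer). The total supply of short-reference edges pointing into the window is therefore at most $R := O\bigl(r\,(\lambda_h+\lambda_a)\,w\bigr)$ except with negligible probability. Because TB fails, I may select the adversarial schedule so that at the onset of $h$'s window the tip set $\Tips_{\slot(h)}$ contains at least $\beta$ mutually incomparable tips, where $\beta$ is chosen much larger than $R$ (e.g., $\beta = 10R$). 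A pigeonhole argument then forces at least a $9/10$-fraction of these tips to receive no short-reference over the entire window.

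The remaining step is to ensure that $h$ falls into that unreferenced majority with constant probability, which is the main obstacle, since the protocol does not specify a canonical rule for which tip a block references when many are visible. I would discharge this by exploiting the adversary's $\Delta$-delay freedom to make $h$ structurally indistinguishable from the other fragmented tips: the adversary times the release of $h$ and of a cloud of sibling adversarial filler blocks so that their slots, VRF outputs $y(\cdot)$, and CCA-local $\SubdagW$ values (Section~\ref{sec:fork-choice}) collide under any deterministic or stake-weighted tip-selection heuristic that an honest issuer might deploy, and only releases $h$ to each honest issuer within the $\Delta$-budget in a schedule symmetric with respect to any fixed tip in the cloud. Under this symmetry, a uniformity-over-tips argument shows that $h$ is the specific tip left unreferenced with probability $\Omega(1)$; combining with the non-negligible probability that TB is violated in the first place yields a positive, constant lower bound on the event that $h$ never acquires any short-reference descendant. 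Since this directly negates the short-reference-based confirmation predicate used throughout Section~\ref{sec:finality}, TB is necessary for the block-confirmation liveness guarantees that follow.
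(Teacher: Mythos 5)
The paper's own proof is a one-paragraph union bound: if the adversary drives the tip count $T_t$ up, then the per-slot probability that some honest block short-references $h$ is at most $R/T_t$ (with $R$ the per-block reference cap), so $\Pr[h$ ever referenced$] \le \sum_t R/T_t$, which the adversary can force below $1$ by inflating $T_t$. Your plan reaches the same conclusion via a noticeably different decomposition: you first observe that $h$ can only ever be short-referenced during the finite window $[\slot(h)+1,\slot(h)+w]$, bound the total edge supply in that window by Chernoff, inflate the tip set to $\beta \gg R$ and pigeonhole most tips out of the supply, then argue $h$ lands in the unreferenced majority by a symmetry/indistinguishability construction. The finite-window observation is a genuine gain over the paper's presentation (the paper writes an infinite sum but in fact only $w$ terms are nonzero, so the adversary needs $T_t$ large only for $w$ slots rather than $T_t \to \infty$), and the pigeonhole step is cleanly deterministic. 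However, your "main obstacle" really is the main obstacle, and you leave it as a plan rather than an argument: one would have to show that no honest tip-selection policy (e.g.\ a deterministic exact-max-antichain rule, or one biased by slot or VRF label) can reliably prefer $h$, which is why the paper instead folds that difficulty into the per-slot bound $R/T_t$ — a bound that also tacitly assumes a uniformity-like property of honest referencing. So the two routes differ in structure but carry essentially the same unstated hypothesis about honest referencing; yours makes it more explicit (and identifies where it is needed) but does not fully close it. One small logical slip: the lemma is conditional on TB failing, so you do not need (and should not multiply by) "the probability that TB is violated in the first place"; given the failure hypothesis, the adversary may deterministically create a $\beta$-wide tip set in $h$'s window.
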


\begin{proof}[Proof sketch]
Without TB, the adversary can keep the tip set size $T_t$ unbounded by releasing incomparable blocks that omit $h$ from their short references. Each honest proposer can include at most $R$ short references, so conditioned on the current tip set the probability that some honest block short-references $h$ in slot $t$ is at most $R/T_t$. The adversary can choose a schedule in which the sequence $(T_t)$ grows (or re-grows) so fast that $ \sum_{t=1}^\infty \frac{R}{T_t} < 1$.
By a simple union bound,
$$
  \Pr\big[\text{$h$ is ever short-referenced}\big]
  = \Pr\Big[\bigcup_{t \ge 1} \{\text{$h$ is short-referenced in slot $t$}\}\Big]
  \le \sum_{t \ge 1} \frac{R}{T_t} < 1.
$$
Hence with probability at least $1 - \sum_t R/T_t > 0$, no future block ever short-references $h$, so
$h$ never acquires any short-ref-confirming descendants.

\end{proof}

\begin{remark}[Validity vs.\ short-ref confirmation]
A block lacking short-ref-confirming descendants is still valid and, thanks to long references, will eventually
lie in the ancestor closure of some visible tip in every honest view (Section~\ref{sec:tx-confirmation}). Lemma~\ref{lem:TB-necessity} therefore
does not contradict eventual logical confirmation via long references; it isolates a failure mode of the
short-reference–driven fork-choice metric. TB rules out adversarial schedules in which an honest block never
acquires any short-ref-confirming descendants and hence never becomes $k$-deep under the preferred frontier,
which is the notion of confirmation used in our ledger liveness analysis.
\end{remark}

\paragraph{Remark: TB vs.\ \DCP and Relation to Long-Delay Attacks.}
DAG Common Past (\DCP) does \emph{not} imply Tip Boundedness (TB). \DCP ensures that if one trims away the most recent $k_D$ layers, then all honest views coincide on the older prefix. However, \DCP does not constrain how many incomparable tips may exist within those last $k_D$ layers. In particular, one can have a DAG where the deep past is fully common (\DCP holds) yet the active frontier contains $\Theta(\Delta \cdot \lambda)$ tips, where $\lambda = \lambda_h + \lambda_a$ is the total block proposal rate, or even unboundedly many tips. Thus, persistence and liveness may fail even though DG, DQ, and \DCP all hold: the honest work is fragmented across too many \branches to form a stable margin.

This parallels the \emph{long-delay attack} described by Pass, Seeman, and shelat in their analysis of Nakamoto consensus in asynchronous networks \cite[Section~8]{EC:PasSeeShe17}. When network delays are large relative to block production, the honest chain fragments into many concurrent forks, which an adversary can exploit to cause deep reorgs. Our TB property directly precludes such fragmentation in DAG protocols: it enforces a bound on the width of the tip set, thereby ensuring that honest references continually \emph{merge or coalesce} the frontier. In this sense, TB is the DAG analogue of the synchrony assumption needed to rule out long-delay attacks in chain-based protocols.

\begin{remark}[TB as the missing latency/throughput link]
In DAGs, throughput comes from concurrent proposal; without TB, that same concurrency can widen the frontier enough to dilute short-reference weight and stall confirmations. Bounding the tip set ensures that the honest short-reference mass concentrates quickly on a few \branches, restoring a bounded time-to-first-confirmation even at high issuance rates.
\end{remark}

%%%%%%%%%%%%%%%%%%
%!TEX root = main.tex

%%%%%%%%%%%%%%%%%%%%%%%%%%%%%%%%%%%%%%%%%%%%%%%%%%%%%%%%%%%%%%%%%%%%%%%%%%%%%%%
% Section 3: Proj-Ideal 
%%%%%%%%%%%%%%%%%%%%%%%%%%%%%%%%%%%%%%%%%%%%%%%%%%%%%%%%%%%%%%%%%%%%%%%%%%%%%%%

\section{\ProjIdeal: Idealized Multi-Proposer Protocol}
\label{sec:ideal}

This section isolates the core consensus dynamics of \ProjIdeal and explains
\emph{why} a multi-proposer DAG with a window-filtered, CCA-based local fork-choice attains fast, robust convergence under synchrony. 
 The fork-choice rule is set-valued: it maintains a preferred frontier \(F_t\) of conflict-free tips that are maximal under \(\succ\), rather than 
the tip of a single linear chain as in longest-chain protocols; a canonical tip is only chosen at the analysis layer (Section~\ref{sec:finality}).

We first list the assumptions, then present the block creation/reception and fork-choice
algorithms, followed by an informal walkthrough example. We end with a proof
roadmap highlighting where DAG Growth (DG), DAG Quality (DQ), DAG Common Past
(\DCP), and Tip Boundedness (TB) enter the persistence and liveness arguments.

Having introduced the DAG properties (DG, DQ, \DCP, TB) and their connection to ledger properties in Section~\ref{sec:prelim}, we now turn to protocol design. The goal is to present an idealized multi-proposer DAG protocol whose behavior can be analyzed directly against these properties. By abstracting away practical constraints such as network delay, bounded references, and transaction conflicts, the idealized model isolates the core consensus dynamics.

\subsection{Overview}
\label{sec:ideal-overview}

\subsubsection{Assumptions}
\label{sec:ideal-assumptions}

\ProjIdeal isolates the core consensus logic of a multi-proposer DAG protocol to enable crisp arguments about safety and liveness. 
The model makes the following simplifying assumptions:
\begin{itemize}
 \item \textbf{Synchronous delivery:} time proceeds in slots; all honest messages broadcast in slot $t$ are delivered to all honest parties by the end of slot $t$ (zero-latency abstraction).
 Under the synchronous network assumption, all honest validators share the same DAG view at the end of each slot; thus, for the purposes of analysis we may identify this common view with a single global DAG $G$.
 This identification is purely analytic; the protocol itself never relies on a globally agreed DAG.

 \item \textbf{Public-coin eligibility:} each validator $v$ is eligible in slot $t$ with probability $p$ \emph{independently of stake}. (Section~\ref{sec:base} realizes stake-proportional rates via VRF sortition.) 
 \item \textbf{Unbounded referencing:} when eligible, an honest block may reference \emph{any} existing block(s). Honest creators attempt to cite \emph{all} currently visible tips (equivalently, a maximum antichain).

\item \textbf{Bounded fork-choice window.}
Fork choice uses only short references from the last $w$ slots, as in
Section~\ref{sec:dag-building-blocks} (windowed references and \subdagweights): 
only these short references contribute to fork-choice scores and
block-confirmation depth. 
Long references may still be used to maintain
connectivity and to 
provide confirmation for older blocks, 
but they never affect \subdagweights or the preferred frontier. This bounded-weight discipline
prevents long-range withholding from dominating fork choice
(Section~\ref{sec:ideal-unbounded-vs-windowed}).

\item \textbf{No transaction conflicts:} To focus on consensus dynamics, each validator produces at most one block per slot by definition. In \ProjIdeal we ignore transaction dependencies/conflicts and treat blocks as abstract votes.

\end{itemize}

\paragraph{Intuition and example.}
Unbounded referencing ensures that any newly created honest block can \emph{pull in}
missing tips and rapidly merge honest work; the window-bounded scoring ensures that
ancient, withheld subgraphs cannot accumulate decisive weight. Concretely, suppose
slot $t$ produces two honest blocks $A$ and $B$ on different tips; in slot $t+1$,
an honest block $C$ may reference both, immediately collapsing the fork. Because
only recent references contribute to weight, a long-hidden adversarial 
\subdag cannot accumulate decisive weight once revealed: 
its old edges are disregarded by the fork-choice.

\subsubsection{Algorithms (for reference)}
The concrete algorithms that constitute the consensus protocol are presented in Section~\ref{sec:ideal-details}. Here we give a brief overview:

\begin{itemize}
 \item \textbf{Alg.~\ref{alg:ideal-antichain}} (\emph{Antichain Selection}): compute a maximum antichain of the current DAG. 
 \item \textbf{Alg.~\ref{alg:ideal-create}} (\emph{Block Creation}): if eligible in slot $t$, build a block that references the selected antichain and broadcast it.
 \item \textbf{Alg.~\ref{alg:ideal-reception}} (\emph{Block Reception}): integrate a valid incoming block; update $\Tips$ and fork-choice state.

\item \textbf{Alg.~\ref{alg:ideal-local-fc}} (\emph{Local Fork Choice / Preferred Frontier }):   
Starting from the short-reference tip set $U_t = \Tips_w(G_t)$, Algorithm~\ref{alg:ideal-local-fc} iteratively applies the conflicted-tips resolution rule $\CTR(\cdot,\cdot)$ to incomparable conflicting pairs, marking losers defeated and returning the preferred frontier $F_t$ (Def.~\ref{def:preferred-frontier}). The underlying notions --- windowed references, CCA, and \subdagweights --- are exactly those of Section~\ref{sec:dag-building-blocks}.

\end{itemize}

\paragraph{Exact antichain selection for honest nodes.}
In \ProjIdeal, honest validators compute the \emph{exact maximum
antichain} over the currently visible windowed DAG.  
Since the window contains only $O(\lambda w)$ blocks, the computation is
polynomial-time and can be performed using the Dilworth-based matching
algorithm described in Appendix~\ref{sec:appendix-dilworth}.  
This ensures that honest parties always choose the antichain of maximum
cardinality.

\subsubsection{Protocol Walkthrough (Honest Player View)}
\label{sec:ideal-walkthrough}

This subsection explains how an \emph{honest} validator executes \ProjIdeal in each slot, and how fork choice and conflict resolution interplay with block creation and reception.

\paragraph{Per-slot loop (local state machine). }
At the beginning of slot $t$, an honest player $P$ has local DAG $G_t=(V_t,E_t)$:
\begin{enumerate}
 \item \textbf{Eligibility check. } $P$ samples public coins for slot $t$ and determines eligibility with probability $p$.
 \item \textbf{Reference selection (Alg.~\ref{alg:ideal-antichain}). } $P$ deterministically computes an antichain $A_t\subseteq V_t$.
 \item \textbf{Block creation (Alg.~\ref{alg:ideal-block-creation}).}
If eligible, $P$ constructs $$b_t \gets \langle \id,\, \val = P,\, \slot = t,\, \payload,\, \refs = A_t \rangle$$
and broadcasts $b_t$, where $\payload$ is the single application payload sampled in Algorithm~\ref{alg:ideal-block-creation}.
 
 \item \textbf{Block reception and integration (Alg.~\ref{alg:ideal-reception}). }
 As blocks of slot $t$ arrive, $P$ verifies and inserts each block, updates $\Tips_{w}(G)$ and window summaries, and triggers \ForkChoiceUpdate.

\item 
 \textbf{Local fork choice (Alg.~\ref{alg:ideal-local-fc}). } 
 
\begin{enumerate}
 \item
 Scoring on the windowed DAG. Conceptually, $P$ uses the window-filtered \subdagweights
of Section~\ref{sec:basics}: when two tips $i, j$ need to be compared, $P$ evaluates their CCA-local scores
$\BranchW(i; c, w)$ and $\BranchW(j; c, w)$ at $c = \CCA(i, j)$. These weights are never
maintained as a global table; they are evaluated on demand inside the conflicted-tips resolution
rule $\CTR(\cdot,\cdot)$ (Def.~\ref{def:ctr}).

\item
Conflict resolution via $\CTR$ (internal to Alg.~\ref{alg:ideal-local-fc}). 
 Conceptually, the local fork choice maintains the preferred frontier $\Undom_t$ as in Def.~\ref{def:preferred-frontier} by iteratively applying the conflicted-tips resolution rule $\CTR(\cdot,\cdot)$ to conflicting tip pairs. When we compare two conflicting tips $\{i, j\}$, we simply apply $\CTR(i, j)$, which is exactly the same rule used internally by Algorithm~\ref{alg:ideal-local-fc}. 
 
 \end{enumerate}

\end{enumerate}

\subsubsection*{Additional Definitions Used in this Section}

\begin{definition}[Antichain; Maximal vs Maximum]
$A\subseteq V$ is an antichain if all elements are pairwise incomparable in $G$. Maximal means inclusion-maximal; maximum means largest cardinality.
\end{definition}

\begin{definition}[\MaximumAntichain\ on $G_{t,w}$]
This procedure returns the exact maximum-cardinality antichain of the
windowed DAG $G_{t,w}$, computed via the Dilworth-based matching
algorithm described in Appendix~\ref{sec:appendix-dilworth}.  It is used by
all honest validators in \ProjIdeal.
\end{definition}

\begin{definition}[\SamplePayload]
In the simplified model we treat each block as carrying a single opaque application payload (e.g., a transaction). We assume that the block creator runs some application-layer procedure \SamplePayload that returns either a valid payload or $\bot$. We do not model this procedure in detail.
\end{definition}

\begin{definition}[\ForkChoiceUpdate]
Maintain a cache comprising $\Tips_{w}(G)$, a per-validator latest-block map $L:\Val\to V\cup\{\bot\}$, and window reachability summaries sufficient to evaluate antichain tests and window-weight sums.
\end{definition}

\begin{definition}[\Broadcast]\label{def:broadcast}
We assume authenticated broadcast: any message sent by an honest party is delivered to every honest party by the end of slot $t$.
\end{definition}

\subsection{Protocol Details (Ideal)}
\label{sec:ideal-details}
\subsubsection{Idealized Antichain Selection and Block Creation}

Honest nodes compute the \emph{exact maximum antichain}.

\begin{algorithm}[H]
\caption{\AntichainSelection (Idealized)}
\label{alg:ideal-antichain}
\begin{algorithmic}[1]
\Require Current DAG view $G=(V,E)$ of the validator at the end of slot $t$.
\Ensure Antichain $A\subseteq V$
\State $A \gets \ExactMaxAntichain(G)$ 
\State \Return $A$
\end{algorithmic}
\end{algorithm}

\begin{algorithm}[H]
\caption{Block Creation (Idealized, per validator $v$ at slot $t$)}
\label{alg:ideal-create}\label{alg:ideal-creation}\label{alg:ideal-block-creation}
\begin{algorithmic}[1]
\Require Current DAG view $G=(V,E)$ of the validator at the end of slot $t$.
\If{$\Eligibility(v,t)=1$}
 \State 
 $\payload \leftarrow \SamplePayload()$
 \State $A \gets \AntichainSelection(G)$
 \State 
 $b \gets \langle \id,\,\val=v,\,\slot=t,\, \payload,\,\refs=A\rangle$
 \State Insert $b$ and edges $(r\to b)$ for $r\in A$ into $G$
 \State Update $\Tips_{w}(G)$ and reachability summaries
 \State \ForkChoiceUpdate($G$)
 \State \Broadcast($b$)
\EndIf
\end{algorithmic}
\end{algorithm}

\subsubsection{Idealized Block Reception}

Upon receiving a block $b \notin V$, the node first checks that all referenced parent
blocks are already present in $V$ and that inserting $b$ would not create a cycle.
If either check fails, $b$ is rejected. Otherwise, the node 
 inserts the edges
$(r \to b)$ for all $r \in \refs(b)$, updates tips and reachability summaries, and
then updates the fork-choice state.

\begin{algorithm}[H]
\caption{Block Reception (Idealized)}
\label{alg:ideal-reception}
\begin{algorithmic}[1]
    \Require Incoming block $b$, reference-DAG $G = (V, E)$
    \Ensure Either $b$ is accepted into $G$ or rejected
    \If{$b \in V$}
       \Return
    \EndIf
    \If{$\exists r \in \refs(b) \text{ such that } r \notin V$}
      \State 
      \textbf{reject} $b$ \Return \Comment{invalid: references unknown parent blocks}
    \EndIf
    \If{$\exists r \in \refs(b) \text{ such that } r \in \Desc^*(b)$}
      \textbf{reject} $b$ \Return \Comment{invalid: would create a cycle}
    \EndIf
    \State Insert $b$ into $G$: add vertex $b$ and edges $(r \to b)$ for all $r \in \refs(b)$
    \State Update $\Tips_w(G)$ and reachability summaries
    \State \ForkChoiceUpdate$(G)$ 
    \State \Broadcast$(b)$
  \end{algorithmic}
\end{algorithm}

\subsubsection{Local Fork Choice (\ProjIdeal)}
\label{subsec:ideal-local-fc}

\paragraph{Local fork choice in \ProjIdeal has two layers.}  
(i) A pairwise conflicted-tips resolution rule $\CTR(\cdot,\cdot)$ that, given two incomparable, conflicting tips, selects the better \branch using CCA-local, window-filtered weights.
(ii) A set-valued procedure that applies $\CTR$ to all conflicting tips in the short-reference tip set $U_t = \Tips_w(G_t)$ and outputs a conflict-free preferred frontier $\Undom_t \subseteq U_t$. We formalize these two components in Defs.~\ref{def:cfpf} and~\ref{def:ctr} and implement the frontier in Algorithm~\ref{alg:ideal-local-fc}.

In the idealized model, we realize the preferred frontier $F_t$ (Def.~\ref{def:preferred-frontier}) as follows.

\begin{definition}[Conflict-free preferred frontier]\label{def:ideal-preferred-frontier}\label{def:cfpf}\label{def:pf}
In \ProjIdeal we instantiate the preferred frontier $F_t$ from Definition~\ref{def:preferred-frontier} as follows. Given the windowed DAG $G_t$ and the short-reference tip set $U_t = \Tips_w(G_t)$, we repeatedly apply the conflicted-tips resolution rule $\CTR(\cdot,\cdot)$ (Definition~\ref{def:ctr}) to pairs of conflicting tips in $U_t$ until no conflicts remain. The surviving tips form the conflict-free preferred frontier $F_t \subseteq U_t$.
\end{definition}

Algorithm~\ref{alg:ideal-local-fc} realizes the preferred frontier of Def.~\ref{def:cfpf}. Starting from the short-reference tip set $U_t = \Tips_w(G_t)$, it repeatedly applies $\CTR(\cdot,\cdot)$ to every pair of incomparable, conflicting tips and marks the loser as defeated. 
The surviving tips are exactly the $\succ$-maximal ones under the pairwise relation induced by $\CTR$, hence form the conflict-free preferred frontier $\Undom_t$. 
This procedure is purely local, depending only on the local DAG view $G_t$ and the window parameter $w$.

\begin{algorithm}[H]
  \caption{Local Fork Choice / Preferred Frontier (Idealized)}
  \label{alg:preferred-frontier}\label{alg:ideal-local-fc}
  \begin{algorithmic}[1]
    \Require DAG view $G = (V, E)$, window $w$
    \Ensure Conflict-free preferred frontier $\Undom_t$
    \State $U \gets \Tips_w(G)$
    \ForAll{unordered pairs $\{i, j\} \subseteq U$ such that
      %$i \not\preceq j$, $j \not\preceq i$, and
      $\payload(i)$, $\payload(j)$ conflict}
      \State $\mathsf{winner} \gets \CTR(i, j)$ \Comment{Def.~\ref{def:ctr}}
      \State $\mathsf{loser} \gets \{i, j\} \setminus \{\mathsf{winner}\}$
      \State mark $\mathsf{loser}$ defeated
    \EndFor
    \State \Return $\{ x \in U : x \text{ is not defeated} \}$
  \end{algorithmic}
\end{algorithm}

\begin{remark}[Defeated tips]\label{rem:defeated-tips}
In Algorithm~\ref{alg:ideal-local-fc} (and also Algorithm~\ref{alg:base-local-fc} in Section~\ref{sec:base}),
when we \emph{mark a tip as defeated} we only mean that this tip is
eliminated from the current conflict class and will not appear in the
preferred frontier $\Undom_t$ at that time step. The underlying \branch and
all of its blocks remain part of the DAG~$G_t$: they can still be
referenced by future blocks and may still contribute short-reference
weight to other \branches. In particular, ``defeated'' does \emph{not}
mean that the \branch is pruned, invalidated, or removed from honest
views.
\end{remark}

\paragraph{Rationale. } 
Using $y(\cdot)$ prevents adversarial control via transaction ordering or content hashing; it is a stand-in for the VRF output in the practical protocol.

\paragraph{Tie-breaking and the role of $y(\cdot)$.}
In the ideal model, each block carries an independent public-coin label $y(\cdot)$ used solely for deterministic tie-breaking inside $\CTR(\cdot,\cdot)$: if two conflicting tips have equal CCA-local \subdagweight, the one with smaller $y(\cdot)$ wins. 
In \ProjBase (\S\ref{sec:base}), the same public coin is instantiated via per-block VRF outputs.

\paragraph{Finalization rule (pointer). }
Nodes finalize when a block is $k$-deep under the preferred frontier; $k$ and $\varepsilon$
are calibrated in \S\ref{sec:finality} (Def.~\ref{def:finalization-rule}, Thm.~\ref{thm:master-finality}).

\begin{remark}[Comparison with BFT-style finality]
Traditional BFT-based protocols (e.g., HotStuff, Casper FFG) achieve finality
through explicit voting and quorum certificates.
In contrast, our design provides \emph{probabilistic finality} directly from
the weight-based fork-choice and bounded-window analysis:
a block becomes irreversible once its depth ensures an honest margin
that no adversary with $\alpha < 1/2$ stake can overcome.
Thus, finality emerges organically from the ledger's growth properties,
without any auxiliary consensus layer.
\end{remark}

\subsection{On Unbounded Referencing vs.\ Bounded Fork Choice}
\label{sec:ideal-unbounded-vs-windowed}
\label{sec:ideal-window-justification}
\ProjIdeal allows \emph{unbounded referencing}: an honest block may cite \emph{all} existing tips (or a maximum antichain), which accelerates convergence and directly implies Tip Boundedness (TB) in the synchronous ideal.
However, using \emph{unbounded} weight in fork choice (i.e. counting all references since genesis) is unsafe even in a synchronous model because an adversary can \emph{withhold} a private \subdag for many slots and then reveal it at once. 
The revealed descendants then contribute a large cumulative score to an \adversarialsubdag, potentially overriding the honest \subdag despite minority stake.

\begin{lemma}[Unsafe infinite-horizon \branch weight]
Even under synchrony and honest majority, if fork choice scores competing \branches (rooted at conflicting tips) 
by \emph{unbounded} descendant counts since genesis, there exists a withholding strategy that can cause arbitrarily deep reversions with non-negligible probability.
\end{lemma}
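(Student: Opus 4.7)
My plan is to exhibit an explicit withholding strategy and lower-bound its success probability via a gambler's-ruin argument, showing that for every fixed target depth $k$ the probability of causing a $k$-deep reversion is bounded below by a positive constant, hence non-negligible in $\secp$.

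\textbf{Strategy.}
Fix a target depth $k$. At some slot $t_0$, a corrupted validator who is eligible in $t_0$ creates a private block $B'$ with the same parents as an honest block $B$ but with a conflicting payload. From $t_0$ onwards, the adversary keeps $B'$ and all subsequent adversarial blocks strictly private, using every adversarial eligibility slot to extend the subDAG rooted at $B'$; honest parties continue to openly extend the subDAG rooted at $B$. Because fork choice scores branches by their unbounded cumulative descendant contribution since genesis, no prior work is ever discounted, so the entire private subDAG will contribute to the adversarial score upon reveal.

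\textbf{Analysis.}
Let $A_t$ and $H_t$ denote the cumulative scores contributed by the adversarial and honest branches $t$ slots after the fork. Under honest majority, $A_t - H_t$ is a random walk with strictly negative drift, and a standard ballot / gambler's-ruin estimate gives
\[
  \Pr\bigl[\,\exists\, t \ge 0 : A_t - H_t \ge k\,\bigr] \;\ge\; \rho^{k},
\]
for some constant $\rho = \rho(\alpha) \in (0,1)$ independent of $\secp$. For every fixed $k$ this is a positive constant, hence non-negligible in $\secp$. The adversary waits until this event first occurs at some (random) slot $\tau$ and then reveals the entire private subDAG; under unbounded scoring the revealed adversarial branch strictly exceeds the honest branch in CCA-local weight, so the fork-choice preference flips and every honest block strictly between the CCA and the old preferred tip---including $B$ at depth $k$---is reorganized away. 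Since $k$ is arbitrary, this yields arbitrarily deep reversions, each with non-negligible probability.

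\textbf{Main obstacle.}
The key technical step is translating the score lead ``$A_\tau - H_\tau \ge k$'' into a ledger-level reversion of depth $k$: I need a short combinatorial argument that once the adversarial branch wins the CCA-local comparison under unbounded weights, every honest block above the CCA on the $B$-branch is displaced. A strengthening that avoids the random walk altogether is worth noting: in the unbounded regime the adversary may add dense self-references inside the private subDAG, so the unbounded score of the private branch grows as $\Theta(N^2)$ in the number $N$ of adversarial blocks, while honest blocks referencing the $O(1)$-wide current frontier contribute only $\Theta(N)$; for sufficiently large $N$ this overtakes honest weight deterministically. I would present the random-walk bound as the clean lower bound and mention the amplification variant as evidence that unbounded scoring fails in two qualitatively different ways.
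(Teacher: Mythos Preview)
Your gambler's-ruin approach is a valid route to the lemma, but it differs from the paper's, and there is a small slip in the event you bound. The event $\{\exists t:\ A_t - H_t \ge k\}$ does not by itself yield a depth-$k$ reversion: if $A_\tau \ge H_\tau + k$ but $H_\tau < k$, revealing now reorgs only $H_\tau < k$ honest layers. What you actually need is $\{\exists t:\ A_t > H_t \text{ and } H_t \ge k\}$, i.e., the adversary overtakes at a time when the honest branch already has depth at least $k$; this is the classical Nakamoto double-spend event, and the standard computation still gives a $\rho^{\Theta(k)}$ lower bound, so your conclusion survives once the displayed event and the phrase ``including $B$ at depth $k$'' are corrected. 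You rightly flag this translation step as your main obstacle.

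The paper's sketch takes a different, more DAG-specific route that does not rely on random-walk luck. Its key point is that under unbounded scoring the total honest weight $\Theta((1-\alpha)L)$ is \emph{fragmented} across many concurrent tips, while the adversary concentrates all $\Theta(\alpha L)$ of its private references on a single branch; by timing the reveal to a moment of wide honest frontier, the adversarial tip outweighs any individual honest tip even though aggregate honest weight exceeds adversarial. This gives success probability close to $1$ for large $L$, not merely $\rho^k$, and more directly motivates the windowed fix as the remedy. Your amplification variant (dense intra-branch self-references yielding $\Theta(N^2)$ private weight against $\Theta(N)$ honest) is in the same structural spirit and is a nice complementary observation; it and the paper's fragmentation argument both show that unbounded scoring fails \emph{deterministically}, not just in the $\rho^k$ tail.
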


\begin{proof}[Proof sketch]
Let the adversary with stake fraction $\alpha<1/2$ privately build a 
\subdag rooted at 
a recent CCA while the honest parties continue extending the public \subdag.
 After $L$ slots, the private \subdag accumulates $\Theta(\alpha L)$ references while the honest \subdag accumulates $\Theta((1-\alpha)L)$. If the scoring aggregates from genesis, the adversary can time its release to coincide with honest fragmentation at the frontier (many tips), so the \adversarialsubdag appears heavier due to concentrated weight. Since contributions are not window-limited, the \emph{entire} private accumulation applies, enabling deep reorgs. 
 Window-filtering neutralizes this leverage by discarding tips outside the last $w$ slots.

\end{proof}

\noindent\textbf{Conclusion. } In \ProjIdeal we \emph{retain unbounded referencing} (for rapid merging and TB), but we \emph{require a bounded window $w$ in the fork-choice metric}. This pairing preserves fast convergence while preventing long-range weight accumulation from withheld blocks.

\subsection{Security of the Idealized Protocol}
\label{subsec:ideal-security}
\label{sec:ideal-security}

\paragraph{Proof roadmap.}
The security proof follows the backbone style: 
(i) DG and DQ guarantee a steady supply of honest work into the preferred frontier; 
(ii) TB prevents indefinite fragmentation of honest tips; 
(iii) \DCP bounds view divergence on the common past $\Past(t) = \AncStar(\Undom_t)$  
after trimming the last $k_D$ layers.
Under these invariants, the windowed score gap between the
honest-\prefsubdag and any competing \branch has positive drift with bounded
increments. Azuma--Hoeffding then yields an exponential tail bound, implying that
a depth of $k=\Theta(w)+O(\log(1/\varepsilon))$ suffices for $(k,\varepsilon)$-finality. 
(See Section~\ref{sec:finality} for the master $(k,\varepsilon)$-finality theorem and parameter calibration.)

We now prove that \ProjIdeal satisfies the DAG properties (DG, DQ, \DCP, TB) and, consequently, \emph{ledger persistence (safety)} and \emph{liveness}. 
Throughout this section, let $H \in (1/2,1]$ denote the honest fraction of validators (or equivalently, of eligibility identities), and let $p \in (0,1]$ be the per-slot eligibility rate (public coin). In Section~\ref{sec:base} we show how stake-proportional VRF sortition realizes the same process with $H$ given by the honest stake fraction.
All statements hold \emph{except with negligible probability} in the security parameter $\secp$  
(via standard Chernoff bounds on sums of independent Bernoulli trials). 

Throughout this section, ``the fork choice'' always refers to the set-valued procedure that maintains the preferred frontier $\Undom_t$ (Defs.~\ref{def:preferred-frontier},~\ref{def:cfpf}), while any comparison between two conflicting tips is made via the underlying \CTR rule (Def.~\ref{def:ctr}).

\subsubsection{Consensus Properties}
\label{sec:consensus-properties}
\paragraph{Preliminaries.} 
Under synchrony and unbounded referencing, all blocks created in slot $t$ can reference \emph{all tips visible at the start of the slot}, and all honest parties receive the same set of blocks by the end of the slot.

\begin{theorem}[DAG Growth (DG) in \ProjIdeal]
\label{thm:ideal-DG}
Fix any interval of $\ell$ consecutive slots. With probability $1-\negl(\secp)$, at least $\tau_D\ell$ honest blocks are created and become ancestors of some honest tip by the end of the interval, for $\tau_D=\Theta(pH)$.
\end{theorem}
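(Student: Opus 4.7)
The plan is to combine a standard concentration argument on honest eligibility with the structural observation that, in \ProjIdeal, every honest block is automatically on some honest \prefsubdag. Let $X_t$ denote the number of honest validators eligible in slot $t$; since the public coin is independent across validators and slots, $X_t$ is a sum of $\Theta(H|\mathcal{P}|)$ independent Bernoulli($p$) variables, with $\mathbb{E}[X_t] = \Theta(pH|\mathcal{P}|)$. By Algorithm~\ref{alg:ideal-create}, each eligible honest validator creates exactly one block in that slot, so the total number of honest blocks created in the interval $I$ is $S_\ell := \sum_{t \in I} X_t$ with $\mathbb{E}[S_\ell] = \Theta(pH|\mathcal{P}|\ell)$.

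Next I would invoke a standard multiplicative Chernoff bound (Appendix~\ref{sec:concentration-app}) on the sum $S_\ell$: for any fixed $\delta \in (0,1)$, we get $\Pr[S_\ell \le (1-\delta)\mathbb{E}[S_\ell]] \le \exp(-\Omega(\ell))$, which is $\negl(\secp)$ provided $\ell = \omega(\log \secp)$. Thus with overwhelming probability at least $\tau_D \ell$ honest blocks are created in $I$, with $\tau_D = \Theta(pH)$ after absorbing $|\mathcal{P}|$ into the $\Theta$ constant.

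The second ingredient is to show that every honest block $b$ created in $I$ lies in $\AncStar(F_{t'})$ for some honest view at time $t' = t+\ell-1$ (the ideal model has $\Delta = 0$, so no buffer is needed). By the synchronous delivery assumption, $b$ is visible to every honest validator by the end of $\slot(b)$. By the ``no transaction conflicts'' assumption of \ProjIdeal, the conflicted-tips resolution rule $\CTR(\cdot,\cdot)$ is never triggered, so $F_{t'} = \Tips_w(G_{t'})$ for every honest view. Hence $b$ is in the preferred frontier if it is still a tip; otherwise, by Definition~\ref{def:preferred-frontier} and Remark~\ref{rem:tips-default}, $b$ has some (honest or adversarial) short-reference descendant, and walking forward in $\preceq$ reaches a $\preceq$-maximal element in $V_w(t')$, which by the same no-conflict argument lies in $F_{t'}$. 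In either case $b \in \AncStar(F_{t'})$, so $b$ is counted in $H_I$.

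Combining the two steps gives $|H_I| \ge \tau_D \ell$ with probability $1 - \negl(\secp)$, establishing $\mathrm{DG}[\tau_D,\ell]$ with $\tau_D = \Theta(pH)$. The only real obstacle is the structural claim in the third paragraph; it is easy here because \ProjIdeal assumes no payload conflicts, but under \ProjBase the same step will require Tip-Boundedness (Lemma~\ref{lem:TB-base}) and a careful analysis of $\CTR$ on the preferred frontier. No step should require more than standard Chernoff calculation and an appeal to the no-conflict assumption.
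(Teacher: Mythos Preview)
Your proposal is correct and follows essentially the same two-step structure as the paper's proof sketch: (1) Chernoff concentration on honest eligibility to get $\Theta(pH\ell)$ honest blocks, then (2) a structural argument that every such block ends up in the ancestor closure of an honest preferred tip. Your step~(2) is phrased slightly differently---you invoke the ``no transaction conflicts'' assumption to conclude $F_{t'}=\Tips_w(G_{t'})$ and then walk forward in~$\preceq$, whereas the paper simply says ``each of these blocks is referenced by subsequent honest blocks in the next slot''---but this is a presentational difference, not a different route.
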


\begin{proof}[Proof sketch]
Chernoff concentration on the number of honest-eligible blocks in the interval gives $\Theta(pH\ell)$ honest blocks in the interval.
Under synchrony, each of these blocks is referenced by subsequent honest blocks in the next slot, thereby becomes an ancestor of an honest tip. 
\end{proof}

\begin{theorem}[DAG Quality (DQ) in \ProjIdeal]
\label{thm:ideal-DQ}
For any $\ell$-slot window, the fraction of honest blocks entering the preferred frontier is at least $\mu_D=\Theta(H)$ with probability $1-\negl(\secp)$.
\end{theorem}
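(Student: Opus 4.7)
The plan is to establish DAG Quality by combining a Chernoff concentration bound on honest versus adversarial eligibility with the structural observation that, under synchrony and unbounded referencing, every honest block created in the interval lands in the ancestor closure of some tip in the preferred frontier. Concretely, I will lower-bound $|B_I^{\mathsf{H}}|$ by the count of honest eligible validators and upper-bound $|B_I|$ by the total count of eligible validators over $I$, and then take the ratio.

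First I would fix the interval $I = [t, t+\ell-1]$ and let $N_H$ and $N_A$ denote the numbers of honest- and adversarial-eligible proposals in $I$. Each validator is eligible independently in each slot with probability $p$, so using the shorthand $\lambda_h, \lambda_a$ of Table~\ref{tab:notation} we have $\E[N_H] = \lambda_h \ell$ and $\E[N_A] = \lambda_a \ell$, with $\lambda_h / (\lambda_h + \lambda_a) = H$. Fix any $\eta \in (0,1)$: the standard Chernoff bounds from Appendix~\ref{sec:concentration-app} give $N_H \ge (1-\eta)\E[N_H]$ and $N_A \le (1+\eta)\E[N_A]$ simultaneously, except with probability $\negl(\secp)$, once $\ell$ is at least polylogarithmic in $\secp$ relative to $p|\mathcal{P}|$.

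Next I would show $|B_I^{\mathsf{H}}| \ge N_H$. By synchrony, any honest block $b$ produced in slot $s \le t+\ell-1$ is delivered to every honest party by the end of slot $s$. By Algorithm~\ref{alg:ideal-create}, any later honest block $b'$ references the exact maximum antichain of its view, which by inclusion-maximality contains $b$ unless $b$ has already been referenced. Iterating, $b$ becomes an ancestor of some tip in every $F_{t'}$ for $t' \ge s+1$, with Theorem~\ref{thm:ideal-DG} ensuring that an honest successor exists w.h.p. Hence $b \in B_I$, giving $|B_I^{\mathsf{H}}| \ge N_H$. On the other hand, every block in $B_I$ must have been created by some eligible validator in $I$, so $|B_I| \le N_H + N_A$. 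Combining yields
\[
  \frac{|B_I^{\mathsf{H}}|}{|B_I|} \;\ge\; \frac{N_H}{N_H + N_A} \;\ge\; \frac{(1-\eta)\lambda_h}{(1-\eta)\lambda_h + (1+\eta)\lambda_a} \;=\; \Theta(H),
\]
except with probability $\negl(\secp)$, which is exactly the claim with $\mu_D = \Theta(H)$.

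The main obstacle is arguing that the conflicted-tips resolution rule $\CTR(\cdot,\cdot)$ does not eject honest blocks from $B_I$. By Remark~\ref{rem:defeated-tips}, $\CTR$ merely demotes a tip out of the current preferred frontier; the underlying block and all of its ancestors remain in the DAG and may still be referenced by future honest blocks. Under synchrony and unbounded referencing an honest block is referenced within a single slot of creation, after which it is no longer a tip and therefore cannot be defeated by $\CTR$; its membership in $\AncStar(F_{t'})$ for later $t'$ is then purely structural and persistent. Honest--honest payload conflicts are harmless for DQ because both competitors are counted in $|B_I^{\mathsf{H}}|$, and adversarial withholding only decreases $|B_I|$ (it never increases $|B_I^{\mathsf{H}}|$), strengthening the ratio. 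The corresponding argument in \ProjBase will have to accommodate $\Delta$-bounded delivery and bounded short references, but in the synchronous, unbounded-reference ideal the bound above is essentially tight.
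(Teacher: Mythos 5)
Your proof is correct and takes essentially the same route as the paper's: both rely on Chernoff concentration of honest vs.\ adversarial eligibility and on the observation that, under synchrony with unbounded referencing, the CCA-local fork choice does not selectively exclude honest blocks from the preferred frontier. You simply make the paper's one-line ``each eligible block is honest with probability $H$, Chernoff concentrates the ratio'' step explicit via the $N_H/N_A$ decomposition and the $\AncStar(F_{t'})$ persistence argument, which is a fair unpacking of the same idea (the only nit being that your upper bound $|B_I|\le N_H+N_A$ quietly assumes that adversarial blocks created before $I$ and released during it cannot win fork choice and enter $B_I$ --- true here because window filtering strips their stale weight, but worth stating, just as the paper's ``does not bias adversarial blocks upwards'' silently covers the same case).
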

\begin{proof}[Proof sketch]
Eligibility is driven by an i.i.d.~public coin across validators. Let $H$ denote the fraction of validators (equivalently, eligibility identities) that are honest. Then each eligible block is honest with probability $H$, so the expected honest fraction per slot is $H$. 
Since fork-choice is deterministic and purely local (CCA-based), the selection process does not bias adversarial blocks upwards; Chernoff bounds imply concentration around $H$ over $\ell$ slots.

\end{proof}

\begin{lemma}[TB in \ProjIdeal]\label{lem:TB-ideal}\label{lem:ideal-TB}
In the ideal model with synchrony and unbounded referencing, if each eligible honest block references \emph{all} currently visible tips (max-antichain policy), then for every slot $t$,
\[\Pr\!\left[\,|\Tips_t| \;\le\; c\cdot \lambda\,\right]\;\ge\;1-\negl(\secp),\]
for a universal constant $c$ and expected per-slot eligibility $\lambda$ (honest + adversarial).
In particular, $\TB[\beta]$ holds with $\beta=O(\lambda)$.
\end{lemma}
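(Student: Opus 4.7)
The whole argument rests on a single structural fact: under synchrony and the max-antichain policy, any honest block produced in slot $s$ references \emph{every} tip visible at the start of $s$, so by the end of $s$ every such tip has acquired an honest descendant and is no longer a tip. Hence, in any slot $s$ in which at least one honest block is created, the set of tips collapses to a subset of the blocks created during slot $s$ itself. I would first isolate this as a deterministic absorption lemma, justified by (i) the synchronous-delivery assumption, so all honest proposers in $s$ share the same start-of-slot view of tips; and (ii) Algorithm~\ref{alg:ideal-antichain} together with the note in Section~\ref{sec:ideal-overview} that the honest reference set covers all currently visible tips.

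Second, I would define $L_t$ to be the largest slot $s \le t$ in which at least one honest block is created. By the absorption lemma applied at $L_t$, every tip visible at the start of $L_t$ is eaten; and in slots $L_t+1, \ldots, t$ no further honest block exists, so the adversary cannot cause additional absorption (it can only leave more residual tips alone). Therefore
\[
  |\Tips_t| \;\le\; \sum_{s = L_t}^{t} N_s,
\]
where $N_s$ is the total number of blocks (honest plus adversarial) created in slot $s$.

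Third, I would bound the two sources of randomness separately. Under the i.i.d.\ public-coin eligibility of \ProjIdeal, the event ``slot $s$ contains no honest block'' has probability $(1-p)^{|\Hon|} \le e^{-\lambda_h}$, which is a constant strictly less than $1$ under the honest-majority assumption $H > 1/2$ and $\lambda_h = \Theta(\lambda)$. A geometric-tail bound then gives $t - L_t = O(\log \secp)$ except with probability $\negl(\secp)$. Independently, each $N_s$ is a sum of independent Bernoulli$(p)$ trials with mean $\lambda$, so a standard Chernoff bound (Appendix~\ref{sec:concentration-app}) yields $N_s \le c_0 \lambda$ w.h.p.\ for a universal $c_0$. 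A union bound over the $O(\log \secp)$ relevant slots absorbs the log factor into the constant $c$ and delivers $|\Tips_t| \le c\lambda$ with probability $1 - \negl(\secp)$, establishing $\TB[\beta]$ with $\beta = O(\lambda)$.

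The main obstacle is the low-rate regime where $\lambda_h$ is not bounded below by a constant: then $1 - e^{-\lambda_h}$ can be arbitrarily small and the expected silence gap $t - L_t$ inflates like $1/\lambda_h$, which would only yield $\beta = O(\lambda / \lambda_h)$ instead of $O(\lambda)$. I expect to handle this by leveraging the standing assumption $H > 1/2$ (so $\lambda_h \ge \lambda/2$) together with the implicit convention that $\lambda$ is at least a positive constant in the parameter regime of interest; both are consistent with the paper's setup in Section~\ref{sec:assumption-symbol}, and allow the constant $c$ in the lemma to remain universal.
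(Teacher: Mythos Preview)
Your decomposition via the last honest slot $L_t$ is sound and is actually more careful than the paper's sketch, which simply asserts that ``the only tips at time $t$ are births of slot $t$'' and then applies Chernoff to $N_t$, without addressing what happens when slot $t$ itself contains no honest block. The absorption observation and the inequality $|\Tips_t| \le \sum_{s=L_t}^t N_s$ are both correct.

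The slip is in the final accounting. You write that ``a union bound over the $O(\log\secp)$ relevant slots absorbs the log factor into the constant $c$,'' but a union bound only controls the \emph{failure probability}; it does not shrink the \emph{magnitude} of the sum. With $t - L_t + 1 = O(\log\secp)$ slots and each $N_s \le c_0\lambda$ w.h.p., the sum you actually control is $O(\lambda\log\secp)$, so your argument as written delivers $\beta = O(\lambda\log\secp)$, not $O(\lambda)$ with a universal constant. To recover the stated $O(\lambda)$ one needs the stronger hypothesis that slot $t$ itself contains an honest block with probability $1-\negl(\secp)$, which requires $\lambda_h = \omega(\log\secp)$; under that hypothesis $L_t = t$ w.h.p.\ and $|\Tips_t| \le N_t$ directly, which is precisely the paper's one-line argument. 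Your last paragraph correctly flags the low-rate obstacle, but the proposed fix (taking $\lambda$ to be merely a positive constant) does not close the gap: a constant $\lambda_h$ still leaves a geometric silence length of order $\log\secp$, and that factor survives in the tip bound.
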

\begin{proof}[Proof sketch]
At the end of slot $t$, all honest parties have identical views (synchrony). Any tips from slot $t-1$ become covered in slot $t$ because every honest eligible block references all visible tips. Hence the only tips at time $t$ are births of slot $t$. The total number of births in slot $t$ is at most the number of eligible validators in that slot, whose expectation is $\lambda$ (honest + adversarial); a Chernoff bound shows that this quantity is $O(\lambda)$ with high probability, yielding the claimed tip bound.
\end{proof}

\begin{theorem}[DAG Common Past (\DCP) in \ProjIdeal]
\label{thm:ideal-DCPprime}
There exists $k_D=O(1)$ such that for any two honest parties $P, Q$ and times $t_1 \le t_2$, after removing
the most recent $k_D$ layers from $\Past_P(t_1)$ and $\Past_Q(t_2)$, the earlier trimmed past is contained in
the later one, except with negligible probability; i.e.,
$\Trim(\Past_P(t_1), k_D) \subseteq \Trim(\Past_Q(t_2), k_D)$.
\end{theorem}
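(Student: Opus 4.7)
The plan is to exploit the two strong simplifications of \ProjIdeal---synchronous delivery and the absence of payload conflicts---to reduce \DCP to a monotonicity statement about the ancestor closure of the short-reference tips, then handle the remaining boundary effects at the edge of the window with a concentration argument.

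First I would collapse the per-party superscripts. Under the ideal synchrony assumption (Def.~\ref{def:broadcast}), every honest party has the same DAG view at the end of each slot, and every step of the local fork-choice procedure (Alg.~\ref{alg:ideal-local-fc}) is a deterministic function of that view. Hence $V_{t}^P = V_{t}^Q$ and $\Undom_t^P = \Undom_t^Q$ for all honest $P,Q$, so $\Past_P(t)=\Past_Q(t)$; I will write $\Past(t)$ for this common value. Next I would use the ``no transaction conflicts'' assumption of Section~\ref{sec:ideal-assumptions}: $\CTR(\cdot,\cdot)$ is never triggered, so $\Undom_t = \Tips_w(G_t)$ and therefore $\Past(t) = \AncStar(\Tips_w(G_t))$. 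After these two reductions, \DCP boils down to: for $t_1 \le t_2$,
\[
   \Trim(\Past(t_1),k_D) \;\subseteq\; \Past(t_2)
   \quad\text{with probability } 1-\negl(\secp).
\]

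Next I would establish monotonicity of $\Past$ up to a small constant trim. Let $b$ be any block with $\slot(b) \le t_1 - k_D$ and $b \in \Past(t_1)$; by definition $b$ is an ancestor of some tip $i \in \Tips_w(G_{t_1})$. Under the max-antichain policy (Alg.~\ref{alg:ideal-antichain}), any honest block created in slot $s+1$ references every tip visible at the end of slot $s$; so as long as at least one honest block is produced somewhere in each subwindow of $w$ consecutive slots between $\slot(i)$ and $t_2$, the descendant chain from $b$ keeps ``walking forward'' and never falls entirely out of the short-reference window. In that case there is a descendant of $b$ in $\Tips_w(G_{t_2})$ (or an ancestor of one), which gives $b \in \Past(t_2)$. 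The failure mode is precisely a long \emph{eligibility gap}: an interval of $\ge w$ consecutive slots producing no honest block, so that every short-reference descendant of $b$ ages out before being re-extended. With per-slot honest eligibility rate $\lambda_h = Hp$, the probability that any such gap starts in $[t_1-k_D,t_2]$ is bounded by a union bound over starting slots times $(1-\lambda_h)^{w}$, which is $\negl(\secp)$ given the standing assumption $w \ge \Delta + \omega(\log \secp)$ of Section~\ref{sec:assumption-symbol} together with Chernoff concentration from Appendix~\ref{sec:concentration-app}. Choosing $k_D = O(1)$ large enough to absorb blocks so recent that no honest extension has yet been produced (those created in the last one or two slots before $t_1$) is then sufficient.

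The main technical obstacle is the \emph{window-boundary} case: blocks $b \in \Past(t_1)$ whose slot lies very close to $t_1 - w$, or whose only tip descendant $i$ has $\slot(i)$ at the low edge of the current window. For such blocks, even a short eligibility gap right after $t_1$ can push the entire descendant chain out of the short-reference window by time $t_2$. Handling this cleanly requires coupling $k_D$ to the window length $w$ and to the tail bound on eligibility gaps so that the combined failure event (a gap of length $\ge w - O(1)$ overlapping the critical interval) remains negligible in $\secp$. Once this boundary analysis is in place, Steps~1--3 above combine to give $\Trim(\Past(t_1),k_D) \subseteq \Past(t_2)$, and because the trim parameter is a function of the window and the target failure probability only, we may take $k_D = O(1)$ as claimed. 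Symmetry in the party indices $P,Q$ is immediate from Step~1, and this completes the proof outline.
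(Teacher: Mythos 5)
Your decomposition is a more careful route than the paper's own sketch, and arguably exposes a gap in the paper itself: the paper's proof dispatches the cross-party case (identical views under synchrony) and vaguely appeals to TB, but never addresses the time dimension $t_1 \le t_2$, which is exactly what you attack head-on by reducing \DCP to the temporal monotonicity claim $\Trim(\Past(t_1),k_D) \subseteq \Past(t_2)$. Your chain of reductions — collapse per-party superscripts by synchrony; eliminate $\CTR$ by the no-conflicts assumption so that $\Undom_t = \Tips_w(G_t)$; observe that $\Trim(\Past(t_1),k_D)\subseteq \Past(t_2)$ already implies $\Trim(\Past(t_1),k_D)\subseteq \Trim(\Past(t_2),k_D)$ because the trimmed blocks have $\slot \le t_1-k_D \le t_2-k_D$ — is clean, correct, and more explicit than anything in the paper's proof sketch.

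The gap is in your final paragraph, and it is not a boundary technicality. Your eligibility-gap bound addresses the wrong failure event: the relevant attack is a \emph{deterministic} placement of a withheld tip at the low edge of the window. Suppose the adversary creates $b$ in slot $s$ and $i$ in slot $s+w$ (with $i$ short-referencing $b$), withholds both, and reveals them at the end of slot $t_1 := s + 2w$. Then $\slot(i) = t_1 - w$, so $i \in V_w(t_1)$, has no short-ref descendants, and is therefore a tip; hence $b, i \in \Past(t_1)$, and since $\slot(b),\slot(i) \le t_1 - w$ they survive any trim with $k_D < w$. But from slot $t_1+1$ onward no block can short-reference $i$ (the window has moved past $\slot(i)$), so $i$ is permanently orphaned and $b, i \notin \Past(t_2)$ for every $t_2 > t_1$. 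This needs only adversarial eligibility at two slots $w$ apart — a constant-probability event per attempt, not negligible — so no Chernoff or union bound rescues the claim. Your own hedge ("coupling $k_D$ to the window length $w$") actually points to $k_D = \Theta(w)$, which contradicts your concluding "we may take $k_D = O(1)$ as claimed." The tension must be resolved, not elided: either rule out this schedule (e.g., argue that a withheld tip at the window edge cannot be a tip in the preferred frontier, which the current definitions do not support), or restrict the containment to blocks anchored to the pre-$t_1$ honest frontier by short-reference paths lying entirely inside the window, or concede that a purely slot-based trim requires $k_D = \Theta(w)$ in the ideal model.
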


\begin{proof}[Proof sketch]
Under the broadcast assumption of Definition~\ref{def:broadcast} (i.e., the $\Delta=0$ special case of our bounded-delay model), all honest parties have identical DAG views at the end of each slot. Combined with TB, this implies that the frontier $\Undom_t$ is the same for all honest parties (up to the last $O(1)$ layers). 
Equivalently, after trimming the most recent $k_D = O(1)$ layers, all honest views coincide on the remaining prefix, which yields \DCP.
\end{proof}

\subsubsection{Ledger Properties (Persistence and Liveness)}
\label{subsec:ideal-ledger-props}

We now show that \ProjIdeal's  CCA-local, window-filtered fork choice, coupled with CCA-based conflict resolution, yields ledger safety and liveness.

\begin{theorem}[Persistence (Safety) of \ProjIdeal]\label{thm:ideal-persistence}
Let $k,\varepsilon$ be any parameters that satisfy the master finality guarantee of \S\ref{sec:finality} (i.e., Thm.~\ref{thm:master-finality}, instantiated by Cor.~\ref{cor:ideal-finality}). If a block $B$ is $k$-deep under the preferred frontier of an honest party by the end of some slot, then with probability at least $1-\varepsilon$ the block $B$ remains in the ledger of every honest party forever after.
\end{theorem}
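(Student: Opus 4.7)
The plan is to reduce this persistence claim to the master $(k,\varepsilon)$-finality theorem (Thm.~\ref{thm:master-finality}) of Section~\ref{sec:finality}, by verifying that \ProjIdeal meets its four hypotheses (DG, DQ, \DCP, TB) with the parameters established in Thms.~\ref{thm:ideal-DG} and \ref{thm:ideal-DQ}, Thm.~\ref{thm:ideal-DCPprime}, and Lem.~\ref{lem:TB-ideal}. The ideal-model instantiation Cor.~\ref{cor:ideal-finality} then certifies that any block which is $k$-deep under some honest preferred frontier is $(k,\varepsilon)$-final for $k = \Theta(w) + O(\log(1/\varepsilon))$, which is exactly the persistence statement to be proved.

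Before invoking the master theorem, I would unpack what a revert of $B$ means in a DAG protocol. Since \ProjIdeal exports only a DAG, a revert of $B$ at some later honest time $t_2$ means that the preferred frontier $\Undom_{t_2}$ no longer contains $B$ in its ancestor closure. Because the fork choice is CCA-local and window-filtered (Defs.~\ref{def:ctr},~\ref{def:preferred-frontier}), this can only happen if at some point $\CTR(i,j)$ resolves in favor of a tip $j$ with $B \notin \AncStar(j)$ against a competing tip $i$ with $B \in \AncStar(i)$. Letting $c = \CCA(i,j)$, this requires $\SubdagW(j;c,w) \ge \SubdagW(i;c,w)$, where by the $k$-depth hypothesis $c$ is separated from $i$'s $B$-witnessing descendants by at least $k$ slots.

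The core step is then a drift argument on the windowed gap $\SubdagW(i_t;c,w) - \SubdagW(j_t;c,w)$, where $i_t$ and $j_t$ are the best honest and adversarial tips above $c$ at each slot. DG supplies a steady $\tau_D \ell$ honest contribution over every $\ell$-slot interval, and DQ ensures that an $H$ fraction of the entrants to the preferred frontier are honest; together these give a positive per-window drift of $\Omega(Hw)$. TB (Lem.~\ref{lem:TB-ideal}) bounds the per-window increments on both sides by $O(\lambda)$, yielding a bounded-difference submartingale to which Azuma--Hoeffding applies, and the resulting exponential tail in $k$ gives failure probability at most $\varepsilon$ once $k = \Theta(w) + O(\log(1/\varepsilon))$.

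The main obstacle, and the reason the ingredients have to fit together just so, is ruling out long-range adversarial withholding. Without windowed scoring a minority adversary could privately accumulate a deep competing \subdag and release it to beat an already-$k$-deep $B$, exactly the failure mode made explicit in Section~\ref{sec:ideal-unbounded-vs-windowed}. Windowed scoring neutralizes this: only short references from the last $w$ slots contribute to $\SubdagW$, so weight accumulated before the window carries no leverage in $\CTR$, bounding the total adversarial mass any competing \subdag can bring to bear. \DCP (Thm.~\ref{thm:ideal-DCPprime}) further ensures that the CCA $c$ stabilizes across honest views once the last $k_D$ layers are trimmed, so different honest parties are effectively scoring the same CCA-local contest, and the martingale estimate transfers uniformly to every honest view, completing the reduction to Thm.~\ref{thm:master-finality}.
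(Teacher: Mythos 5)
Your core reduction---cite Corollary~\ref{cor:ideal-finality} to instantiate Theorem~\ref{thm:master-finality} for \ProjIdeal, then observe that $(k,\varepsilon)$-finality (Def.~\ref{def:finalization-rule}) is precisely the statement that a $k$-deep block is reverted with probability at most $\varepsilon$---is exactly the paper's proof, which is a three-line pointer. The rest of your write-up (unpacking what a revert means, the CCA-local drift/submartingale argument, ruling out long-range withholding, \DCP stabilizing the CCA) is a correct and faithful sketch of material the paper places inside the proofs of Theorem~\ref{thm:master-finality}, Lemma~\ref{lem:bd-consolidated}, and Section~\ref{sec:ideal-unbounded-vs-windowed}, so it is not wrong, just redundant once the master theorem is taken as given; one small imprecision is that DG/DQ/\DCP/TB are not literally the stated hypotheses of Theorem~\ref{thm:master-finality} (those are $H>\tfrac12$, $w\ge\Delta$, $q>0$, $\lambda_h>0$) but rather inputs to its proof that Corollary~\ref{cor:ideal-finality} discharges via the lemmas of Section~\ref{sec:ideal-security}.
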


\begin{proof}
This is an immediate consequence of the $(k,\varepsilon)$-finality rule in \S\ref{sec:finality}. Under the ideal-model assumptions, Cor.~\ref{cor:ideal-finality} instantiates Thm.~\ref{thm:master-finality}; thus, once $B$ is $k$-final (Def.~\ref{def:finalization-rule}), the probability that $B$ is ever removed is at most $\varepsilon$. Persistence follows.
\end{proof}

\begin{theorem}[Liveness of \ProjIdeal]\label{thm:liveness-areon-ideal}
\label{thm:liveness-ideal}
Fix a slot~$t$ and let $b$ be an honestly created valid block in slot~$t$. Under the standing ideal-model assumptions of Sections~\ref{sec:ideal-assumptions} and~\ref{sec:ideal-security} (synchronous delivery, public-coin eligibility with honest fraction $H > 1/2$, unbounded referencing, and a bounded fork-choice window), with probability at least $1 - \negl(\secp)$ there exists
$\ell_{\mathsf{live}} = \tilde{O}\!\left(\frac{1}{pH}\right)$
such that by slot $t + \ell_{\mathsf{live}}$ the block $b$ is $k$-deep under the preferred frontier of every honest party.
\end{theorem}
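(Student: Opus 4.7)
The plan is to combine synchronous broadcast with DG, DQ, and TB to show that $b$ rapidly acquires honest short-reference descendants on a preferred \subdag, and then to invoke the master $(k,\varepsilon)$-finality theorem to translate this accumulation into a bounded $k$-deep waiting time.

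First, I would use the synchronous-broadcast assumption to conclude that by the end of slot $t$, the block $b$ lies in the local DAG view of every honest party. Under \ProjIdeal's max-antichain reference policy (Alg.~\ref{alg:ideal-antichain}), every eligible honest validator in slot $t+1$ references all currently visible tips, so every such honest block short-references either $b$ itself or a block already in $\DescStar(b)$. Hence $b$ becomes a short-reference ancestor of every honest block produced from slot $t+1$ onward. Applying DG (Thm.~\ref{thm:ideal-DG}) over any interval $[t+1,t+\ell]$ then yields at least $\tau_D \ell$ honest descendants of $b$ with $\tau_D = \Theta(pH)$, except with probability $\negl(\secp)$.

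Next, I would establish that these honest descendants concentrate on a preferred \subdag containing $b$. Lemma~\ref{lem:TB-ideal} (TB with $\beta = O(\lambda)$) bounds the frontier width, so honest short-reference mass is not diluted across too many tips, and Thm.~\ref{thm:ideal-DQ} (DQ) guarantees that the honest fraction of frontier entrants is $\Theta(H) > 1/2$. For any adversarial tip $b'$ conflicting with $b$, I would track the CCA-local, window-filtered gap $\SubdagW(\cdot\,;\,\CCA(b,b'),\,w)$ used inside $\CTR(\cdot,\cdot)$ (Def.~\ref{def:ctr}). This gap evolves as a bounded-increment process with positive drift $\Theta(pH) - \Theta(p(1-H)) > 0$, so Azuma--Hoeffding/Chernoff concentration gives that after $\ell = O((1/(pH)) \log(1/\varepsilon))$ slots the gap exceeds any fixed threshold with probability at least $1 - \varepsilon$, so $b$'s \subdag strictly dominates every conflicting \branch in every honest view throughout $[t+1, t+\ell]$.

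Finally, to convert this CCA-local dominance into the quantitative statement that $b$ is $k$-deep under the preferred frontier of every honest party, I would invoke the master finality theorem (Thm.~\ref{thm:master-finality}) with its ideal instantiation (Cor.~\ref{cor:ideal-finality}), which maps a positive-drift CCA-local gap process into a concrete confirmation depth. Setting $\varepsilon = \negl(\secp)$ then yields the claimed $\ell_{\mathsf{live}} = \tilde{O}(1/(pH))$, with the $\tilde{O}$ absorbing polylog factors in $w$, $1/\varepsilon$, and $\kappa$. The main obstacle is the middle step: one must rule out transient fork-choice reversals that could momentarily exclude $b$'s \subdag from the preferred frontier at some intermediate time $t' \in [t, t+\ell]$. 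TB bounds the per-slot increments of the gap, making it a positive-drift submartingale with bounded differences; a union bound over the $\ell$ slots then keeps the total failure probability at $\negl(\secp)$, and the same bounded-difference argument already used in the proof of Thm.~\ref{thm:ideal-persistence} can be reused almost verbatim.
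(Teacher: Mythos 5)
Your proposal follows essentially the same route as the paper's proof sketch: synchrony plus unbounded referencing to place $b$ in every honest view and on every honest block's past from slot $t+1$ onward, DG for a $\Theta(pH)$ rate of honest descendants, TB and DQ to prevent dilution of short-reference weight, and a bounded-increment positive-drift gap process concentrated via Azuma--Hoeffding to bound the time to depth $k$ at $\tilde{O}(1/(pH))$. One small slip to fix: Theorem~\ref{thm:master-finality} is a \emph{persistence} statement --- it bounds the probability that an already-$k$-deep block ever reverts --- and does not by itself translate CCA-local dominance into a time bound for reaching depth $k$; that rate is already supplied by your DG step together with the drift calculation, and the master theorem (via Cor.~\ref{cor:ideal-finality}) should only be cited to fix the target value of $k$, not as the map from the gap process to confirmation depth.
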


\begin{proof}[Proof sketch]
Let $\cP$ denote the validator set and let $\lambda_h$ be the expected number of
honest-eligible validators per slot (Table~\ref{tab:notation}), so in our public-coin model
$\lambda_h = \Theta(pH \cdot |\cP|)$. In each slot, the probability that at least one honest block is
created is therefore a constant $\Omega(pH)$.  
Under synchrony and unbounded referencing, once $b$ is visible it is short-referenced by subsequent honest blocks within $O(1/(pH))$
slots, and the DAG Growth (DG) property implies that the number of honest confirming
descendants of $b$ then grows linearly in time. DAG Quality (DQ) and Tip-Boundedness (TB)
ensure that this growth is not diluted across an unbounded frontier: a constant fraction of new
weight continues to land on preferred honest subDAGs. Combining these properties with the
bounded fork-choice window yields a standard martingale argument (as in Section~\ref{sec:finality}) showing
that, except with negligible probability, $b$ accumulates at least $k$ short-ref-confirming
descendants in every honest view within $\tilde{O}(1/(pH))$ slots. Equivalently, there exists
$\ell_{\mathrm{live}} = \tilde{O}(1/(pH))$ such that by slot $t + \ell_{\mathrm{live}}$ the block $b$ is $k$-deep
under the preferred frontier of every honest party.
\end{proof}

\begin{remark}[Block liveness as a consequence]\label{rem:tx-liveness-from-block-liveness}
In the simplified presentation, each block carries a single application payload (e.g., a transaction). Under the standard re-broadcast assumption on honest parties, block liveness immediately implies the usual block-liveness guarantee: every valid payload eventually appears in a $k$-deep block under the preferred frontier of all honest parties. A full treatment for multi-transaction blocks and an explicit mempool-sampling policy is left to a transaction-layer extension of this work.
\end{remark}

%!TEX root = main.tex

%%%%%%%%%%%%%%%%%%%%%%%%%%%%%%%%%%%%%%%%%%%%%%%%%%%%%%%%%%%%%%%%%%%%%%%%%%%%%%%%
% Section 4: ProjBase 
%%%%%%%%%%%%%%%%%%%%%%%%%%%%%%%%%%%%%%%%%%%%%%%%%%%%%%%%%%%%%%%%%%%%%%%%%%%%%%%%
\section{\ProjBase: A Practical DAG-Based PoS Protocol}
\label{sec:base}

\paragraph{High-level overview.}
\ProjBase instantiates the ideal design under the standard bounded-delay model (unknown $\Delta$). It retains the window-filtered  
fork-choice, and introduces VRF sortition for eligibility, a strict short-reference window of size $w\!\ge\!\Delta$, and a weightless long reference to preserve connectivity. 
We explain the module boundaries (eligibility, block creation, reception, fork-choice, conflict resolution), give an execution walkthrough, and state the precise assumptions used by the proofs.

\subsection{Overview} 

\subsubsection{Assumptions and Instantiation of \ProjIdeal}
\label{subsec:base-overview}
\ProjBase is a practical instantiation of \ProjIdeal (Section~\ref{sec:ideal}) under \emph{the bounded-delay model} with \emph{VRF sortition}, \emph{bounded short references} (window $w$), and a single long reference for connectivity. 
Blocks are required to satisfy an abstract payload-validity predicate; we do not model transaction or UTXO semantics explicitly. 
We retain the same CCA-local, window-filtered fork-choice and conflict rule as in \ProjIdeal,
instantiated with VRF-based eligibility and bounded referencing.

\begin{itemize}
 \item \textbf{Network.} Honest messages are delivered within an unknown delay bound $\Delta$ at all times.
 \item \textbf{Eligibility.} Each validator $v$ is eligible in slot $t$ iff $(y,\pi) = \VRFSign(\sk^{\mathsf{vrf}}_v;\, \Proj \| \Encode(t))$\footnote{If the implementation uses an explicit parallelism index $\psi\in\{1,\dots,\Psi\}$, include it in the VRF domain separator: $\VRFSign_{\sk_v}(\Proj\,\|\,\psi\,\|\,\Encode(t))$.}  satisfies $y<p(t,v)$ (Alg.~\ref{alg:block-creation} and Section~\ref{sec:notation-base}
 for the concrete VRF-based eligibility mechanism).

 \item \textbf{References.} In slot $t$, short references must lie in the \emph{window} $W(t;w)=[t-w,t-1]$; 
 in addition, if there exists any visible block older than $t-w$, the block is required to include a single long reference~$\ell$ to such an older block, and this long reference carries zero fork-choice weight.

 \item \textbf{Fork-choice.}  We use the same CCA-local, window-filtered fork choice as in \ProjIdeal:
  on the local windowed DAG $G_{t,w}$, short references inside the last $w$ slots carry
  unit weight, long references carry zero weight, and 
  Algorithm~\ref{alg:base-local-fc}  (\textsc{Local Fork Choice}) applies the conflicted-tips resolution rule $\CTR(\cdot,\cdot)$ to all incomparable conflicting tips in $\Tips_w(G_t)$ and returns the conflict-free preferred frontier $\Undom_t$.
 
 \item  \textbf{Conflicts.}
When two incomparable blocks imply conflicting ledger states, the node resolves them
via the same CCA-local conflicted-tips resolution rule $\CTR(\cdot,\cdot)$
used inside \textsc{Local Fork Choice} (Defs.~\ref{def:cfpf},~\ref{def:ctr}).
Concretely, given tips $i,j$, let $c = \CCA(i,j)$ and compare the window-filtered
\subdagweights $\BranchW(i;c,w)$ and $\BranchW(j;c,w)$; if one is larger, $\CTR$ returns
that tip, and if they tie $\CTR$ selects the tip with smaller VRF label $y(\cdot)$.
This is exactly the pairwise restriction of the fork-choice relation.
\end{itemize}

\paragraph{Why a short window and a weightless long reference?}
The short window blocks long-range withholding attacks: edges older than $w$ slots
do not contribute to the fork-choice score. The optional long reference reconnects
stragglers beyond the window \emph{without} affecting scores, ensuring the DAG
remains connected while keeping the security analysis local to the window.

\paragraph{Parameter discipline.} Throughout the analysis we assume
\begin{equation}
\label{eq:base-params}
w \;\ge\; \Delta \;+\; \omega(\log\secp) \qquad\text{and}\qquad 
H>\tfrac{1}{2}+\varepsilon
\end{equation}
for some constant $\varepsilon>0$ (consistent with {\em Standing Assumptions} in Sec.~\ref{sec:prelim}). 
Here $\Delta$ denotes an upper bound on message delay in the network,  
and $H$ is the fraction of total stake held by honest parties. The slack $\omega(\log\secp)$ is used for concentration bounds.

\paragraph{Honest short-reference coverage (\(q>0\)).}
Conditioned on being eligible in slot \(t\), an honest validator places at least one
\emph{short-reference} to a currently visible tip with probability at least \(q>0\).
Let $\mathcal{E}_{v,t}$ be the event that, when eligible in slot $t$, validator $v$ includes at least one short reference to a visible tip.
We assume \(\Pr[\mathcal{E}_{v,t}\mid v \text{ eligible in } t] \ge q.\)

The choice of tip(s) may be deterministic or randomized; the assumption only enforces that, 
whenever some visible tip lies inside the short-reference window, at least one such tip is short-referenced upon eligibility (and long references are used only when no visible tip lies in the window). 
This condition underlies the \emph{Tip-Boundedness} bound in the practical model (Lemma~\ref{lem:base-TB})  and the resulting liveness bounds; if honest blocks short-reference all visible tips, then $q = 1$.

As in Remark~\ref{rem:TX}, throughout this section we adopt the simplifying assumption that
each block carries a single opaque application payload (e.g., a transaction), and we treat
validity and conflicts at the block/payload level.  Extensions supporting multiple
transactions per block and concrete mempool-sampling policies are left to future work.

\subsubsection{Protocol Explanation (Honest Player View)}
\label{subsec:base-walkthrough}
We describe the per-slot state machine of an \emph{honest} node $P$; we explicitly indicate where Algorithms~\ref{alg:block-creation}–\ref{alg:base-local-fc} are invoked.

\paragraph{Start of slot $t$.} Node $P$ maintains a local DAG $G_t=(V_t,E_t)$ and a ledger state view induced by its \emph{preferred frontier}. It also caches reachability summaries and a per-validator latest-block map.

\begin{enumerate}
\item \textbf{Eligibility \& payload selection.}
$P$ computes $(\mathsf{ok},\pi,y) \gets \Eligibility(P,t)$.
(In \ProjBase, the same per-block VRF output $y$ also serves as the canonical public coin
for tie-breaking when needed.)
If $\mathsf{ok} = \true$, $P$ selects a candidate application payload $\payload_t$ from the
mempool (details of this sampling procedure are abstracted away in our simplified model).

\item \textbf{Reference selection (short \& long). } $P$ computes a \emph{deterministic} large antichain $R\subseteq V_{t,w}$ of recent blocks (e.g., \MaximumAntichain). 
  
\item\textbf{Block creation and gossip (Alg.~\ref{alg:base-block-creation}).}
If eligible and after validity checks, $P$ assembles
\[
  b \gets \langle \id,\, \val = P,\, \slot = t,\,
         \payload = \payload_t,\,
         \refs = R \cup \{\ell\},\, y,\, \pi,\, \sigma \rangle,
\]
signs it, inserts it into its local view, and gossips $b$ to the network.

\item \textbf{Block reception (Alg.~\ref{alg:block-reception}, continuous during slot).}
For each received block $b'$: verify its VRF proof and signature; fetch any missing referenced blocks (parents); check that
(i) all short references satisfy the window constraint, 
(ii) there is at most one long reference,
(iii) no cycles are created, 
(iv) the short-reference set of $b'$ within the window forms an
antichain in $G_{t,w}$, and 
(v) no payload conflicts with ancestors (according to the abstract conflict predicate), i.e., $\neg\,\Conflicts\bigl(b', \Anc^*(\mathrm{refs}(b'))\bigr)$.
Reject $b'$ if any check fails; otherwise integrate it and update the cached summaries and $\Tips(G)$.

\item \textbf{Local fork-choice (Alg.~\ref{alg:base-local-fc}, end of slot). } 
 At the protocol layer, $P$ runs \textsc{Local Fork Choice} (Alg.~\ref{alg:base-local-fc}) on $\Tips_w(G)$
to obtain the  
preferred frontier 
$\Undom_t$. 

This frontier is the only fork-choice state used by the protocol.

(Analysis view; not part of the protocol). For the purposes of the finality analysis in Section~\ref{sec:finality}, we associate to each local view $G_t$ and preferred frontier $\Undom_t$ an abstract ledger obtained by linearizing $\AncStar(\Undom_t) \cap \DescStar(a)$ for some analysis anchor a (e.g., genesis or the last finalized checkpoint), as in Section \ref{sec:tools}. This construction is purely analytical and does not affect protocol behavior or specify a client operation.

\item
\textbf{Conflict resolution via CCA (on demand).}    When two incomparable tips $i,j$ imply conflicting states, $P$ simply runs the 
   conflicted-tips resolution rule $\CTR(i,j)$ (Def.~\ref{def:ctr}), which compares their
   CCA-local window-filtered \subdagweights and uses the VRF label $y(\cdot)$ to break ties.
   This is consistent with the fork choice, because $\CTR$ is exactly the pairwise restriction
   of the relation that Algorithm~\ref{alg:base-local-fc} uses to construct the preferred frontier $\Undom_t$.

\end{enumerate}

\begin{remark}[Recall: \prefsubdags]
 Recall from Def. 2.4 that a \prefsubdag is any \branch containing a tip in the preferred frontier $F_t$; our fork choice therefore outputs a set $F_t$ of tips rather than a single chain tip.
\end{remark}

\subsection{Protocol Details}
\label{sec:protocol-base}
\subsubsection{Notation and Objects}
\label{sec:notation-base}
We reuse the notation from Section~\ref{sec:ideal}. In addition:

\paragraph{Ledger validity and conflicts.}
We reuse the abstract conflict predicate $\Conflicts(b,H)$ from
Definition~\ref{def:conflicts}, where informally $\Conflicts(b,H)$ holds if the payload
of block $b$ would violate ledger validity when applied on top of the state induced by
the ancestor closure of $H$ (for example, by double-spending in a UTXO ledger).
The concrete validity rules and the mempool-sampling process are left to the application
layer and are not modeled in this paper.

\paragraph{Reference window and references.}
Fix a slot $t$ and window size $w$. Let $W(t;w)=\{t-w,\ldots,t-1\}$ be the short-reference window and let $V_{t,w}$ denote the set of blocks that are visible at slot $t$ and were created in slots indexed by $W(t;w)$.

For any new block $b$ created at slot $t$, we use two primitives:
\begin{itemize}
  \item \textbf{Short-reference set.} $\refs(b)\subseteq V_{t,w}$ is the short-reference set of $b$. In implementation (Alg.~6), $\refs(b)$ is chosen as an exact maximum antichain of currently visible tips inside $V_{t,w}$.

\item \textbf{Long reference.}
The field $\longref(b) \in (V \setminus V_{t,w}) \cup \{\bot\}$ is a long reference used to
maintain connectivity to older parts of the DAG.
If $b$ has no admissible short-reference parent in the current window $V_{t,w}$, then the protocol \emph{requires} a long reference, i.e., $\longref(b) \neq \bot$ and it must point to some ancestor of $b$ outside $V_{t,w}$.
Otherwise (if $b$ has at least one admissible short-reference parent), we set $\longref(b) = \bot$.

Here, $\ell \gets \LongRef(G,t,w,R)$ denotes a deterministic long-reference selection
procedure that, given the current DAG $G$, slot $t$, window $w$, and short-reference set $R$, returns either $\bot$ or a single ancestor $\ell$ with $\slot(\ell) < t-w$, used only for connectivity 
and for extending the ancestor closure used in block confirmation
(long references carry zero fork-choice weight).

Consequently, long references never affect \subdagscores, tip status, or CCA comparisons in the fork-choice rule. They do, however, extend the ancestor closure used for 
block confirmation:
when evaluating $\Conflicts(\cdot,\cdot)$, we treat 
a block as already confirmed if it is reachable via a path that may mix short and long references.

\end{itemize}

\paragraph{Counting short references in the window.}
When evaluating at slot $t$, we write
\[
  \ShortRefs_t(d;w)\ :=\ \bigl|\{\,u\in V_{t,w}\ :\ (u\to d)\ \text{is a short-reference edge}\,\}\bigr|
\]
for the number of short references to $d$ coming from blocks inside the current window. When $t$ and $w$ are clear from context, we abbreviate $\ShortRefs(d;w)$.

All fork-choice computations (Algorithm~\ref{alg:base-local-fc}) and CCA-local conflict 
comparisons compute \subdagweights by aggregating only short-reference edges, i.e., they only count $\ShortRefs_t(\cdot; w)$. 
The CCA itself is computed from the full ancestry relation (which is defined using all references), but once the CCA is fixed, \subdagweights aggregate only short references.

\paragraph{VRF-based eligibility and block authentication.}
We instantiate eligibility and block authentication using a verifiable
random function (VRF) and a standard digital signature scheme.

Let $(\VRFGen,\VRFSign,\VRFVerify)$ be a secure verifiable random
function (VRF)~\cite{FOCS:MicRabVad99,PKC:DodYam05}, and let
$(\SIGGen,\SIGSign,\SIGVerify)$ be a secure digital signature scheme.
Each validator $v$ obtains a VRF key-pair
\[
  (\sk^{\mathsf{vrf}}_v, \pk^{\mathsf{vrf}}_v)
  \gets \VRFGen(1^\kappa)
\]
and a signing key-pair
\[
  (\sk^{\mathsf{sig}}_v, \pk^{\mathsf{sig}}_v)
  \gets \SIGGen(1^\kappa)
\]
that are fixed in the genesis state.

For slot $t$, let $p(t,v)$ denote the per-slot success probability for
validator $v$ (typically
$p(t,v) = \phi \cdot \stake(v)/\StakeTot$ for a global
parameter~$\phi$). We define the VRF-based eligibility procedure
$\Eligibility(v,t)$ used in Algorithm~\ref{alg:base-block-creation} as follows. Validator $v$
computes
\[
  (y,\pi) \gets
  \VRFSign\bigl(\sk^{\mathsf{vrf}}_v;\,
    \Proj \parallel \Encode(t)\bigr),
\]
interpreting $y$ as a real in $[0,1)$. If an explicit parallelism index
$\psi \in \{1,\ldots,\Psi\}$ is used, we instead domain-separate as
\[
  (y,\pi) \gets
  \VRFSign\bigl(\sk^{\mathsf{vrf}}_v;\,
    \Proj \parallel \psi \parallel \Encode(t)\bigr).
\]
We set
\[
  \mathsf{ok} \gets (y < p(t,v))
\]
and define
\[
  \Eligibility(v,t) \text{ to output } (\mathsf{ok},\pi,y).
\]

Whenever $\mathsf{ok} = \true$ and $v$ creates a block $b$ in slot $t$
(Algorithm~5), it sets
\[
  y(b) \gets y, \qquad \pi(b) \gets \pi,
\]
and deterministically derives a block identifier
\[
  \id(b) \gets
  \Hash\bigl(v \parallel t \parallel \refs \parallel \payload
              \parallel y \parallel \pi\bigr).
\]
The block is authenticated via a digital signature
\[
  \sigma(b) \gets\SIGSign\bigl(\sk^{\mathsf{sig}}_v;\, \id(b)\bigr),
\]
and $b$ stores the triple $(y(b),\pi(b),\sigma(b))$ in its header, as in
Section~\ref{sec:DAG-structure}. 

\medskip

On receiving a block $b$ attributed to validator $v$, a node first checks
eligibility and authenticity before performing any structural or window
checks. It verifies the VRF proof by computing
\[
  \mathsf{ok}' \gets\VRFVerify\bigl(\pk^{\mathsf{vrf}}_v;\,\Proj \parallel \Encode(\slot(b)), y(b), \pi(b)\bigr)
\]
and verifies the digital signature
\[
  \SIGVerify\bigl(\pk^{\mathsf{sig}}_v;\,
    \id(b), \sigma(b)\bigr).
\]
If either check fails or $\mathsf{ok}' = \false$ (equivalently,
$y(b) \ge p(\slot(b),v)$), then $b$ is rejected.

\medskip

\noindent\textbf{Conceptual block generation and verification.}
Conceptually, each block in \ProjBase is produced and authenticated in
two layers.

\emph{Eligibility and public randomness via VRFs.}
In each slot $t$, every validator $v$ runs the VRF-based eligibility
procedure above. The VRF output $y$ determines whether $v$ is eligible
in slot $t$ according to its stake weight, and the same output $y$ is
recorded in the block as $y(b)$ and later reused as a public coin for
tie-breaking in the fork-choice rule (Algorithm~\ref{alg:base-local-fc}).

\emph{Authentication via digital signatures.}
Given eligibility, $v$ selects references (a maximum antichain within
the short-reference window, plus an optional long reference) and a
payload, and deterministically derives a block identifier
$\id(b)$ from $(v,t,\refs,\payload,y,\pi)$. The validator then signs this
identifier with its long-term signing key to obtain $\sigma(b)$. The
signature binds the VRF output and the referenced parents to $v$'s
identity and prevents adversaries from forging or tampering with blocks.

On reception (Algorithm~\ref{alg:block-reception}), nodes verify both the VRF proof and the
digital signature before performing any structural or window checks.
This mirrors the design of Ouroboros Praos~\cite{EC:DGKR18}: the VRF
implements stake-proportional, unpredictable leader election and
provides per-slot randomness, while a standard signature scheme
authenticates the block and links it to the elected leader.

\subsubsection{Block Creation (\ProjBase)}
\label{sec:base-creation}

\paragraph{\Subdagweight (protocol-level).} \ProjBase reuses the CCA-local \subdagweight from
Section~\ref{sec:basics}: when comparing two competing tips $i, j$ with $c = \CCA(i, j)$ and window
$w$, the \subdagweight of $i$ is
\[
  \BranchW(i; c, w) := \sum_{d \in \Desc^*(i) \cap \Desc^*(c)} \wref(d; w), %\bigl|\ShortRefs(d; w)\bigr|,
\]
and analogously for $j$. 
Here $\wref(d; w)$ denotes the number of short references to $d$ within the window~$w$, as defined in Section~\ref{sec:basics}.

\begin{algorithm}[H]
\caption{Block Creation (\ProjBase)}
\label{alg:block-creation}\label{alg:base-block-creation}
\label{alg:base-create}
\begin{algorithmic}[1]
\Require Ledger state $S$, validator $v$, key $\sk(v)$, stake $\stake(v)$, slot $t$, window $w$

\Ensure New block $b$ or $\bot$ if ineligible
\State $(\mathsf{ok}, \pi, y) \gets \Eligibility(v,t)$
\If{not $\mathsf{ok}$} \Return $\bot$ \EndIf
\State $\payload \gets \SamplePayload(S; \rho = \Hash(y\| t))$ \Comment{Abstract payload sampling; may return $\bot$}

\State $R \gets \MaximumAntichain(G_{t,w})$ \Comment{Deterministically select large antichain of recent blocks}

\State $\ell \gets \LongRef(G, t, w, R, \payload)$ 

\If{$\Conflicts(\payload,\AncStar(R\cup\{\ell\}\setminus\{\bot\}))$}
\State set $\payload \gets \bot$ \Comment{In the simplified model we simply drop a conflicting payload}
\EndIf

\State $\refs \gets 
\begin{cases}
R \cup \{\ell\} & \text{if }\ell\neq\bot,\\
R & \text{otherwise. }
\end{cases}$
\State $\id \gets \Hash(v \,\|\, t \,\|\, \refs \,\|\, \payload \,\|\, y \,\|\, \pi)$\Comment{Define block identifier deterministically}
\State $\sigma \gets\SIGSign\bigl(\sk^{\mathsf{sig}}_v;\, \id\bigr)$

\State $b \gets \langle \id, v, t, \payload, \refs, y, \pi, \sigma\rangle$

%\State $\IntegrateAndGossip(b)$
\State Insert $b$ and edges $(r \to b)$ for all $r \in \refs(b)$ into $G$
\State Update $\Tips(G)$, reachability summaries, and latest-block map $L$
\State \ForkChoiceUpdate($G$)
\State Gossip $b$ to peers

\State \Return $b$
\end{algorithmic}
\end{algorithm}

\begin{remark}
In this simplified model, consensus is defined purely at the level of blocks and references.
The payload sampling mechanism $\SamplePayload$ is treated as an abstract oracle:
honest nodes may drop conflicting payloads without affecting the evolution of the block DAG.
Thus, even an adversary that injects conflicting transactions into the mempool can at most
reduce application-level throughput, but cannot stall block production or violate the safety
and liveness properties proved in this paper.
\end{remark}

\subsubsection{Block Reception and Integration}

\begin{algorithm}[H]
\caption{Block Reception (\ProjBase)}
\label{alg:block-reception}
\begin{algorithmic}[1]
\Require Incoming block $b$, local DAG $G=(V,E)$, current slot $t$
\Ensure $b$ is accepted into $G$ or deferred/rejected
\If{$b\in V$} \Return \EndIf

\ForAll{$x \in \refs(b)$ \textbf{with} $x \notin V$}
  \State wait until block $x$ arrives and is added to $V$
  \Comment{{\small defer processing of $b$ until all parents are present}}
\EndFor

\State $\mathsf{ok}' \gets \VRFVerify\bigl(\pk^{\mathsf{vrf}}_v;\,\Proj \parallel \Encode(\slot(b)), y(b), \pi(b)\bigr)$
\If{\textbf{not} $\mathsf{ok}'$ \textbf{or not}  $\SIGVerify\bigl(\pk^{\mathsf{sig}}_v;\,\id(b), \sigma(b)\bigr).$} 
 \State \textbf{reject} $b$ \Comment{VRF proof or signature invalid}
\EndIf
\If{$y(b) \ge p(\slot(b),\val(b))$}
 \State \textbf{reject} $b$ \Comment{VRF output above threshold}
\EndIf

\If{$\exists r\in \refs(b)$ with $(\slot(b)-\slot(r) \le 0)$} 
 \State \textbf{reject} $b$ \Comment{forbid same-slot or future-slot refs}
\EndIf

\If{$|\{r\in \refs(b): \slot(b)-\slot(r) > w\}| > 1$}
 \State \textbf{reject} $b$ \Comment{at most one long-ref (distance $> w$)}
\EndIf

\If{$\exists r\in \refs(b)$ such that $r\in \DescStar(b)$}
 \State \textbf{reject} $b$ \Comment{cycle detected}
\EndIf

\If{$\exists\, r,r'\in \refs(b)$ with $r\neq r'$ such that ( $r \preceq r'$ or $r' \preceq r$ ) in $G$}
 \State \textbf{reject} $b$ 
 \Comment{$\refs(b)$ not an antichain in $(G,\preceq)$}
\EndIf

\If{$\Conflicts(b, \AncStar(\refs(b)))$}
 \State  \textbf{reject} $b$ \Comment{payload conflict with ancestor(s)}
\EndIf

\State Insert $b$ and edges $(r \to b)$ for all $r\in\refs(b)$ into $G$
\State Update $\Tips(G)$, reachability summaries, and latest-block map $L$
\State $\ForkChoiceUpdate(G)$
\end{algorithmic}
\end{algorithm}

\begin{remark}[Late arrivals and window bounds]
\label{rem:late-arrivals}
A node may receive a block $b$ with $\slot(b) < t$. The reception checks are slot-local: for every short reference $r$ of $b$, we enforce $1\le \slot(b)-\slot(r)\le w$ and allow at most one long-ref with $\slot(b)-\slot(\ell)\ge w+1$ (which carries zero weight). Thus a late-arriving $b$ is accepted iff its own short-refs lie in $W(\slot(b);w)$ and no reference targets $\slot(b)$.
\end{remark}

\subsubsection{Local Fork Choice (\ProjBase)}

In \ProjBase, the per-block VRF output serves as the canonical public coin used for
tie-breaking in the conflicted-tips rule. Algorithm~\ref{alg:base-local-fc} is the
practical instantiation of the preferred-frontier rule of
Section~\ref{sec:dag-building-blocks} and Algorithm~\ref{alg:ideal-local-fc}: starting
from the short-reference tips $U_t = \Tips_w(G_t)$, it applies the conflicted-tips
resolution rule $\CTR(\cdot,\cdot)$ to incomparable, conflicting tips, now using the
VRF labels $y(\cdot)$ as the public-coin tie-breaker.

The surviving tips of Algorithm~\ref{alg:base-local-fc} form the preferred frontier
$F_t$ (Defs.~\ref{def:preferred-frontier} and~\ref{def:ctr}): a set of pairwise
non-conflicting tips such that no conflicting tip is preferred over any of them by
the local CCA-based fork-choice rule. 
Several incomparable, non-conflicting tips can survive simultaneously, so Algorithm~\ref{alg:base-local-fc} returns 
the conflict-free preferred frontier $F_t$ (the frontier of \prefsubdags) rather than collapsing to a single \prefsubdag.

\begin{algorithm}[H]
\caption{Local Fork Choice /Preferred Frontier  (\ProjBase)}\label{alg:base-local-fc}
\textbf{Input:} local view $G=(V,E)$; window $w$; VRF public coin $y(\cdot)$\\
\textbf{Output:} preferred frontier $\Undom_t$
\begin{algorithmic}[1]
\State $U \gets \Tips_{w}(G)$
\ForAll{unordered incomparable pairs $\{i,j\}\subseteq U$ \textbf{such that} 
        $\payload(i)$ and $\payload(j)$ conflict}
  \State $\winner \gets \CTR(i,j)$ 
  \Comment{{\footnotesize Conflicted-tips resolution (Def.~\ref{def:ctr}), using VRF-based labels $y(\cdot)$ for tie-breaking}}

  \State $\loser \gets \{i,j\}\setminus\{\winner\}$
  \State mark $\loser$ defeated
\EndFor
\State \textbf{return} $\{x\in U : x \text{ not defeated}\}$ \Comment{{\footnotesize The notion of a ``defeated'' tip is exactly as in
Remark~\ref{rem:defeated-tips}}}

\end{algorithmic}
\end{algorithm}

\paragraph{Eligibility and VRF-based tie-breaking.}
In \ProjBase, the public coin $y(\cdot)$ used in $\CTR(\cdot,\cdot)$ is instantiated
via per-block VRF outputs. Whenever two incomparable tips $i,j$ have the same
window-filtered \subdagweight at their closest common ancestor, the rule
$\CTR(i,j)$ keeps the tip with the smaller VRF label $y(\cdot)$ and marks the other
as defeated. This affects only the membership of tips in $F_t$: 
defeated tips and the \branches rooted at them remain in the DAG and may still 
contribute to block confirmation, 
but they are no longer part of the preferred frontier.

\paragraph{Weight and tie-breaking policy.}
\ProjBase simply reuses the fork-choice policy of Section~\ref{sec:dag-building-blocks}:
\subdagweights count only short references from the last $w$ slots, while long references
carry zero fork-choice weight and serve only 
connectivity and block confirmation.
The per-block VRF output $y(\cdot)$ is used solely as the public coin in $\CTR(\cdot,\cdot)$ for breaking ties between conflicting tips 
of the same weight.

\begin{remark}[Algorithmic obligations supporting TB]
\label{rem:alg-TB}
The local fork-choice rule relies on three operational conditions from the design
of Section~\ref{sec:dag-building-blocks}:
(i) short references are restricted to the last $w$ slots;
(ii) long references carry zero fork-choice weight 
and are included only as mandatory fallbacks when no visible tip lies inside the short-reference window; 
and (iii) whenever eligible, an honest creator includes at least one short reference to a currently visible tip with probability $q > 0$.
Item~(iii) is the coverage condition used in Lemma~\ref{lem:TB-base}.
\end{remark}

\begin{remark}[Maintaining scores in practice]
Conceptually, \subdagscores can be evaluated on demand from the windowed DAG.
Implementations may instead maintain incremental $\mathsf{score}(x)$ values along
their \prefsubdags as new blocks arrive; the protocol specification does not
fix a particular data structure.
\end{remark}

\begin{remark}[\Branch pruning in the ledger view]
When two conflicting blocks appear on different \branches, the fork-choice
rule resolves the conflict via $\CTR(\cdot,\cdot)$ as above. The loser \branch is
pruned from the \emph{ledger view} (kept in the DAG) but is not permanently
discarded until the honest \prefsubdag reaches finality depth $k$
(Section~\ref{sec:finality}).
Ledger view here refers to any analysis-level linearization consistent with the DAG; we do not fix a specific client implementation. 
\end{remark}

\paragraph{Finalization heuristic (pointer).}
We adopt the global rule from Section~\ref{sec:finality}: finalize a block once it
is $k$-deep under the preferred frontier (Def.~\ref{def:finalization-rule}). The
choice of $k$ and the wall-clock mapping $T_{\mathrm{final}} = k\cdot\tau_{\mathrm{slot}}$
follow Theorem~\ref{thm:master-finality} and Corollary~\ref{cor:practical-finality};
deployment tuning is centralized in Section~\ref{sec:finality-deployment}, and we
avoid restating constants here to keep a single source of truth.

\subsection{Security of the Practical Protocol}
\label{sec:base-security}

\paragraph{Proof roadmap under the standard bounded-delay model (unknown $\Delta$). }
Assuming honest stake $H{>}1/2$, VRF unpredictability, and $w\!\ge\!\Delta$, we establish DG/DQ/DCP/TB analogously to the ideal case. The only change is that TB scales with $(\lambda_h+\lambda_a)\Delta$ due to message delay: honest tips may momentarily widen but remain $O((\lambda_h+\lambda_a)\Delta)$ w.h.p. The same window-gap martingale argument then yields $(k,\varepsilon)$-finality with $k=\Theta(w)+O(\log(1/\varepsilon))$.

We prove the DAG properties and ledger properties for \ProjBase under~\eqref{eq:base-params}. 
We write $\lambda_h \coloneqq \mathbb{E}[\#\text{honest-eligible per slot}] = \Theta(pH|\mathcal{P}|)$, where $|\mathcal{P}|$ is the number of validators.

\paragraph{Cryptographic assumptions.}
As in Ouroboros Praos~\cite{EC:DGKR18}, in this section we assume a
secure verifiable random function (VRF) $(\VRFGen,\VRFSign,\VRFVerify)$
and a secure digital signature scheme
$(\SIGGen,\SIGSign,\SIGVerify)$: the adversary cannot forge a valid VRF
proof or a valid signature for an honest validator except with negligible
probability. Equivalently, one may view the protocol as running in a
hybrid model with ideal functionalities for VRFs and signatures, as in
Praos. Under these assumptions, every valid block attributed to $v$ must
have been produced by $v$ in a slot in which $v$ was VRF-eligible, and
our combinatorial analysis focuses solely on the resulting leader
schedule and DAG evolution.

\subsubsection{Consensus properties: DG, DQ, TB, DCP}
\begin{theorem}[DAG Growth (DG) in \ProjBase]
\label{thm:base-DG}
For any interval of $\ell$ slots for the entire execution, at least $\tau_D\ell$ honest blocks become ancestors of some honest tip by time $t{+}\ell{-}1{+}\Delta$, with probability $1-\negl(\secp)$, for $\tau_D=\Theta(pH)$.
\end{theorem}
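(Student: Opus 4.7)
The plan is to lift Theorem~\ref{thm:ideal-DG} from the synchronous ideal to the bounded-delay setting by pairing a Chernoff bound on VRF-based honest eligibility with the window discipline $w \ge \Delta$ and the TB and DQ bounds already established for \ProjBase.

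First, I would count honest-eligible blocks produced in $I = [t, t+\ell-1]$. Under VRF sortition, each honest validator is eligible in slot $s$ independently with its stake-proportional probability, and the indicator variables across validators and slots are mutually independent by VRF unforgeability. Their sum has expectation $\lambda_h \ell = \Theta(pH\ell)$, so by a standard Chernoff bound (Appendix~\ref{sec:concentration-app}) at least $(1-\delta)\lambda_h \ell$ honest blocks are created in $I$ except with probability $\negl(\secp)$.

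Next, I would show that each honest block $b$ created in some slot $s \in I$ remains referenceable long enough. By the $\Delta$-delivery guarantee, $b$ is in every honest view by slot $s + \Delta \le t + \ell - 1 + \Delta$. Since $w \ge \Delta$, $b$ lies inside the short-reference window of any honest proposer eligible in $[s+1, s+w]$; combined with the honest-coverage assumption $q > 0$, the first honest eligible slot after $s$ produces a block that short-references a visible tip, and by a geometric-tail argument this happens within $O(1/(qH\lambda))$ slots of $s$ with overwhelming probability. Hence, by time $t' := t+\ell-1+\Delta$, $b$ has at least one short-reference descendant (or is itself still a tip) in every honest view.

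The key step is to argue $b \in \Anc^{*}(F_{t'}^P)$ for some honest party $P$. Following short-reference edges forward from $b$ in $P$'s view reaches some tip $x \in \Tips_w(G_{t'}^P)$; what remains is to show $x$ is not eliminated from $F_{t'}^P$ by the pairwise rule $\CTR(\cdot,\cdot)$. Here I would invoke Lemma~\ref{lem:TB-base} (TB) to bound the number of incomparable honest tips by $O((\lambda_h+\lambda_a)\Delta)$, and the CCA-local windowed weight structure to show that any competing \adversarialsubdag rooted above $\CCA(x,\cdot)$ has strictly smaller expected short-reference weight inside the window, since honest eligibility density exceeds adversarial density under $H > 1/2$. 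A bounded-differences concentration argument (analogous to the one invoked in Remark~\ref{rmk:tb-to-finality}) then gives that, except with negligible probability, $x$ survives $\CTR$ and thus $b \in \Anc^{*}(F_{t'}^P)$.

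The main obstacle will be the last step: in the synchronous ideal model the honest ``next block'' directly swallows $b$ into every honest view, whereas in \ProjBase, one must rule out the possibility that an adversarial \subdag conflicting with $b$'s lineage accumulates heavier CCA-local window weight during the $O(\Delta)$-slot propagation window. Bounding this via TB and a Chernoff estimate on honest versus adversarial short-reference contributions inside the window yields the claimed fraction $\tau_D = \Theta(pH)$, matching the synchronous rate up to constants that depend only on $\Delta$ and $q$.
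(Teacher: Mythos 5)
Your proposal follows the same high-level route as the paper's (brief) proof sketch: a Chernoff bound on VRF-based honest eligibility to get $\Theta(pH\ell)$ honest blocks in $I$, and then the $\Delta$-delivery guarantee plus $w \ge \Delta$ to conclude that these blocks are referenced by subsequent honest blocks and hence become ancestors of honest tips. So the core skeleton matches.

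Where you go beyond the paper is the third step, and this is actually a genuinely useful addition. The paper's proof sketch stops at ``hence they become ancestors of honest tips,'' which implicitly assumes that being referenced by a subsequent honest block automatically places $b$ under a \emph{preferred} tip (recall Definition~\ref{def:DG} requires $b$ to lie in $\Anc^\ast$ of some honest party's fork-choice tip, i.e.\ some tip in $F_{t'}$). That is not immediate: an honest block $b$ could end up with all of its tip descendants defeated by $\CTR(\cdot,\cdot)$ if every tip above $b$ loses a conflict to a heavier competing subDAG rooted above the relevant CCA. You correctly identify this gap and sketch the right machinery to close it --- Tip-Boundedness to limit fragmentation, honest majority $H>1/2$ to give the honest subDAG the larger expected CCA-local windowed weight, and a concentration step to convert the drift into a w.h.p.\ statement. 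This is more careful than the paper's sketch and is not circular, since it does not invoke the master finality theorem.

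Two things to tighten if you were to write this out in full. First, the geometric-tail bound $O(1/(qH\lambda))$ on the time until $b$ is first short-referenced must be reconciled with the fixed deadline $t+\ell-1+\Delta$: blocks created in the last $O(1/(qH\lambda))$ slots of $I$ may not yet be referenced by the deadline. This is fine for a constant-fraction claim (you only need $\tau_D\ell$ of them), but you should say so explicitly; otherwise the per-block argument looks like it is proving something for all honest blocks in $I$. Second, the $\CTR$-survival step should be phrased as ``some tip $x$ with $b\in\Anc^\ast(x)$ survives in some honest $F_{t'}^P$,'' not as a property of a single arbitrarily chosen chain of short-references from $b$: several tips may descend from $b$, and you only need one of them to survive, which is exactly what the honest-majority-plus-TB drift argument gives you once honest weight has coalesced.
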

\begin{proof}[Proof sketch]
In each slot $t$ the number of honest eligible is $\mathrm{Binomial}(|\mathcal{P}|,p\cdot \stakefrac{\cdot})$; 
 Chernoff concentration gives $\Omega(\lambda_h)$ honest blocks per slot over $\ell$ slots. Delivery within $\Delta$ ensures these blocks are referenced by subsequent honest blocks inside the window $w\ge \Delta$, hence they become ancestors of honest tips.
\end{proof}

\begin{theorem}[DAG Quality (DQ) in \ProjBase]
\label{thm:base-DQ}
Fix any $\ell\ge \Omega(\log\secp)$ for the entire execution. The fraction of honest blocks among those entering the preferred frontier during the interval is at least $\mu_D=\Theta(H)$ w.h.p.
\end{theorem}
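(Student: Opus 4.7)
The plan is a two-step budget argument: lower-bound the number of honest blocks that enter the preferred frontier during $I = [t, t+\ell-1]$ using the DG theorem just established, then upper-bound the number of adversarial blocks that can possibly enter the frontier via the VRF-imposed eligibility budget, and take the ratio. The core observation is that both bounds are linear in $\ell$ with coefficients controlled by $\lambda_h = \Theta(pH)$ and $\lambda_a = \Theta(p(1-H))$, so the ratio is $\Theta(\lambda_h / (\lambda_h + \lambda_a)) = \Theta(H)$ regardless of how the adversary schedules message delivery.

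Concretely, I would proceed as follows. First, fix the interval $I$ and let $B_I$ denote the blocks that become ancestors of some honest preferred tip by time $t + \ell - 1 + \Delta$ (the same slightly-extended horizon already used in Theorem~\ref{thm:base-DG}). Step one: by VRF unforgeability and a Chernoff bound over per-slot Bernoulli eligibility trials, the number of adversarial-eligible blocks with slot in $I$ is at most $(1+\delta)\lambda_a \ell$ with probability $1 - \negl(\secp)$ for any fixed $\delta > 0$, provided $\ell \ge \Omega(\log \secp)$. This trivially upper-bounds the adversarial contribution to $B_I$, since a block cannot enter the frontier without being eligible. Step two: by Theorem~\ref{thm:base-DG} (DG), at least $\tau_D \ell = \Theta(pH)\cdot \ell$ honest blocks created in $I$ end up as ancestors of some honest party's preferred tip by time $t+\ell-1+\Delta$, hence lie in $B_I^{\mathsf H}$. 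Step three: combine the two inequalities:
\[
  \frac{|B_I^{\mathsf H}|}{|B_I|}
  \;\ge\;
  \frac{\tau_D \ell}{\tau_D \ell + (1+\delta)\lambda_a \ell}
  \;=\;
  \frac{\Theta(pH)}{\Theta(pH) + \Theta(p(1-H))}
  \;=\;
  \Theta(H),
\]
which gives $\mu_D = \Theta(H)$ with the desired probability after a union bound over the two Chernoff events.

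The main obstacle, and the step I expect will require the most care, is justifying that honest blocks are not adversarially excluded from the preferred frontier, i.e., that the DG-count of honest blocks on honest preferred tips is really a valid lower bound on $|B_I^{\mathsf H}|$ as defined by DQ. The worry is that the adversary could release a heavy private \subdag that defeats an honest \branch via the CCA-local $\CTR$ rule, causing honest blocks on that \branch to be marked defeated and thus not enter the preferred frontier. This is where the window discipline and the bounded-delay assumption $w \ge \Delta$ are essential: within any CCA-local comparison window of length at most $w$, honest short-reference mass is a Binomial variable with mean $\Theta(\lambda_h w)$, which dominates the adversarial Binomial with mean $\Theta(\lambda_a w) < \Theta(\lambda_h w)$ since $H > 1/2 + \varepsilon$. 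A standard Chernoff gap argument (of the same flavor used in Section~\ref{sec:finality} for the gap process) shows that honest \branches are not persistently defeated, so the DG-delivered honest blocks genuinely accumulate in the preferred frontier. Finally, a union bound over the $O(\ell)$ comparisons arising in the interval absorbs the additional failure probabilities into $\negl(\secp)$, and the TB bound (Lemma~\ref{lem:TB-base}) keeps the number of such comparisons polynomial so the union bound remains benign.
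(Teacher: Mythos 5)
Your proposal takes a genuinely different route from the paper. The paper's sketch argues directly on the per-slot composition of frontier entrants: VRF eligibility is stake-proportional (so the honest fraction of eligible blocks per slot is $H$), the window-filtered scoring ``does not amplify adversarial inputs beyond the window,'' and Chernoff concentration over $\ell$ slots transfers this per-slot fraction to the realized fraction $|B_I^{\mathsf H}|/|B_I|$. You instead split the count asymmetrically: a lower bound $|B_I^{\mathsf H}|\ge\tau_D\ell$ imported from Theorem~\ref{thm:base-DG}, an upper bound on the adversarial contribution via the per-slot eligibility budget, and then the monotone ratio. This is a valid alternative decomposition, and it has the pedagogical advantage of making the dependence on DG and the need to rule out ``honest-branch defeat'' explicit --- something the paper's phrase ``does not amplify adversarial inputs'' leaves implicit.

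There is one step you should patch before this is airtight. You upper-bound the adversarial contribution by the number of adversarial-eligible blocks with slot in $I$, but the set $B_I$ (whether under the paper's ``enter the frontier during $I$'' phrasing or under your ``become ancestors of some honest preferred tip by time $t+\ell-1+\Delta$'' reformulation) can a priori include adversarial blocks created and withheld \emph{before} $I$ that surface during the interval, say via a private-subDAG release. Those blocks draw on eligibility budget outside $I$, so $(1+\delta)\lambda_a\ell$ does not trivially cap them. The fix is exactly the window discipline you invoke later: any block old enough to lie outside the last $w$ slots contributes no short-reference weight and cannot be promoted into the preferred frontier by the $\CTR$ rule, so the relevant adversarial budget is over an interval of length at most $\ell + w = O(\ell)$ (using $\ell = \Omega(w)$ or simply absorbing the $+w$ into the $\Theta$). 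State this explicitly when you write the upper bound; otherwise the ratio step is comparing quantities measured over mismatched horizons. Once this is done, your approach is sound, and modulo the paper's own imprecision about whether $\tau_D$ and $\lambda_a$ carry a $|\mathcal{P}|$ factor (you are at least internally consistent), your ratio computation correctly yields $\Theta(H)$.
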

\begin{proof}[Proof sketch]
VRF eligibility is stake-proportional; the expected honest fraction per slot is $H$. 
The window-filtered local scoring (short references inside the last $w$ slots only) does not amplify adversarial inputs beyond the window; 
by linearity and concentration over $\ell$ slots the realized fraction stays within a small deviation of $H$.
\end{proof}

\begin{lemma}[TB in \ProjBase]
\label{lem:TB-base}
\label{lem:base-TB}
Using the coverage assumption of Remark~\ref{rem:alg-TB}(iii) with parameter $q$,
under the standard bounded-delay model (unknown $\Delta$), short-reference window $w\ge \Delta$, and an
honest short-reference policy that includes at least one visible tip with probability $q>0$ whenever eligible,
the number of tips visible to an honest party by the end of any slot $t$ satisfies
\[\Pr\!\left[\,|\Tips_t|\;\le\; c\cdot (\lambda_h+\lambda_a)\cdot \Delta\,\right]\;\ge\;1-\negl(\secp),\]
for a universal constant $c$, where $\lambda_h,\lambda_a$ are honest and adversarial per-slot creation rates.
Equivalently, $\beta = O(\lambda\cdot \Delta)$.
\end{lemma}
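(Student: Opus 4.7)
The plan is to partition the short-reference window $[t-w, t-1]$ into a recent range $[t-\Delta, t-1]$ and an old range $[t-w, t-\Delta-1]$, and to bound the tip count in each range separately. First I would apply a Chernoff bound to the total number of eligible blocks produced in the most recent $\Delta$ slots; since per-slot eligibility is independent across validators with total rate $\lambda_h + \lambda_a$, the number of such blocks is at most $c_1 (\lambda_h + \lambda_a)\Delta$ except with negligible probability, which directly covers every tip whose creation slot lies in the recent range.

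Next I would argue that tips in the old range are essentially absent. Fix any block $b$ with $\slot(b) = s \le t-\Delta-1$; by the bounded-delay assumption, $b$ is visible to every honest validator by slot $s+\Delta$, which lies strictly before $t$. The key structural fact I would exploit is that the max-antichain policy used by honest creators in Algorithm~\ref{alg:base-block-creation} can, without loss of generality, be specified to include every visible tip: by Dilworth's theorem the tip set size is bounded by the maximum antichain size, and whenever the chosen antichain contains an ancestor $a$ of a tip $b'$ but not $b'$ itself, replacing $a$ by $b'$ yields another maximum antichain. Under this specification, every honest block created in any slot $t' \in [s+\Delta, t-1]$ short-references $b$, so $b$ can remain a tip at time $t$ only if no honest block appears in $[s+\Delta, t-1]$.

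Then I would convert this into a high-probability statement by a second Chernoff bound on the number of honest-eligible slots in $[s+\Delta, t-1]$: the mean is $\lambda_h(t-s-\Delta)$, so the ``no honest block'' event has probability $e^{-\Omega(\lambda_h(t-s-\Delta))}$. A union bound over all candidate $b$ with $s \le t-\Delta-C(\log \secp)/\lambda_h$ — at most $O(\lambda w)$ such blocks in the window — shows that no such $b$ remains a tip except with negligible probability. The leftover slots $[t-\Delta-C(\log\secp)/\lambda_h,\,t-\Delta-1]$ contribute only $O((\lambda_h+\lambda_a)(\Delta + (\log\secp)/\lambda_h))$ additional blocks, which reduces to $O(\lambda\Delta)$ using $H>1/2$ (so $\lambda/\lambda_h = O(1)$) and absorbing the logarithmic slack into the constant under the standing discipline $w \ge \Delta + \omega(\log\secp)$. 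Combining the two bounds yields $|\Tips_t| \le c\cdot(\lambda_h+\lambda_a)\Delta$ with probability $1-\negl(\secp)$.

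The main obstacle is justifying the ``max antichain contains all visible tips'' step cleanly: as stated, $\MaximumAntichain(\cdot)$ is a deterministic procedure with no built-in preference for tips, so the swap argument is only a WLOG step at the analysis level rather than a hard property of the deployed algorithm. A clean fix is to pin down $\MaximumAntichain$ to break ties in favor of tips, promoting the WLOG step to an explicit design choice. Alternatively, one can drop the max-antichain assumption and rely only on the coverage probability $q>0$, arguing that over $\Theta((\log\secp)/(q\lambda_h))$ slots at least one honest block short-references any persistent tip with overwhelming probability; this yields a slightly worse but still $O(\lambda\Delta)$ constant and is the right route if one wants the lemma to depend only on the generic coverage hypothesis stated in Section~\ref{subsec:base-overview}.
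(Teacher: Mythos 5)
Your decomposition into a recent $\Delta$-slot range and an old range, with Chernoff concentration on births in the recent range and a coverage-based death argument for blocks in the old range, matches the ``births vs deaths'' strategy the paper sketches. However, the max-antichain WLOG step you propose as the primary death mechanism is incorrect: the swap of an ancestor $a$ for a tip $b'$ need not preserve the antichain property when $b'$ has two or more ancestors in the chosen antichain, and in fact there are windowed DAGs in which \emph{no} maximum antichain contains all tips. Concretely, every non-tip block in $V_w(t)$ is an ancestor of some tip (follow short-ref edges forward until they stop), so the only antichain containing all tips is the tip set itself; if $a$ and $b$ are incomparable and both $a \prec z$ and $b \prec z$ with $z$ the unique tip, then $\{a,b\}$ is the maximum antichain of size $2$ while $\{z\}$ has size $1$. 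Any tie-breaking rule ``in favor of tips'' therefore has nothing to select --- the fix you propose cannot be implemented as a hard property of $\MaximumAntichain$.

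Your fallback route via the coverage parameter $q$ is the correct one and is what the paper's own sketch uses. But note a subtlety that both your fallback and the paper's sketch gloss over: the argument needs $q$ to be a \emph{per-tip} coverage rate (each visible tip receives an honest short reference at rate at least $q\lambda_h$), whereas the standing coverage hypothesis as literally stated only guarantees that each eligible honest block references \emph{at least one} visible tip with probability $\ge q$. The latter is strictly weaker and, on its own, does not stop an adversary from arranging that honest short references systematically land on other tips. Making the per-tip interpretation explicit is exactly the strengthening the ``birth--death coupling'' step actually relies on; you identified the gap correctly, but the fallback does not escape it without that sharpening.
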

\begin{proof}[Proof sketch]
\emph{Visibility window.} 
Because messages are delivered within at most $\Delta$ slots, any block older than $\Delta$ slots is visible to all honest parties. With $w \ge \Delta$, such a block remains inside the short-reference window long enough to be short-referenced by honest blocks, so it cannot remain a tip beyond age $w$.

\noindent
\emph{Births vs deaths.} Tips are born within the last $\Delta$ slots (worst-case skew)
and die once they receive a short-descendant. Honest eligible blocks provide an independent per-slot coverage probability
at least $q\lambda_h$ for each visible tip. 

\noindent
\emph{Bound.} Across a $\Delta$-slot horizon, tip births concentrate around
$(\lambda_h+\lambda_a)\Delta$; a birth-death coupling plus Chernoff bounds yields the stated w.h.p.\ cap.
\end{proof}

\begin{theorem}[DAG Common Past (\DCP) in \ProjBase]
\label{thm:base-DCPprime}\label{thm:base-DCP}
There exists $k_D=\Theta(\Delta)$ such that, for any two honest parties $P$, $Q$ and times $t_1 \le t_2$,  
after removing the last $k_D$ layers 
from $\Past_P(t_1)$ and $\Past_Q(t_2)$, the earlier trimmed past is contained in the later
one (and in fact the trimmed pasts coincide) with high probability; formally,
$\Trim(\Past_P(t_1), k_D) \subseteq \Trim(\Past_Q(t_2), k_D)$.
\end{theorem}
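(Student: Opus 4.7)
The plan is to pick $k_D = c\cdot \Delta$ for a sufficiently large constant $c$ (whose exact value will be calibrated against the master finality theorem of Section~\ref{sec:finality}) and show that any block $b$ with $\slot(b) \le t_1 - k_D$ that lies in $\Past_P(t_1) = \AncStar(\Undom_{t_1}^P)$ must also lie in $\Past_Q(t_2)$. The argument splits naturally into a \emph{visibility} step and a \emph{non-reversion} step.

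First I would handle visibility. Since $\Delta$-delivery ensures $b$ reaches every honest party by slot $\slot(b)+\Delta \le t_1 - k_D + \Delta \le t_1 \le t_2$ (taking $c \ge 1$), both $P$ and $Q$ certainly have $b$ in their local DAG by time $t_2$. The nontrivial content of \DCP is therefore that $b$ remains attached to \emph{some} tip in $\Undom^Q_{t_2}$. Because $b \in \Past_P(t_1)$, there is a preferred tip $x \in \Undom_{t_1}^P$ with $b \preceq x$. By Lemma~\ref{lem:TB-base} the honest frontier has at most $\beta = O(\lambda \Delta)$ tips, and by the coverage assumption (Remark~\ref{rem:alg-TB}(iii)) each honestly produced block in slots $(\slot(b), t_1]$ includes a visible tip as a short reference with probability at least~$q$; once $b$ is broadcast, honest short-references land on the subDAG containing $b$ at rate $\Theta(q\lambda_h/\beta)$.

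Next I would carry out the non-reversion step via a CCA-local windowed-gap martingale argument. Suppose, toward contradiction, that $b \notin \Past_Q(t_2)$. Then every $\succ$-maximal preferred tip $y$ that transitively excludes $b$ must have defeated, under $\CTR(\cdot,\cdot)$, some honest tip $x'$ with $b \preceq x'$; call their closest common ancestor $c$. Because $\slot(b) \le t_1 - k_D$ and $\slot(b) \in \Desc^\ast(c)$, the competing branches above $c$ span at least $k_D$ slots. Over such a horizon, DG and DQ give an honest short-reference income of rate $\Omega(q H \lambda_h)$ into the branch containing $b$, while the adversarial rate is at most $\lambda_a$. Since $w \ge \Delta$, honest contributions are inside the window and fully counted; Lemma~\ref{lem:TB-base} bounds the per-slot increments of the gap $\BranchW(x';c,w) - \BranchW(y;c,w)$. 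A standard Azuma--Hoeffding / Chernoff bound (cf.\ Remark~\ref{rmk:tb-to-finality}) then shows that for $k_D = \Theta(\Delta) + O(\log \kappa)$, the probability of such a reversal is $\negl(\kappa)$. A union bound over all honest pairs $(P,Q)$ and all slots $(t_1,t_2)$ within a polynomial horizon completes the argument.

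The main obstacle is the last step, and it is essentially the same difficulty that appears in proving persistence: one must rule out \emph{all} adversarial release schedules, including late unveilings of long-withheld conflicting subDAGs, using only CCA-local reasoning. The key leverage is the short-reference window discipline $w \ge \Delta$, which caps the weight any withheld branch can instantaneously contribute once revealed to what fits in the window, together with TB, which prevents fragmentation of the honest frontier from diluting the drift of the gap process. Care is also needed to avoid circularity: \DCP should be derived directly from DG, DQ, TB, and the windowed CCA-local fork choice, and then consumed (together with TB) by the persistence proof rather than the other way around.
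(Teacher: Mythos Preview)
Your proposal is correct at the sketch level but takes a substantially heavier route than the paper. The paper's argument is almost purely structural: invoke Lemma~\ref{lem:TB-base} to bound the frontier width, observe that once a block is older than $\Theta(\Delta)$ slots it has been delivered to every honest party and (since $w\ge\Delta$) short-referenced from each of their frontiers, and conclude that the CCAs of honest preferred frontiers advance monotonically so that any divergence is confined to the last $\Theta(\Delta)$ layers. There is no gap process, no Azuma--Hoeffding, and no explicit ``non-reversion'' step; the paper treats \DCP as a near-immediate consequence of bounded delay plus TB plus the max-antichain referencing policy. By contrast, you essentially inline a miniature version of the persistence argument (positive-drift windowed gap, bounded differences via TB, concentration) directly into the \DCP proof. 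What your approach buys is a more self-contained argument that explicitly confronts the case you correctly flag as the real content---why a block in $P$'s preferred past cannot be excluded from $Q$'s preferred frontier by a competing tip---whereas the paper's sketch leans on the somewhat informal ``CCAs move forward monotonically'' claim to absorb this. What the paper's approach buys is modularity and a cleaner dependency graph: \DCP is derived from TB and delivery alone, and the martingale machinery is deferred entirely to Section~\ref{sec:finality}, which then consumes \DCP rather than re-deriving it.

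Two small technical points in your sketch are worth tightening. First, ``$\slot(b)\in\Desc^\ast(c)$'' is a type error; you mean $b\in\Desc^\ast(c)$, and in fact you need the contrapositive direction (if $b\preceq c$ then $b\preceq y$, contradiction, so $b$ and $c$ are either incomparable or $c\preceq b$). Second, your claim that ``the competing branches above $c$ span at least $k_D$ slots'' does not immediately follow from $\slot(b)\le t_1-k_D$ when $b$ and $c$ are incomparable ancestors of $x'$; the relevant horizon for the drift argument is really the interval $[\slot(b)+\Delta,\,t_2]$ during which honest blocks see $b$ and reference its descendants, not the slot-distance from $c$ to the tips. This is easily patched but should be stated carefully, since the CCA-local weight only counts blocks in $\Anc^\ast(x')\cap\Desc^\ast(c)$ and you need to argue that enough honest weight lands in that intersection.
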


\begin{proof}[Proof sketch]
By Lemma~\ref{lem:TB-base}, the preferred frontier has bounded width $O((\lambda_h+\lambda_a)\Delta)$ w.h.p.  
Once a block is older than $\Theta(\Delta)$ slots, it has been delivered to all honest parties and short-referenced from their
respective frontiers within the window $w \ge \Delta$. 
Therefore, the CCAs of the preferred frontiers 
$\Undom_t$ across honest parties move forward monotonically, and any divergence is confined to the last $\Theta(\Delta)$ layers.
Trimming these layers from $\Past_P(t)$ and $\Past_Q(t)$ yields identical past sets, giving \DCP with $k_D = \Theta(\Delta)$.
\end{proof}

\subsubsection{Ledger properties: Persistence and Liveness}

\begin{theorem}[Persistence (Safety) of \ProjBase]\label{thm:base-safety}
Let $k,\varepsilon$ be any parameters that satisfy the master finality guarantee of \S\ref{sec:finality}
(Thm.~\ref{thm:master-finality}, instantiated by Cor.~\ref{cor:practical-finality}). If a block $B$ is
$k$-deep under the preferred frontier of an honest party by the end of some slot, then with probability
at least $1-\varepsilon$ the block $B$ remains in the ledger of every honest party forever after.
\end{theorem}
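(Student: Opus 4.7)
The plan is to reduce Theorem~\ref{thm:base-safety} to the master $(k,\varepsilon)$-finality theorem (Thm.~\ref{thm:master-finality}) via its practical instantiation (Cor.~\ref{cor:practical-finality}), exactly as the ideal-model counterpart (Thm.~\ref{thm:ideal-persistence}) reduced to Cor.~\ref{cor:ideal-finality}. Concretely, I would first collect the consensus-layer hypotheses that the master theorem requires, and verify each of them for \ProjBase under the standing assumptions~\eqref{eq:base-params} ($w \ge \Delta + \omega(\log\secp)$, $H > 1/2 + \varepsilon$). For DAG Growth I would cite Theorem~\ref{thm:base-DG}, for DAG Quality Theorem~\ref{thm:base-DQ}, for DAG Common Past Theorem~\ref{thm:base-DCPprime} (with $k_D = \Theta(\Delta)$), and for Tip-Boundedness Lemma~\ref{lem:TB-base} with $\beta = O(\lambda\Delta)$. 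Together these supply the four structural invariants (DG, DQ, \DCP, TB) on which the gap-process martingale argument in Section~\ref{sec:finality} is built.

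Next I would invoke Cor.~\ref{cor:practical-finality} to translate these invariants into the quantitative finality statement. The corollary instantiates the abstract drift and concentration parameters of Thm.~\ref{thm:master-finality} with the \ProjBase-specific quantities: the per-slot honest creation rate $\lambda_h = \Theta(pH|\mathcal{P}|)$, the adversarial rate $\lambda_a$, the window length $w$, the delay $\Delta$, and the tip-cap $\beta$. By Remark~\ref{rmk:tb-to-finality}, TB together with $w \ge \Delta$ bounds the per-window increments of the gap process (Def.~\ref{def:gap}), so that Azuma--Hoeffding (and the standard Chernoff bounds collected in Appendix~\ref{sec:concentration-app}) yield an exponentially small tail in the depth parameter~$k$. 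Choosing $k$ and $\varepsilon$ as in the corollary guarantees that, once a block $B$ is $k$-deep under the preferred frontier $F_t$ in some honest view, the probability that any later honest fork-choice ever demotes $B$ out of its ledger is at most $\varepsilon$.

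Finally, to close the argument I would combine this per-block irreversibility with the consistency of honest views. Since $B$ is $k$-deep under $F_t$ in some honest view at slot $t$, \DCP (Thm.~\ref{thm:base-DCPprime}) implies that, after a $\Theta(\Delta)$-slot trimming, every other honest party's preferred-frontier past contains $B$ except with negligible probability; hence ``remains in the ledger of $P$'' and ``remains in the ledger of every honest party'' differ by a negligible additive term that is absorbed into $\varepsilon$. Taking a union bound over the negligible events (VRF/signature forgeries, concentration failures for DG/DQ/TB, \DCP trimming failures) and the master finality tail yields the claim.

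The routine parts will be the citation bookkeeping and the verification that all assumptions of Thm.~\ref{thm:master-finality} are indeed discharged by Theorems~\ref{thm:base-DG}--\ref{thm:base-DCPprime} and Lemma~\ref{lem:TB-base}. The main conceptual obstacle, which is however already handled inside Section~\ref{sec:finality}, is that under bounded delay the gap process has bounded but non-zero adversarial increments outside the honest view for up to $\Delta$ slots; the window discipline $w \ge \Delta$ and the TB cap $\beta = O(\lambda\Delta)$ are what keep the drift strictly positive with bounded differences, and this is precisely the calibration that Cor.~\ref{cor:practical-finality} encodes. Once that corollary is in hand, the reduction is essentially one line, mirroring the ideal-model proof of Thm.~\ref{thm:ideal-persistence}.
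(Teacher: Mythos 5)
Your proposal is correct and follows essentially the same route as the paper: the paper's own proof is a one-liner, ``Immediate from Def.~\ref{def:finalization-rule} and Cor.~\ref{cor:practical-finality},'' which is exactly the reduction you carry out in more detail. Your first two paragraphs---discharging the DG/DQ/\DCP/TB hypotheses via Theorems~\ref{thm:base-DG}--\ref{thm:base-DCPprime} and Lemma~\ref{lem:TB-base}, then invoking Cor.~\ref{cor:practical-finality}---are the unpacking of that one line, and are accurate. One thing you did that the paper's proof silently elides is worth noting: Def.~\ref{def:finalization-rule} is phrased as ``the probability that $B$ is ever removed from \emph{that node's} ledger view,'' whereas the theorem quantifies over ``the ledger of \emph{every honest party}.'' You correctly identified this gap and closed it via \DCP (Thm.~\ref{thm:base-DCPprime}): once one honest party holds $B$ at depth $k > k_D = \Theta(\Delta)$, \DCP propagates membership of $B$ to every honest party's trimmed past except with negligible probability, and the extra failure probability is absorbed by a union bound. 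The paper treats this as implicit inside the master finality machinery, but making it explicit as you did is good hygiene. Your proof is correct.
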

\begin{proof}
Immediate from Def.~\ref{def:finalization-rule} and Cor.~\ref{cor:practical-finality}.
\end{proof}

\begin{theorem}[Liveness of \ProjBase]
\label{thm:areon-base-liveness}
\label{thm:base-liveness}
Fix a slot $t$ and let $b$ be an honestly created valid block in slot $t$ whose payload does not conflict with 
the preferred frontier and any of its descendants. 
Under the standing assumptions of this section (bounded-delay network with parameter $\Delta$ and window size $w \ge \Delta$, VRF-based eligibility with honest stake fraction $H > 1/2$), there exists
$\ell_{\mathsf{live}} = \tilde{O}\!\left(\frac{1}{p H} + \Delta\right)$
such that, except with negligible probability, by slot $t + \ell_{\mathsf{live}}$ the block $b$ is $k$-deep under the preferred frontier of every honest party (and hence appears in every honest ledger view).
\end{theorem}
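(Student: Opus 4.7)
My plan is to decompose the time until $b$ becomes $k$-deep into three phases and then invoke the $(k,\varepsilon)$-finality theorem of Section~\ref{sec:finality}.

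First, I would argue \emph{visibility and admission}. By the bounded-delay hypothesis, $b$ reaches every honest view by slot $t + \Delta$. Because the payload of $b$ conflicts with nothing in the current preferred frontier or any of its descendants, the conflicted-tips rule $\CTR(\cdot,\cdot)$ never defeats $b$ (Def.~\ref{def:ctr}, Alg.~\ref{alg:base-local-fc}), so $b$ is a legitimate short-reference target for every honest block produced in slots $[t+\Delta,\, t+\Delta+w]$. Combining the coverage assumption $q>0$ (Remark~\ref{rem:alg-TB}) with Tip-Boundedness (Lemma~\ref{lem:base-TB}, which gives $|\Tips_t|\le\beta=O(\lambda\Delta)$ with high probability) yields a per-slot probability $\Omega(q\lambda_h/\beta)=\Omega(qpH/\Delta)$ that some honest eligible block short-references $b$ or a descendant of $b$ lying on a \prefsubdag. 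A Chernoff bound then shows that within $\tilde{O}(\Delta/(qpH))$ slots after $t+\Delta$, the block $b$ acquires at least one honest short-ref-confirming descendant on some \prefsubdag, except with negligible probability; in particular $w\ge\Delta$ ensures $b$ does not age out of the short-reference window before this event.

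Second, once $b$ lies on an honest \prefsubdag, I would show \emph{linear accumulation of honest depth}. DG (Thm.~\ref{thm:base-DG}) supplies $\Omega(pH\,\ell)$ honest blocks over any $\ell$-slot interval; DQ (Thm.~\ref{thm:base-DQ}) ensures that a $\Theta(H)$ fraction of new entrants to the preferred frontier are honest; and TB prevents these honest blocks from being diluted across unboundedly many competing tips. Together these properties yield a positive-drift count process for honest short-ref-confirming descendants of $b$ in every honest view, with bounded per-slot increments. Standard Azuma--Hoeffding and Chernoff concentration (as assembled in Appendix~\ref{sec:concentration-app}) then give that this count reaches the required depth within $\tilde{O}(1/(pH))$ additional slots except with negligible probability. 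Finally, I would invoke the master $(k,\varepsilon)$-finality theorem (Thm.~\ref{thm:master-finality}, specialized via Cor.~\ref{cor:practical-finality}) to convert accumulated depth into the statement that $b$ is $k$-deep under the preferred frontier of every honest party with probability at least $1-\negl(\secp)$. Summing the three phases yields $\ell_{\mathsf{live}}=O(\Delta)+\tilde{O}(1/(pH))=\tilde{O}(1/(pH)+\Delta)$.

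The main obstacle will be the second phase: ruling out adversarial schedules that, while unable to directly defeat $b$ (forbidden by the non-conflict hypothesis), nonetheless split honest short-reference mass across many incomparable, non-conflicting \prefsubdags lying outside $\DescStar(b)$ and thereby slow $b$'s depth accumulation. Here the window discipline $w\ge\Delta$ is essential: long-withheld adversarial edges never contribute to fork-choice weight, so any competing \prefsubdag must be built online under the same per-slot eligibility rate as honest work. The CCA-local windowed-gap argument underlying Thm.~\ref{thm:master-finality} then guarantees positive drift and bounded increments for the weight margin accruing along $b$'s \prefsubdag, which is precisely what the concentration step in Phase~2 requires.
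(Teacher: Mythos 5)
Your three-phase decomposition (delivery, first confirming descendant, depth accumulation, followed by the master $(k,\varepsilon)$-finality theorem) matches the structure of the paper's proof sketch, and your use of TB, DG/DQ, and the windowed-gap concentration is the right toolkit. However, there is a genuine gap in your Phase~1 estimate.

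You state that the per-slot probability that some honest eligible block short-references $b$ is $\Omega(q\lambda_h/\beta)$, i.e.\ you divide by the tip-bound $\beta = O(\lambda\Delta)$ as if each honest block referenced a single uniformly chosen tip. This gives a Phase~1 time of $\tilde{O}(\Delta/(qpH))$, which is strictly larger than the theorem's target $\tilde{O}(1/(pH)+\Delta)$ whenever $pH < 1$, and is then silently dropped when you write $\ell_{\mathsf{live}} = O(\Delta) + \tilde{O}(1/(pH))$ at the end. The same over-estimate also undermines your own remark that ``$w \ge \Delta$ ensures $b$ does not age out of the short-reference window'': if Phase~1 really took $\tilde{O}(\Delta/(qpH))$ slots, a window of width $w = \Theta(\Delta)$ would not suffice to keep $b$ referenceable.

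The missing observation is that honest validators in \ProjBase\ reference the \emph{exact maximum antichain} of visible tips (Alg.~\ref{alg:block-creation}, and the cross-section correspondence in Section~\ref{sec:base}), not one tip chosen at random. Once $b$ (or a descendant of $b$) is a visible tip, \emph{every} subsequent honest eligible block short-references it, so the relevant per-slot success probability is $\Omega(q\lambda_h)$ with no $1/\beta$ dilution. TB is still needed, but its role is different: it bounds the number of short references per block and the per-slot increment of the gap process (Lemma~\ref{lem:bd-consolidated}), not the probability of hitting $b$. With this correction Phase~1 takes $\Delta + \tilde{O}(1/\lambda_h)$ slots, the window-aging remark becomes consistent, and the final sum gives the claimed $\tilde{O}(1/(pH)+\Delta)$.
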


\begin{proof}[Proof sketch]
With probability $\Omega(p H)$ per slot, some honest validator is eligible. Once the block $b$
is created in slot $t$, it is delivered to all honest parties within at most $\Delta$ slots by the
bounded-delay assumption. From that time onward, each honest eligible block includes, with
independent probability at least $q > 0$, a short reference into the currently visible preferred
frontier.

By Lemma~\ref{lem:base-TB} (Tip-Boundedness), the number of honest tips is bounded by
$\beta = O((\lambda_h + \lambda_a)\Delta)$ with high probability, so a constant fraction of new
honest short references lands on descendants of $b$ in every honest view. Using the same
martingale-style argument as in the proof of Theorem~\ref{thm:liveness-ideal} (adapted to the bounded-delay
setting), we obtain that, except with negligible probability, within $\tilde{O}(\frac{1}{p H})$ additional
slots the block $b$ accumulates at least $k$ short-ref-confirming descendants under the preferred
frontier of every honest party. Accounting for the initial $\Delta$-slot diffusion delay gives the
claimed bound $\ell_{\mathsf{live}} = \tilde{O}(\frac{1}{p H}+ \Delta)$.
\end{proof}

\subsubsection{Coupling to the Idealized Analysis}
\label{subsec:base-coupling}
\ProjBase is a refinement of \ProjIdeal: unbounded referencing is replaced with \emph{window-bounded} short references plus a  long-ref; public coins are realized with VRFs; synchrony is relaxed to the bounded-delay model with $w\ge \Delta$. The proofs above adapt the arguments of Section~\ref{sec:ideal} by (i) adding $\Delta$-slot delivery delays in the TB and 
\DCP lemmas; (ii) ensuring that window-filtered weights ignore stale withheld work; and (iii) using stake-proportional VRF concentration for DQ. Thus, each ideal lemma has a practical analogue with slightly weaker (but still constant) parameters.

The practical \DCP parameter $k_D = \Theta(\Delta)$ plays the same role as the constant $k_D$ in the ideal model:
it bounds how far back the CCAs of honest preferred frontiers may differ. This allows us to apply the
same windowed gap process and finality theorem from Section~\ref{sec:finality}.

\subsubsection{Design Soundness: Remarks and Additional Concerns}
\label{subsec:base-remarks}
\begin{remark}[Window size vs. delay]
If $w<\Delta$, honest tips can \emph{expire} from the short-reference window before cross-linking, violating TB and undermining DCP and safety. Hence the constraint $w\ge \Delta$ is \emph{necessary}.
\end{remark}

\begin{remark}[Long-ref policy]
Long references are used to maintain connectivity and block reachability: they extend the
ancestor closure used for parent fetching and for checking which blocks 
are (logically) confirmed by a tip.
However, long references always carry weight~$0$ and are ignored
by the fork-choice ranking itself: they do not contribute to $\BranchW(\cdot)$ and do not change
the outcome of pairwise comparisons in $\CTR(\cdot,\cdot)$. When deciding whether a candidate tip
can be extended, or when validating a newly received block, nodes still run the usual
conflict checks against the full ancestor closure reachable via both short and long references;
in particular, any block that is 
only reachable via a long reference and thus block confirmed by that tip
must also be conflict-free
with respect to the block's payload. Thus long refs influence \emph{validity} (which blocks can be validated and confirmed), but not the \emph{ordering} of tips in fork choice. Implementations
should deterministically choose the long reference~$\ell$ (e.g., the last finalized checkpoint,
or the $\prec$-max ancestor of the smallest short-ref) to avoid adversarial steering and to
enable fast parent fetching.
\end{remark}

\subsubsection{Discussion of Security and Parameters}

\noindent
\textbf{Safety:} The combination of antichain reference rules, 
conflict checks based on the abstract predicate $\Conflicts(\cdot,\cdot)$,
and the CCA-local fork choice ensures that once honest blocks gain a majority of in-window
weight over any \adversarialsubdag, the honest \subdag will prevail.

\smallskip

\noindent
\textbf{Liveness:} Fast convergence in \ProjBase comes not only from producing more honest blocks but also from eliminating honest forks. Multiple proposers increase the rate at which honest votes (short references) are cast, while the antichain reference rule and the CCA-local window-filtered fork choice continually merge visible honest tips and enforce Tip-Boundedness (TB). As a result, honest weight does not disperse across many long-lived honest \subdags but quickly concentrates on a small set of preferred tips. 
With $w$ chosen sufficiently large relative to $\Delta$ and occasional longrefs used only to maintain connectivity, 
any valid block proposed by an honest party eventually becomes $k$-deep under the preferred frontier of every honest party (and hence appears in every honest ledger view).

Intuitively, the relevant ``speed'' parameter is the honest vote rate per effective tip, roughly $\lambda_h / \beta$; \ProjBase improves both the numerator (more honest proposers) and the denominator (a TB-bounded number of honest forks).

\smallskip

\noindent
\textbf{Resistance to Withholding:} Because \subdagweights are always computed using only in-window references below the
relevant CCA when comparing tips, an adversary cannot secretly build a long private
\branch that overtakes the honest public \subdag upon revelation.

\smallskip

\noindent
\textbf{Parameter Choices:} A window on the order of tens of slots (e.g., $w\approx 30$) often suffices to cover network latency for a moderate honest block rate.

\subsection{Cross-Section Correspondence (Sections \ref{sec:ideal} vs \ref{sec:base})}
The following summary illustrates how the idealized protocol maps to \ProjBase:
\begin{itemize}
 \item \textbf{Eligibility:} Ideal uses implicit Bernoulli coins per slot; Base uses VRF-based eligibility (see Algorithms~\ref{alg:ideal-create}, \ref{alg:block-creation}).
 
 \item \textbf{Antichain selection:} Ideal selects an exact max antichain (Alg.~\ref{alg:ideal-antichain}); 
 Base uses the exact maximum antichain within the window, computed via the Dilworth-based matching algorithm described in Appendix~\ref{sec:appendix-dilworth}.  This matches the idealized reference selection in Section~\ref{sec:ideal} and ensures that parent selection is optimal for honest validators.

 \item \textbf{Long reference:} Ideal has none ($\LongRef$ always returns $\bot$); 
 Base includes at most one deterministic long reference to an ancestor outside the window, used only when no admissible short-reference parent is available in the current window.

\item \textbf{Block creation:} 
Ideal treats blocks as abstract votes and does not model application-level payload validity or VRF-based eligibility explicitly (Alg.~\ref{alg:ideal-create}); 
Base reinstates VRF-based eligibility and enforces application-level payload validity and conflict checks when constructing blocks (Alg.~\ref{alg:block-creation}), and may add such a long reference to ensure connectivity.

\item \textbf{Block reception:} Ideal only checks acyclicity and parent presence (Alg.~\ref{alg:ideal-reception});
Base verifies VRF/signature, window/antichain constraints, and 
payload validity via the abstract conflict predicate $\Conflicts(\cdot,\cdot)$ (Alg.~\ref{alg:block-reception}).

 \item 
 \textbf{Fork choice:} Ideal uses local fork choice with unit weights (Alg.~\ref{alg:ideal-local-fc}); 
 Base uses unit-weight references only; stake is already encoded through VRF eligibility probabilities, so no additional stake-weighting of references is used or needed (Alg.~\ref{alg:base-local-fc}).
In both cases, the fork choice is set-valued and maintains a preferred frontier $\Undom_t$,
obtained by applying $\CTR(\cdot,\cdot)$ to all incomparable, conflicting tips.

 \item 
 \textbf{Conflict resolution:} 
 Ideal and Base use the same CCA-local rule, implemented
  internally inside the fork-choice algorithms (Alg.~\ref{alg:ideal-local-fc} for Ideal and Alg.~\ref{alg:base-local-fc} for Base). 

\end{itemize}

%!TEX root = main.tex

\section{Finality in Our Designs}
\label{sec:finality}

In this section, we first fix the ledger properties and the finalization rule (Def.~\ref{def:finalization-rule}).
We then formalize the stochastic driver (the gap process) and prove a master $(k,\varepsilon)$-finality theorem
that applies to both the ideal design of \S\ref{sec:ideal} and the practical design of \S\ref{sec:base}.
We conclude with a deployment recipe and an empirical methodology that operationalizes the parameters and
reports wall-clock settlement times.

\subsection{Finality in chains vs.\ DAGs: theory \& practice}\label{subsec:chains-vs-dags}

\paragraph{(i) Longest \emph{chain} protocols (Bitcoin, Praos).}
In linear chains there is a single tip; backbone analyses (CP/CG/CQ) imply that a block
that is $k$-deep on that chain is reversed with probability that decays exponentially in $k$.
Operationally, deployments expose a depth parameter $k_{\text{chain}}$ (e.g., $6$ for Bitcoin on long timescales)
and map it to wall clock via $T_{\text{final}}=k_{\text{chain}}\cdot\tau_{\text{slot}}$.

\paragraph{(ii) Why this does \emph{not} lift naively to longest DAGs.}
DAG protocols admit many concurrent tips.
Two failure modes appear if one tries to reuse chain proofs verbatim:
(1) \emph{Fragmentation.} Honest work can split across many incomparable tips; without a bound,
no \branch accumulates a stable honest margin even if CP/CG/CQ analogues hold locally.
(2) \emph{Long-range withholding.} If fork choice aggregates weight since genesis,
an adversary can reveal a large private 
\subdag 
that overpowers the public \subdag even under synchrony.

\paragraph{(iii) Our treatment for longest DAGs (this paper).}
We pair two ingredients:
(A) a \emph{window-filtered, CCA-local} fork choice that counts only short references from the last $w$ slots, and
(B) a \emph{Tip-Boundedness} (TB) invariant that caps the frontier width (w.h.p.).
Together they yield a \emph{windowed gap process} with positive drift and bounded differences, so the tail probability that a $k$-deep block is ever removed decays as $\exp(-\Theta(k/w))$. 
We expose the same ``$k$-depth'' operational rule as chains, but with the DAG-calibrated bound
\begin{equation}\label{eq:dag-finality-scaling} 
  k \;\ge\; C_1\,w \;+\; C_2\,\ln\!\Big(\tfrac{1}{\varepsilon}\Big) \quad\Longrightarrow\quad
  \Pr[\text{$k$-deep block ever reverts}] \le \varepsilon,
\end{equation}
and $T_{\text{final}}(\varepsilon)=k\cdot\tau_{\text{slot}}$. Here, depth $k$ is measured \emph{along short-reference edges within the window and under the current CCA anchor (analysis-level anchor; not protocol state)}; long references carry zero fork-choice weight (see Remark~\ref{rem:depth-window}).

\smallskip

\noindent
\emph{Meaning of $C_1,C_2$.} $C_1$ absorbs worst-case delay fluctuation and TB/coverage slack;
once the window length $w$ is chosen large enough to dominate the network delay bound $\Delta$
(cf.~Section~\ref{sec:model}), $C_1$ becomes an absolute $O(1)$ constant independent
of the concrete latency. 
$C_2$ is a universal constant coming from the exponential concentration bound (Azuma--Hoeffding / Chernoff-type) for the windowed gap process. 
Both depend monotonically on $(H-1/2)$, the honest short-ref coverage $q$, and the per-slot honest rate $\lambda_h$; larger values
increase drift and allow smaller $k$.

\subsection{Ledger Properties, Tip-Boundedness, and the Finalization Rule}

We adopt the standard ledger properties Persistence (safety) and Liveness from the blockchain backbone line. 
Persistence requires that once a block is sufficiently deep in the ledger prefix of an
honest party, all honest views agree on it forever after. Liveness requires that any valid block whose
payload does not conflict with the ledger eventually appears in every honest ledger.

In our DAG setting, Tip-Boundedness (TB) is the key structural invariant that underpins these
properties: it prevents the honest-preferred frontier from fragmenting across too many incomparable tips and ensures that honest short-reference weight concentrates on a small number of
preferred subDAGs (see Section~\ref{sec:model} for the general TB discussion).

\begin{definition}[Finalization rule]
\label{def:finalization-rule}
\label{def:finality-rule}
Fix a window width $w \ge \Delta$ and the protocol's window-filtered fork-choice rule. Let $F_t$ denote the preferred frontier of an honest node at time $t$. Anchoring is used only in the analysis layer to define the depth predicate; it is not part of the protocol state.

We say that a block $B$ is $k$-final at time $t$ if $B$ lies at distance at least $k$ beneath $F_t$ when distances are measured along short-reference edges (i.e., within-window references) under the current CCA-based anchor used in the analysis. The protocol achieves $(k,\varepsilon)$-finality if, for any honest node and any time $t$ at which $B$ is $k$-final, the probability that $B$ is ever removed from that node's ledger view after time $t$ is at most $\varepsilon$.
\end{definition}

\begin{remark}[Depth convention and anchors]\label{rem:depth-window}
Throughout this section, ``depth'' is always measured along short-reference edges within the last $w$ slots between a block and the preferred frontier $F_t$, under the CCA-based anchor used in the analysis. Operationally, the protocol layer maintains only the DAG and the local preferred frontier $F_t$; the anchor and the associated linearization of $\Anc^*(F_t)$ are purely analytical devices, analogous to the ``$k$-confirmation'' convention in longest-chain protocols. Implementations are free to adopt this $k$-depth rule at the client side or to expose any equivalent interface consistent with the DAG’s partial order.
\end{remark}

\subsection{Stochastic Driver: The Windowed Gap Process}
\label{subsec:gap-process}

\paragraph{Canonical honest tip.}
For the purposes of analysis, we fix once and for all a deterministic tie-breaking rule that,
given a non-empty frontier $F_t$, selects a canonical tip $i_t \in F_t$ (for example, the tip with
the smallest $(\slot, y(\cdot))$ pair in lexicographic order). All statements referring to
``the honest-preferred tip'' use $i_t$ as this canonical representative.

We now make explicit the stochastic variable that concentrates in our analysis. At time $t$,
let $F_t$ be the preferred frontier under the local CCA-based fork choice, and let $j$ be any
competing tip. Let $c = \CCA(i_t, j)$ be the closest common ancestor of the canonical honest
tip $i_t$ and $j$. Write $\Desc^*(x)$ for the set of descendants of $x$ reached by short
references (windowed edges). Let $\wref(d; w) \in \{0, 1\}$ indicate whether a block $d$
contributes short-reference weight within the current window of width $w$.

\begin{definition}[Windowed gap]\label{def:gap}
Fix a time $t$ and let $i_t$ be the canonical honest tip. For any competing tip $j$ and
anchor $c = \CCA(i_t, j)$, over a window of width $w \ge \Delta$ the (windowed) gap random
variable between $i_t$ and $j$ is
\begin{equation}
  G
  \;=\;
  \sum_{d \in \Desc^*(i_t) \cap \Desc^*(c)} \wref(d; w)
  \;-\;
  \sum_{d \in \Desc^*(j)   \cap \Desc^*(c)} \wref(d; w).
\end{equation}
\end{definition}

\paragraph{Intuition for Definition~\ref{def:gap}.}
The windowed gap $G$ measures the short-term advantage of the canonical honest \subdag of $i_t$ over any competitor $j$.
Each short reference acts as a ``vote'' for the \branch it descends
from.  Because $\wref(d;w)$ counts only references created within the most recent
$w$ slots, $G$ captures which \branch has accumulated more honest support in the
recent execution window.  Concretely, $G>0$ means that, among descendants of the
anchor $c$, the \branch of $i$ has received more short references than that of
$j$ within the window $W(t;w)$.

The use of descendant intersections ensures that only references relevant to the
competition between $i$ and $j$ are counted, while Tip-Boundedness guarantees
that each slot contributes at most $O(\beta)$ new references, yielding bounded
increments $|G_{t+1}-G_t|\le O(\beta)$.  Since honest blocks appear with rate
$\lambda_h$ and reliably reference the preferred frontier, the gap has a
positive drift $\mathbb{E}[G_{t+1}-G_t \mid \mathcal{F}_t]\ge \gamma \lambda_h
w$, so an honest \subdag steadily accumulates margin.  These two properties
(positive drift and bounded increments) allow Azuma--Hoeffding concentration,
leading to the exponential finality tail proved in Theorem~\ref{thm:master-finality}.

We assume (as part of the standing assumptions) that:
(i) VRF eligibility outcomes across validators and across slots are mutually independent and are revealed only upon issuance;
(ii) the honest coverage probability per slot is bounded below by a constant $q>0$ (i.e., some honest short reference appears with probability $\ge q$);
(iii) the network satisfies partial synchrony with delay parameter $\Delta$, and TB holds with high probability as a consequence of $w\ge \Delta$ and coverage.

Under these assumptions there exists $\gamma=\gamma(H,q)>0$ such that, conditioning on the past (the natural filtration),
\begin{equation}
  \mathbb{E}[\,G\,] \;\ge\; \gamma\,\lambda_h\,w \quad\text{and}\quad
  \{G_\ell\}_{\ell\ge 0}\ \text{forms a submartingale with bounded increments}.
  \label{eq:gap-drift}
\end{equation}

\begin{remark}[On $\beta$ from Tip-Boundedness]\label{rem:tb-beta}
Here $\beta$ denotes the \emph{Tip-Boundedness cap} on the number of visible short-reference tips
$|\mathrm{Tips}_t|$ (Def.~\ref{def:TB} in \S\ref{sec:model}). In our models, $\beta=O(\lambda)$ in the ideal case
(Lemma~\ref{lem:ideal-TB}) and $\beta=O((\lambda_h+\lambda_a)\Delta)$ in the practical case (Lemma~\ref{lem:base-TB}).
\end{remark}

We rely on \DCP to ensure that the CCA $c$ used to define the gap between $i$ and any competitor $j$ lies in
the common past of all honest views, up to the trimming parameter $k_D$. Intuitively, once we ignore
the last $k_D$ layers, all honest parties agree on the subDAG below $c$, so the windowed gap process
measured from $c$ is well-defined and comparable across honest views.

\begin{lemma}[Bounded differences via TB]\label{lem:bd-consolidated}
Let $G_t$ be the windowed gap (Def.~\ref{def:gap}). Suppose TB$[\beta]$ holds and $w \ge \Delta$.
Then there exists a constant $C_\beta = O(\beta)$, depending only on the TB parameter and the per-slot
short-reference multiplicity, such that
\[
  |G_{t+1} - G_t| \le C_\beta \quad \text{a.s. for all } t.
\]
Together with the honest-rate drift
$\mathbb{E}[G_{t+1}-G_t \mid \mathcal{F}_t] \ge \gamma \lambda_h$ for some
$\gamma = \gamma(H,q) > 0$, this bounded-differences condition allows us to apply
the Azuma--Hoeffding inequality to $(G_t)$ and derive an exponential tail for the time
needed to accumulate a margin $\Theta(k)$.
\end{lemma}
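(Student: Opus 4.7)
The plan is to expand the one-step increment $G_{t+1}-G_t$ into the short-reference contributions of blocks entering and leaving the windowed descendant sets, bound each piece by $O(\beta)$ via Tip-Boundedness, and then combine the resulting bounded-differences property with the drift in~(\ref{eq:gap-drift}) to invoke Azuma--Hoeffding.

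First I would decompose
\[
  G_{t+1}-G_t \;=\; \bigl(\Delta^{\mathrm{in}}_i-\Delta^{\mathrm{out}}_i\bigr)-\bigl(\Delta^{\mathrm{in}}_j-\Delta^{\mathrm{out}}_j\bigr),
\]
where $\Delta^{\mathrm{in}}_x$ collects contributions of blocks created at slot $t+1$ inside $\Desc^*(x)\cap\Desc^*(c)$, and $\Delta^{\mathrm{out}}_x$ collects those whose slot drops out of the window $W(t+1;w)$. Because short references must target slots in the current window and $w\ge \Delta$, no other blocks can move contributions in or out in a single step. I would then bound each of the four pieces by $O(\beta)$: the short-reference set of any block forms an antichain of visible tips, of size at most $|\Tips_t|\le \beta$ on the TB event; the number of newly created blocks in a single slot is itself $O(\beta)$ on the same event; and any given new block can only grow one side of the gap, since a common strict descendant of both $i$ and $j$ above their CCA $c$ does not exist. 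The outflow side is bounded analogously by TB applied at slot $t-w+1$. Since the TB event holds with probability $1-\negl(\secp)$ by Lemmas~\ref{lem:ideal-TB}/\ref{lem:base-TB}, I would pass to the stopped process that freezes on the first TB violation in order to upgrade this to an a.s.\ bound $|G_{t+1}-G_t|\le C_\beta$ with $C_\beta=O(\beta)$, absorbing the negligible complement into the global failure budget.

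For the application claim I would form the centered process $M_t=G_t-\gamma\lambda_h t$. By~(\ref{eq:gap-drift}) $M_t$ is a submartingale, and by the previous step its increments satisfy $|M_{t+1}-M_t|=O(\beta)$. Azuma--Hoeffding then yields $\Pr[G_T < \gamma\lambda_h T/2]\le \exp(-\Omega(T/\beta^2))$, so choosing $T=\Theta(k/(\gamma\lambda_h))$ gives the exponential hitting-time tail $\exp(-\Omega(k/\beta^2))$ that drives the $(k,\varepsilon)$-finality bound of Theorem~\ref{thm:master-finality}.

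The main obstacle I foresee is reconciling the almost-sure bounded-differences requirement of Azuma--Hoeffding with the genuinely random per-slot eligibility count: TB caps the frontier width only with high probability, not surely, and the number of new blocks per slot is only concentrated rather than deterministically capped. Handling this cleanly via a stopping-time or coupling argument, and ensuring that the constant hidden in $C_\beta=O(\beta)$ stays tight enough not to inflate the finality depth $k$ by more than a constant factor, will require the most care.
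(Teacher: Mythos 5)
The paper provides no actual proof of Lemma~\ref{lem:bd-consolidated}: the text after the statement merely restates the conclusion and offers the one-sentence justification ``Bounded increments follow from TB (the number and effect of short references per window are bounded) and from the window filter.'' Your inflow/outflow decomposition and your identification of the four bounded pieces therefore fill a genuine gap rather than re-derive an existing argument, and the overall skeleton --- bound the per-slot change via TB, observe that a valid new block cannot descend from both conflicting tips above the CCA, then invoke Azuma--Hoeffding on the centered process --- matches the paper's intent as expressed in the proof sketch of Theorem~\ref{thm:master-finality}. Your stopping-time remark is a real contribution the paper elides: TB caps $|\Tips_t|$ only with probability $1-\negl(\secp)$, and the number of eligible blocks per slot is a random variable bounded only in expectation, so the literal almost-sure bound cannot hold; freezing the process on the first TB violation (or on the first slot with too many eligibles) and absorbing that event into the global failure budget is exactly the right repair.

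There is, however, a quantitative gap in your inflow bound. You list three facts --- each block's short-reference set is an antichain of size at most $\beta$ on the TB event; the number of new blocks per slot is $O(\beta)$; each new block contributes to only one side --- and conclude $O(\beta)$. But Definition~\ref{def:gap} sums $\wref(d;w)$, so a single new block $d$ joining $\Desc^{*}(i_t)\cap\Desc^{*}(c)$ can contribute up to $\wref(d;w)=O(\beta)$ on its own, and with $O(\beta)$ such blocks per slot the per-step inflow is $O(\beta^{2})$ (equivalently $O(\lambda\beta)$), not $O(\beta)$. The lemma's phrase ``depending only on the TB parameter and the per-slot short-reference multiplicity'' arguably licenses that extra factor, but then $C_\beta$ is not linear in $\beta$ as written, and the exponent in your Azuma step degrades from $\Omega(k/\beta^{2})$ to $\Omega(k/\beta^{4})$. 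You should either tighten the inflow count (e.g., if only references landing inside the conflict region are charged, or if the protocol caps the per-block short-reference fan-out independently of $\beta$), or carry the factor of $\lambda$ explicitly into the Azuma tail and hence into the constants $C_1,C_2$ of Theorem~\ref{thm:master-finality}. The qualitative $\exp(-\Theta(k/w))$ conclusion survives either way, but the constant as claimed is not justified by the argument as you have written it. Also note that the $O(\beta)$ cap on new blocks per slot comes from Chernoff on eligibility together with $\lambda=O(\beta)$, not from TB directly; keeping those two sources of slack separate will make the stopped-process bookkeeping cleaner.
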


\noindent In particular, TB yields the uniform one-step bound
\begin{equation}
  |G_{t+1}-G_t| \;\le\; C_\beta \quad\text{a.s.},
  \label{eq:gap-step}
\end{equation}
where $C_\beta = O(\beta)$ depends only on the TB cap $\beta$ on $|\Tips_t|$ and the
per-slot multiplicity cap. Here $H>1/2$ is the honest fraction and $\lambda_h$ the honest
per-slot issuance rate. Bounded increments follow from TB (the number and effect of
short references per window are bounded) and from the window filter.

\subsection{Master $(k,\varepsilon)$-Finality Theorem}\label{subsec:master-finality}

The next theorem packages the concentration of the gap process into a universal finality guarantee that
we instantiate for both designs.

\begin{theorem}[Master $(k,\varepsilon)$-finality]\label{thm:master-finality}
Assume $H>1/2$, $w\ge \Delta$, honest coverage $q>0$, and honest rate $\lambda_h>0$.
There exist absolute constants $C_1,C_2>0$ such that if
\begin{equation}
  k \;\ge\; C_1\,w \;+\; C_2\,\ln\!\left(\tfrac{1}{\varepsilon}\right),
  \label{eq:k-form}
\end{equation}
then the protocol achieves $(k,\varepsilon)$-finality under Def.~\ref{def:finalization-rule}.
\end{theorem}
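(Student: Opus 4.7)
The plan is to reduce the reversion event to a first-passage question for the windowed gap process $G_t$ of Definition~\ref{def:gap}, absorb all structural slack into $C_1 w$, and then pay the tail probability via martingale concentration, which produces the $C_2 \ln(1/\varepsilon)$ term. Concretely, first I would fix an honest block $B$ that is $k$-final at some time $t$ in the view of an honest party $P$, with canonical honest tip $i_t \in F_t$ such that $B \in \AncStar(i_t)$. For $B$ to ever leave $P$'s ledger view, there must exist a later time $t' \ge t$, a competing tip $j$ whose branch conflicts with $B$, and an analysis anchor $c = \mathrm{CCA}(i_{t'}, j)$ strictly preceding $B$ such that $\CTR(i_{t'}, j) = j$. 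By Definition~\ref{def:ctr} this reduces exactly to $\BranchW(j; c, w) \ge \BranchW(i_{t'}; c, w)$, i.e.\ to the crossing event $G_{t'} \le 0$ for the gap process anchored at $c$.

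Next, I would invoke DAG Common Past (Theorems~\ref{thm:ideal-DCPprime} and~\ref{thm:base-DCPprime}) to confine admissible anchors $c$ to an $O(w)$-thick band below $B$ (trimming $k_D = O(w)$ layers), contributing $O(w)$ to $C_1$. Within a single window the adversary can accumulate at most $O((1-H)\lambda w)$ short-ref weight on any competitor branch, and this reservoir is likewise absorbed by $C_1$. The $k$-depth hypothesis then yields the initial margin $G_t \ge k - C_1 w =: k_0$, because every one of the $k$ short-ref descendants of $B$ that witnesses its depth contributes to $\BranchW(i_t; c, w)$ while lying outside $\DescStar(j)$ for any competitor rooted strictly below $B$.

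By~\eqref{eq:gap-drift} and Lemma~\ref{lem:bd-consolidated}, the process $(G_s)_{s \ge t}$ has per-slot conditional drift at least $\gamma \lambda_h$ and almost-sure increments bounded by $C_\beta$. Writing $G_{t+n} = G_t + n\gamma \lambda_h + M_n$ for a bounded-difference martingale $M$, and summing a maximal Azuma--Hoeffding tail over $n \ge 1$ (the Gaussian exponents $(k_0 + n\gamma\lambda_h)^2/(2 n C_\beta^2)$ are minimized around $n \asymp k_0/(\gamma \lambda_h)$), I expect the bound
\[
  \Pr\!\left[\,\inf_{s \ge t} G_s \le 0 \;\Big|\; G_t \ge k_0\,\right]
  \;\le\; \mathrm{poly}(k_0)\cdot\exp\!\left(-\Omega\!\left(\frac{k_0\,\gamma \lambda_h}{C_\beta^2}\right)\right).
\]
A union bound over the $O(\beta)$ simultaneous competitor tips (Lemmas~\ref{lem:ideal-TB} and~\ref{lem:base-TB}) and the $O(w)$ admissible anchors under $B$ costs at most a logarithmic factor that is folded into $C_1$ and $C_2$. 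Choosing $C_1$ large enough to cover the DCP slack, the within-window adversarial reservoir, and the log-union overhead, and $C_2 = \Theta(C_\beta^2/(\gamma \lambda_h))$, the hypothesis $k \ge C_1 w + C_2 \ln(1/\varepsilon)$ drives the total failure probability below $\varepsilon$.

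The hard part will be the interaction of the infinite-horizon supremum with adversarial adaptivity: classical Azuma controls a \emph{fixed-time} deviation, whereas a reversion is a $\sup$ over all future $t'$ and over all adaptively chosen competitor subDAGs. I will need a maximal inequality (or a stopping-time decomposition) that remains valid against an adversary whose scheduling and block-withholding choices may depend on the entire past VRF and network transcript, together with careful bookkeeping ensuring that the effective cardinality of relevant $(t', j, c)$ triples is genuinely absorbed into the $C_1 w + C_2 \ln(1/\varepsilon)$ budget rather than growing with the length of the execution.
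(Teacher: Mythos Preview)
Your proposal is correct and follows essentially the same approach as the paper's proof sketch: reduce reversion of a $k$-final block to a crossing event for the windowed gap process anchored at the CCA, exploit its positive drift and TB-bounded increments, and apply Azuma--Hoeffding to obtain an exponential tail in $k$ with scale proportional to $w$. Your treatment is in fact considerably more detailed than the paper's sketch---you explicitly account for the \DCP slack in $C_1 w$, the initial-margin bookkeeping, the union bound over the $O(\beta)$ competitor tips and $O(w)$ admissible anchors, and you correctly flag the need for a maximal inequality to handle the infinite-horizon supremum against an adaptive adversary, all of which the paper's sketch leaves implicit.
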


\begin{remark}[Choosing the window $w$ and the rule-of-thumb ``$w = k$'']
Assume a bounded-delay network with delay bound $\Delta$. Our master $(k,\varepsilon)$-finality
theorem (Theorem~\ref{thm:master-finality}) is stated under the requirement that the short-reference window dominates
the delay, for example
\[
  w \;\ge\; \Delta + C_0 \log \kappa
\]
for a suitable absolute constant $C_0$. Once $w$ is chosen large enough to cover the delay bound,
the constant $C_1$ in Equation~\eqref{eq:k-form} absorbs worst-case delay fluctuation and the slack in the
Tip-Boundedness and coverage assumptions, and can be treated as an $O(1)$ constant independent
of the concrete latency.

For any such window width $w$ and target tail probability $\varepsilon$, Theorem~\ref{thm:master-finality} shows that it
suffices to take
\[
  k \;\ge\; C_1 w + C_2 \ln \frac{1}{\varepsilon}.
\]
In many deployments it is convenient to tie the confirmation depth $k$ to the window length when
mapping depth to wall-clock latency, and a rule-of-thumb such as $k \approx w$ (up to an $O(1)$
factor) is often reasonable. Our analysis, however, does not require $k$ and $w$ to coincide: for a
fixed tail target $\varepsilon$, increasing $w$ beyond the smallest value that safely dominates $\Delta$
never decreases the required depth $k$, because the linear term $C_1 w$ in Equation~\eqref{eq:k-form} grows with
$w$. Practically, one should choose the smallest window $w$ that reliably covers the network delay,
and then set $k$ using the formula above.
\end{remark}

\begin{proof}[Proof sketch]
Consider any time $t$ and a block $B$ that is $k$-final (Def.~\ref{def:finalization-rule}).
For $B$ to be removed later, some competitor \branch must overtake the preferred frontier far enough back to erase $k$ levels.
Anchoring at the CCA, the windowed gap process $G$ (Def.~\ref{def:gap}) experiences a positive drift
$\mathbb{E}[G]\ge \gamma\lambda_h w$ (Eq.~\eqref{eq:gap-drift}).
By TB and windowing, per-window changes are bounded, so $(G_t)$ is a submartingale with bounded differences
(Lemma~\ref{lem:bd-consolidated}).
Applying the Azuma--Hoeffding inequality to $(G_t)$
bounds the probability that the competitor cancels the gap needed to erase $k$ levels.
This yields an exponential tail in $k$ with scale proportional to $w$, giving~\eqref{eq:k-form}.
\end{proof}

\begin{corollary}[Ideal design]\label{cor:ideal-finality}
Under the modeling assumptions of \S\ref{sec:ideal}, Theorem~\ref{thm:master-finality} holds.
In particular, there exist constants $c_1,c_2>0$ such that
\begin{equation*}
  k \;=\; c_1\,w \;+\; c_2\,\ln\!\left(\tfrac{1}{\varepsilon}\right)
\end{equation*}
suffices for $(k,\varepsilon)$-finality.
\end{corollary}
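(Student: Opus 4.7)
The plan is to derive Corollary~\ref{cor:ideal-finality} as an immediate specialization of Theorem~\ref{thm:master-finality}, so the proof is essentially a hypothesis check followed by a constant-tracking step. First, I would verify the four standing hypotheses of the master theorem in the ideal setting. Synchronous delivery means $\Delta=0$, so $w\ge\Delta$ is satisfied by any positive window; the honest fraction assumption $H>1/2$ is part of the ideal model (Section~\ref{sec:ideal-assumptions}); the honest per-slot rate $\lambda_h=\Theta(pH|\cP|)>0$ follows from the public-coin eligibility model; and, because Algorithm~\ref{alg:ideal-antichain} makes every eligible honest block reference a maximum antichain of currently visible tips, the honest coverage probability is in fact $q=1$, a fortiori $q>0$.

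Next, I would invoke the ideal-model DAG invariants to justify the two ingredients that drive the master theorem's concentration argument. Theorem~\ref{thm:ideal-DG} (DG) and Theorem~\ref{thm:ideal-DQ} (DQ) give honest accumulation at a rate $\Theta(pH)$ on some preferred subDAG, and Theorem~\ref{thm:ideal-DCPprime} (\DCP with $k_D=O(1)$) ensures that the CCA anchor used in the gap process $G_t$ (Def.~\ref{def:gap}) lies in the common past across honest views. Lemma~\ref{lem:ideal-TB} (TB with $\beta=O(\lambda)$) then specializes Lemma~\ref{lem:bd-consolidated} to give bounded one-step increments $|G_{t+1}-G_t|\le C_\beta=O(\lambda)$. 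Together with the submartingale drift $\mathbb{E}[G_{t+1}-G_t\mid\mathcal{F}_t]\ge \gamma\lambda_h$ with $\gamma=\gamma(H,q)=\Theta(H)$ (since $q=1$), this verifies the premise of Theorem~\ref{thm:master-finality}.

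Finally, I would plug these parameters into the Azuma--Hoeffding bound used in the sketch of Theorem~\ref{thm:master-finality}. A block $B$ that is $k$-final can only revert if a competing \subdag closes a windowed margin of order $\Theta(k)$ against drift $\gamma\lambda_h$ with bounded differences $C_\beta$; the standard tail bound $\exp\bigl(-\Omega(k^{2}/(w C_\beta^{2}))\bigr)$, combined with the $O(w)$ burn-in needed to build the initial margin, yields the two-term form $k\ge c_1 w+c_2\ln(1/\varepsilon)$, where $c_1$ absorbs the drift and the TB constant $\beta=O(\lambda)$ and $c_2$ is the universal Azuma constant. Since the corollary only asserts the existence of such positive constants, tracking their exact values is unnecessary.

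The main obstacle is not analytic but book-keeping: one must check that in the ideal model every parameter that the master theorem treats abstractly ($\Delta$, $q$, $\beta$, $\gamma$, $C_\beta$) has already been pinned down in Section~\ref{sec:ideal-security}, so that nothing beyond Theorem~\ref{thm:master-finality} is required. Once that mapping is in place, the corollary follows by direct substitution.
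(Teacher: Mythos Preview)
Your proposal is correct and matches the paper's approach: the paper states the corollary without proof, treating it as an immediate instantiation of Theorem~\ref{thm:master-finality} once the ideal-model hypotheses ($\Delta=0$, $H>1/2$, $q=1$, $\lambda_h>0$, and TB via Lemma~\ref{lem:ideal-TB}) are in place. Your hypothesis-checking and constant-tracking write-up is more explicit than the paper but follows exactly the intended route.
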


\begin{corollary}[Practical design]\label{cor:practical-finality}
Under the network and eligibility assumptions of \S\ref{sec:base},
Theorem~\ref{thm:master-finality} holds with the same scaling.
If slot duration is $\tau_{\mathrm{slot}}$, the wall-clock settlement bound is
$T_{\mathrm{final}}(\varepsilon)=k\cdot \tau_{\mathrm{slot}}$.
\end{corollary}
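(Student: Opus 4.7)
The plan is to verify that \ProjBase satisfies every hypothesis of Theorem~\ref{thm:master-finality} and then read off the wall-clock bound. The four preconditions $H>1/2$, $w\ge\Delta$, honest coverage $q>0$, and honest rate $\lambda_h>0$ are all immediate from the standing assumptions of \S\ref{sec:base}: $H>1/2$ and $w\ge\Delta$ are part of the parameter discipline \eqref{eq:base-params}; $q>0$ is the honest short-reference coverage assumption from \S\ref{subsec:base-overview} (recalled in Remark~\ref{rem:alg-TB}); and $\lambda_h=\Theta(pH|\mathcal{P}|)>0$ follows from VRF sortition. So the only substantive work is to re-check the stochastic driver under the bounded-delay model.

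Concretely, I would confirm that the gap process $G$ of Def.~\ref{def:gap} still satisfies the submartingale-with-bounded-differences condition. Bounded differences come from Tip-Boundedness (Lemma~\ref{lem:base-TB}), which caps $|\Tips_t|\le\beta=O((\lambda_h+\lambda_a)\Delta)$ with high probability; combined with the window filter, Lemma~\ref{lem:bd-consolidated} then yields $|G_{t+1}-G_t|\le C_\beta=O(\beta)$ almost surely. Positive drift $\mathbb{E}[G_{t+1}-G_t\mid\mathcal{F}_t]\ge\gamma\lambda_h$ follows from DG and DQ (Theorems~\ref{thm:base-DG},~\ref{thm:base-DQ}) together with honest majority and coverage: honest eligibility exceeds adversarial by a constant factor, each eligible honest block short-references a visible tip with probability at least $q$, and deliveries within $\Delta\le w$ guarantee that these references land on descendants of the canonical honest tip. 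Meanwhile, the CCA anchoring $G$ is stabilized up to $\Theta(\Delta)$ layers by \DCP (Theorem~\ref{thm:base-DCPprime}), so $G$ is well-defined and comparable across honest views.

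With drift and bounded differences in place, the Azuma--Hoeffding concentration used inside the proof of Theorem~\ref{thm:master-finality} applies verbatim, giving $\Pr[\text{a $k$-final block is ever removed}]\le\varepsilon$ whenever $k\ge C_1 w+C_2\ln(1/\varepsilon)$, with $C_1,C_2>0$ depending monotonically on $H-1/2$, $q$, $\lambda_h$, and the TB cap but not on the specific network delay once $w\ge\Delta$. For the wall-clock statement I would then observe that each slot advances real time by $\tau_{\mathrm{slot}}$ and that finality depth is measured in slots along short-reference edges (Remark~\ref{rem:depth-window}), so $k$ depth corresponds to $k\cdot\tau_{\mathrm{slot}}$ of wall-clock settlement. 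The main obstacle will be isolating the drift constant $\gamma=\gamma(H,q)$ cleanly in the presence of $\Delta$-slot view skew and VRF-induced variance in per-slot issuance; once $\gamma$ is pinned down explicitly, the rest is a mechanical lift of Corollary~\ref{cor:ideal-finality}.
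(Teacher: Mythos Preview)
Your proposal is correct and follows the paper's intended approach: the paper states Corollary~\ref{cor:practical-finality} without a separate proof, treating it as an immediate instantiation of Theorem~\ref{thm:master-finality} once the standing assumptions of \S\ref{sec:base} supply $H>1/2$, $w\ge\Delta$, $q>0$, and $\lambda_h>0$, with the wall-clock bound following by definition of slot duration. Your write-up is more explicit than the paper's---in particular, you re-verify the bounded-differences and drift conditions under the practical TB cap $\beta=O((\lambda_h+\lambda_a)\Delta)$ and cite \DCP for anchor stability---but this extra care is warranted and does not diverge from the paper's reasoning.
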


\begin{remark}[Parameter transparency]\label{rem:params2}
Eq.~\eqref{eq:k-form} decomposes $k$ additively as $C_1 w + C_2 \ln(1/\varepsilon)$.
At a fixed tail target $\varepsilon$, \emph{increasing $w$ never decreases $k$}; the linear term adds security mass
but also increases the required depth. Improvements from larger honest fraction $H$, coverage $q$, or honest rate
$\lambda_h$ appear by \emph{shrinking the constants} $C_1$ and $C_2$ (greater drift, tighter concentration), not by
making $w$ larger. Practically: choose the smallest safe $w\!\approx\!\Delta_{\max}$, then pick $k$ via $k=C_1 w + C_2\ln(1/\varepsilon)$.
\end{remark}

\subsection{Deployment Recipe and Wall-Clock Mapping}
\label{sec:finality-deployment}

\paragraph{Recipe.} (i) Choose $w \ge \Delta$; (ii) fix a tail target $\varepsilon$ (e.g., $10^{-6}$); (iii) estimate
$\lambda_h$ and $H$ for the target deployment; (iv) set
$
  k = c_1 w + c_2 \ln\!\bigl(1/\varepsilon\bigr)
$
using either conservative analytical $c_1, c_2$ or fitted values from simulation (next subsection);
(v) publish $T_{\mathsf{final}}(\varepsilon) = k \cdot \tau_{\mathsf{slot}}$.

\smallskip
\noindent\textbf{Operational notes.}
\begin{itemize}
  \item The linear-in-$w$ term reflects that the effective security mass per window scales with the
        window width.
        
  \item Choosing $w$ only slightly above $\Delta$ is latency-friendly; larger $w$ increases
  robustness to bursty delays. A deployment that fixes $w$ at an estimated upper bound
  $\Delta_{\max}$ on the network delay trades additional latency
  (since $k = \Theta(\Delta_{\max}) + \Theta(\ln(1/\varepsilon))$)
  for the convenience of treating $C_1$ as an absolute $O(1)$ constant across operating
  regimes.

  \item
  In the ideal model of Section~\ref{sec:ideal}, the tie-breaking label $y(\cdot)$ is an explicit
        public coin; in the practical protocol Areon-Base, $y(\cdot)$ is instantiated by VRF outputs.
        Under the standard pseudorandomness assumption for the VRF, these labels are independent
        across slots and validators, which provides exactly the independence needed for the
        Chernoff/Azuma concentration step in Theorem~\ref{thm:master-finality}.
\end{itemize}

\paragraph{Interpreting $C_1$ and $C_2$.}
$C_1$ absorbs structural slack due to delay skew and TB; 
if one uses $w$ chosen as an upper bound $\Delta_{\max}$ on the network delay,
$C_1$ is effectively an absolute $O(1)$ constant. $C_2$ is the Chernoff/Azuma tail constant controlling the $\ln(1/\varepsilon)$ term.
Both improve (decrease) as $(H-1/2)$, the honest coverage $q$, and the honest rate $\lambda_h$ increase.

\paragraph{Worked example.}
Suppose $\tau_{\mathrm{slot}}=1\,\mathrm{s}$, choose $w=30$, and target $\varepsilon=10^{-6}$. 
If one fits $C_1\approx 1.8$ and $C_2\approx 2.7$, 
then $k \approx 1.8\cdot 30 + 2.7\cdot \ln 10^{6} \approx 54+37 \approx 91$,
hence $T_{\mathrm{final}}\approx 91\,\mathrm{s}$.

In short, $(w,\varepsilon,\tau_{\mathrm{slot}}) = (30,10^{-6},1\,\text{s})$ and
$(C_1,C_2) \approx (1.8,2.7)$ map to $(k,T_{\mathrm{final}}) \approx (91,91\,\text{s})$.

\subsection{Empirical Methodology for Probabilistic Finality}
\label{sec:finality-empirical}

We complement the analysis with a protocol-agnostic measurement procedure.

\begin{definition}[Empirical settlement function]\label{def:settlement}
Let $\widehat{P}_{\mathrm{reorg}}(d)$ be the fraction of runs in which a block at depth $d$ is later reorganized.
Define the empirical settlement depth for tail $\varepsilon$ as
\begin{equation*}
  d^*(\varepsilon)=\min\{\,d:\ \widehat{P}_{\mathrm{reorg}}(d)\le \varepsilon\,\},
  \qquad T_{\mathrm{final}}(\varepsilon)=d^*(\varepsilon)\cdot \tau_{\mathrm{slot}}.
\end{equation*}
\end{definition}

\paragraph{Procedure.}
(1) Fix $(H,\Delta,w,\tau_{\mathrm{slot}})$ and a workload; (2) run Monte-Carlo with a strong withholding adversary
consistent with the model; (3) record the reorg-depth distribution and estimate $\widehat{P}_{\mathrm{reorg}}(d)$;
(4) fit $\widehat{P}_{\mathrm{reorg}}(d) \approx A e^{-\decay d}$
on a log scale;
(5) report $d^*(\varepsilon)$ and $T_{\mathrm{final}}(\varepsilon)$; (6) cross-validate that
$d^*(\varepsilon)$ grows $\Theta(w)+\Theta(\ln(1/\varepsilon))$ across parameter sweeps.

\paragraph{On the fit and mapping to $\varepsilon$.}
We fit $\widehat{P}_{\mathrm{reorg}}(d) \approx A e^{-\decay d}$, where $A$ is the prefactor and $\decay$ the empirical
decay rate. Solving $A e^{-\decay d}\le \varepsilon$ gives the \emph{predictive mapping}
\[
d^*(\varepsilon)\;\approx\;\frac{\ln(A/\varepsilon)}{\decay},\qquad
T_{\mathrm{final}}(\varepsilon)\;=\;d^*(\varepsilon)\cdot\tau_{\mathrm{slot}}.
\]
Under Theorem~\ref{thm:master-finality}, $\decay$ should scale like $\Theta(1/w)$ (holding other parameters fixed); verifying this by sweeping $w$ is a useful sanity check. 

\begin{remark}[Consistency check]\label{rem:consistency}
Empirically observed exponential tails and the scaling of $d^*(\varepsilon)$ with $w$ provide a direct check of
Theorem~\ref{thm:master-finality}. Discrepancies typically indicate either (i) $w<\Delta$, breaking TB,
or (ii) underestimated adversarial scheduling capabilities.
\end{remark}

%!TEX root = main.tex

\section{Implementation and Evaluation}
\label{sec:eval}

This section reports the empirical evaluation of \ProjBase. We built a discrete-event simulator and compared \ProjBase to Ouroboros Praos with a focus on reorg depth under increasing network delay and chain parallelization. The simulator simulates all required components for ledger growth and reorg behavior under configurable adversarial stake, network delay, and block-production parallelism (number of concurrent proposers). For better readability, some figures are included in Appendix~\ref{sec:eval-app}. 

\paragraph{Clarification on maximum antichain computation.} Honest validators compute the \emph{exact maximum antichain} over the local window $G_{t,w}$, which contains only $O(\lambda w)$ blocks; this computation is polynomial-time via Dilworth’s theorem. In contrast, an adversary attempting to maximize disruption must explore a global combinatorial search space involving withheld blocks, multiple release orders, and induced DAG shapes.  This requires evaluating many distinct antichain configurations across exponentially many possibilities. Thus, the computational burden applies only to the adversary, not honest nodes.

\subsection{Implementation Overview}
The simulator models validators, network propagation, and the \ProjBase{} fork-choice. Each block stores $(\id,\val,\slot,\refs)$ which are essential to model the chain growth as in Alg.~\ref{alg:block-creation}. References are maintained as adjacency lists of the ref-DAG. The simulator exposes:
\begin{itemize}
\item \textbf{Eligibility:} VRF-based slot eligibility $\Eligibility(v,s)$ (Sec.~\ref{sec:notation-base}). Multiple proposers per slot emerge from Bernoulli trials across validators. It uses stake-weighted selection with stake distribution derived from a Pareto distribution.

\item \textbf{Reference selection:}  Within window $w$, each honest producer selects the exact maximum antichain of visible parents (as defined in Section~\ref{sec:ideal}), computed via the Dilworth-based algorithm of Appendix~\ref{sec:appendix-dilworth}. Honest validators always use this exact maximum-antichain procedure in the protocol. 

\item \textbf{Long-ref:} One \emph{long-ref} $\ell$ to a block outside the window (if an unreachable block is detected). Long-refs carry \emph{zero} weight in fork-choice (Sec.~\ref{sec:base}).
 
\item \textbf{Conflict resolution:} CCA-based selection using the same window-filtered \subdagweights $\BranchW(\cdot; c, w)$ as in Sections~\ref{sec:prelim} and~\ref{sec:base} (Alg.~\ref{alg:base-local-fc}). Short references inside the last w slots contribute unit weight and long references carry zero weight.

\item \textbf{Adversary:} An adaptive, withholding adversary that
  (i) hides its blocks;
  (ii) concentrates stake in a single identity;
  (iii) observes the honest network with zero delay;
  and (iv) the adversary attempts to build a heavier \subdag at every single honest block, using different approaches. 
\end{itemize}

\paragraph{Metrics and axes.}
We report 
(i) \emph{reorg length} (depth of deepest reorg experienced by any honest block),
(ii) \emph{finality proxy} (age threshold after which reorg probability $<10^{-5}$),
and (iii) \emph{concurrent tips} $|\Tips(G)|$ over time; we sweep the production factor $f$ and adversary/network parameters.

\subsection{Experimental Setup}\label{sec:setup}
Unless noted, default parameters are:
\begin{center}
\begin{tabular}{lcl}
\toprule
Mean network delay & = & $7.5$ s\\
Reference window $w$ & = & $30$ slots\\
Adversary stake & = & $0.30$ \\
Honest validators & = & $1000$ \\
\bottomrule
\end{tabular}
\end{center}

Honest nodes build best-effort antichains given current visibility; block dependencies (payload-conflict constraints) are enforced. We model a worst case in which the adversary is a single, perfectly coordinated entity with zero observation delay of the honest DAG, yielding a conservative baseline.

Our adversary uses two attack strategies: an ILP-based strategy (worst-case exponential time; see Section~\ref{sec:ilp}) and a separate heuristic strategy. We simulate both and compare their effectiveness; for our parameter ranges, the ILP-based attack is strictly more powerful, while still feasible to run.

\smallskip

\noindent\emph{Caveat.} Increasing $f$ speeds up convergence but also changes the baseline rate; comparisons to single-leader baselines must fix an equalized block-arrival process to avoid bias.

\subsection{Parallelism Bias}
We control block-production parallelism via the parameter $f$, which scales the block production rate relative to the base value.

\begin{figure}[htbp!]
\centering
\includegraphics[width=1\textwidth]{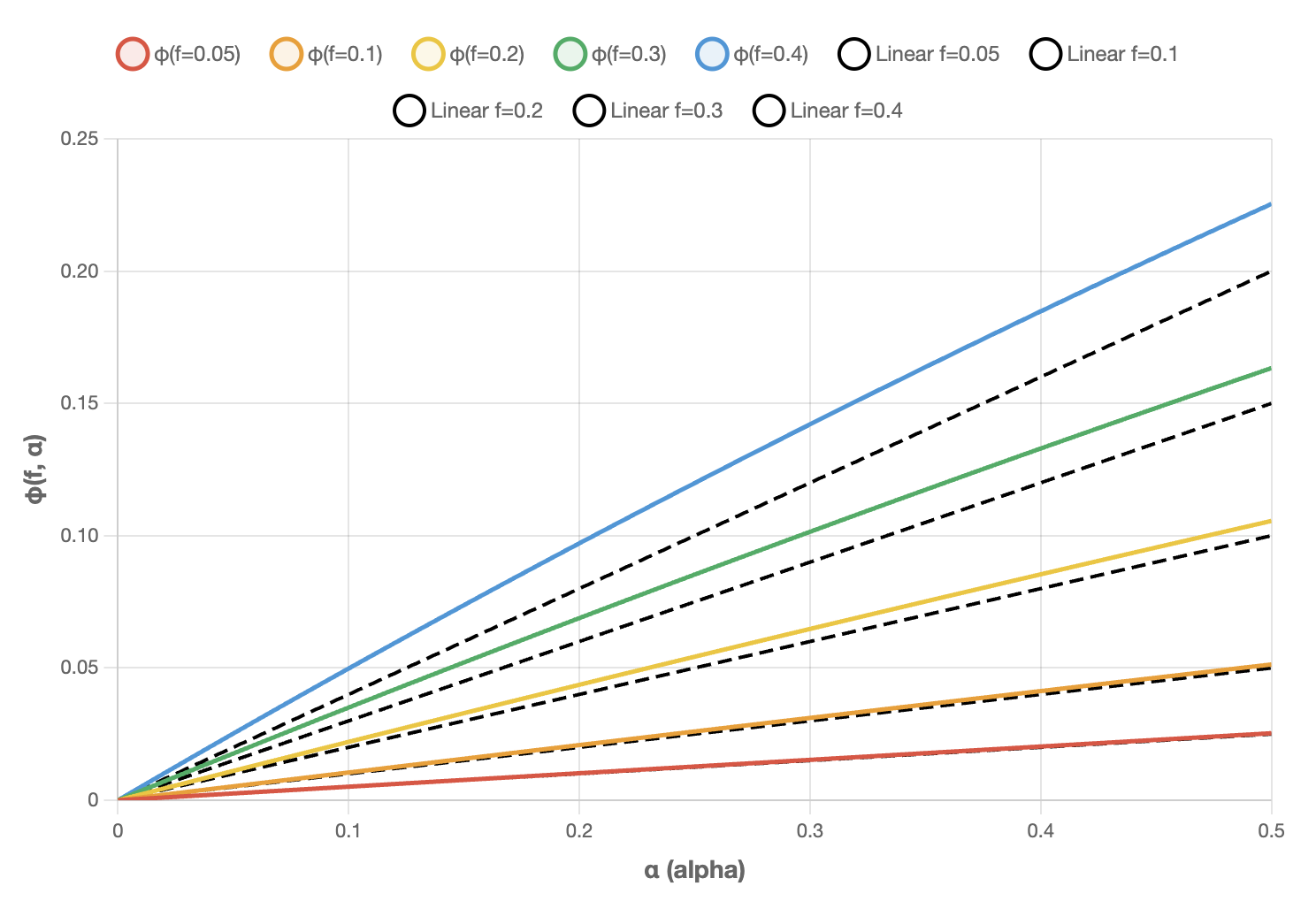}
\caption{Relationship between $f$, node stake $\alpha$, and the probability of winning a slot $\phi(f,\alpha)$.}
\label{fig:f-parallelism-bias}
\end{figure}
Figure~\ref{fig:f-parallelism-bias} shows that increasing $f$ advantages the adversary by raising its chance to win a slot beyond its stake proportion. This effect is stronger when adversarial stake is concentrated in a single entity rather than split across identities.

We choose this method for its simplicity; the resulting overestimation of adversarial capability is acceptable for our purposes. Future work will reduce this bias with an alternative parallelism mechanism.

\subsection{ILP vs.\ Global Heuristic}
\label{sec:ilp}
\label{sec:ILP}

ILP-based optimization (exhaustive search within the $2w$-slot sliding attack window)
yields rarer, longer reorgs but is computationally expensive. The global heuristic scales to many more attacks but produces shorter reorgs on average, consistent with its weaker optimization power.

\subsection{Parallelism}

\begin{figure}[H]%htbp!]
\centering
\subfloat[5x parallelization, $f=0.25$]{\includegraphics[width=1\textwidth/2]{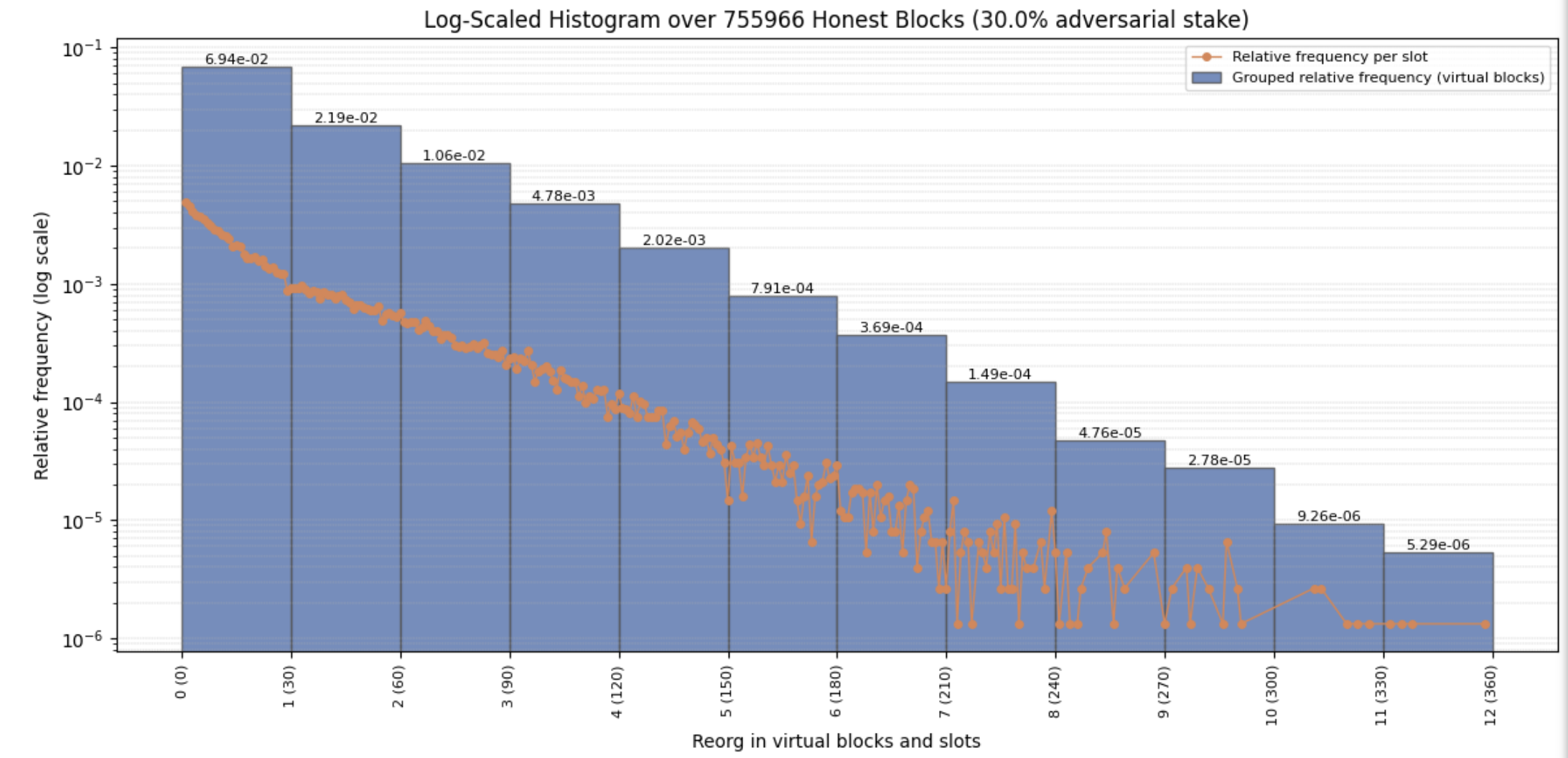}}
\subfloat[6x parallelization, $f=0.30$]{\includegraphics[width=1\textwidth/2]{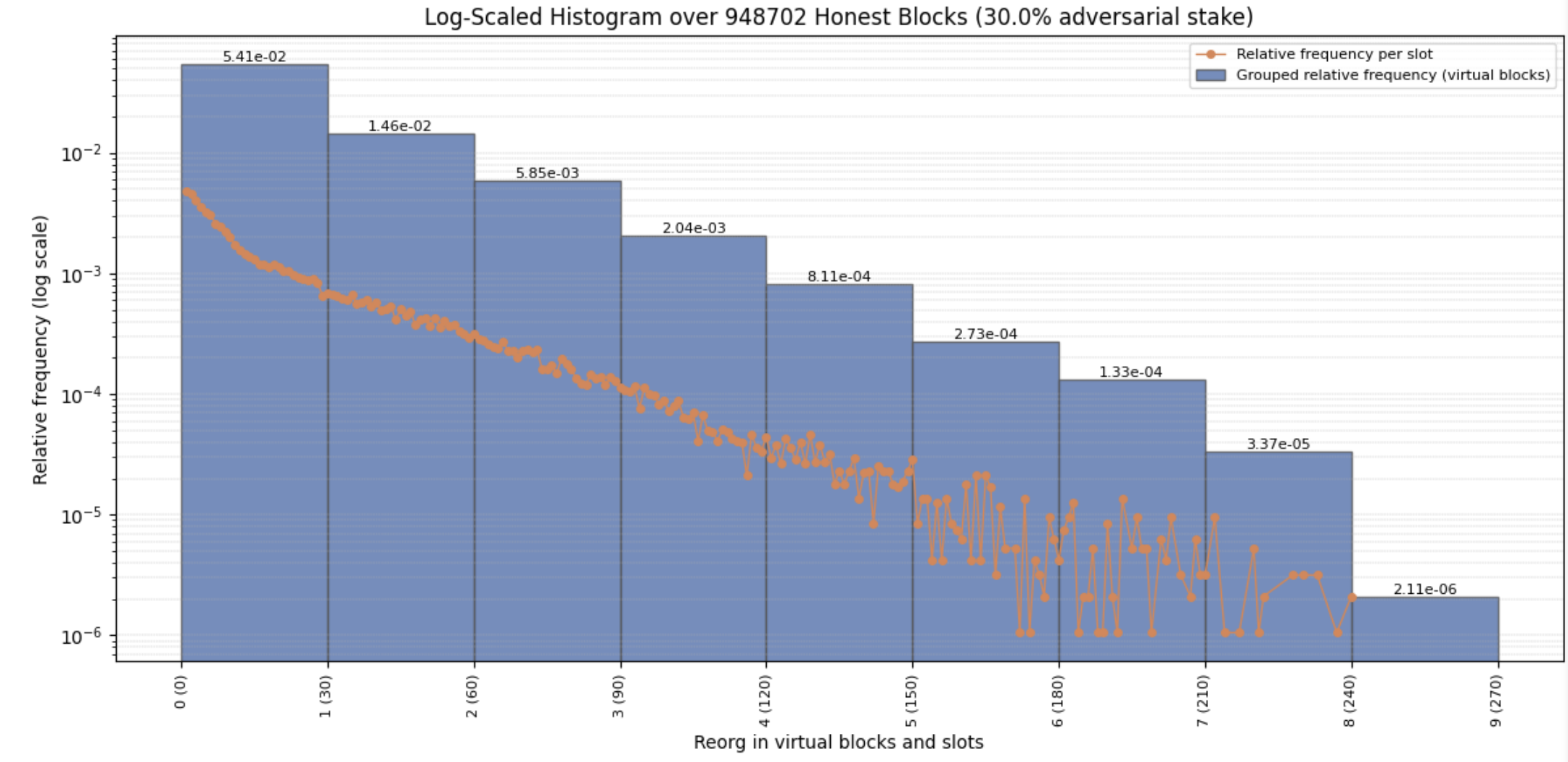}}\\
%\subfloat[7x parallelization, $f=0.35$]{\includegraphics[width=\textwidth/2]{figs2/f-0.35.png}}
\subfloat[8x parallelization, $f=0.40$]{\includegraphics[width=1\textwidth/2]{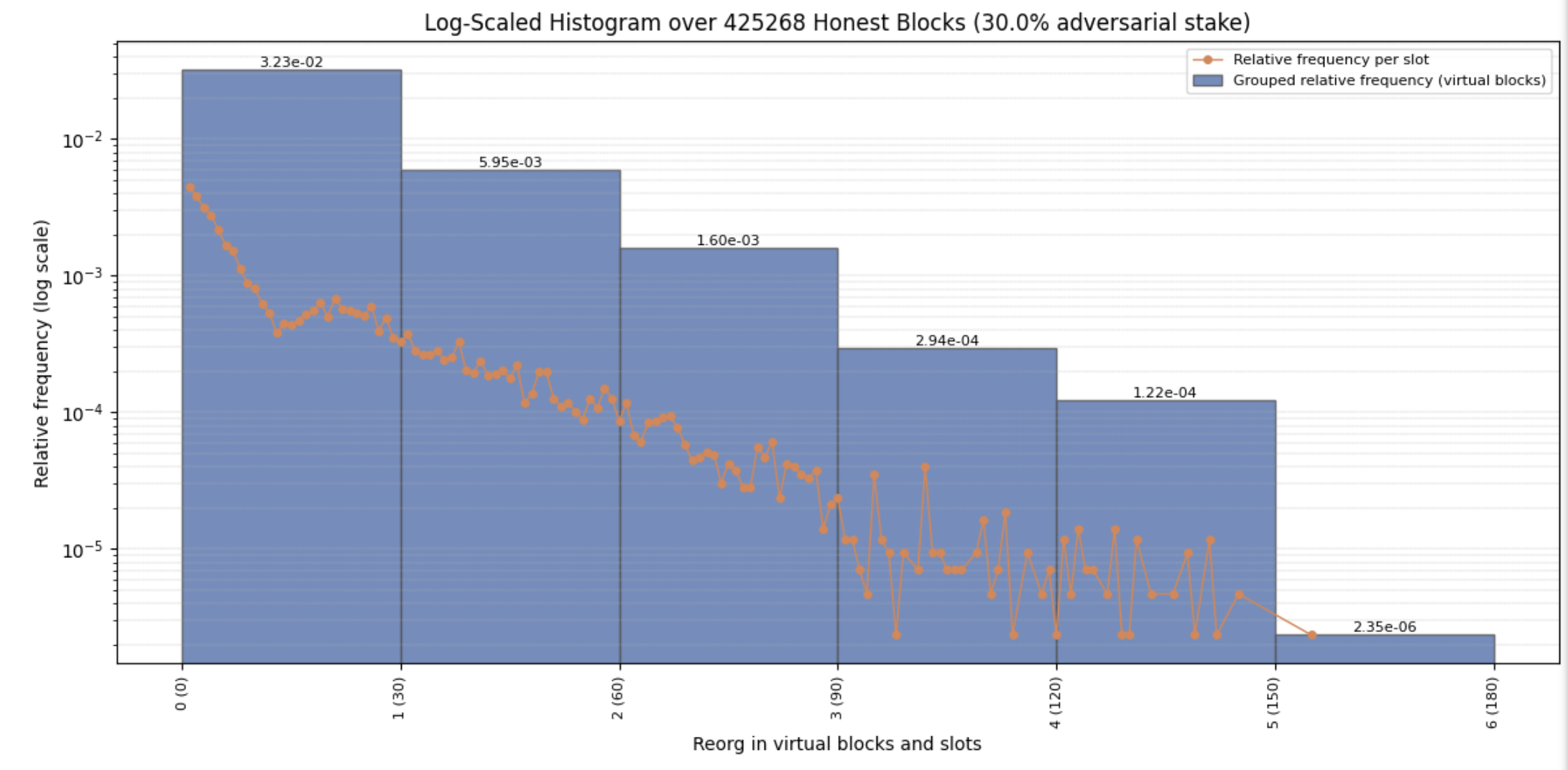}}
\subfloat[10x parallelization, $f=0.50$]{\includegraphics[width=1\textwidth/2]{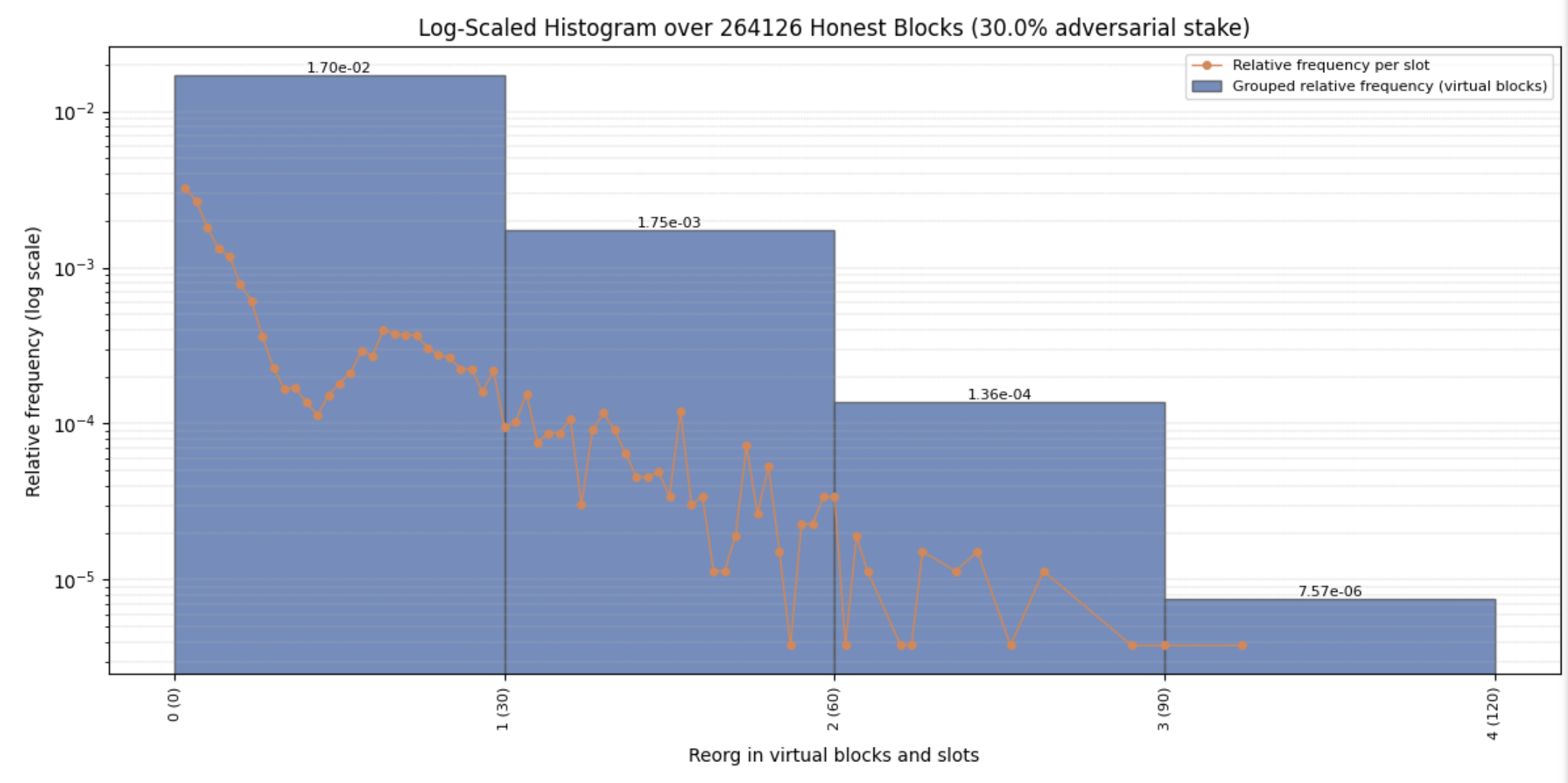}}
\caption{Impact of parallelization level on reorg depth for \ProjBase{} with 30\% adversarial stake.}
\label{fig:parallelization}
\end{figure}
Empirically, higher parallelism (larger $f$) correlates with shorter reorgs, especially in large-scale heuristic runs that enable more attack attempts (Fig.~\ref{fig:parallelization}). ILP-based local optima can surface rare long reorgs in small samples, but across $f\in\{0.25,0.30,0.40,0.50\}$ tail depth decreases as $f$ increases, aligning with the “more votes, faster convergence” intuition. (See Appendix~\ref{sec:eval-app}, Fig.~\ref{fig:c2-delay} for sensitivity across $f$ and mean broadcast delay.)

\subsection{Adversarial stake}

\begin{figure}[H]%htbp!]
\centering \includegraphics[width=\textwidth]{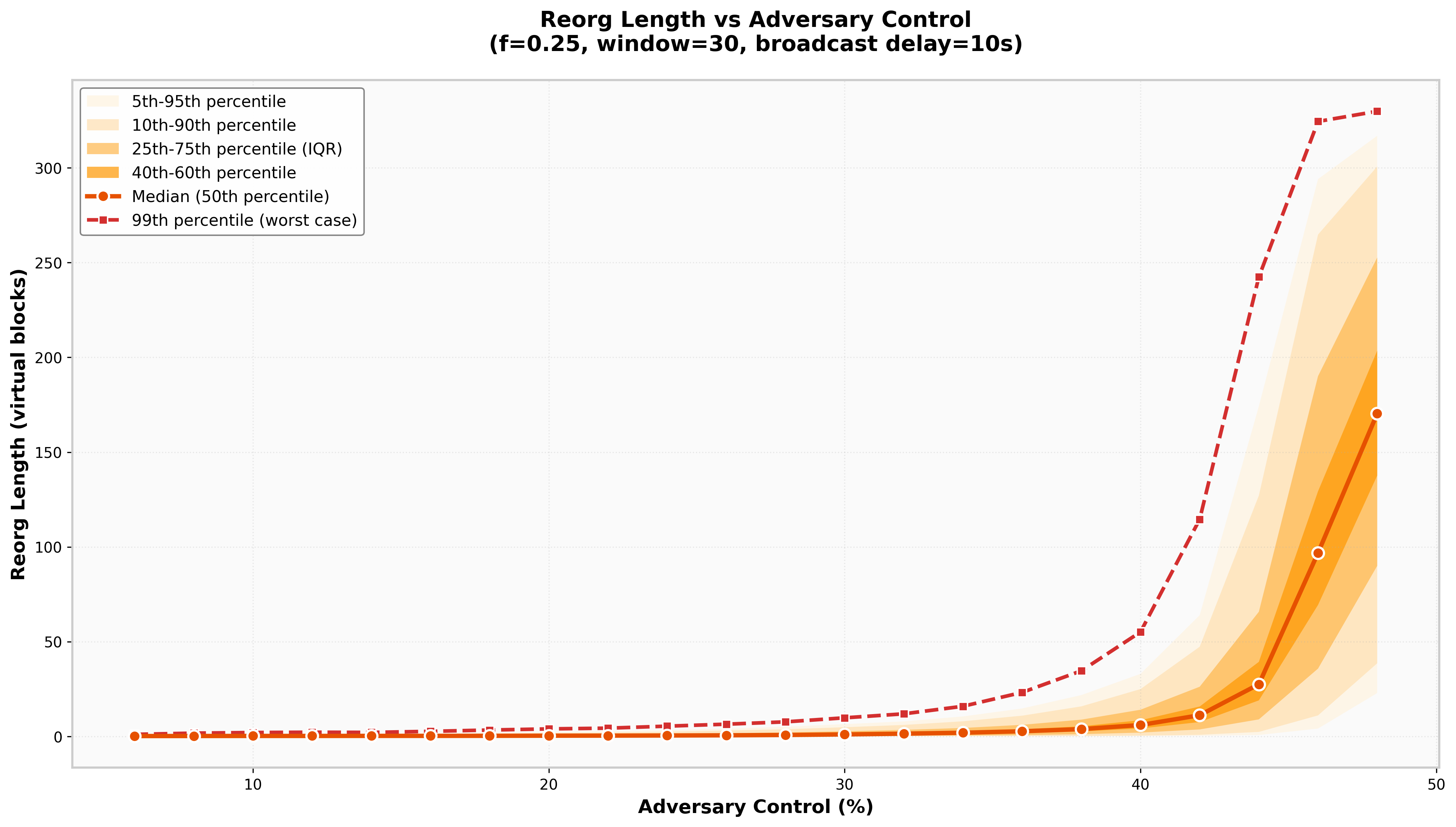}
\caption{Dynamics of the adversarial stake impact on the reorg depth for \ProjBase.}
\label{fig:reorg-adversary-dynamics}
\end{figure}
As $\alpha$ approaches $0.5$, both the frequency and length of reorgs increase (Fig.~\ref{fig:reorg-adversary-dynamics}). Even at $\alpha=0.40$, reorg lengths remain bounded (e.g., $\le 55$ virtual blocks with $>99\%$ probability) under $w=30$. The tail grows sharply as $\alpha\to 0.5$, reflecting loss of safety at majority. At $\alpha=0.45$, we observe a median reorg depth of around $100$ virtual blocks. For $\alpha<0.3$, typical reorgs stay shallow (depth $\le 10$), while extremes depend on the relationship between $\Delta$ and $w$.

Note that the top of the plot tapers off because it reaches the iteration-length limit: when a curve hits this ceiling, the adversary maintains a reorg that persists up to the established finality time (i.e., the adversary “wins” over the entire finality window). Once a curve hits this ceiling, we already treat it as an adversarial win over the whole finality horizon; the apparent plateau is therefore an artifact of the finite simulation horizon, not a genuine bound on reorg depth.

Figure~\ref{fig:adversarial} (Appendix~\ref{sec:eval-app}) presents a detailed comparison of how adversarial stake affects reorg depth in \ProjBase ($f=0.25$) versus Praos ($f=0.05$).

\subsection{Network delay}

\begin{figure}[H]%htbp!]
\centering \includegraphics[width=\textwidth]{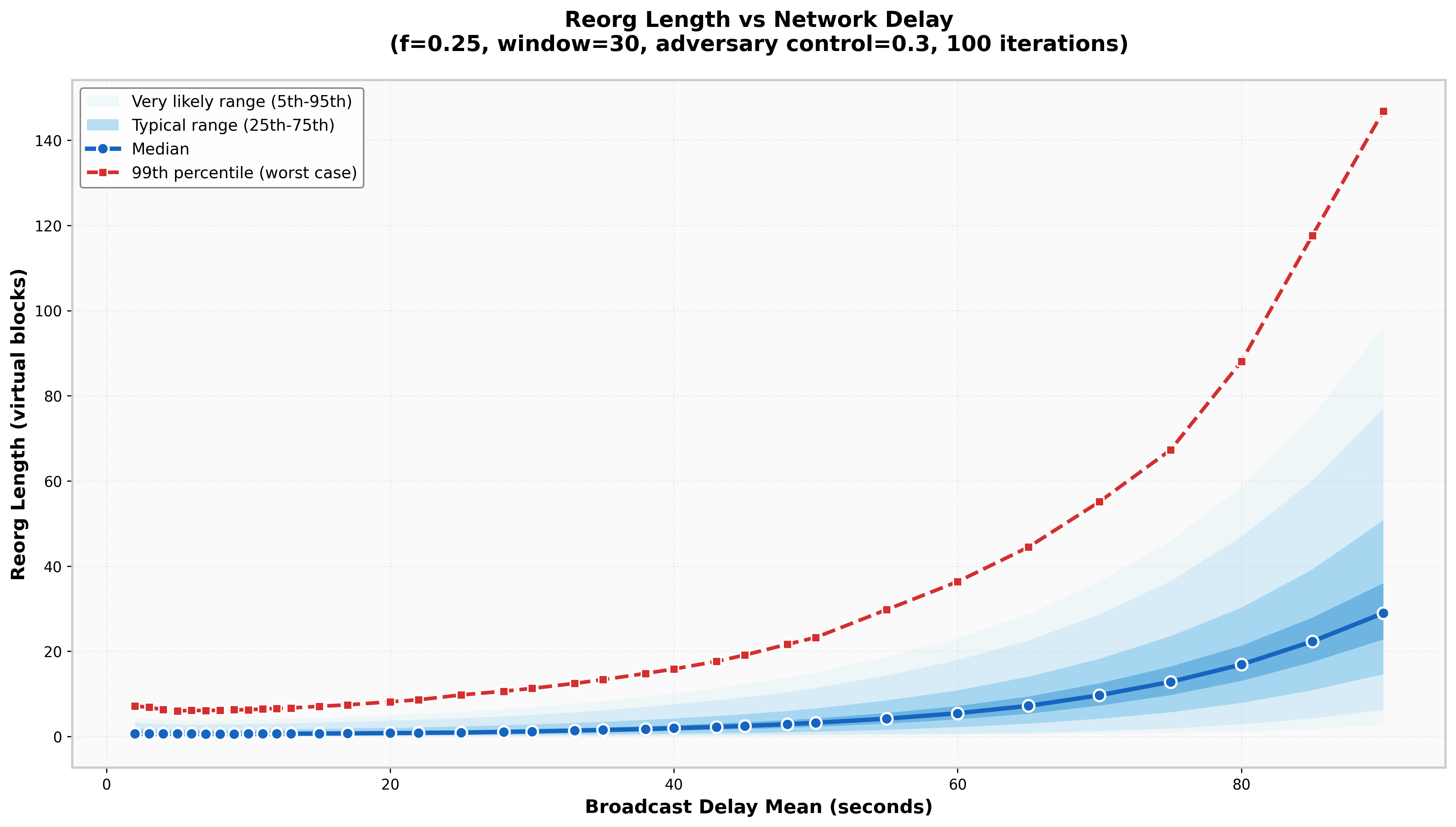}
\caption{Dynamics of the network delay impact on the reorg depth for \ProjBase.}
\label{fig:reorg-delay-dynamics}
\end{figure}
Holding $\alpha=0.3$, increasing $\Delta$ relative to slot time induces more transient forks (Fig.~\ref{fig:reorg-delay-dynamics}). However, if $w\ge\Delta$ (Appendix~\ref{sec:eval-app}, Fig.~\ref{fig:window}), these forks resolve within $w$ slots and the reorg-depth distribution shifts only slightly. If $w<\Delta$, reorgs can grow unbounded (e.g., $w=30$ and $\Delta=60$ slots, Appendix~\ref{sec:eval-app}, Fig.~\ref{fig:window}g), with tips oscillating indefinitely and successful attacks emerging. This confirms the necessity of $w\ge\Delta$ for security (Appendix~\ref{sec:eval-app}, Fig.~\ref{fig:window}), matching the TB requirement.

Figure~\ref{fig:c2-delay} (Appendix~\ref{sec:eval-app}) compares how network delay affects reorg depth in \ProjBase and Praos.  \ProjBase consistently exhibits lower reorg probabilities than Praos at $f=0.05$, and reorg depth decreases as $f$ increases.

\subsection{Finality time}
Using our finality proxy (age after which reorg probability $<10^{-k}$), finality occurs within tens of virtual blocks as long as $\alpha<0.5$ and $w\ge\Delta$ (Appendix~\ref{sec:eval-app}, Fig.~\ref{fig:window}). For example, with $\alpha=0.3$, $\Delta=0.5$ s, and $w=30$, we reach $k=6$ in roughly $30$--$40$ virtual blocks; at $\alpha=0.45$, about $60$ virtual blocks. In secure regimes, finalization is consistently $O(w)$.

\subsection{Results Summary}
Across benchmarks, we find: 
All Praos comparisons below use an equalized block-arrival process per Section~\ref{sec:setup} to avoid baseline-rate bias.
\begin{itemize}
\item \textbf{Stability at 30\% adversary.} The network consistently resists optimized reorg attempts; binned histograms of reorg lengths are time-equivalent to Praos blocks.

\item \textbf{Quantitative comparison to Praos.} Under the strongest adversarial optimization tested (ILP-based, optimal sliding attack window, exhaustive local), Praos produces roughly 15-block reorgs with frequency $\approx 10^{-2}$ (Appendix~\ref{sec:eval-app}, Fig.~\ref{fig:praos}), whereas \ProjBase{} produces 14--15-block reorgs with frequency $\approx 1.5\times 10^{-5}$ ($\sim 600\times$ improvement). For 10-block reorgs, Praos is $\approx 7\times 10^{-1}$ vs. \ProjBase{} $\approx 8\times 10^{-5}$ ($\sim 8750\times$ improvement).

\item \textbf{Effect of production rate $f$.} Increasing $f$ raises short-term instability but \emph{reduces} reorg length.  A higher information rate increases local variance but speeds convergence.

\item \textbf{Stronger adversaries.} With $40$--$45\%$ adversarial control, \ProjBase{} maintains bounded reorgs and compares favorably to Praos under the same conditions.

\item \textbf{Near-majority and majority.} At $49\%$, the system degrades gracefully; at $\ge 51\%$, the attacker dominates (as expected), but the induced distributions remain informative for parameterization and risk analysis.
\end{itemize}

\paragraph{Acknowledgements:}
The third author thanks Lei Fan for many valuable discussions on the DAG-based protocols studied in this work. He also thanks Lei for numerous insightful conversations over the years on the design and analysis of consensus protocols in general.

\bibliographystyle{alpha}
\bibliography{../crypto/abbrev3,../crypto/crypto,../crypto/extra,../crypto/extra3}

\appendix
%!TEX root = main.tex

%%%%%%%%%%%%%%%%%%%%%%%%%%%%%%%%%%%%%%%%%%%%%%%%%%%%%%%%%%%%%%%%%%%%%%%%%%%%%%%
% Appendix A: 
%%%%%%%%%%%%%%%%%%%%%%%%%%%%%%%%%%%%%%%%%%%%%%%%%%%%%%%%%%%%%%%%%%%%%%%%%%%%%%%

\section{Standard Concentration Bounds}
\label{sec:concentration-app}

We recall the standard concentration inequalities used in our analysis. Proofs can be found in textbooks on randomized algorithms and probabilistic methods, e.g., Mitzenmacher and Upfal~\cite{MitzenmacherUpfal} and Dubhashi and Panconesi~\cite{DubhashiPanconesi}.

\begin{lemma}[Chernoff bound]
\label{lem:chernoff}
Let $X_1,\dots,X_n$ be independent $\{0,1\}$-valued random variables and $X = \sum_{i=1}^n X_i$ with $\mu = \E[X]$. Then for any $\delta \in (0,1)$,
\[
  \Pr[X \le (1-\delta)\mu]
  \le \exp\!\Bigl(-\tfrac{\delta^2}{2}\,\mu\Bigr)
  \quad\text{and}\quad
  \Pr[X \ge (1+\delta)\mu]
  \le \exp\!\Bigl(-\tfrac{\delta^2}{3}\,\mu\Bigr).
\]
More generally, for any $\delta > 0$,
\[
  \Pr[X \ge (1+\delta)\mu]
  \le \exp\!\Bigl(-\tfrac{\delta^2}{2+\delta}\,\mu\Bigr).
\]
\end{lemma}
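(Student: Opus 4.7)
The plan is to follow the classical exponential-moment (Chernoff--Cramér) template: bound the upper and lower tails separately via Markov's inequality applied to a tilted exponential, factor the resulting moment generating function (MGF) using independence, bound the per-variable MGF via a Bernoulli-specific calculation, and finally optimize the tilt parameter and absorb the remainder into the claimed quadratic form.

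First I would handle the \emph{upper tail}. Fixing $t>0$, Markov gives $\Pr[X\ge(1+\delta)\mu]\le e^{-t(1+\delta)\mu}\,\E[e^{tX}]$. By independence, $\E[e^{tX}]=\prod_{i=1}^n\E[e^{tX_i}]$, and for each Bernoulli $X_i$ with $\E[X_i]=p_i$ we have $\E[e^{tX_i}]=1+p_i(e^t-1)\le \exp(p_i(e^t-1))$ using $1+x\le e^x$. Summing over $i$ yields $\E[e^{tX}]\le \exp(\mu(e^t-1))$. Optimizing over $t$ by taking $t=\ln(1+\delta)$ gives the classical bound $\Pr[X\ge(1+\delta)\mu]\le\bigl(e^\delta/(1+\delta)^{1+\delta}\bigr)^\mu$. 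The two stated forms then follow from the elementary inequalities $(1+\delta)\ln(1+\delta)-\delta\ge \delta^2/(2+\delta)$ for all $\delta>0$ (yielding the general upper bound) and, under the extra restriction $\delta\in(0,1)$, the slightly cleaner $(1+\delta)\ln(1+\delta)-\delta\ge \delta^2/3$.

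For the \emph{lower tail}, I would symmetrically pick $t<0$ and write $\Pr[X\le(1-\delta)\mu]=\Pr[e^{tX}\ge e^{t(1-\delta)\mu}]\le e^{-t(1-\delta)\mu}\,\E[e^{tX}]$. The same MGF bound $\E[e^{tX}]\le\exp(\mu(e^t-1))$ applies (it holds for all $t\in\mathbb{R}$ since it comes from $1+x\le e^x$). Optimizing with $t=\ln(1-\delta)$ yields $\Pr[X\le(1-\delta)\mu]\le\bigl(e^{-\delta}/(1-\delta)^{1-\delta}\bigr)^\mu$, and the elementary convexity inequality $(1-\delta)\ln(1-\delta)+\delta\ge\delta^2/2$ on $\delta\in(0,1)$ converts this into the claimed $\exp(-\delta^2\mu/2)$.

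The only real work is the two real-analysis inequalities on the Kullback--Leibler-type functions $(1\pm\delta)\ln(1\pm\delta)\mp\delta$, which I would verify by a short Taylor-expansion/monotonicity argument (e.g., define $f(\delta)=(1+\delta)\ln(1+\delta)-\delta-\delta^2/(2+\delta)$, check $f(0)=f'(0)=0$ and $f''(\delta)\ge 0$ for $\delta\ge 0$). This is the only nontrivial step; everything else is a direct MGF computation. Since the lemma is entirely standard, I would cite \cite{MitzenmacherUpfal,DubhashiPanconesi} for the detailed calculations rather than reproducing them in full.
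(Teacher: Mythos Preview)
Your proposal is correct and follows the standard exponential-moment (Chernoff--Cram\'er) argument. The paper itself does not prove this lemma at all: it simply states the bounds and cites \cite{MitzenmacherUpfal,DubhashiPanconesi} for proofs, so your sketch is in fact more detailed than what the paper provides, and is precisely the textbook argument those references contain.
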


\begin{lemma}[Azuma--Hoeffding inequality]
\label{lem:azuma}
Let $(M_t)_{t=0}^T$ be a martingale with respect to a filtration $(\mathcal{F}_t)$, and assume that the differences are uniformly bounded: 
for some $c>0$,
\[
  |M_t - M_{t-1}| \le c \quad\text{almost surely for all } t=1,\dots,T.
\]
Then for any $\lambda > 0$,
\[
  \Pr\bigl[\,|M_T - M_0| \ge \lambda\,\bigr]
  \;\le\;
  2 \exp\!\Bigl(-\frac{\lambda^2}{2 c^2 T}\Bigr).
\]
\end{lemma}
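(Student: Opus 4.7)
The plan is to prove Lemma~\ref{lem:azuma} by the standard exponential moment (Chernoff-style) method applied to the martingale differences $D_t := M_t - M_{t-1}$, combined with Hoeffding's lemma for bounded centered random variables. First I would set $S_T := M_T - M_0 = \sum_{t=1}^{T} D_t$ and, for an arbitrary $s > 0$ to be chosen later, apply Markov's inequality to the nonnegative random variable $e^{s S_T}$:
\[
  \Pr[S_T \ge \lambda] \;\le\; e^{-s\lambda}\,\mathbb{E}\!\left[e^{s S_T}\right].
\]

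Next I would control $\mathbb{E}[e^{s S_T}]$ by successively peeling off one increment at a time via the tower property. Conditioning on $\mathcal{F}_{T-1}$, note that $D_T$ is $\mathcal{F}_T$-measurable with $\mathbb{E}[D_T \mid \mathcal{F}_{T-1}] = 0$ (by the martingale property) and $|D_T| \le c$ a.s.\ (by the bounded-differences hypothesis), so $D_T \in [-c, c]$ almost surely conditional on $\mathcal{F}_{T-1}$. The key tool here is Hoeffding's lemma: for any random variable $X$ with $\mathbb{E}[X] = 0$ and $X \in [a,b]$ a.s., one has $\mathbb{E}[e^{s X}] \le \exp(s^2 (b-a)^2 / 8)$. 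Applied conditionally with $a = -c$, $b = c$, this yields
\[
  \mathbb{E}\!\left[e^{s D_T} \mid \mathcal{F}_{T-1}\right] \;\le\; \exp\!\bigl(s^2 c^2 / 2\bigr).
\]
Pulling out the $\mathcal{F}_{T-1}$-measurable factor $e^{s S_{T-1}}$ and iterating $T$ times gives $\mathbb{E}[e^{s S_T}] \le \exp(s^2 T c^2 / 2)$.

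Combining the two bounds, $\Pr[S_T \ge \lambda] \le \exp(-s\lambda + s^2 T c^2 / 2)$, and optimizing the right-hand side over $s > 0$ (the minimum is at $s^* = \lambda/(T c^2)$) yields the one-sided tail $\Pr[S_T \ge \lambda] \le \exp(-\lambda^2/(2 c^2 T))$. A symmetric argument, or equivalently applying the same reasoning to the martingale $(-M_t)$ (which also has differences bounded by $c$), gives the matching lower tail, and a union bound over the two events produces the factor of $2$ in the stated inequality.

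The only step requiring real work is Hoeffding's lemma itself, which I would either cite from a standard reference (e.g., Mitzenmacher--Upfal~\cite{MitzenmacherUpfal} or Dubhashi--Panconesi~\cite{DubhashiPanconesi}) or prove in one line by convexity of $x \mapsto e^{sx}$: write $X = \frac{b-X}{b-a}a + \frac{X-a}{b-a}b$, take expectations of $e^{sX}$, use $\mathbb{E}[X]=0$ to simplify, and bound the resulting log-moment-generating function $\varphi(s) = \log \mathbb{E}[e^{sX}]$ by its second-order Taylor expansion $\varphi(s) \le s^2(b-a)^2/8$ (since $\varphi(0)=\varphi'(0)=0$ and $\varphi''(s) \le (b-a)^2/4$). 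The main obstacle is really just bookkeeping: ensuring that the conditional form of Hoeffding's lemma applies correctly given that the bound $|D_t| \le c$ is assumed to hold almost surely rather than only conditionally, which is automatic here.
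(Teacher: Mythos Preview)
Your proposal is correct and is exactly the standard Chernoff--Hoeffding argument for Azuma's inequality. The paper itself does not give a proof of this lemma: it states the inequality in Appendix~\ref{sec:concentration-app} and simply refers the reader to standard textbooks (Mitzenmacher--Upfal and Dubhashi--Panconesi) for the proof. So you have supplied precisely what the paper omits, and your argument is the one those references contain.
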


%!TEX root = main.tex

%%%%%%%%%%%%%%%%%%%%%%%%%%%%%%%%%%%%%%%%%%%%%%%%%%%%%%%%%%%%%%%%%%%%%%%%%%%%%%%
% Appendix B: Dilworth (Antichain)
%%%%%%%%%%%%%%%%%%%%%%%%%%%%%%%%%%%%%%%%%%%%%%%%%%%%%%%%%%%%%%%%%%%%%%%%%%%%%%%

\section{Exact Antichain Computation via Dilworth's Theorem}
\label{sec:appendix-dilworth}
\label{sec:base-app}

In this appendix we recall a standard polynomial-time algorithm for computing the \emph{exact maximum-cardinality} antichain of the window-DAG \(G_{t,w}\), which is the reference-selection method used by honest validators in Sections~\ref{sec:ideal} and~\ref{sec:base}. The algorithm is based on Dilworth's theorem:

\begin{theorem}[Dilworth, 1950]
In any finite partially ordered set $(P,\preceq)$, the size of the largest antichain equals the minimum number of chains in a partition of $P$.
\end{theorem}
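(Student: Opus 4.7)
The plan is to prove Dilworth's theorem via the classical bipartite-matching reduction and K\"onig's theorem, since this is the same structure that underlies the polynomial-time algorithm used by honest validators in Sections~\ref{sec:ideal} and~\ref{sec:base} to compute the maximum antichain of the windowed DAG $G_{t,w}$.

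First, I would dispose of the easy direction $\alpha(P) \le \chi(P)$, where $\alpha(P)$ denotes the maximum antichain size and $\chi(P)$ the minimum chain-partition size. If $P$ is partitioned into $k$ chains $C_1,\ldots,C_k$, then any antichain $A$ can contain at most one element of each $C_i$ (two comparable elements cannot both lie in $A$), so by pigeonhole $|A| \le k$.

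For the hard direction $\chi(P) \le \alpha(P)$, the key construction is an auxiliary bipartite graph $B = (L \cup R, E)$ where $L$ and $R$ are two disjoint copies of $P$, with an edge $(x_L, y_R) \in E$ whenever $x \prec y$ in $P$. I would then establish two correspondences. First, every matching $M \subseteq E$ induces a chain partition of $P$ of size $|P| - |M|$: starting from the singleton partition $\{\{x\} : x \in P\}$, each matched edge $(x_L, y_R)$ merges the chain ending at $x$ with the chain starting at $y$. Since $M$ is a matching, each element has at most one successor and one predecessor in the resulting splicing relation, so the induced partition is a disjoint union of $\prec$-chains (transitivity of $\prec$ yields linearity within each component). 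Hence $\chi(P) \le |P| - \mu(B)$, where $\mu(B)$ is the maximum matching size in $B$. Second, every vertex cover $C$ of $B$ yields an antichain $A_C := \{x \in P : x_L \notin C \text{ and } x_R \notin C\}$: if $x, y \in A_C$ with $x \prec y$, then $(x_L, y_R) \in E$ would be an uncovered edge, contradicting the cover property. Moreover, a union bound on the complement gives $|A_C| \ge |P| - |C|$.

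Combining these, let $\mu := \mu(B)$ and let $\tau$ be the minimum vertex cover size of $B$. K\"onig's theorem (applicable because $B$ is bipartite) gives $\mu = \tau$, and taking $C$ to be a minimum vertex cover yields an antichain $A_C$ of size at least $|P| - \tau = |P| - \mu \ge \chi(P)$. Hence $\alpha(P) \ge \chi(P)$, and combining with the easy direction we conclude $\alpha(P) = \chi(P)$. The only step requiring genuine care is the matching-to-chain-partition correspondence, where one must verify that the splicing relation induced by $M$ is acyclic and linear along each connected component; this follows from the matching property (in- and out-degrees at most one) together with the fact that $\prec$ is a strict partial order (so no cycles can arise). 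The rest is bookkeeping, and the argument is algorithmic: the reduction to maximum bipartite matching (e.g., via Hopcroft--Karp on a graph with $2|P|$ vertices and $O(|P|^2)$ edges) runs in time $O(|P|^{2.5})$, which is comfortably polynomial in the $O(\lambda w)$-sized windowed DAGs that honest validators manipulate in Sections~\ref{sec:ideal} and~\ref{sec:base}.
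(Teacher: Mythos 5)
The paper states Dilworth's theorem as a classical result without giving a proof of its own, citing it only as the theoretical basis for the Hopcroft--Karp-based maximum-antichain routine that honest validators run on the windowed DAG $G_{t,w}$. Your proof is a correct and complete rendition of the standard K\"onig-based argument (sometimes attributed to Fulkerson), and it is precisely the reduction the paper gestures at when it says the maximum antichain problem ``reduces to a maximum matching problem.''

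Your key steps all check out: the pigeonhole bound $\alpha(P)\le\chi(P)$ is immediate; a matching $M$ in the split bipartite graph yields a vertex-disjoint union of directed paths (in/out-degree at most one, and a cycle would force $x\prec x$), each of which is a chain by transitivity, of total count $|P|-|M|$; and a vertex cover $C$ yields the set $A_C$, which is an antichain because any comparable pair in $A_C$ would witness an uncovered edge, with $|A_C|\ge|P|-|C|$ via the injection from $P\setminus A_C$ into $C$. K\"onig closes the gap.

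One point worth flagging, in your favor: you correctly put an edge $(x_L,y_R)$ whenever $x\prec y$, i.e.\ over the \emph{full} strict comparability relation (transitive closure). The paper's surrounding prose mentions running the algorithm on the transitive \emph{reduction} of $G_{s,w}$; that is fine as a preprocessing step for computing reachability cheaply, but the bipartite graph in the K\"onig reduction must use comparability edges, not merely Hasse (cover) edges. If one built the bipartite graph over cover edges only, the matching-to-chain-partition direction would still produce chains, but the vertex-cover-to-antichain direction would fail: a cover such as $\{b_L,b_R\}$ on the chain $a\prec b\prec c$ yields $A_C=\{a,c\}$, which is not an antichain. Your formulation sidesteps this pitfall, and the runtime bound $O(|P|^{2.5})$ you quote is consistent with the paper's complexity claim on the $O(\lambda w)$-sized window.
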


This equivalence enables a polynomial-time algorithm: by constructing a bipartite graph representing comparabilities in the DAG's transitive reduction, the maximum antichain problem reduces to a maximum matching problem. Solving this yields the exact maximum antichain, which honest validators may use for reference selection.

\paragraph{Transitive Reduction.}
Since redundant edges do not affect comparability, we may safely compute the \emph{transitive reduction} of $G_{s,w}$ before applying Dilworth’s algorithm. This reduces the input size and improves runtime.

\paragraph{Complexity.}
Computing the exact maximum antichain requires solving a maximum matching instance with complexity roughly $O(|V|^{2.5})$ using Hopcroft–Karp. While practical for small window sizes, this is less efficient than greedy heuristics for large $w$.

\paragraph{Summary.}
The Dilworth-based algorithm described above is the method used by all honest validators in \ProjIdeal and \ProjBase: honest parties \emph{always compute the exact maximum antichain} within the windowed DAG.  This ensures that reference selection maximizes fan-in and matches the protocol specification in Section~\ref{sec:ideal}.

For experimental evaluation in Section~\ref{sec:eval}, we optionally include a \emph{greedy maximal-antichain heuristic} as a baseline to study the
impact of suboptimal antichain selection.  It does not affect any safety proof or liveness argument.

Thus, in the actual protocol, the antichain used for block references is \emph{always} the exact maximum one computed via the Dilworth-based matching reduction, while the greedy construction appears solely as an experimental tool.

%!TEX root = main.tex

%%%%%%%%%%%%%%%%%%%%%%%%%%%%%%%%%%%%%%%%%%%%%%%%%%%%%%%%%%%%%%%%%%%%%%%%%%%%%%%
% Appendix for Section 6: 
%%%%%%%%%%%%%%%%%%%%%%%%%%%%%%%%%%%%%%%%%%%%%%%%%%%%%%%%%%%%%%%%%%%%%%%%%%%%%%%

\section{Additional Experimental Plots in Section~\ref{sec:eval}}
\label{sec:eval-app}

To improve readability, Figures~\ref{fig:praos}, \ref{fig:adversarial},\ref{fig:c2-delay}, and \ref{fig:window}, originally shown in \S\ref{sec:eval} have been relocated to this appendix section.  They reproduce the underlying experimental plots with identical settings and labeling, and should be read as the detailed visual companion to the results summarized in \S\ref{sec:eval}.

\begin{figure}[htbp!]
\includegraphics[width=0.7\textwidth]{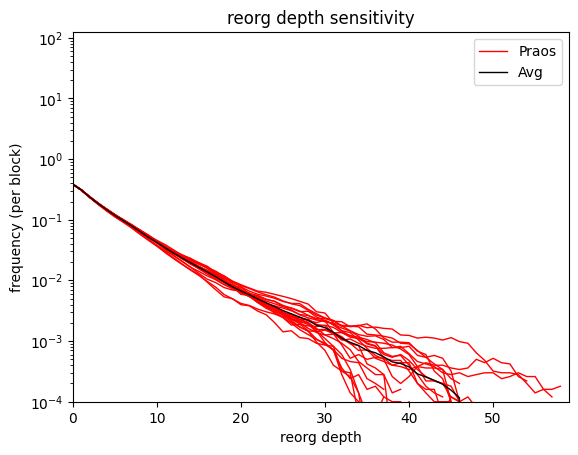}
\caption{Reorg depth sensitivity of the Ouroboros Praos with $f=0.05$. Red lines present individual chain reorg depths, while black presents an average.}
\label{fig:praos}
\end{figure}

\begin{figure}[htbp!]
\subfloat[\ProjBase, $\alpha=0.30$]{\includegraphics[width=\textwidth/2]{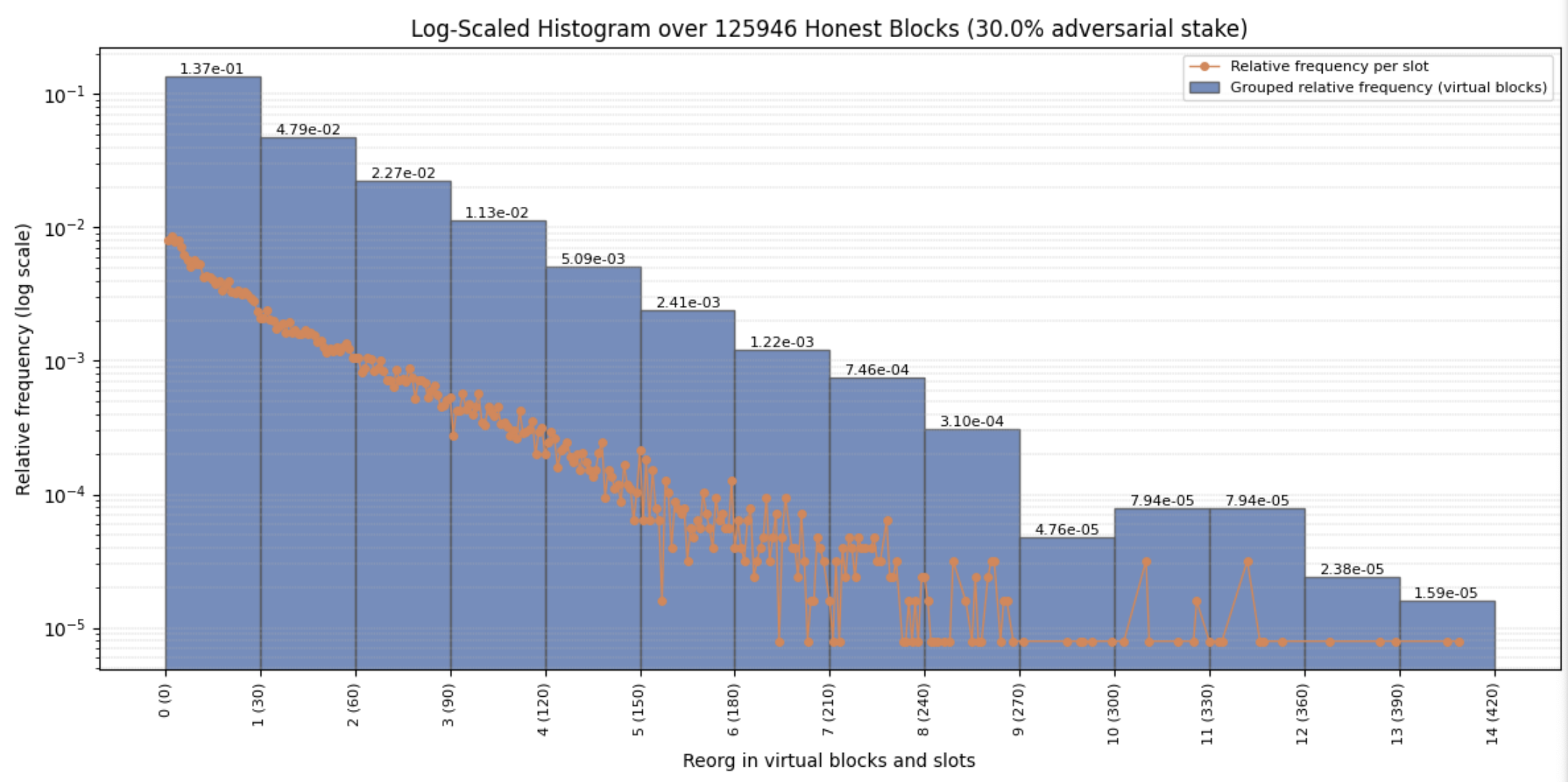}}
\subfloat[Praos, $\alpha=0.30$]{\includegraphics[height=41mm, keepaspectratio]{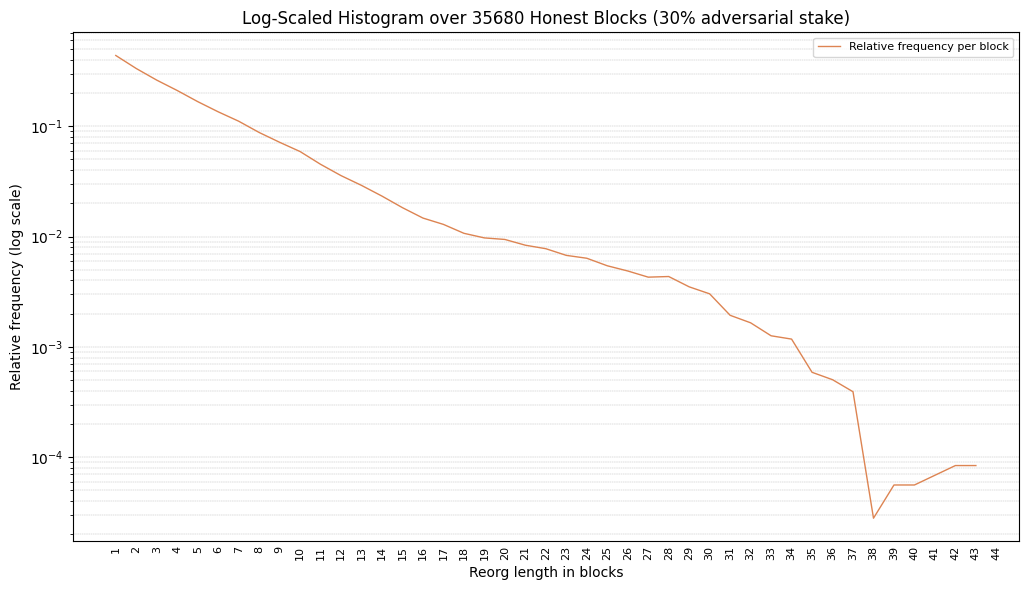}}\\
\subfloat[\ProjBase, $\alpha=0.40$]{\includegraphics[width=\textwidth/2]{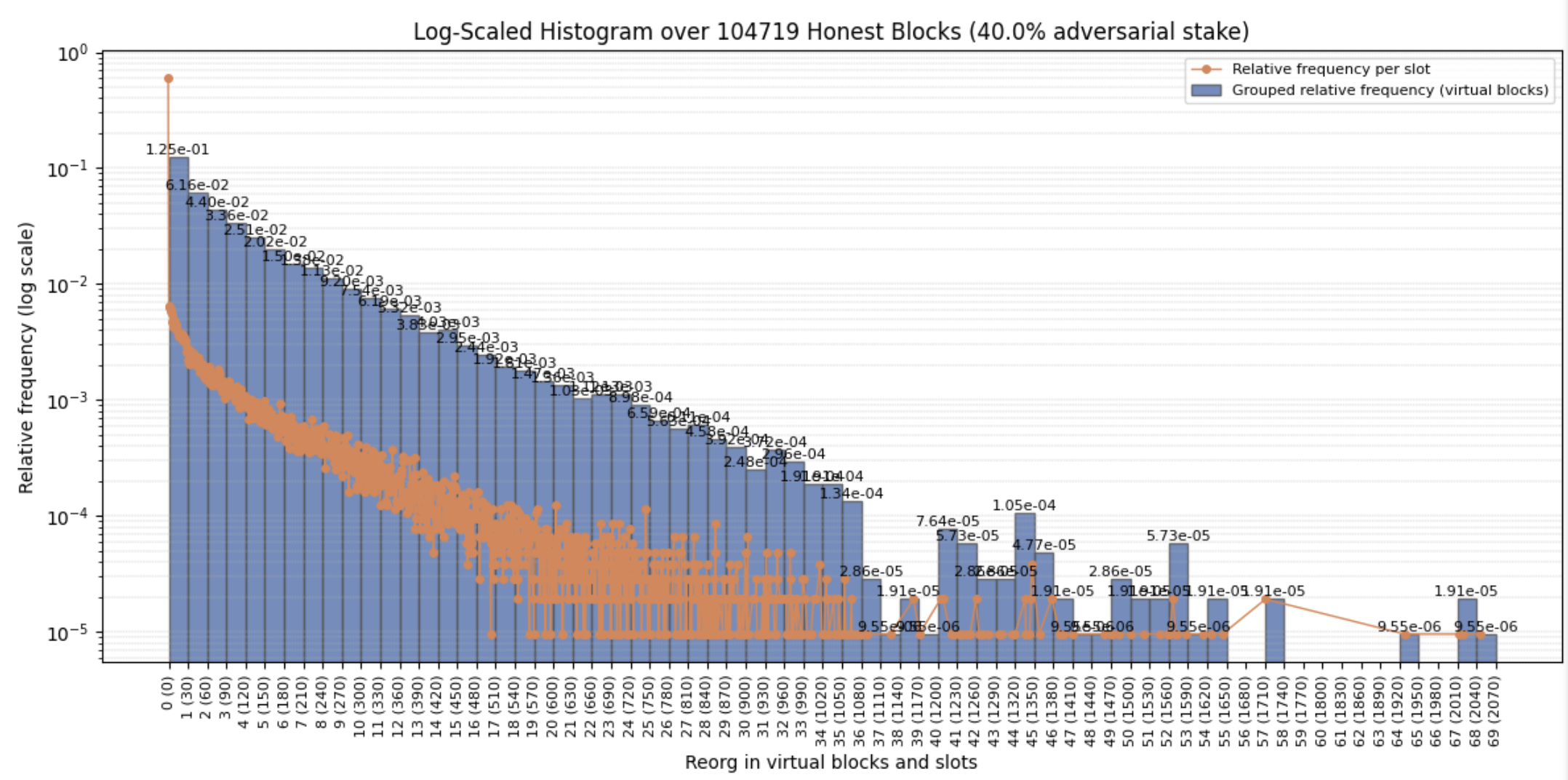}}
\subfloat[Praos, $\alpha=0.40$]{\includegraphics[height=41mm, keepaspectratio]{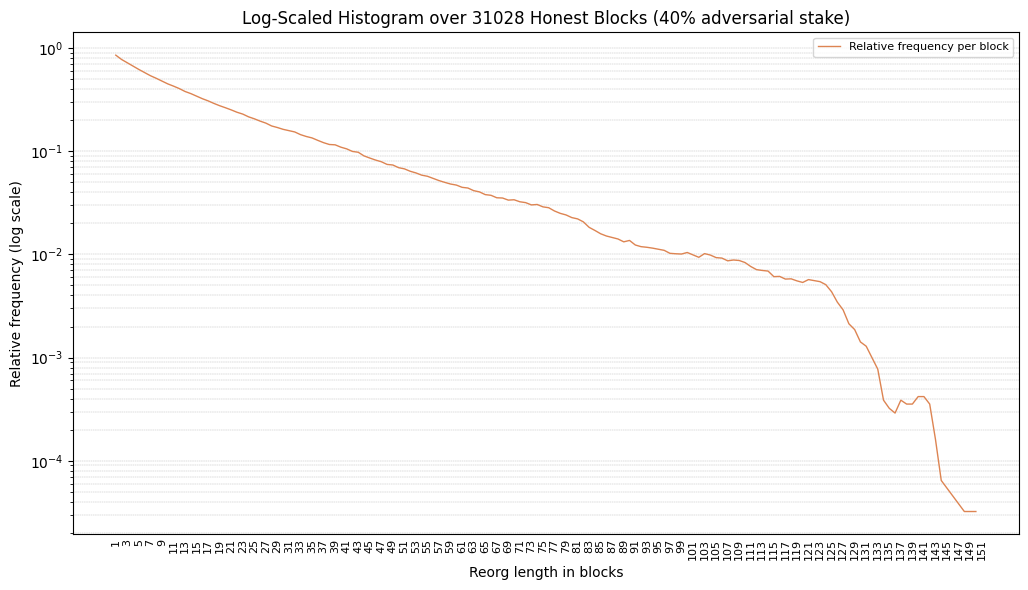}}\\
\subfloat[\ProjBase, $\alpha=0.45$]{\includegraphics[width=\textwidth/2]{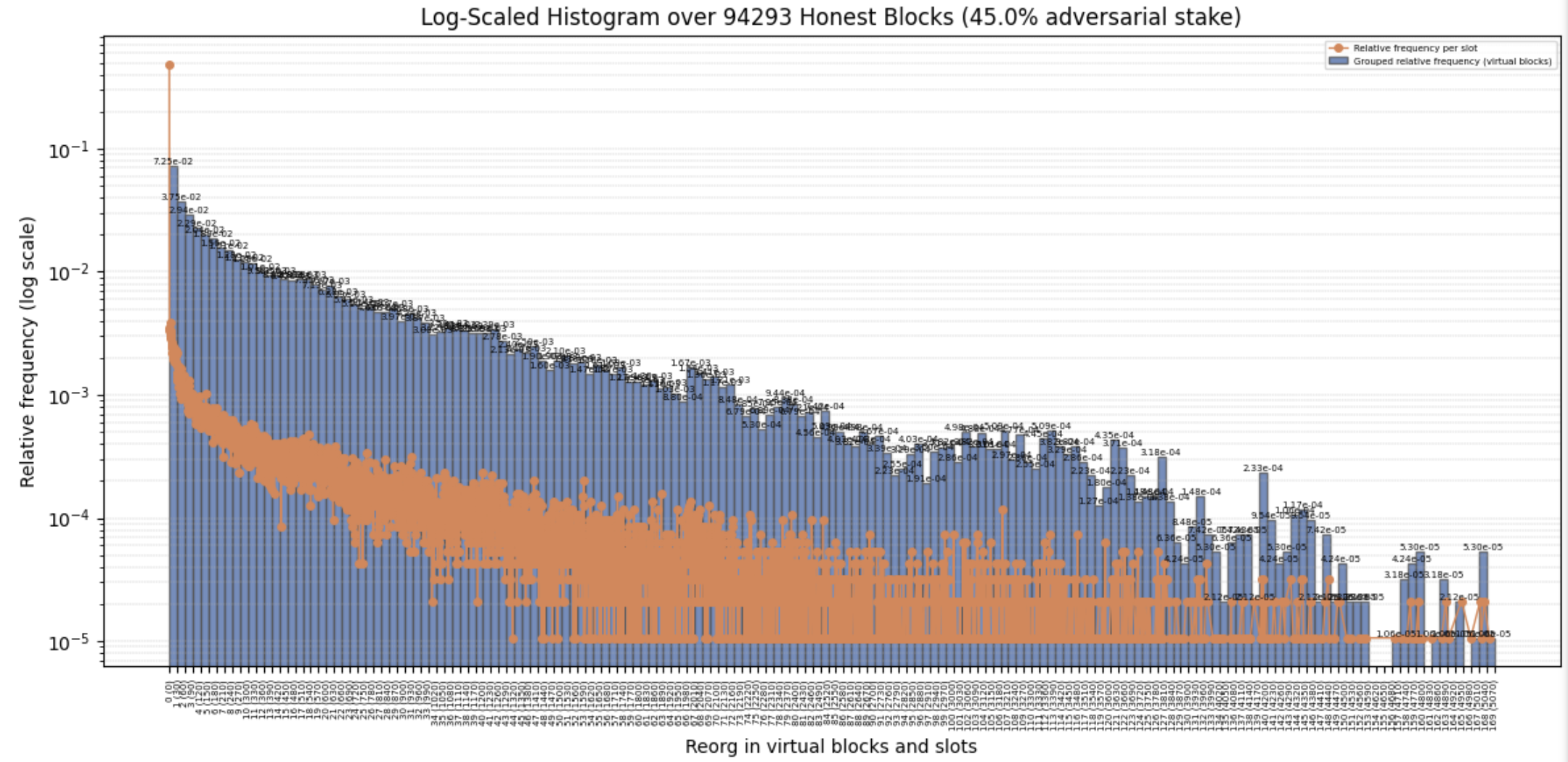}}
\subfloat[Praos, $\alpha=0.45$]{\includegraphics[height=41mm, keepaspectratio]{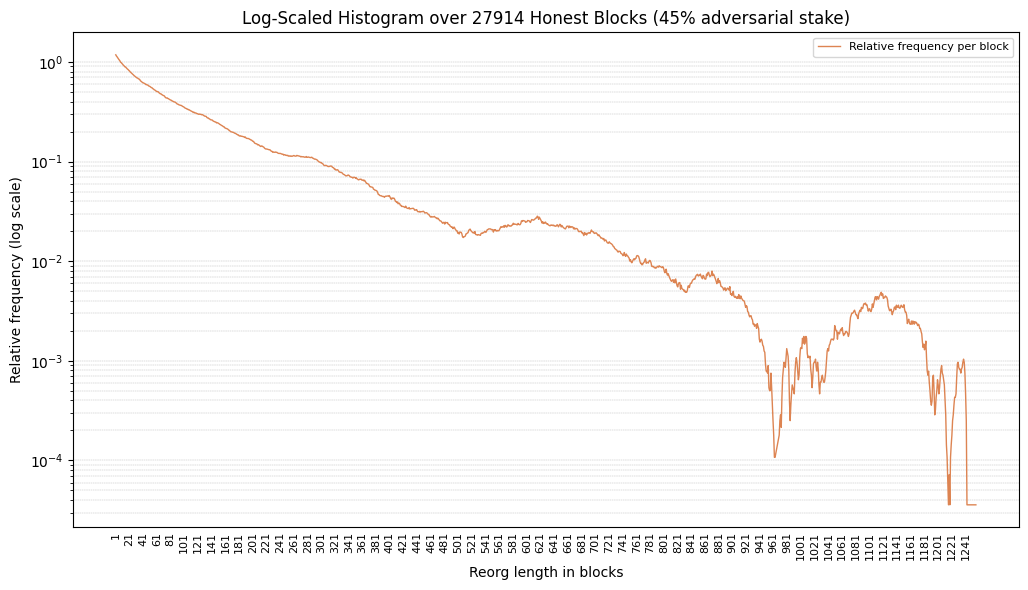}}\\
\subfloat[\ProjBase, $\alpha=0.49$]{\includegraphics[width=\textwidth/2]{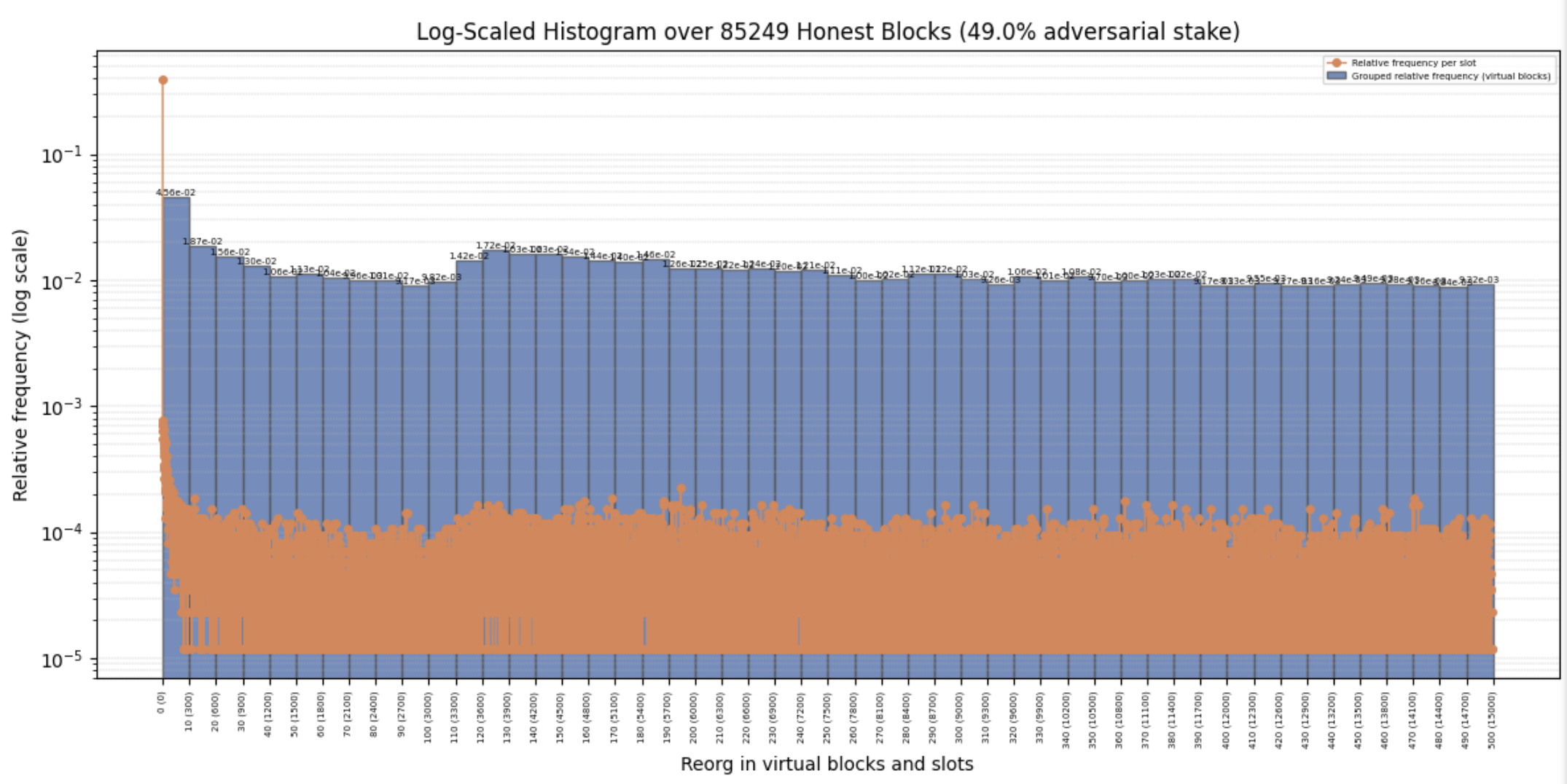}}
\subfloat[Praos, $\alpha=0.49$]{\includegraphics[height=41mm, keepaspectratio]{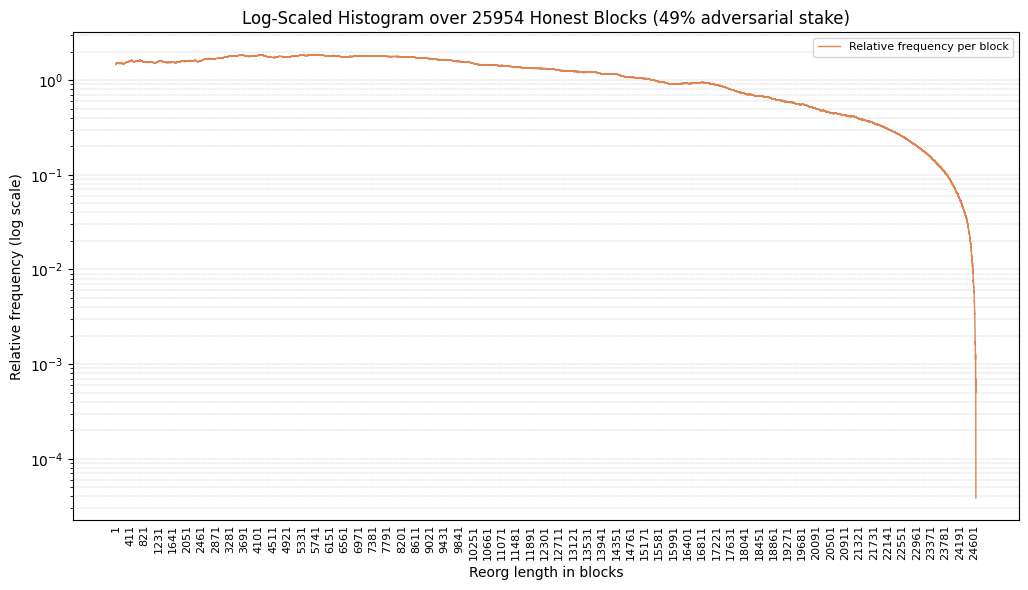}}\\
%\subfloat[\ProjBase $\alpha=0.51$]{\includegraphics[width=\textwidth/2]{figs2/adv-51.png}}
%\subfloat[Praos, $\alpha=0.51$]{\includegraphics[height=41mm, keepaspectratio]{figs2/reorg-praos-f0.05-as51-d2-1M-OK}}\\
%\subfloat[\ProjBase, $\alpha=0.60$]{\includegraphics[width=\textwidth/2]{figs2/adv-60.png}}
%\subfloat[Praos, $\alpha=0.60$]{\includegraphics[height=41mm, keepaspectratio]{figs2/reorg-praos-f0.05-as60-d2-1M-OK}}\\
\caption{Comparison of the impact of the adversarial stake on the reorg depth for \ProjBase ($f=0.25$) and Praos ($f=0.05$).}
\label{fig:adversarial}
\end{figure}

\begin{figure}[htbp!]
\subfloat[$f=0.05$ and $\Delta=2$s]{\includegraphics[width=\textwidth/4]{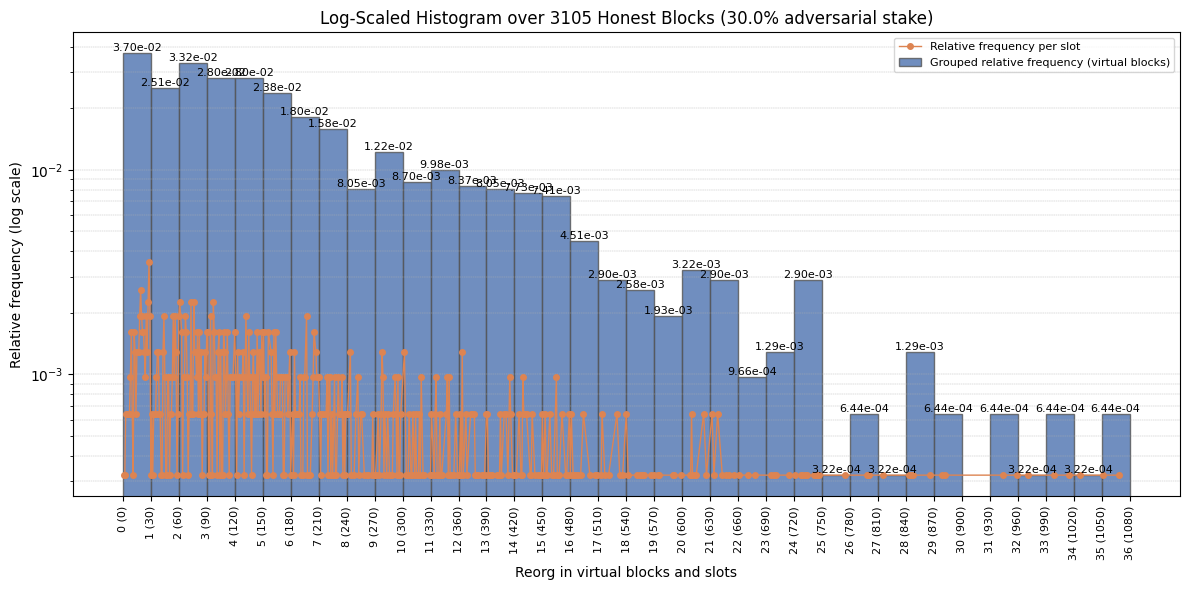}}
\subfloat[$f=0.1$ and $\Delta=2$s]{\includegraphics[width=\textwidth/4]{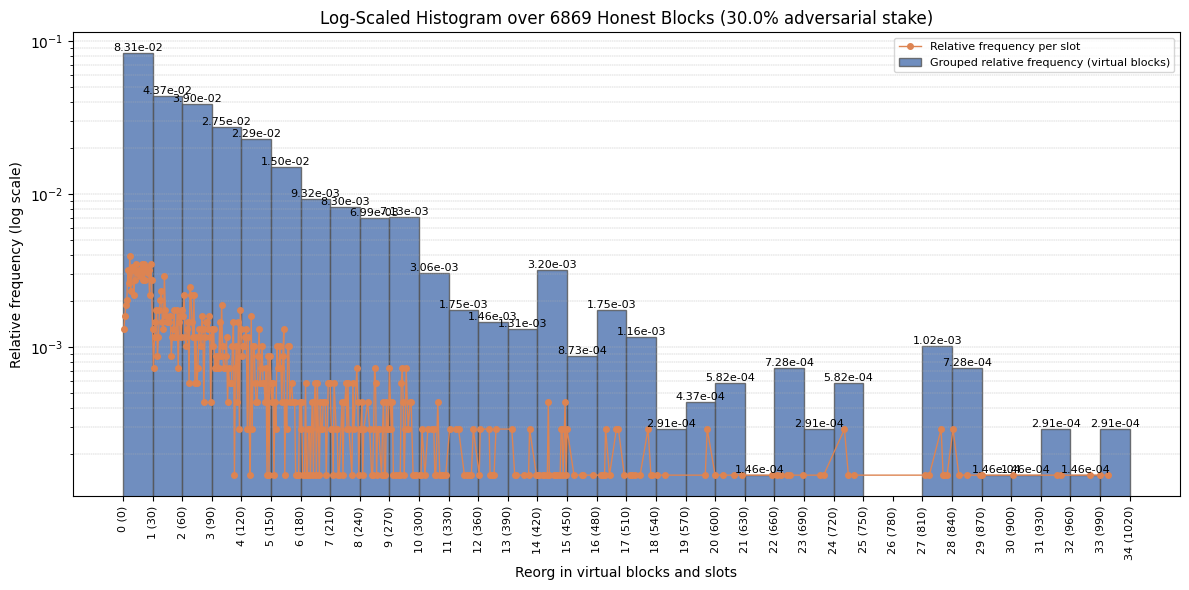}}
\subfloat[$f=0.2$ and $\Delta=2$s]{\includegraphics[width=\textwidth/4]{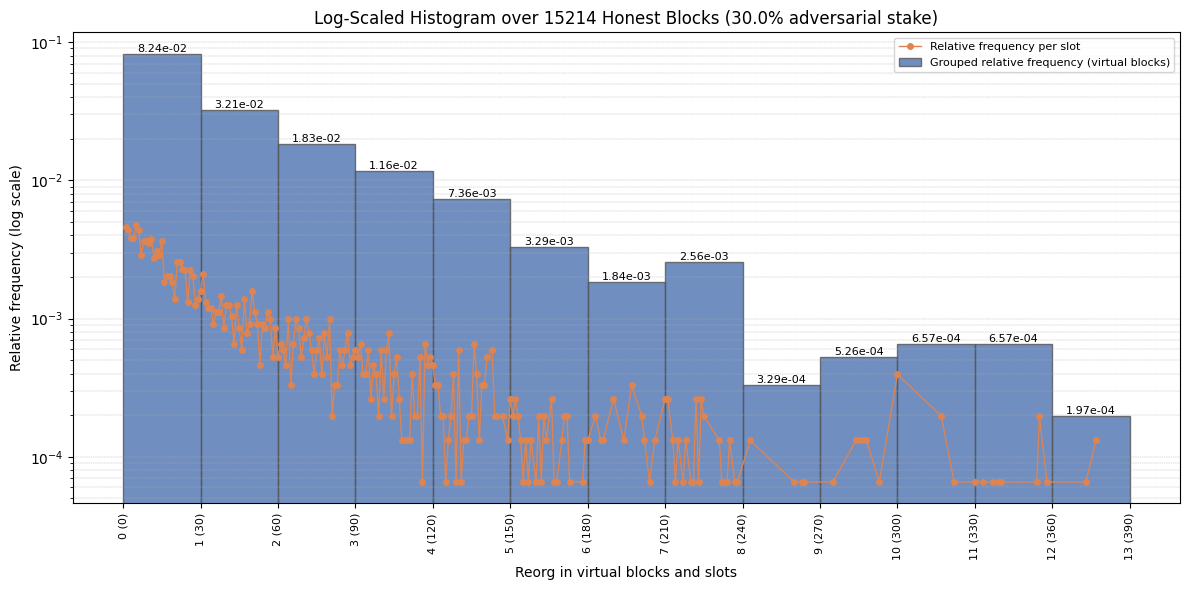}}
\subfloat[Praos and $\Delta=2$s]{\includegraphics[height=21mm, keepaspectratio]{figs2/reorg-praos-f0.05-as30-d2-1M-OK}}\\
\subfloat[$f=0.05$ and $\Delta=5$s]{\includegraphics[width=\textwidth/4]{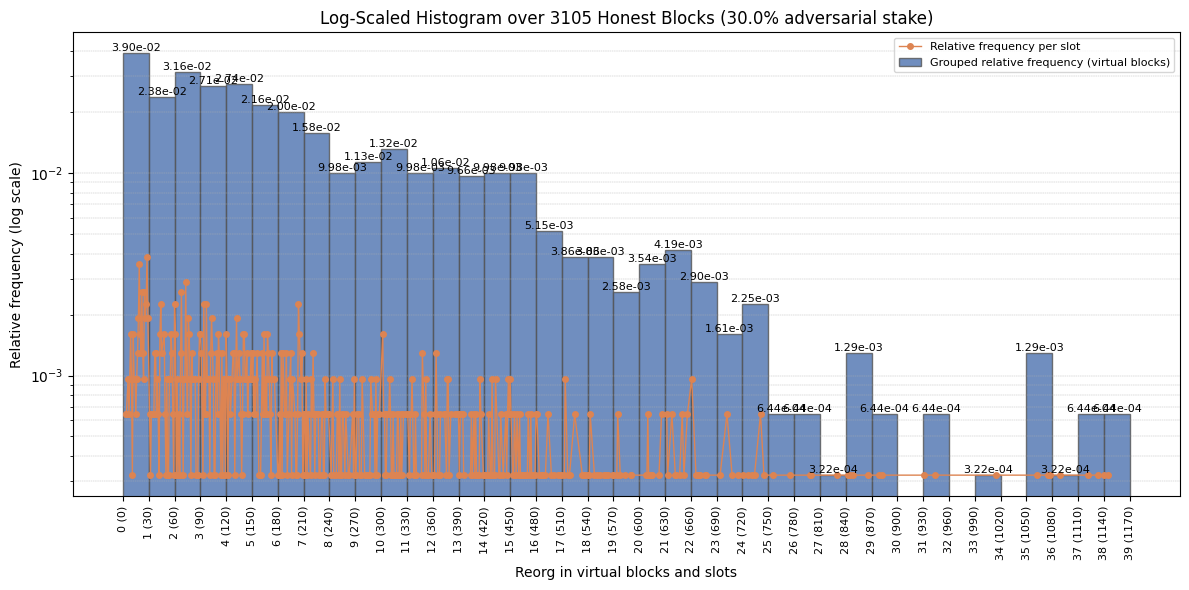}}
\subfloat[$f=0.1$ and $\Delta=5$s]{\includegraphics[width=\textwidth/4]{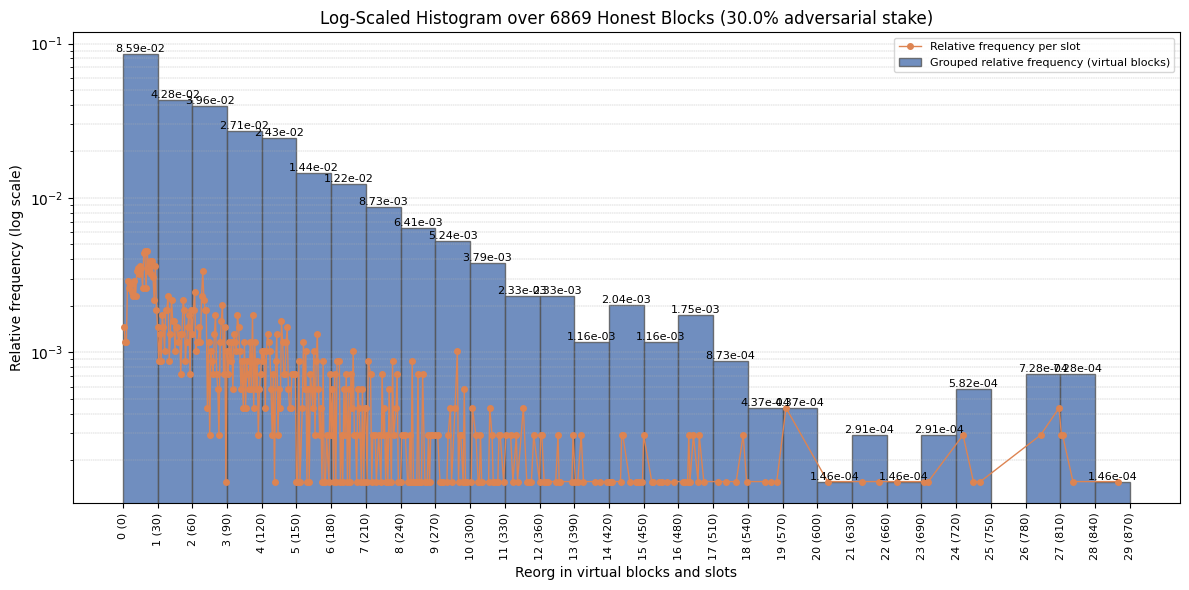}}
\subfloat[$f=0.2$ and $\Delta=5$s]{\includegraphics[width=\textwidth/4]{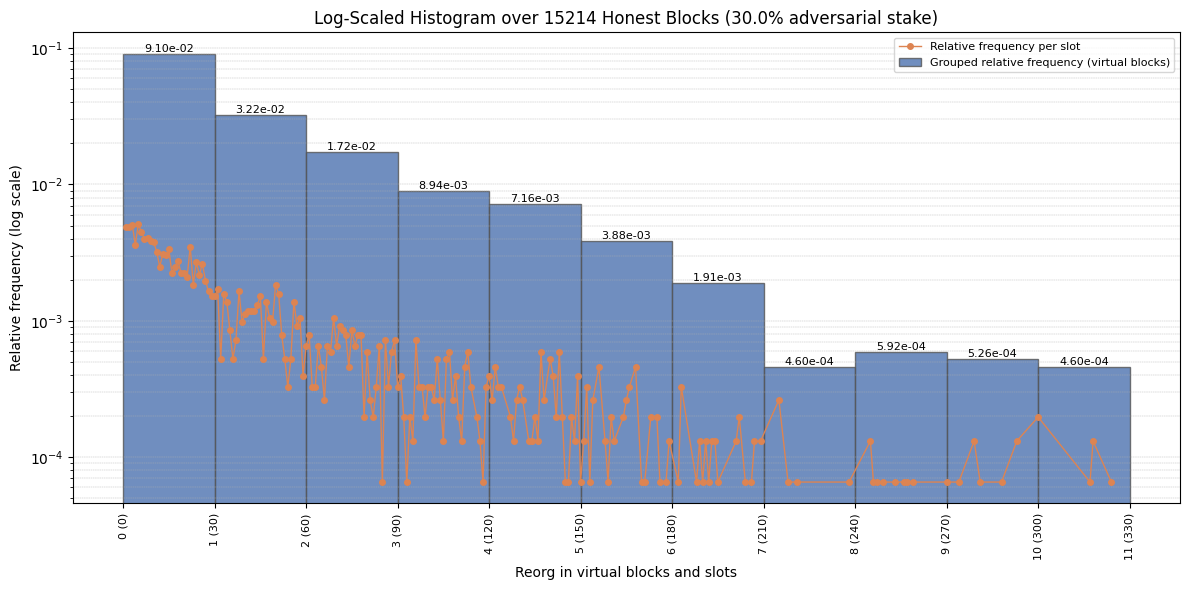}}
\subfloat[Praos and $\Delta=5$s]{\includegraphics[height=21mm, keepaspectratio]{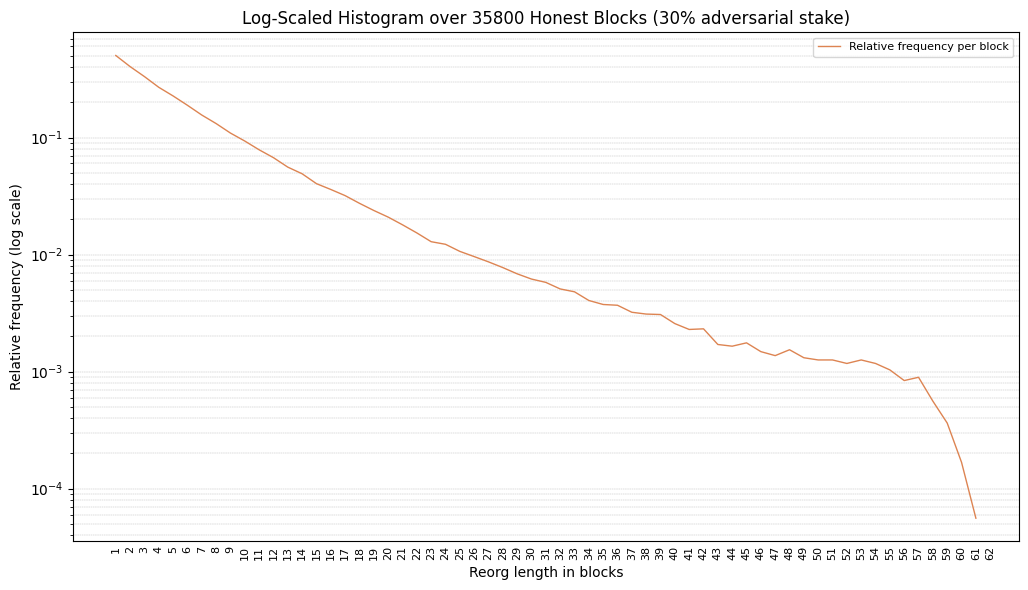}}\\
\subfloat[$f=0.05$ and $\Delta=8$s]{\includegraphics[width=\textwidth/4]{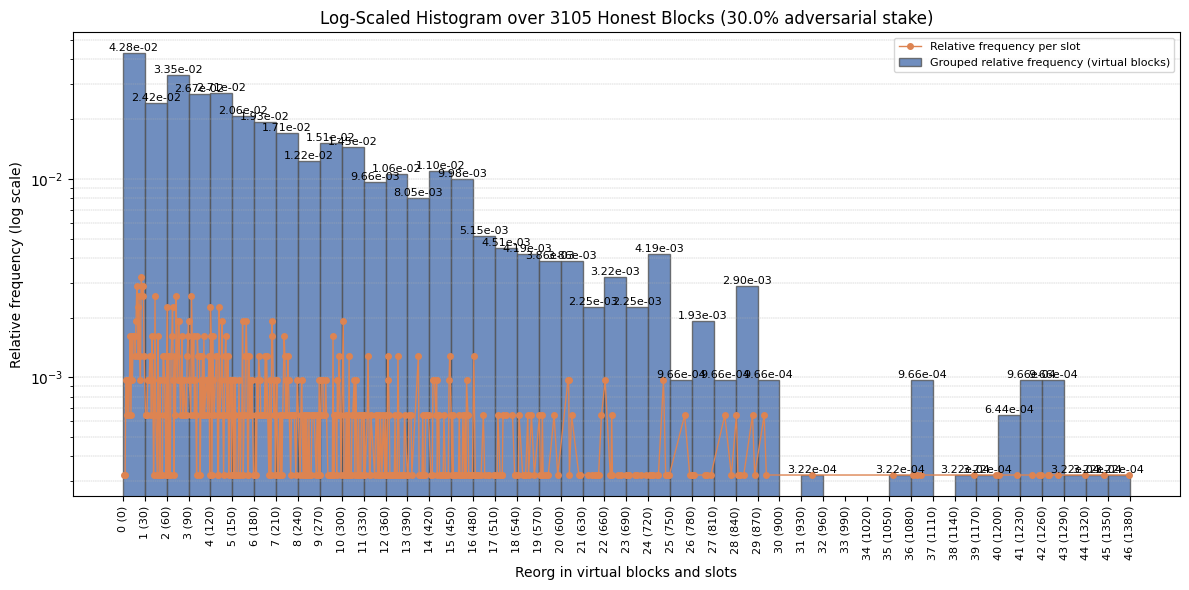}}
\subfloat[$f=0.1$ and $\Delta=8$s]{\includegraphics[width=\textwidth/4]{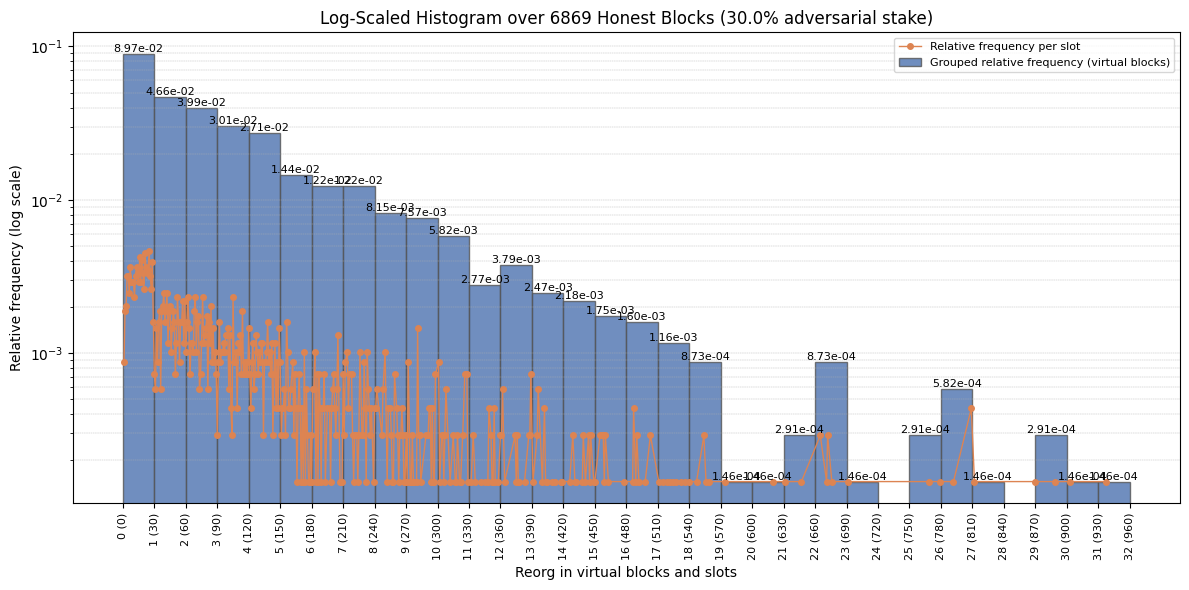}}
\subfloat[$f=0.2$ and $\Delta=8$s]{\includegraphics[width=\textwidth/4]{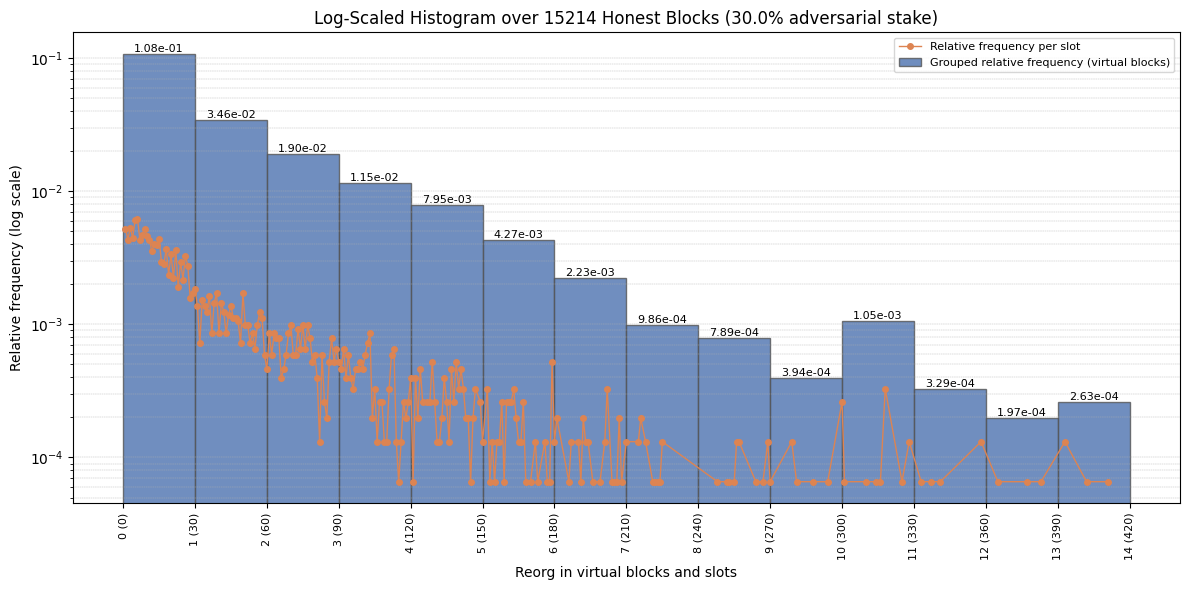}}
\subfloat[Praos and $\Delta=8$s]{\includegraphics[height=21mm, keepaspectratio]{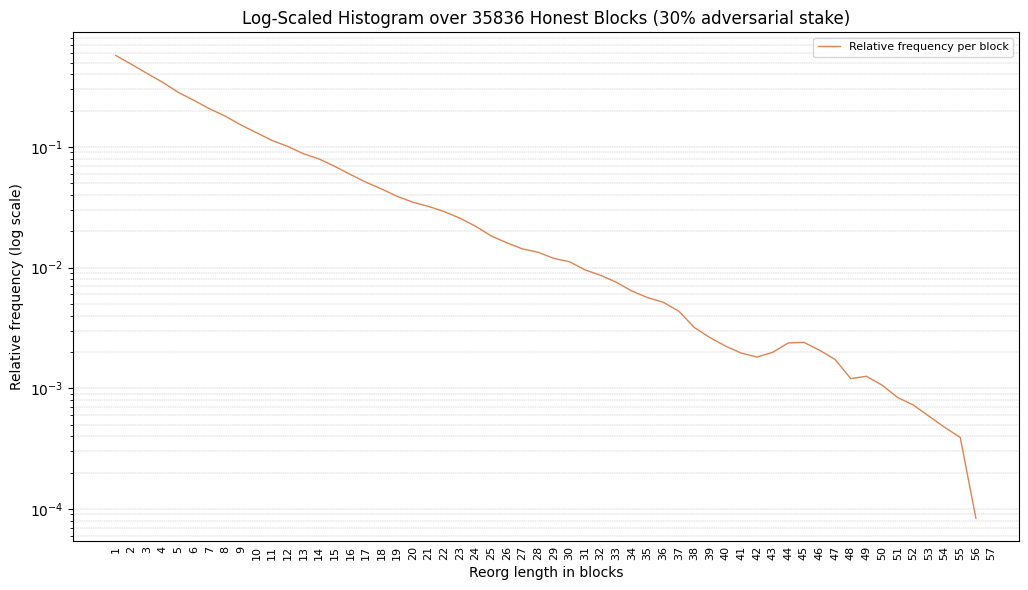}}\\
\subfloat[$f=0.05$ and $\Delta=10$s]{\includegraphics[width=\textwidth/4]{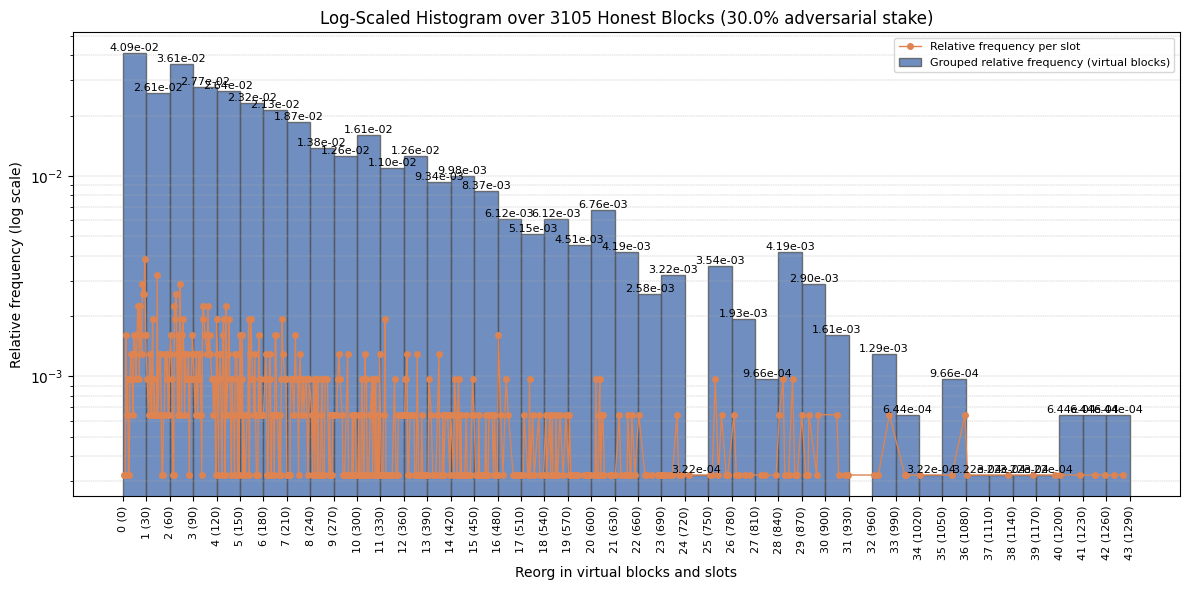}}
\subfloat[$f=0.1$ and $\Delta=10$s]{\includegraphics[width=\textwidth/4]{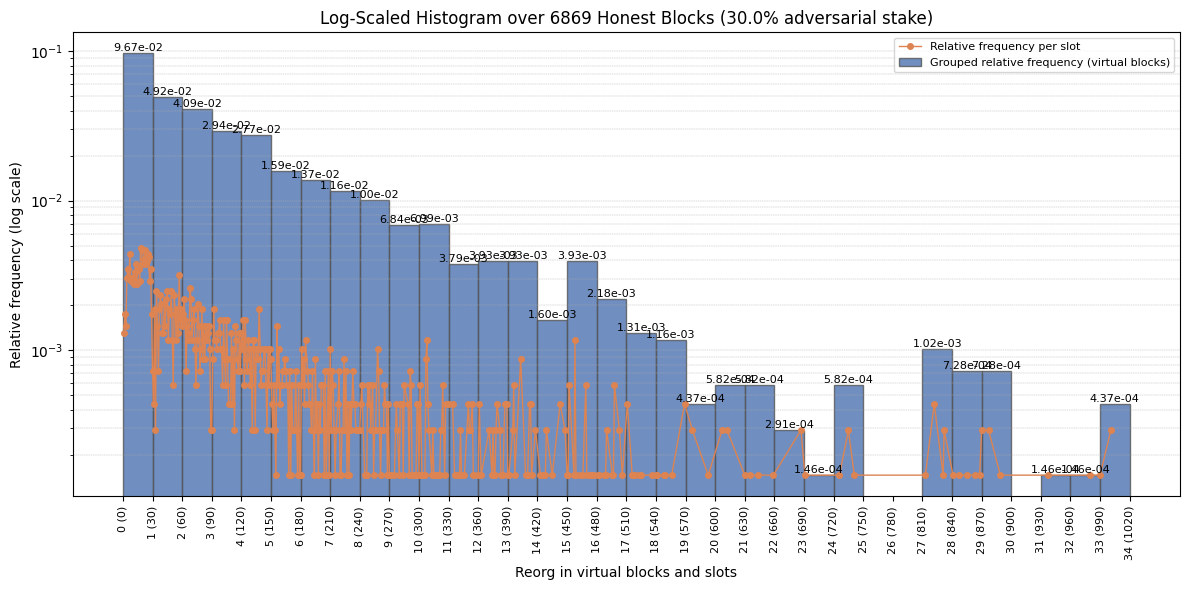}}
\subfloat[$f=0.2$ and $\Delta=10$s]{\includegraphics[width=\textwidth/4]{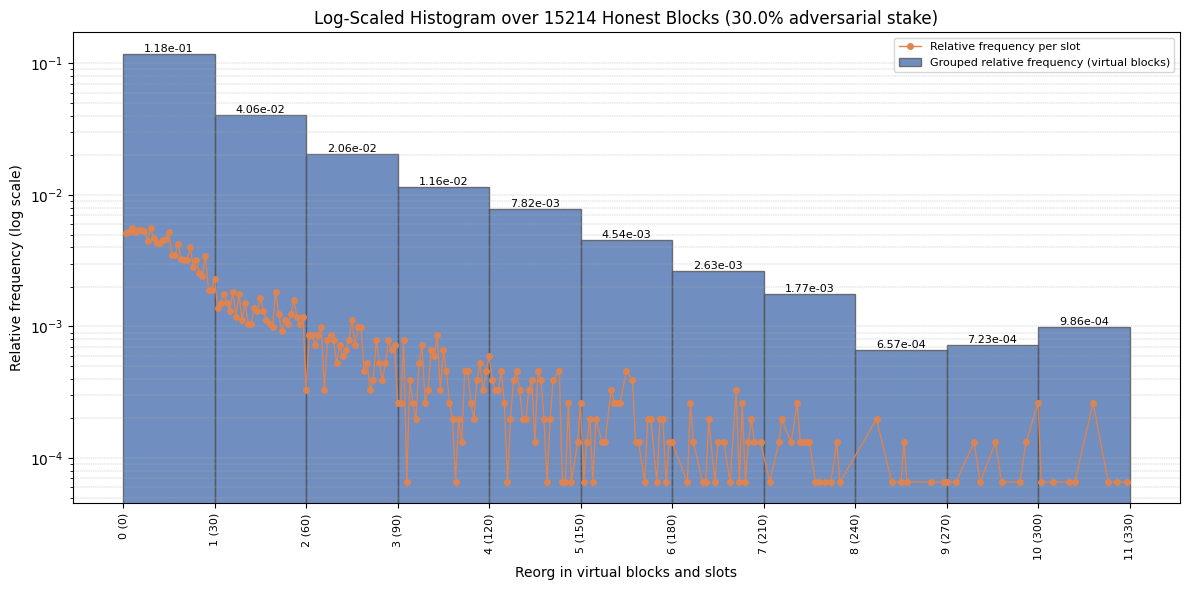}}
\subfloat[Praos and $\Delta=10$s]{\includegraphics[height=21mm, keepaspectratio]{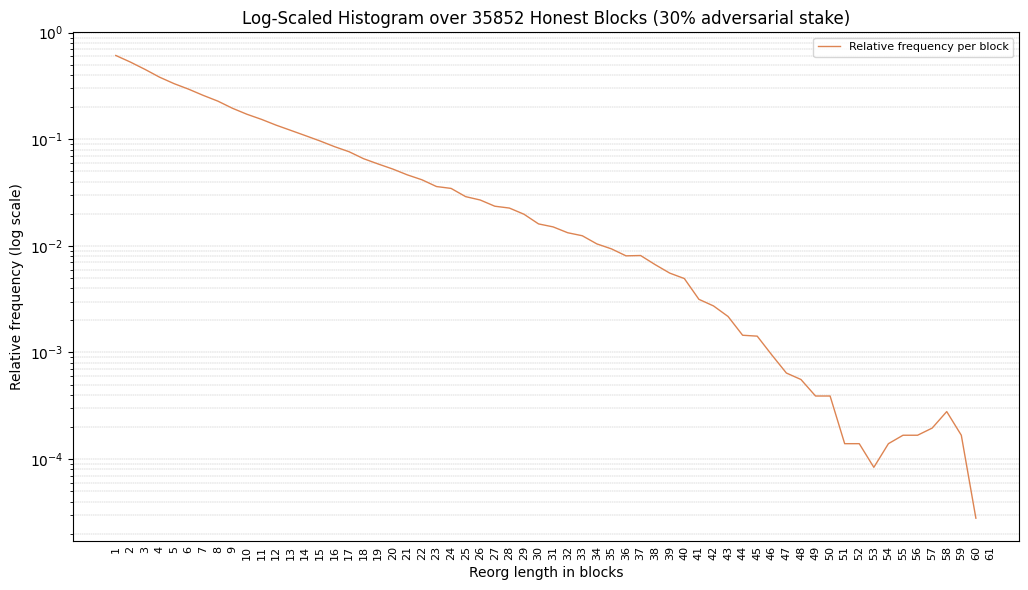}}\\
\subfloat[$f=0.05$ and $\Delta=12$s]{\includegraphics[width=\textwidth/4]{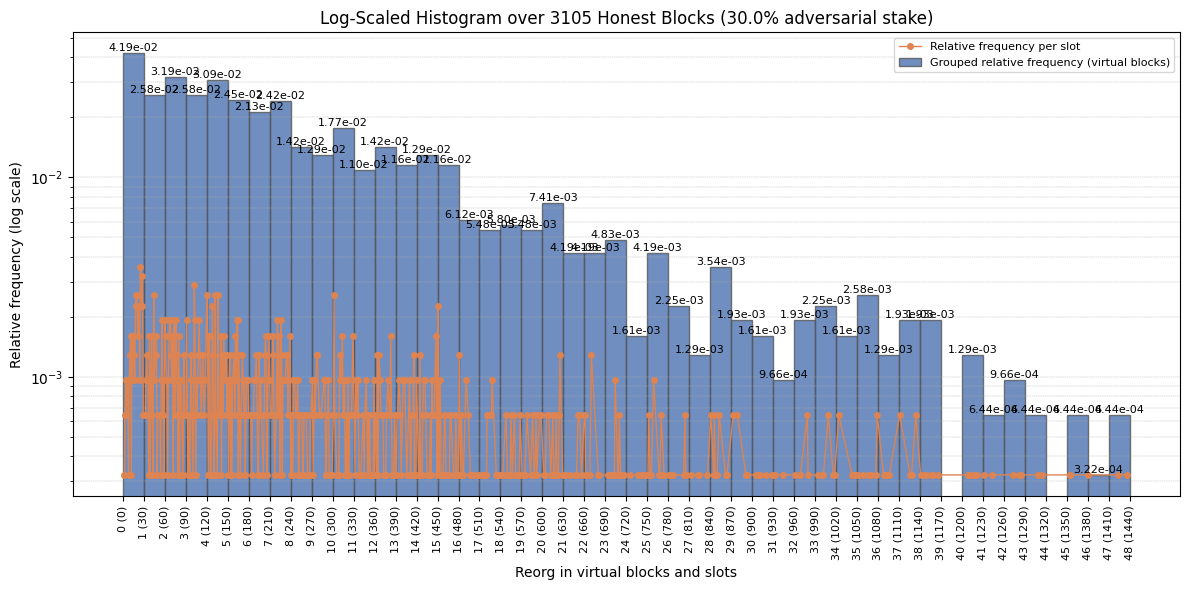}}
\subfloat[$f=0.1$ and $\Delta=12$s]{\includegraphics[width=\textwidth/4]{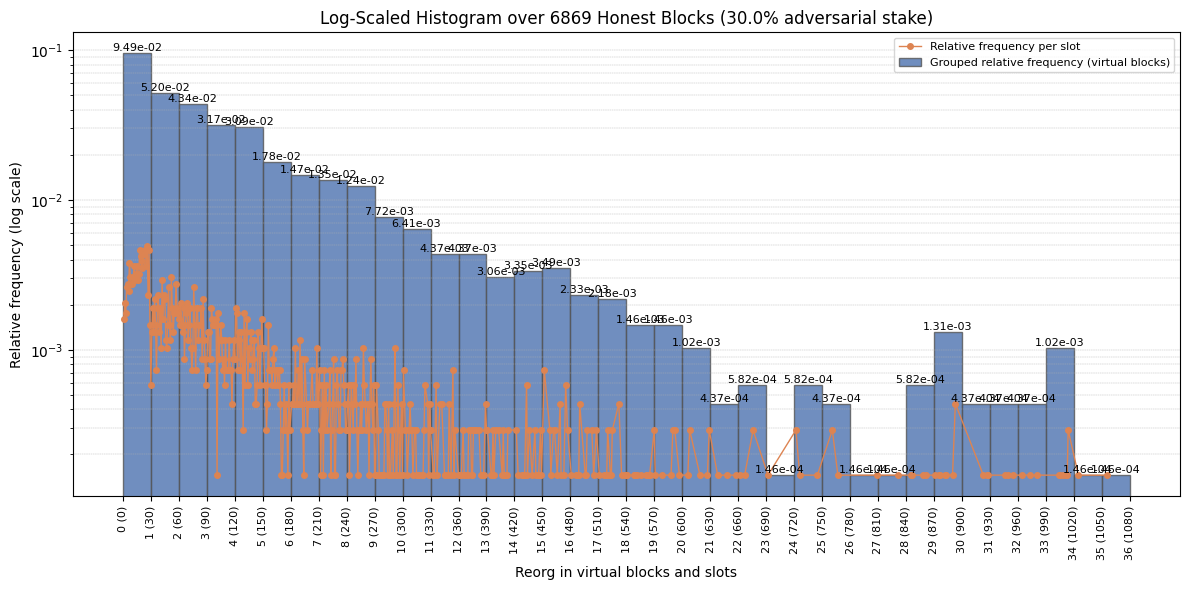}}
\subfloat[$f=0.2$ and $\Delta=12$s]{\includegraphics[width=\textwidth/4]{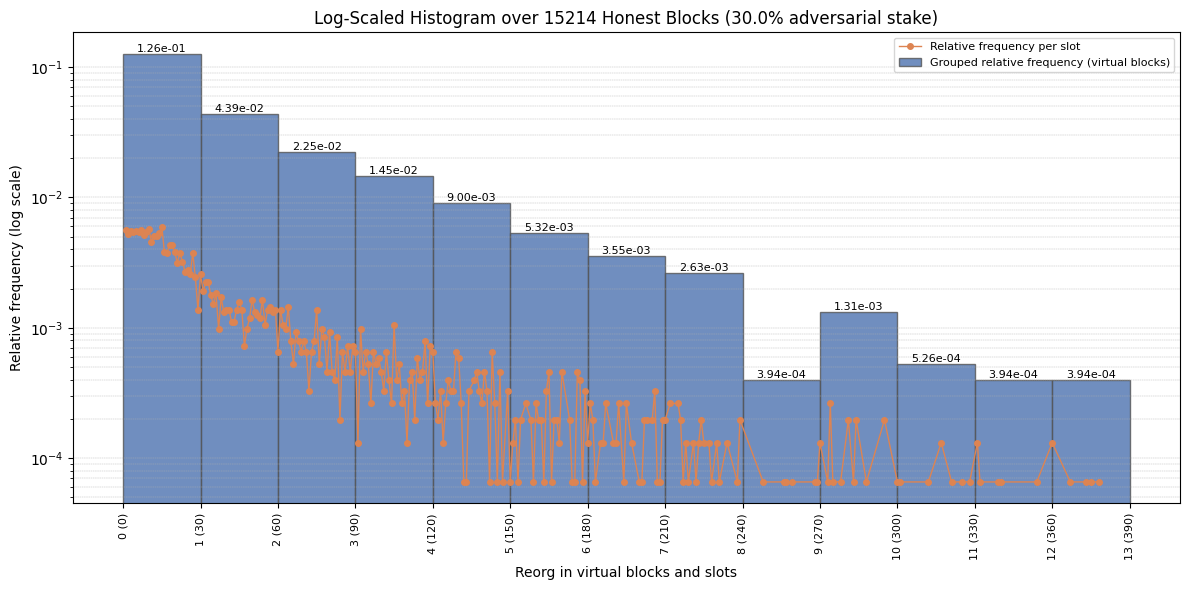}}
\subfloat[Praos and $\Delta=12$s]{\includegraphics[height=21mm, keepaspectratio]{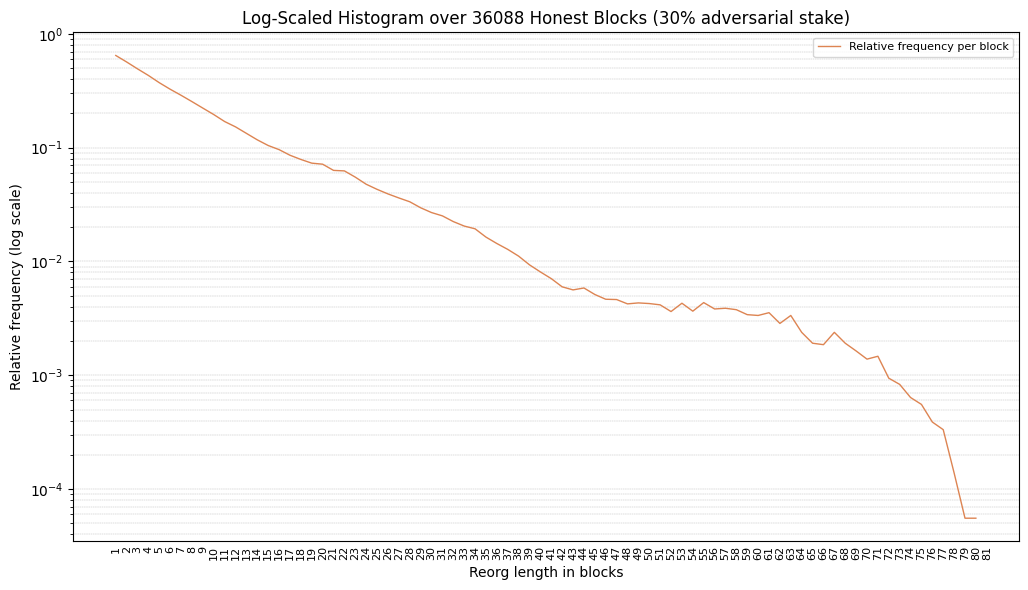}}\\
\subfloat[$f=0.05$ and $\Delta=15$s]{\includegraphics[width=\textwidth/4]{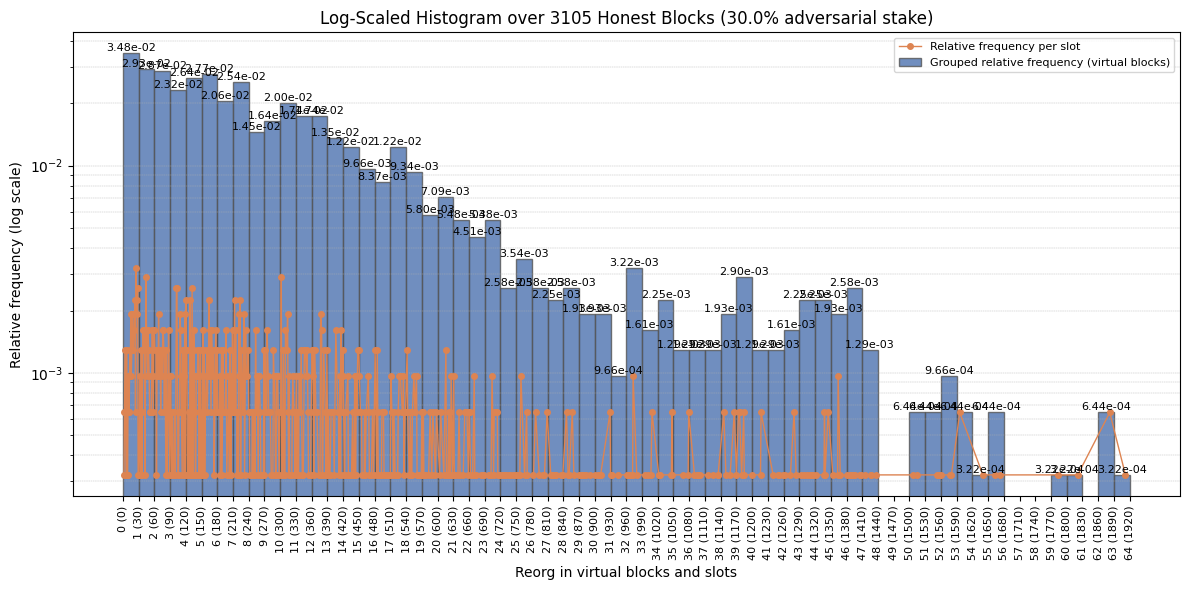}}
\subfloat[$f=0.1$ and $\Delta=15$s]{\includegraphics[width=\textwidth/4]{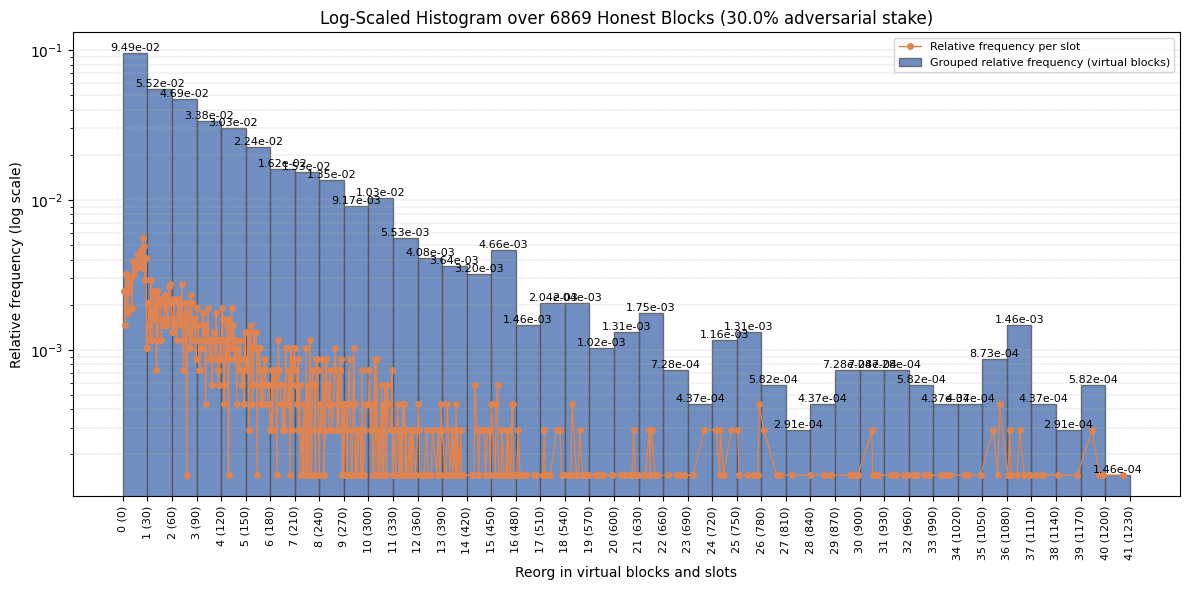}}
\subfloat[$f=0.2$ and $\Delta=15$s]{\includegraphics[width=\textwidth/4]{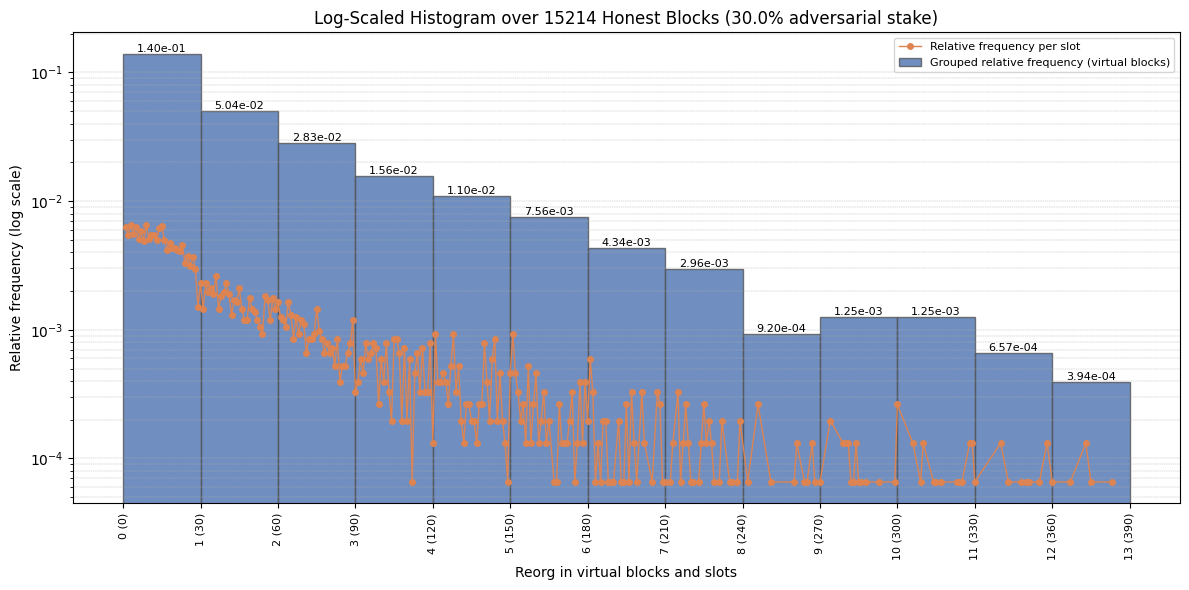}}
\subfloat[Praos and $\Delta=15$s]{\includegraphics[height=21mm, keepaspectratio]{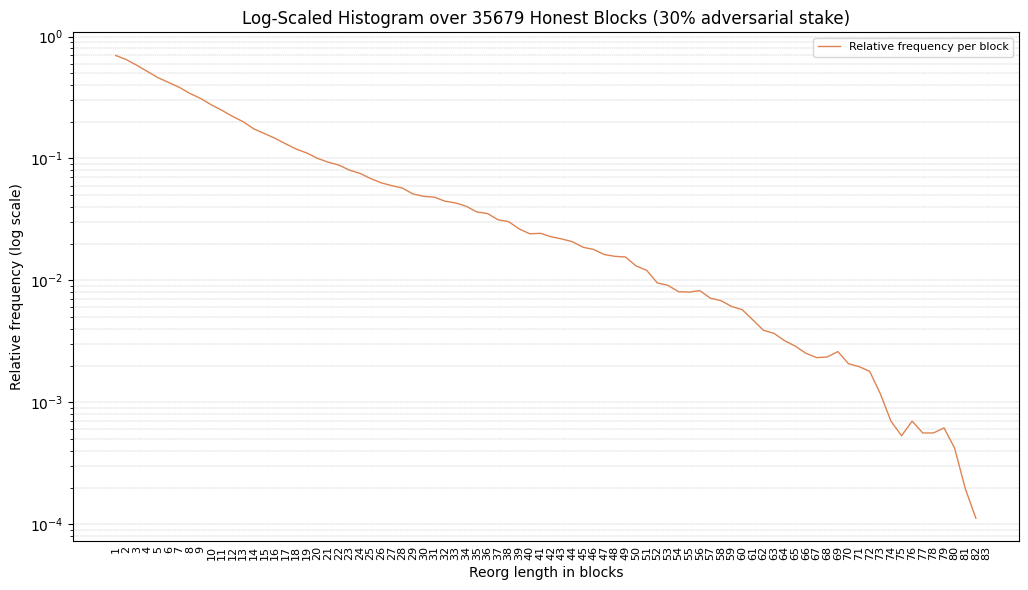}}\\
\caption{Comparison of the impact of block production rate and mean broadcast delay on the reorg depth for \ProjBase with 30s window and 30\% adversarial stake to Praos. As the mean broadcast delay increases, the reorg depth also increases. Conversely, when the block production rate increases, the amplitude of the reorg depth increase diminishes. In essence, with higher block production, the reorg depth becomes less responsive to the delay. }
\label{fig:c2-delay}
\end{figure}

\begin{figure}[htbp!]
\subfloat[15 seconds mean broadcast delay]{\includegraphics[width=\textwidth/3]{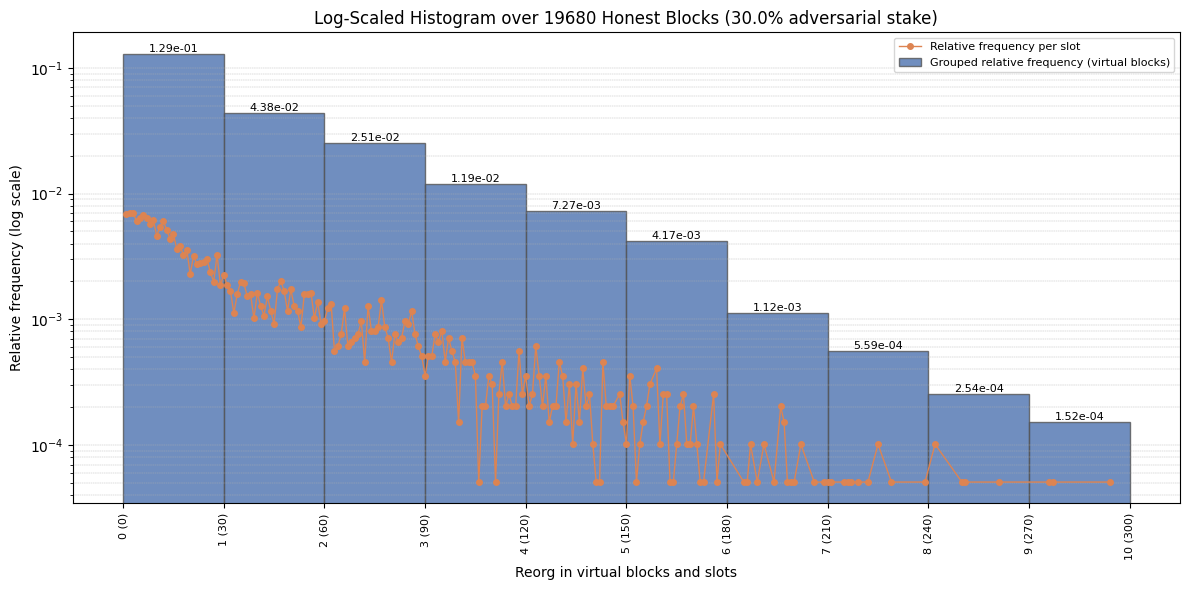}}
\subfloat[25 seconds mean broadcast delay]{\includegraphics[width=\textwidth/3]{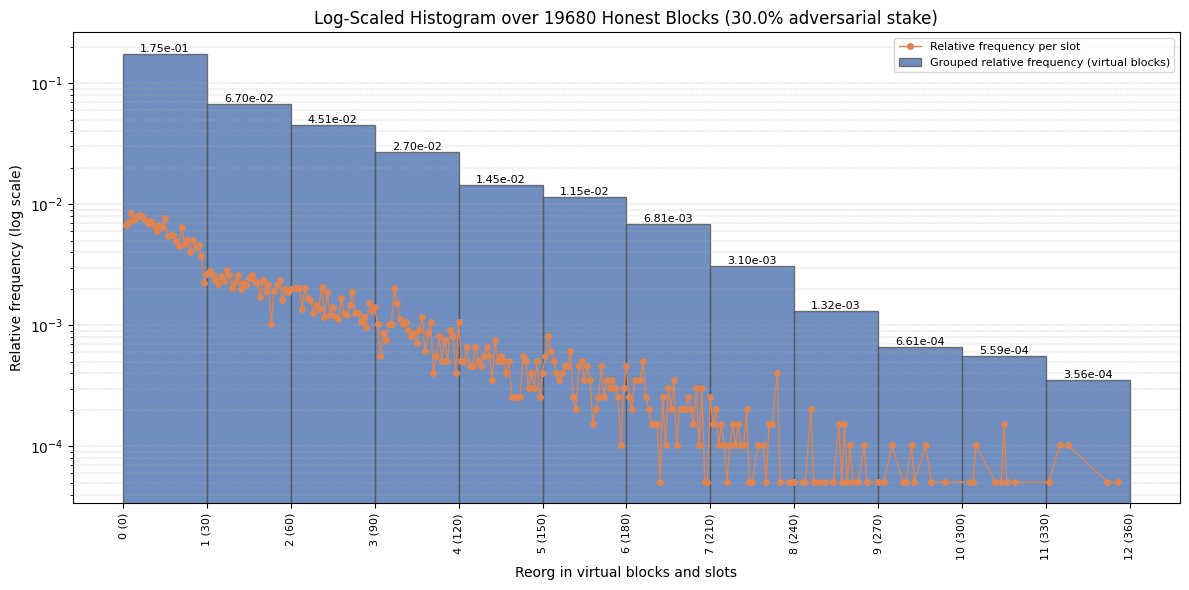}}
\subfloat[30 seconds mean broadcast delay]{\includegraphics[width=\textwidth/3]{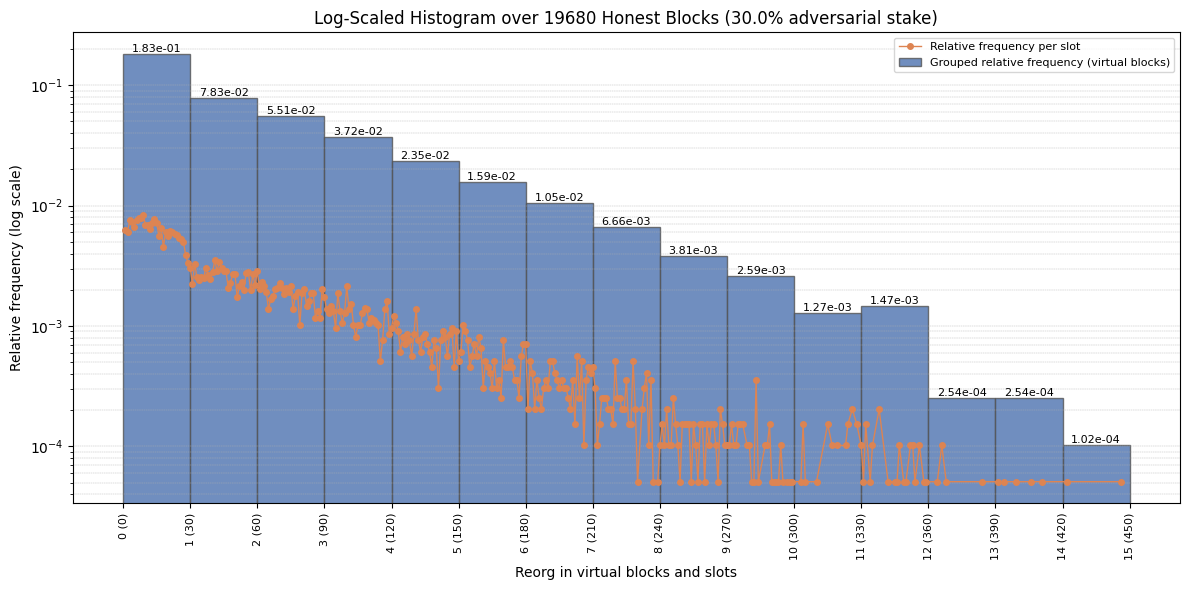}}\\
\subfloat[35 seconds mean broadcast delay]{\includegraphics[width=\textwidth/3]{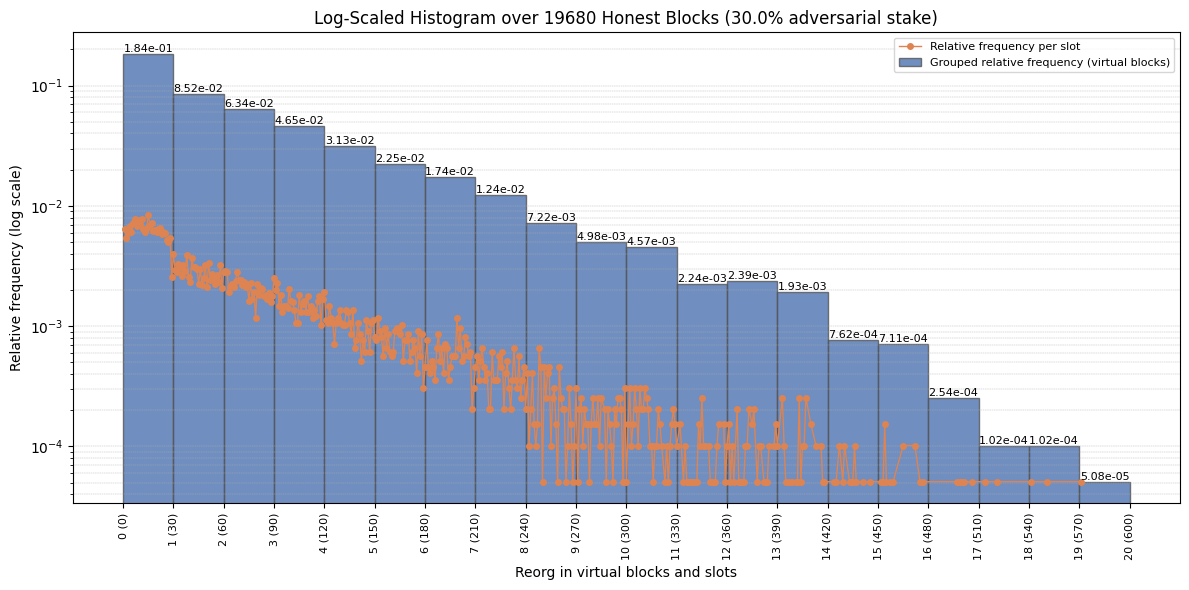}}
\subfloat[40 seconds mean broadcast delay]{\includegraphics[width=\textwidth/3]{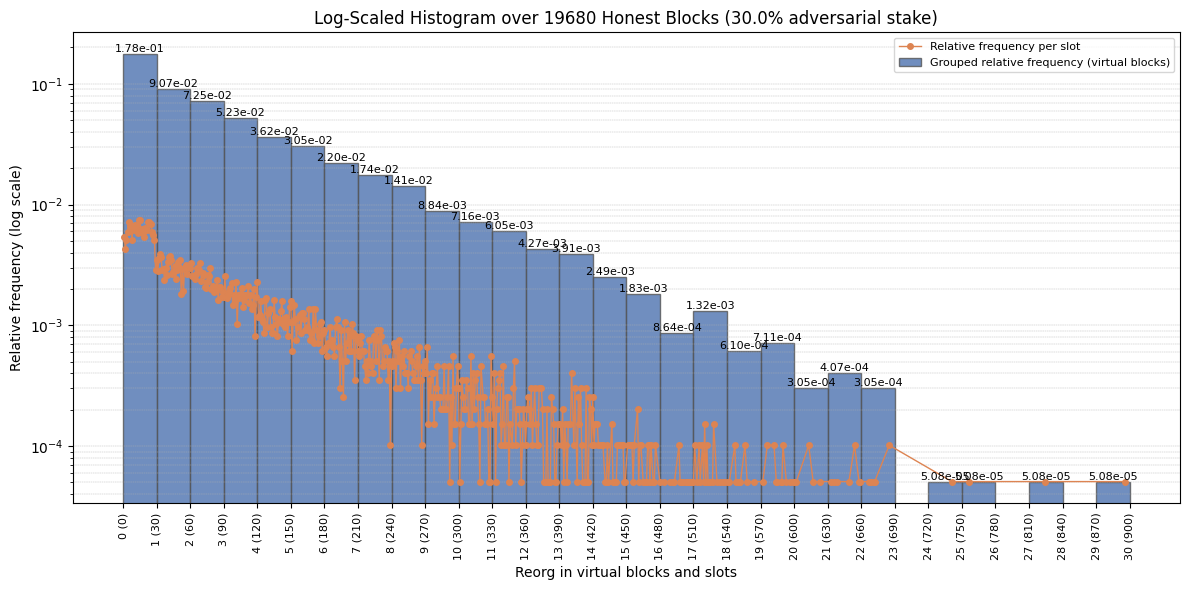}}
\subfloat[50 seconds mean broadcast delay]{\includegraphics[width=\textwidth/3]{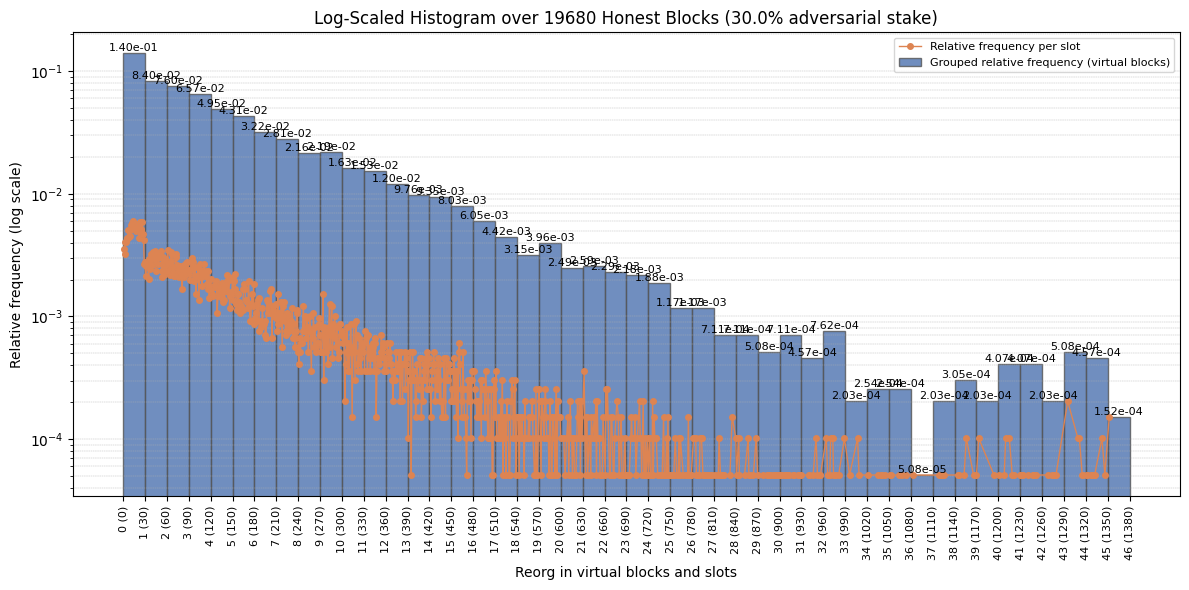}}\\
\subfloat[55 seconds mean broadcast delay]{\includegraphics[width=\textwidth/3]{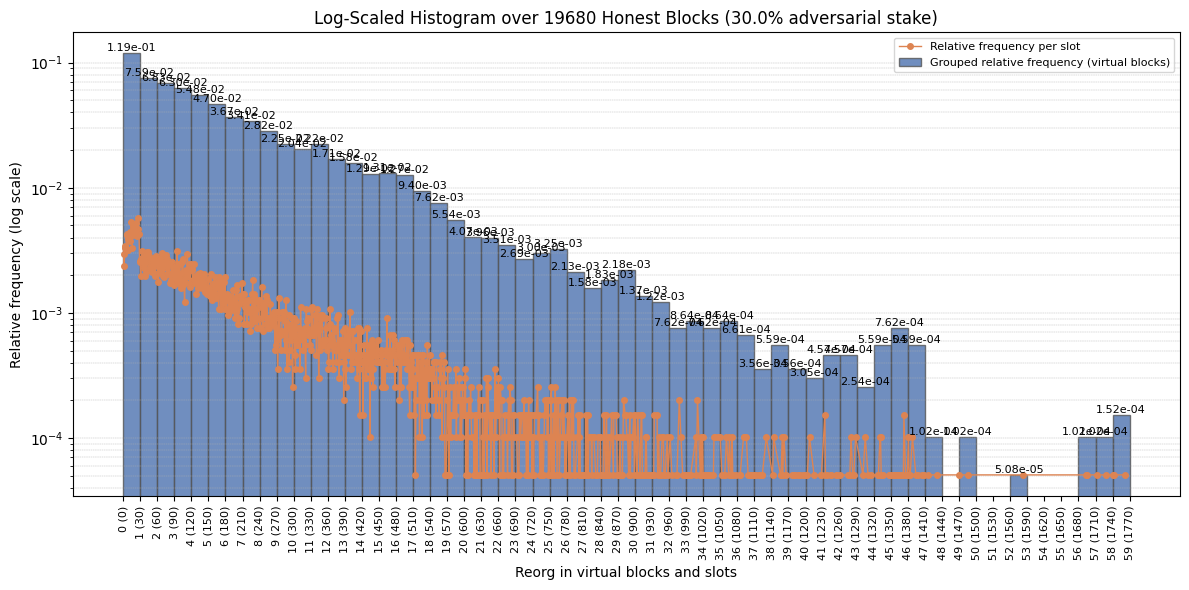}}
\subfloat[60 seconds mean broadcast delay]{\includegraphics[width=\textwidth/3]{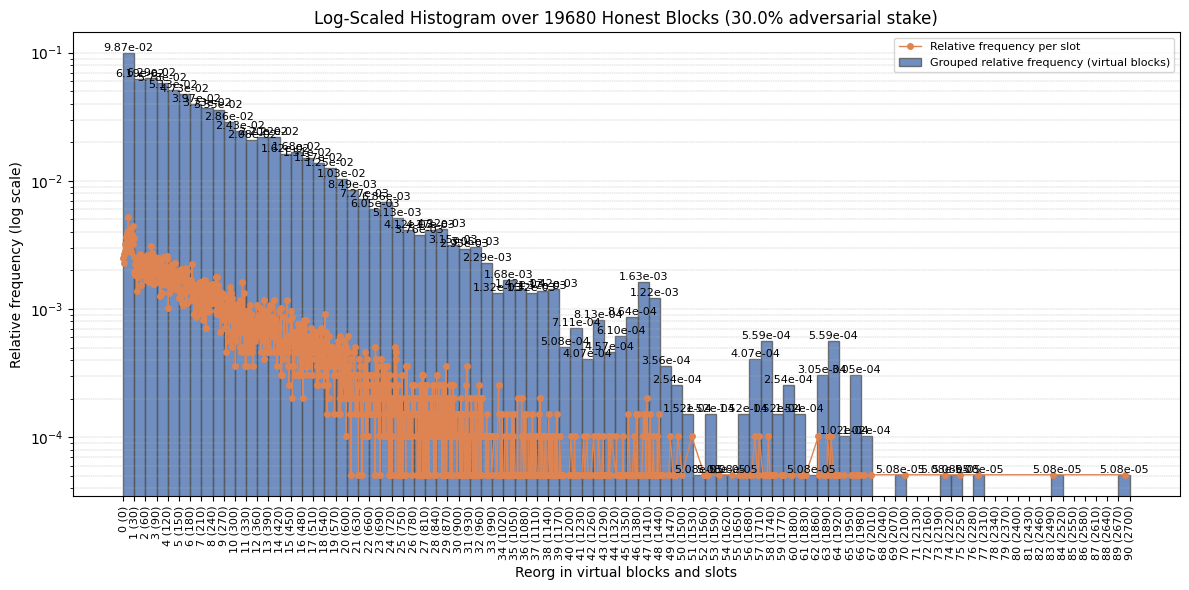}}
\subfloat[65 seconds mean broadcast delay]{\includegraphics[width=\textwidth/3]{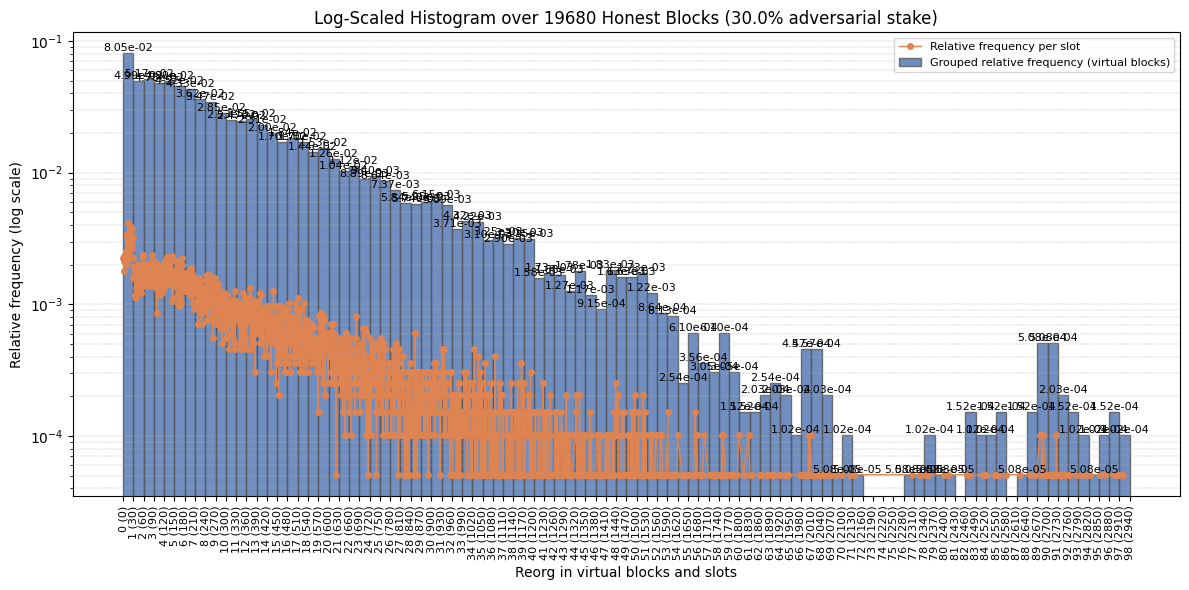}}\\
\subfloat[70 seconds mean broadcast delay]{\includegraphics[width=\textwidth/3]{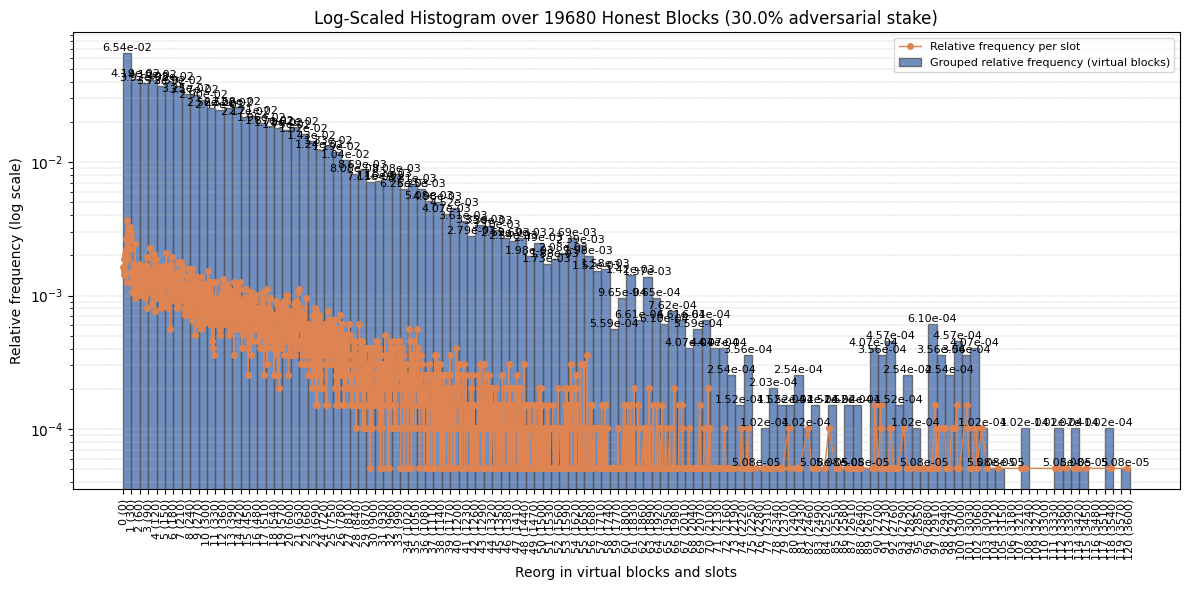}}
\subfloat[80 seconds mean broadcast delay]{\includegraphics[width=\textwidth/3]{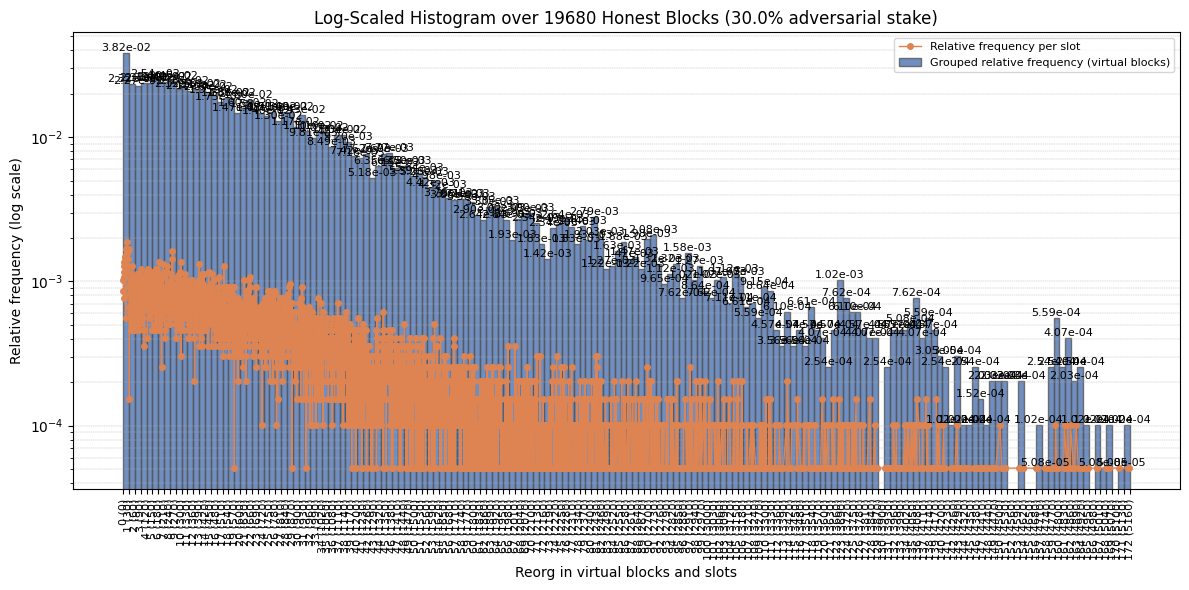}}
\subfloat[90 seconds mean broadcast delay]{\includegraphics[width=\textwidth/3]{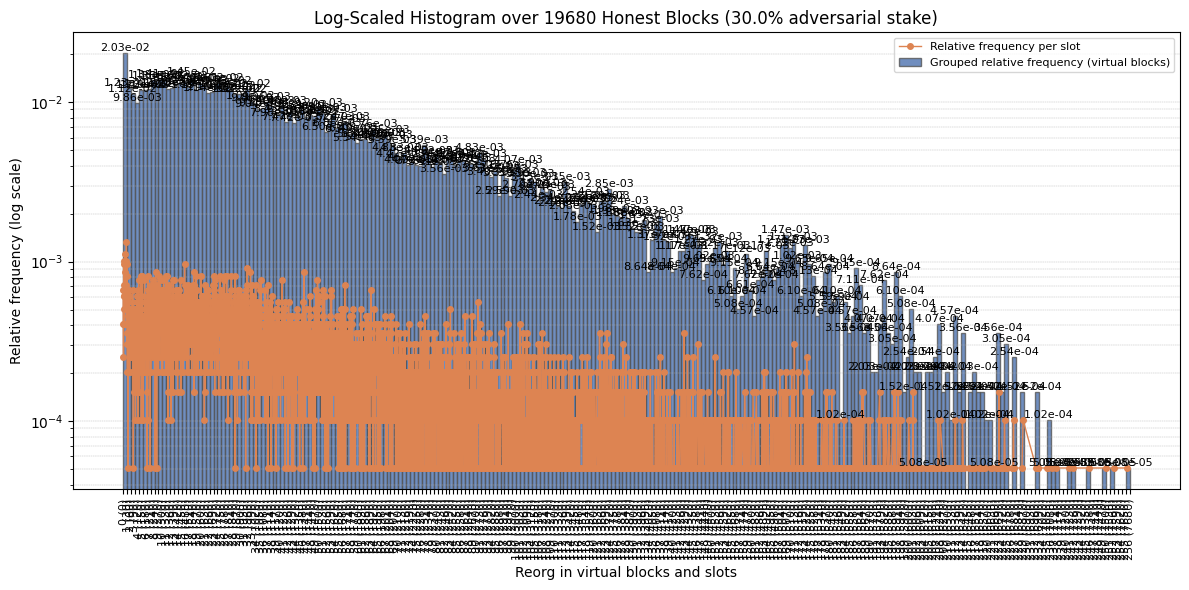}}\\
\caption{Impact of the mean broadcast delay on the reorg depth of \ProjBase with 30\% adversarial stake, window size 30 slots, and $f=0.25$. The reorg depth increases as the mean broadcast delay increases.}
\label{fig:window}
\end{figure}

\end{document}